\documentclass[ecta,nameyear,draft]{econsocart}
\usepackage{threeparttable}
\usepackage{amsmath,blkarray,bm,dsfont}
\usepackage{amssymb}
\usepackage{tikz}
\usepackage{subfigure}
\usepackage[mathscr]{euscript}
\usepackage{caption}
\usepackage[normalem]{ulem}
\usepackage{xcolor}
\usepackage{colortbl} 
\usepackage{arydshln} 
\usepackage{float}
\usepackage{multirow}
\usetikzlibrary{positioning}
\usepackage{enumitem}

\usetikzlibrary{patterns,intersections}

\newcommand{\imply}{\Rightarrow}

\newcommand{\ra}{\rightarrow}
\newcommand{\inter}{\cap}

\newcommand{\ess}{\operatorname{ess}}

\RequirePackage[colorlinks,citecolor=blue,linkcolor=blue,urlcolor=blue,pagebackref]{hyperref}

\newcommand{\E}{\mathcal{E}} 
\newcommand{\e}{\varepsilon} 
\renewcommand{\ss}{\mathcal{S}} 
\newcommand{\sF}{\mathcal{F}} 
\newcommand{\sPF}{\mathcal{PF}} 
\newcommand{\sY}{\mathcal{Y}} 
\newcommand{\sX}{\mathcal{X}} 
\newcommand{\bS}{\mathbb{S}} 
\newcommand{\sU}{\mathcal{U}} 
\newcommand{\bB}{\mathbb{B}} 
\newcommand{\cC}{\mathcal{C}} 
\newcommand{\sIM}{\mathcal{IM}} 

\renewcommand{\v}{v} 
\newcommand{\q}{q} 
\newcommand{\s}{s} 

\newcommand{\R}{{\mathbb R}} 
\newcommand{\B}{B} 
\newcommand{\G}{G} 

\newcommand{\X}{X} 
\newcommand{\Y}{Y} 
\newcommand{\LL}{L} 
\newcommand{\bLL}{\mathbf{\LL}} 
\newcommand{\cM}{\mathcal{M}} 
\newcommand{\Z}{Z} 

\newcommand{\h}{h} 
\newcommand{\hn}{\widehat{\h}} 
\newcommand{\btheta}{\boldsymbol{\theta}} 
\newcommand{\bpi}{\pi}
\newcommand{\boeta}{\boldsymbol{\eta}} 
\newcommand{\balpha}{\alpha} 
\newcommand{\bbeta}{\boldsymbol{\beta}} 
 
\newcommand{\bmu}{\boldsymbol{\mu}} 

\newcommand{\HH}{\mathsf{H}} 
\newcommand{\bP}{\mathbb{P}} 
\newcommand{\bEs}{\mathbb{E}^*} 
\newcommand{\bE}{\mathbb{E}} 
\newcommand{\one}{\mathds{1}} 

\startlocaldefs
\theoremstyle{remark}
\newtheorem{theorem}{Theorem}[section]
\newtheorem{lemma}{Lemma}[section]
\newtheorem{corr}{Corollary}[section]
\newtheorem{defn}{Definition}

\newtheorem{prop}{Proposition}[section]

\newtheorem{asm}{Assumption}
\newtheorem{remark}{Remark}[section]
\newtheorem{proc}{Procedure}

\defcitealias{lia:lu:mu:oku24}{LLMO} 
\defcitealias{obe:pow:vog:mul19}{OPVM} 
\endlocaldefs

\begin{document}
\begin{frontmatter}

\title{Identification and Inference\\
for Algorithmic Frontiers with Selective Labels}

\begin{aug}
%
%
%
\author[id=au1,addressref={add1}]{\fnms{Yiqi}~\snm{Liu}\ead[label=e1]{yl3467@cornell.edu}}
\author[id=au2,addressref={add2}]{\fnms{Francesca}~\snm{Molinari}\ead[label=e2]{fm72@cornell.edu}}
\author[id=au3,addressref={add3}]{\fnms{Amilcar}~\snm{Velez}\ead[label=e3]{amilcare@cornell.edu}}
\address[id=add1]{%
\orgdiv{Department of Economics},
\orgname{Cornell University}}

\address[id=add2]{%
\orgdiv{Department of Economics},
\orgname{Cornell University}}

\address[id=add3]{%
\orgdiv{Department of Economics},
\orgname{Cornell University}}
\end{aug}

\vspace{-2em}

\support{Preliminary version. This draft: June 12, 2026. We thank Jos{\'e} Montiel-Olea for helpful comments.}
\begin{abstract}
This paper provides identification results to characterize a fairness-accuracy (FA) frontier, and statistical inference tools to test hypotheses and build a confidence set for the FA-frontier, when outcomes are observed only for selected individuals.
When the selection process is unrestricted but loss is measured in specific ways, we provide a characterization of the sharp identification region of the FA-frontier.
Under an assumption of unconfoundedness conditional on observables (and unrestricted loss functions), we obtain point identification and propose a debiased machine learning estimator, derive its asymptotic distribution, and show how this can be used to carry out inference for the FA-frontier. In work in progress, we extend the partial identification results to a broader class of loss functions.
\end{abstract}

\begin{keyword}
\kwd{Algorithmic fairness}
\kwd{selective labels}
\kwd{statistical inference}
\kwd{support function}
\end{keyword}

\bigskip

\end{frontmatter}  

\section{Introduction}\label{sec:intro}
Over the last decade, algorithms have come to play an increasingly prevalent role in everyday life, often by supporting high-stake decisions through predictions of quantities such as re-offense risk, repayment likelihood, and health care needs.
These predictions enter institutional decision processes that determine who is detained or released, who is granted credit, and which patients receive outreach and care-management interventions.
The increasing reliance on algorithms has prompted the emergence of a set of regulatory and oversight processes, known as \emph{AI governance}, that rule the design, deployment, documentation, and monitoring of AI systems, including, in some settings, restrictions on the use of protected attributes as model inputs.
A distinctive feature of these regimes is their evidentiary orientation: for high-risk AI systems, the EU AI Act requires technical documentation, logging capabilities, and appropriate accuracy and robustness over the lifecycle \citep[Articles 11, 12, 15, 26]{eu_ai_act}, while related governance proposals emphasize dataset documentation, model reporting, and internal auditing \citep[see, e.g.,][]{gebru_etal_2021,mitchell_etal_2019,raji_etal_2020}.

This paper extends the research agenda set out in \citet{liu:mol25v2} to provide a toolkit for nonparametric estimation of and statistical inference on the fairness-accuracy trade-off that characterizes algorithms.
Fairness and accuracy evaluations play a central role in governance, as they determine both the validity of model-supported decisions and whether those decisions impose disproportionate harms on legally protected groups.\footnote{Algorithms may exhibit systematic differences in performance across legally protected groups, both in predictive accuracy and in the decisions they induce \citep[e.g.,][]{ang:lar:mat:kir16, arn:dob:hul21, obe:pow:vog:mul19, cow:tuc20,ber:jei:jab:kea:rot21}. Yet, improving parity in expected losses often requires sacrificing accuracy for at least one group, and conversely increasing accuracy for one group may worsen disparities.}
At the same time, governance questions are rarely about a single fixed model.
In public oversight and non-discrimination doctrine, the relevant question is often whether a challenged procedure can be replaced by an alternative that achieves comparable accuracy with less adverse impact, shifting attention from a single algorithm to a potentially large class of admissible decision rules, e.g., those obtained under constraints on inputs, model classes, or the use of sensitive attributes.
Regulators and oversight bodies are called to adjudicate (i) whether a deployed rule is Pareto efficient, (ii) whether design restrictions enable less discriminatory alternatives (LDAs) with comparable accuracy, and (iii) whether such LDAs exist within an admissible class of decision rules.\footnote{For example, US employment discrimination law prohibits selection procedures that cause disparate impact unless the employer demonstrates they are ``job related for the position in question and consistent with business necessity.'' Even then the practice is unlawful if an alternative practice would serve the employer's legitimate interests with less discriminatory impact \citep{civil_rights_act_1991}.}
As in \citet{liu:mol25v2}, we build on the theoretical framework put forward by \citet*[LLMO henceforth]{lia:lu:mu:oku24}, which in our view is well-suited to formalize these questions because it characterizes a fairness-accuracy (FA) frontier under broad preferences over group-specific expected losses as well as the implications of banning attributes as inputs to the algorithms.

Our first innovation relative to \citet{liu:mol25v2}'s groundwork is to allow for accuracy and fairness of an algorithm to be assessed through two different loss functions, thereby aligning our treatment with the literature in computer science \citep{corbettdavies2017cost,menon2018cost,kim2020fact,gillis2024lda}.
We show that the support function of a convex feasible set $\E\subset\R^3$ of expected accuracy losses for the two groups and the difference in expected fairness losses between the groups plays a crucial role in characterizing the FA-frontier and carrying out inference.\footnote{In \citet{liu:mol25v2}'s analysis and \citetalias{lia:lu:mu:oku24}'s main text, the same loss function is used to measure fairness and accuracy, so that the feasible set $\E$ is in $\R^2$. \citet{liu:mol25v2} propose to use the support function of $\E\subset\R^2$ as the unifying tool for statistical inference.}
We extend the results in \citetalias{lia:lu:mu:oku24}'s Appendix O by providing a simple and testable necessary and sufficient condition under which the FA-frontier coincides with the Pareto frontier.
When this condition is not satisfied, a policymaker who values fairness may choose to deploy an algorithm that is Pareto dominated in accuracy loss, to ameliorate its performance in fairness loss.

Our second innovation addresses the reality that in many high-stakes settings, outcomes are observed only for selected individuals: pretrial misconduct (or lack thereof) is observed only if bail is granted; repayment behavior is observed only when credit is extended; certain health outcomes are observed only for patients who receive treatment or remain under follow-up within a system; etc.
Hence, instead of observing the outcome vector of interest $\Y^*$, one observes $\Z\Y^*$, where $\Z\in\{0,1\}$ indicates observability.
This \emph{selective labels} problem creates challenges for governance compliance similar to the ones long recognized in the vast econometrics literature on selectively observed data \citep[e.g.,][]{heckman_1979,manski_1989}. Accuracy metrics computed using only observed outcomes may misrepresent true performance, potentially overstating accuracy where it matters most (e.g., if those denied bail would not have re-offended). 
Hence, assessments of whether a system achieves ``appropriate levels of accuracy'' \citep[EU AI Act,][Article 15]{eu_ai_act} or whether LDAs exist \citep{civil_rights_act_1991} may be incorrect and required documentation may systematically exclude the most policy-relevant counterfactual outcomes.

When classification loss is used to measure accuracy and statistical parity is used to measure fairness \citep[as in][]{menon2018cost}, and no restrictions are imposed on the selection process, we characterize the sharp identification region for the FA-frontier in the spirit of \citet{manski_1989} \citep[see][for reviews of the partial identification literature]{mol20,can:ill:velez26}.
Doing so is challenging because the FA-frontier is ultimately about \emph{algorithms}: it equals the set of expected losses $\e(a)\in\E$ associated with algorithms $a\in\mathcal{A}(\sX)$ (measurable functions mapping covariates $X\in\sX$ to $[0,1]$) such that no other algorithm $\tilde{a}\in\mathcal{A}(\sX)$ yields expected losses $\e(\tilde{a})$ that FA-dominate $\e(a)$ in a sense made precise in Definition~\ref{def:R3_frontier}. When labels are selectively observed, both $\e(a)$ and $\E$ are partially but not point identified. Hence, to pass judgment on whether algorithm $a$ yields expected losses $\e(a)$ on the frontier, care needs to be taken to couple the \emph{same} candidate distribution for the missing labels in constructing both $\e(a)$ and $\E$, and then assess membership across all admissible distributions for the missing labels.
We reduce this daunting task to solving a finite dimensional optimization problem.

When the selection process is restricted by a missing-at-random (MAR) assumption, where conditional on attributes, outcome observability does not convey additional information on $\Y^*$, we obtain point identification of the FA-frontier.
For any loss function, we provide a debiased machine learning (DML) estimator for the FA-frontier based on inverse probability weighting. 
Our DML approach is particularly appropriate for governance contexts, where high-dimensional administrative data enable flexible nonparametric modeling of selection and formal inference tools are needed for legally defensible determinations.
While ideas transfer from the program evaluation and high-dimensional inference literature \citep[e.g.,][]{CHKSS}, we extend the methods to carry out inference on a new set-valued estimand, the FA-frontier.
We derive the asymptotic distribution of our estimator and test statistics building on \citet{liu:mol25v2,sem23,cha:che:mol:sch18,ChernozhukovDML,ber:mol08}, and \citet{fan:san19}, provide inverse-probability weighting correction in the DML estimator, and asymptotically valid confidence sets for the FA frontier and tests for LDA existence. 

Accompanying software implementing both the MAR-based and partial identification approaches is under development, along with empirical applications of our methods.

\textbf{Related Literature.}
A large literature in computer science, statistics, and economics studies algorithmic fairness; see \citet{cho:rot18}, \citet{bar:har:nar23}, and \citet{Corbett24} for overviews.
Much of this work treats fairness as a constraint or regularizer in an objective that prioritizes predictive performance \citep[e.g.,][]{dwork12, ber:hei:jab:jos:kea:mor:jam:nee:roth17}, or studies impossibility results and trade-offs among fairness criteria \citep{kleinberg2016inherent}.
A related literature computes fairness-accuracy Pareto frontiers or ranges of disparities for specific model classes, typically as an optimization or auditing exercise rather than as a problem of statistical inference \citep[e.g.,][]{wei:nie21, lit:wey:all22, cos:ram:cho21}.
When labels are perfectly observed, \citet{aue:lia:tab:oku24} provide a sample-splitting based test for LDA existence, and \citet{fal:jor:uli26} study minimax-optimality of the sample analog of a Pareto-optimal linear rule and propose a uniform high-probability bound on its empirical error.

Within the algorithmic fairness literature, other work has focused on selective labels \citep[e.g.,][]{lak:kle:les:lud:mul17, kle:lak:les:lud:mul17, ram:cos:ken25, kha:tam:yao}.
\citet{ram:cos:ken25} develop a general framework for robust evaluation and design of predictive algorithms under selective labels and unobserved confounding, delivering bounds for performance measures under alternative restrictions on the selection process.
We view their contribution as addressing a distinct question from ours: how to evaluate predictive performance in the presence of counterfactual outcomes.
We focus instead on identification and inference for the fairness-accuracy trade-off of group losses across all feasible algorithms using a given information set, and on hypothesis tests and confidence statements about frontier properties such as LDA existence that directly map to governance questions and design restrictions. 

\textbf{Outline.} Section \ref{sec:setup} lays out notation and derives the FA-frontier extending \citetalias{lia:lu:mu:oku24} and \citet{liu:mol25v2}.
Section \ref{sec:identif_and_estim} formalizes the selective labels problem, derives the sharp identification region for the FA-frontier for specific loss functions and unrestricted selection process, and obtains point identification under a MAR assumption for any loss function. Section~\ref{sec:DML} puts forward our DML estimator and derives its asymptotic distribution.
Section~\ref{sec:tests} proposes a test for equality between the Pareto and FA-frontier.
It then puts forward our test for existence of an LDA in the presence of selectively observed labels and derives its asymptotic distribution, building on \citet{liu:mol25v2}.
Section \ref{sec:conclude} concludes.
Our main proofs are in Appendix \ref{appn:A};
Appendix \ref{app:auxiliary} reports auxiliary results.

\section{Setup}\label{sec:setup}
Let a population of individuals be described by an outcome vector $\Y^*\in\sY\subset\R^{d_Y}$, a binary group identity $\G\in\{r,b\}$ (red or blue), and a vector of covariates $\X\in\sX\subset\mathbb{R}^{d_\X}$.
For example, one component of $\Y^*$ may denote an individual's number of active chronic illnesses in the subsequent year and the other component the cost of care; $\G$ may denote the individual's race, and $\X$ may include age, gender, biomarkers, comorbidity, and medication variables.
As in \citet{liu:mol25v2}, we leave the relation between $\G$ and $\X$ unspecified, but require $\G$ not to be a deterministic function of $\X$.
Each individual receives a binary decision $D\in\{0,1\}$, e.g., whether they are automatically enrolled in a high-risk care management program; or granted bail; or granted or denied a loan (while we assume $\G$ and $D$ to be binary, the results can be extend to multiple groups and multiple decisions).
An algorithm $a:\sX \mapsto [0,1]$ assigns a probability distribution to $D$ conditional on $\X$; e.g., the algorithm assigns each patient a health risk score in $[0,1]$; or a recidivism probability; or a repayment probability. 
For simplicity, we take $a(\X)$ to be the only input to the decision, and hence $D\sim a(\X)$.
We let $\mathcal A(\sX)$ denote the set of all (measurable) algorithms that map from the input space $\sX$ to a probability distribution over $D$.
Extending the framework in \citet{liu:mol25v2}, we let $\ell^A:\{0,1\}\times\sY\mapsto\mathbb R$ be a loss function used to measure the \emph{accuracy} of an algorithm for an individual with outcome $y\in\sY$, and $\ell^F:\{0,1\}\times\sY\mapsto\mathbb R$ a loss function used to measure the algorithm's \emph{fairness}.
For example, one could use classification error to measure accuracy ($\ell^A(d,y)=\one(d\neq y_1)$), which in a health care example returns the value $1$ if the algorithm mistakenly enrolls a healthy person in the high-risk care program or fails to enroll someone who is very sick.
And one could use statistical parity to measure fairness ($\ell^F(d,y)=d$), hence judging an algorithm more fair if the proportion of either group receiving the two decisions is closer.

This paper is concerned with the case where instead of observing $Y^*$, we observe $\Y=\Z\Y^*$, where $\Z \in \{0,1\}$ is an indicator variable that records when the outcome of interest $\Y^*$ is observed in the testing data. 
In other words, we allow for selectively observed labels.
Before analyzing this problem, however, we provide a tractable characterization of the FA-frontier, under the assumption that $(\Y^*,\X,\G,\Z)\sim\bP^*$ is observed, when the loss function used to measure accuracy differs from that used to measure fairness.
In doing so, we extend the analysis in Appendix O.1 of \citetalias{lia:lu:mu:oku24}.
Armed with our novel characterization, we then study what can be learned about the FA-frontier and how can valid statistical inference be carried out, when in fact only the selectively observed labels $\Y$ are available.

\subsection{The Fairness-Accuracy Frontier}\label{subsec:FAfrontier}
We denote $\mu_g \equiv \bP^*(G=g)$ the population proportion of group $g \in \{r,b\}$ and, for $d \in \{0,1\}$, the (unobserved) \emph{ideal labels} associated with loss function $\iota\in\{A,F\}$ by
\begin{align}
\LL_d^{g,\iota} &\equiv \tfrac{\ell^\iota(d,Y^*)\one\{G=g\}}{\mu_g}.\label{eq:defLdgA}
\end{align} 
We denote the conditional expectation of $\LL_d^{g,\iota}$ taken with respect to $\bP^*(\Y^*,\G|\X)$ by 
\begin{align*}
\theta_d^{g,\iota}(\X) &\equiv \bEs[\LL_d^{g,\iota}|\X] = \tfrac{\bEs[\ell^\iota(d,Y^*)\one\{G=g\}|\X]}{\mu_g}.
\end{align*} 
\begin{defn}[Ideal labels and conditional expectations]\label{def:ideal-labels}
    For $d\in\{0,1\}$, define
\begin{align}
\bLL_d
&\equiv
\begin{bmatrix}
\LL_d^{r,A}\\
\LL_d^{b,A}\\
\LL_d^{r,F}-\LL_d^{b,F}
\end{bmatrix}\in \R^3,\label{eq:defLvector}
\end{align}
and let $\btheta_d(\X)\equiv \bEs[\bLL_d|\X]\in\R^3$ and $\Delta\btheta(\X)\equiv \btheta_1(\X)-\btheta_0(\X)\in\R^3$.
\end{defn}
To make sure that the ideal labels are well-defined, we assume:
\begin{asm}[Moment restrictions for both losses]\label{asm:moments}
\textit{For all $d\in\{0,1\}, g\in\{r,b\}$, and $\iota\in\{A,F\}$, for some constants $0<c_1<1$, $0<c_2<\infty$: (i) $\mu_g\in(c_1,1-c_1)$ and $\ess\sup_{X\in\sX}\bE^*\left[\left(\LL_d^{g,\iota}\right)^2\big|\X\right]<c_2$; 
(ii) $\ess\inf_{X\in\sX}\mathrm{eig}_{\min}\left[Var^*(\bLL_d|\X)\right]>c_1$, with $\mathrm{eig}_{\min}$ denoting the smallest eigenvalue.}
\end{asm}

\begin{defn}[Algorithm space]\label{def:algorithm}
An \emph{algorithm} is a Borel-measurable map $a: \sX \to [0,1]$, where $a(x)$ represents the probability of choosing $D=1$ given $\X=x$. Denote $\mathcal{A}(\sX)$ the set of all such measurable maps. Write $\mathcal{A}$ when the covariate space is clear from context.
\end{defn}
For each $a \in \mathcal{A}$ and $g \in \{r,b\}$, define the group-wise expected loss associated with loss function $\ell^\iota, \iota \in \{A,F\}$, when $D\sim a(\X)$ as:
\begin{align}
e_g^\iota(a) &\equiv \bEs[\ell^\iota(D,\Y^*)|\G=g]=\bEs[a(\X)\theta_1^{g,\iota}(\X)+(1-a(\X))\theta_0^{g,\iota}(\X)].\label{eq:eg_iota}
\end{align}

\citetalias{lia:lu:mu:oku24} (Appendix O.1) derive the FA-frontier and study its properties when the loss function used to measure accuracy differs from that used to measure fairness, which is the framework used in this paper.
Here we adapt their definitions to our notation.\footnote{They measure accuracy with $\ell: \{0,1\} \times \mathcal{Y} \to \R$ (our $\ell^A$) and fairness with $\tilde{\ell}: \{0,1\} \times \mathcal{Y} \to \R$ (our $\ell^F$).}
They work in $\R^2$ with a feasible set defined with respect to the \emph{accuracy} loss:
\begin{align}
    \E^A=\{(e_r^A(a), e_b^A(a)) : a \in \mathcal{A}\} \subset \R^2.\label{eq:E_A_def}
\end{align}
\citetalias{lia:lu:mu:oku24} measure the \emph{unfairness} of an algorithm $a$ through $|f(a)|$, with $f:\mathcal A\to\R$ a linear functional.
Their framework includes as a special case $f(a) = e_r^F(a)-e_b^F(a)$, the disparity in expected fairness loss between groups, and for simplicity we specialize our discussion to that case.
For each pair of expected accuracy losses $c = (c_r, c_b) \in \E^A$, define the minimal achievable unfairness as
\begin{equation}\label{eq:min_disparity}
d(c)\equiv \min\{|f(a)|:(e_r^A(a),e_b^A(a))=c,\ a\in\mathcal A\}.
\end{equation}
\begin{defn}[\citetalias{lia:lu:mu:oku24}'s FA-dominance and FA-frontier ]\label{def:liang_frontier}
For $c,c' \in \E^A$, we say $c' \succ_{FA^{\text{\tiny LLMO}}} c$ if $c_r' \leq c_r,~c_b' \leq c_b,~d(c') \leq d(c)$, with at least one inequality strict.
Then the FA-frontier relative to the accuracy loss is:
\begin{align}
    \sF^A\equiv\{c \in \E^A: \nexists c' \in \E^A \text{ such that } c' \succ_{FA^{\text{\tiny LLMO}}} c\}\subset\R^2.\label{eq:sF_A:LLMO}
\end{align}
\end{defn}

Given the group-wise expected losses induced by algorithm $a\in\mathcal A$ for $\iota\in\{A,F\}$ in \eqref{eq:eg_iota}, we instead propose to work with the \emph{feasible set} for $\left\{\left(e^A_r(a),e^A_b(a),e^F_r(a)-e^F_b(a)\right):a\in\mathcal A\right\}$, where we explicitly include a coordinate measuring the difference in expected fairness loss for the two groups.
Doing so allows us to characterize the properties of the FA frontier, extending \citetalias{lia:lu:mu:oku24}'s results, and to provide a representation of it that is amenable to estimation and inference, even in the presence of selectively observed data.

\begin{defn}[Feasible set]\label{def:feasible}
Given $\bP^*$ and $\mathcal A$, as $a$ ranges in $\mathcal A$, the \emph{feasible set} for $\left(e^A_r(a),e^A_b(a),e^F_r(a)-e^F_b(a)\right)$ is
\begin{align*}
\E \equiv \left\{(\e_1,\e_2,\e_3) \in \R^3: \exists a \in \mathcal{A} \text{ such that } \e_1 = e_r^A(a), \e_2 = e_b^A(a), \e_3 = e_r^F(a)-e_b^F(a)\right\}.
\end{align*}
\end{defn}
To simplify notation, we omit the dependence of $\E$ on $(\bP^*,\mathcal A)$.
In Lemma~\ref{prop:compact_convex}, we show that under Assumptions \ref{asm:moments}-\ref{asm:margin}, $\E$ is non-empty, compact, and convex.
Definition~\ref{def:feasible} leads to an alternative formulation of FA-dominance and to an FA-frontier in $\R^3$, as follows:\footnote{\citet[p.8]{aue:lia:tab:oku24} use a similar definition of dominance.}
\begin{defn}[FA-dominance and FA-frontier]\label{def:R3_frontier}
Given $\e,\e' \in \E$, $\e' \succ_{FA} \e$ if $\e'_1 \leq \e_1,~\e'_2 \leq \e_2,~|\e'_3| \leq |\e_3|$, with at least one inequality strict.
The FA-frontier is:
\begin{align}
    \sF\equiv\{\e \in \E: \nexists ~\e' \in \E \text{ such that } \e' \succ_{FA} \e\}\subset\R^3.\label{eq:sF:R3}
\end{align}
\end{defn}
We also define the Pareto frontier, which ignores considerations of fairness:
\begin{align}
 \sPF&\equiv\left\{\e\in\E:\nexists\,\e'\in \E \text{ s.t. }\e'_1\le \e_1,\ \e'_2\le \e_2,\text{ with at least one strict inequality}\right\},\label{eq:sPF}
\end{align}
and the points in $\R^3$ that minimize, respectively, $e_r^A$, $e_b^A$, and $|e_r^F-e_b^F|$:
\begin{align}
    R \equiv \arg\min_{\e\in\E}\e_1,\qquad B \equiv \arg\min_{\e\in\E},\qquad F\equiv\arg\min_{\e\in\E}|\e_3|.\label{eq:define_R_B_F}
\end{align}
\begin{figure}\captionsetup[subfigure]{font=footnotesize}
\vspace{-.8cm}

\centering
\subfigure[$\E\subset\{e_r^F>e_b^F\}$ has no kinks; $\sPF\subsetneq\sF$]{
\includegraphics[width=0.425\linewidth]{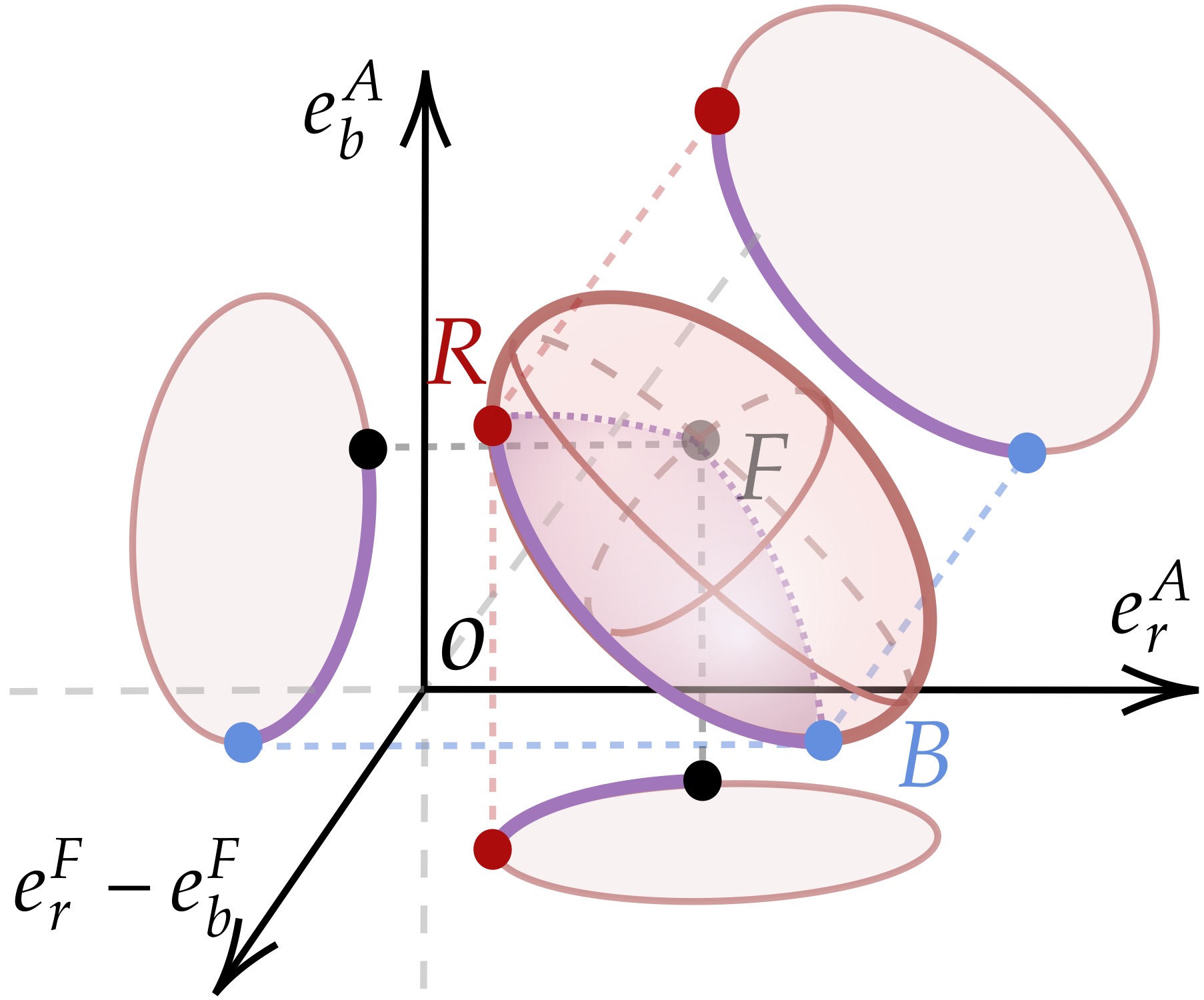}
} \hspace{1em}
\subfigure[$\E\subset\{e_r^F>e_b^F\}$ has kinks; $\sPF=\sF$]{
\includegraphics[width=0.425\linewidth]{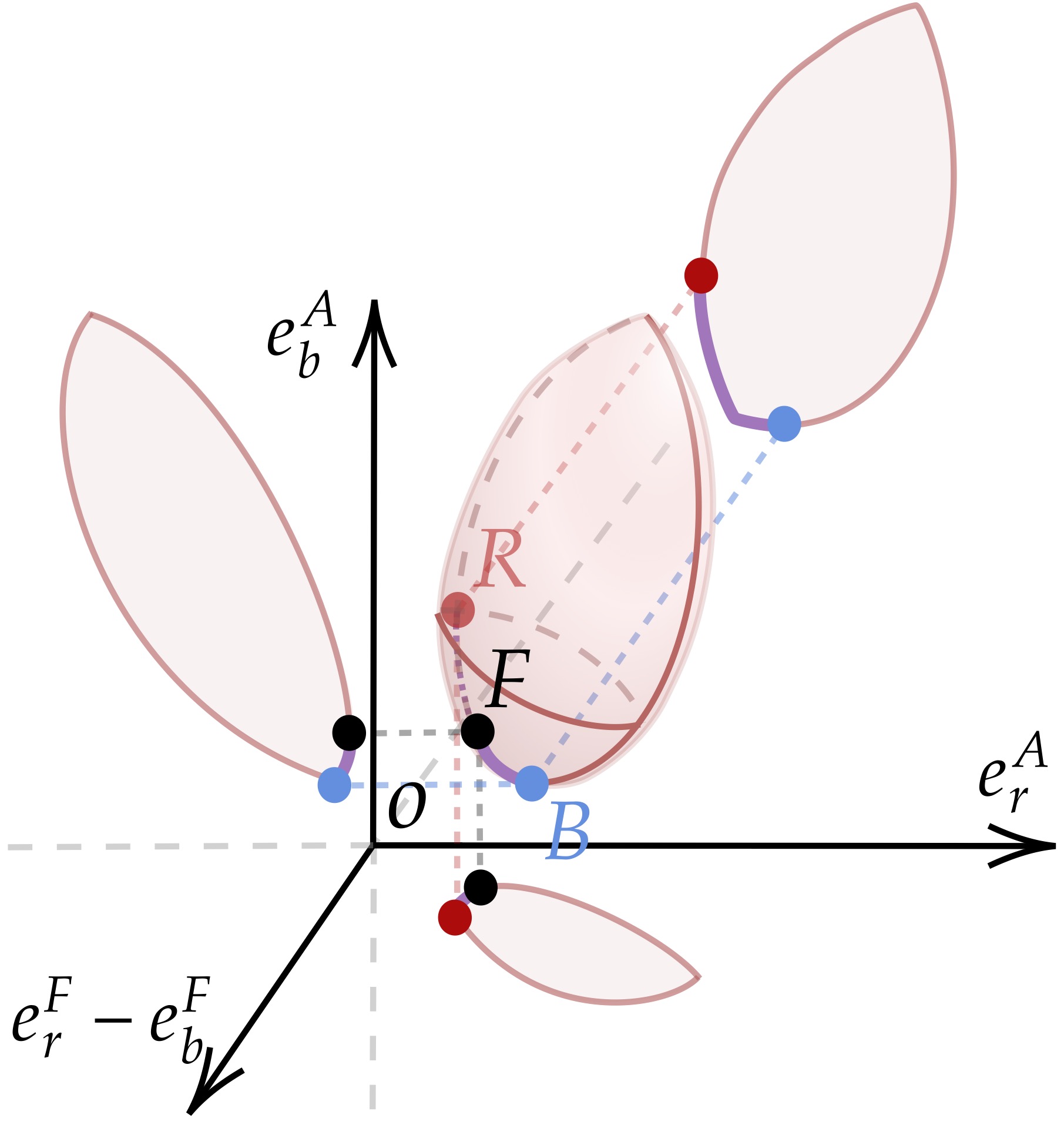}\hspace{.1cm}
}
\subfigure[\mbox{$R\in\{e_r^F>e_b^F\},\B\in\{e_r^F<e_b^F\}$; $\sPF\subsetneq\sF$}]{
\includegraphics[width=0.425\linewidth]{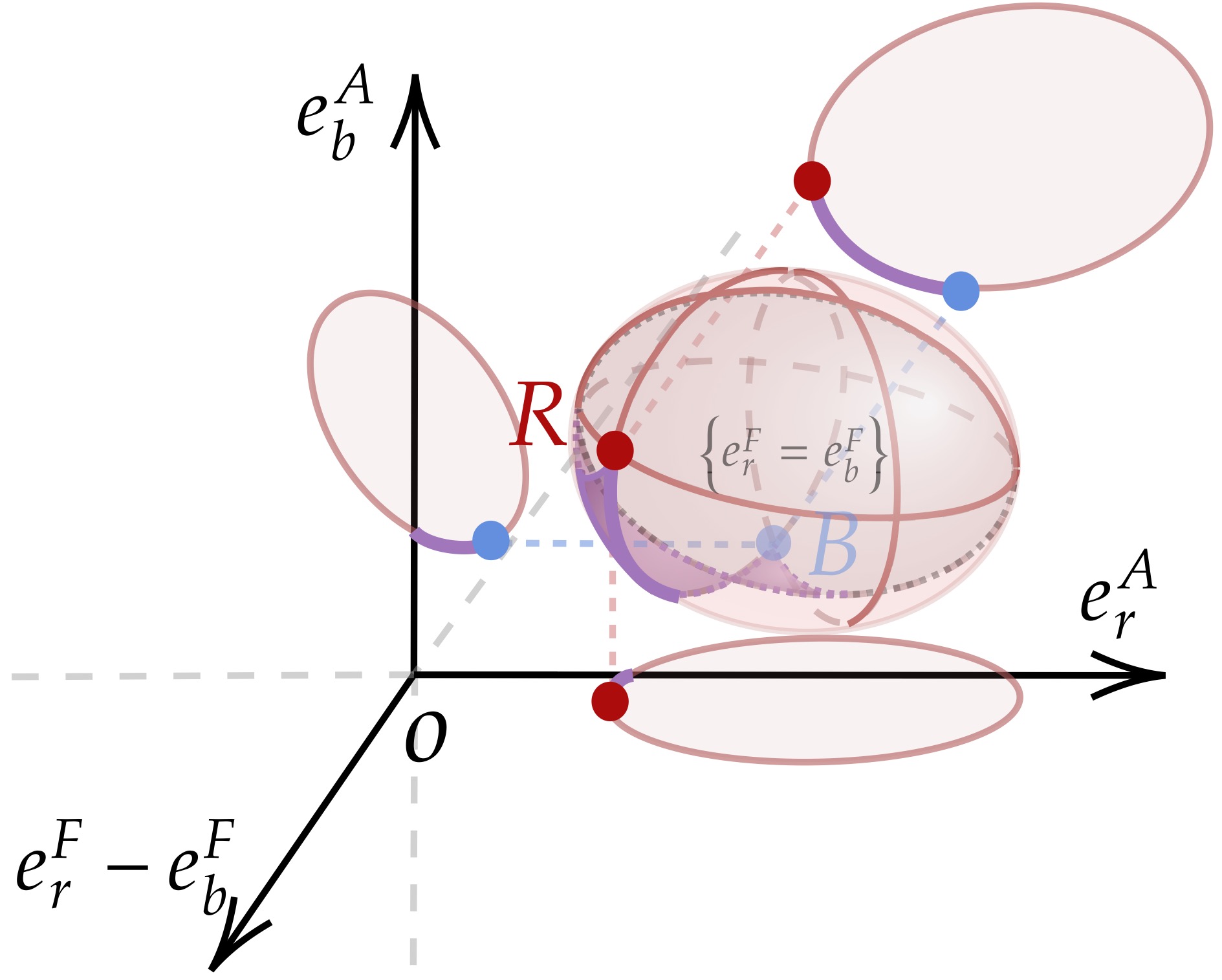}
}
\hspace{1em} \subfigure[\mbox{$\sPF \in\{e_r^F>e_b^F\},\E\cap\{e_r^F=e_b^F\}\neq\varnothing;\sPF\subsetneq\sF$}]{
\includegraphics[width=0.425\linewidth]{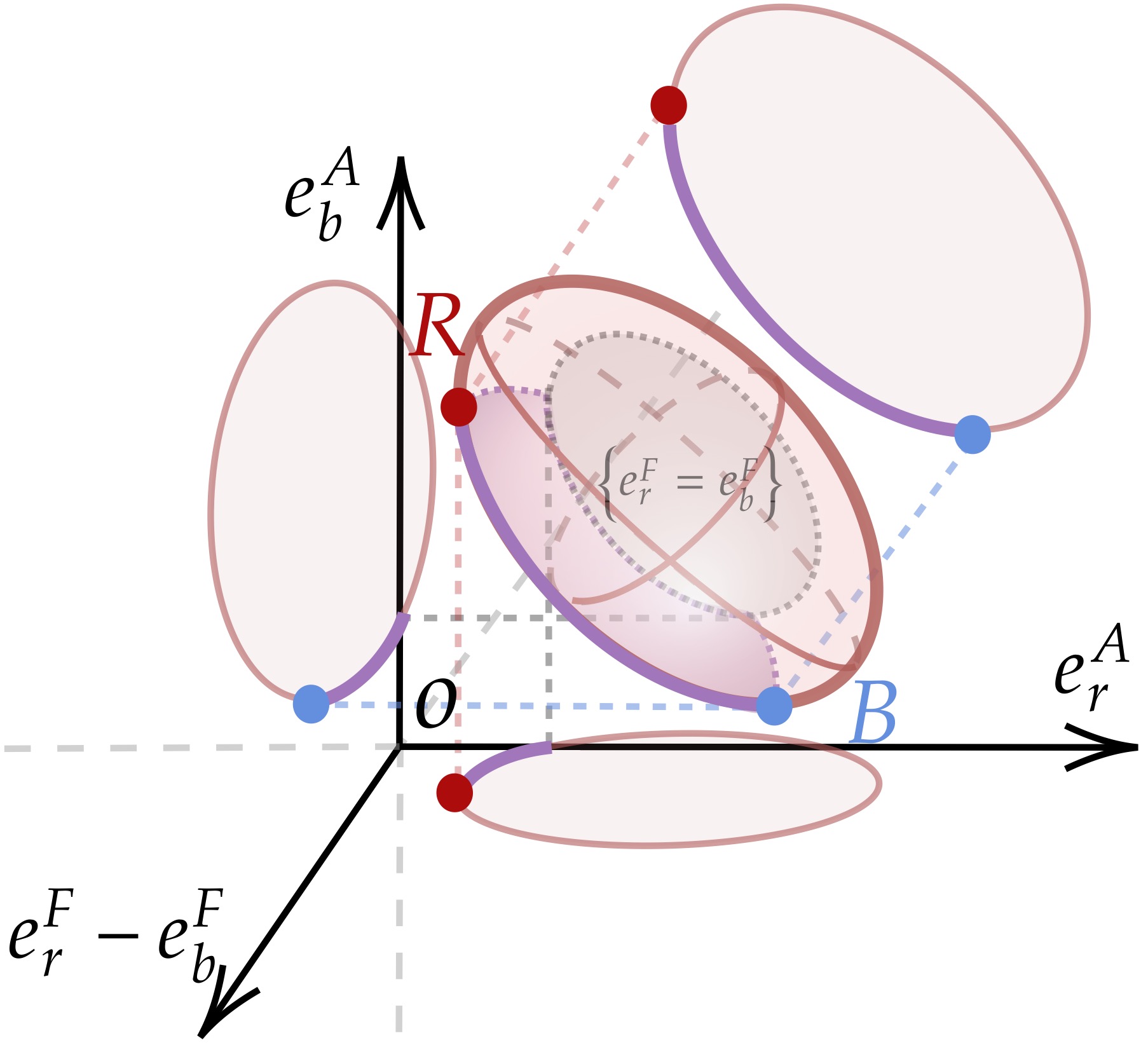}
}
\subfigure[$\sPF\subset\{e_r^F=e_b^F\}$; $\sF=\sPF$]{
\includegraphics[width=0.425\linewidth]{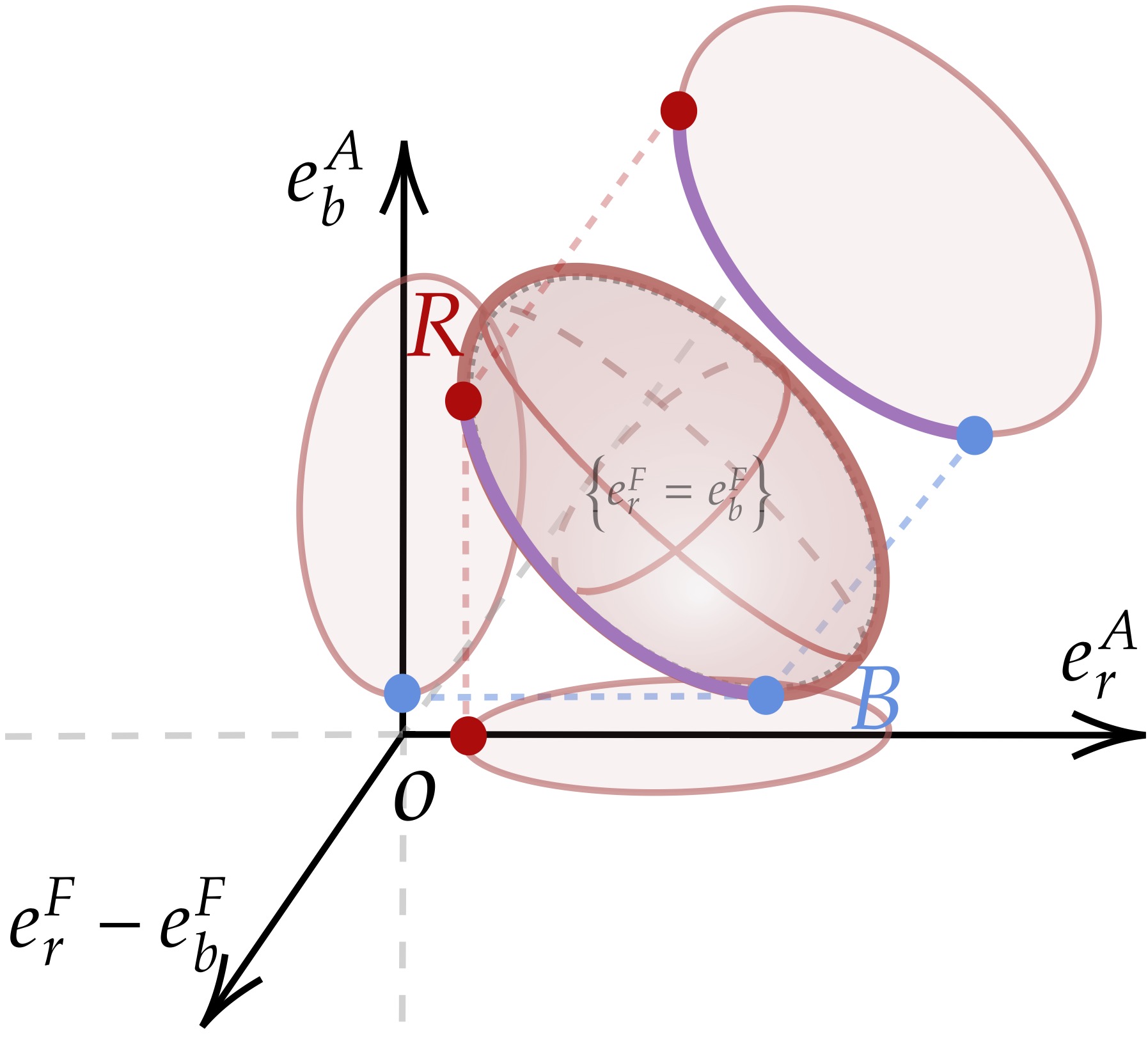}
}
\hspace{1em}
\subfigure[\mbox{$R,B\in\{e_r^F=e_b^F\},\sPF\nsubseteq\{e_r^F=e_b^F\}$; $\sPF\subsetneq\sF$}]{\includegraphics[width=0.425\linewidth]{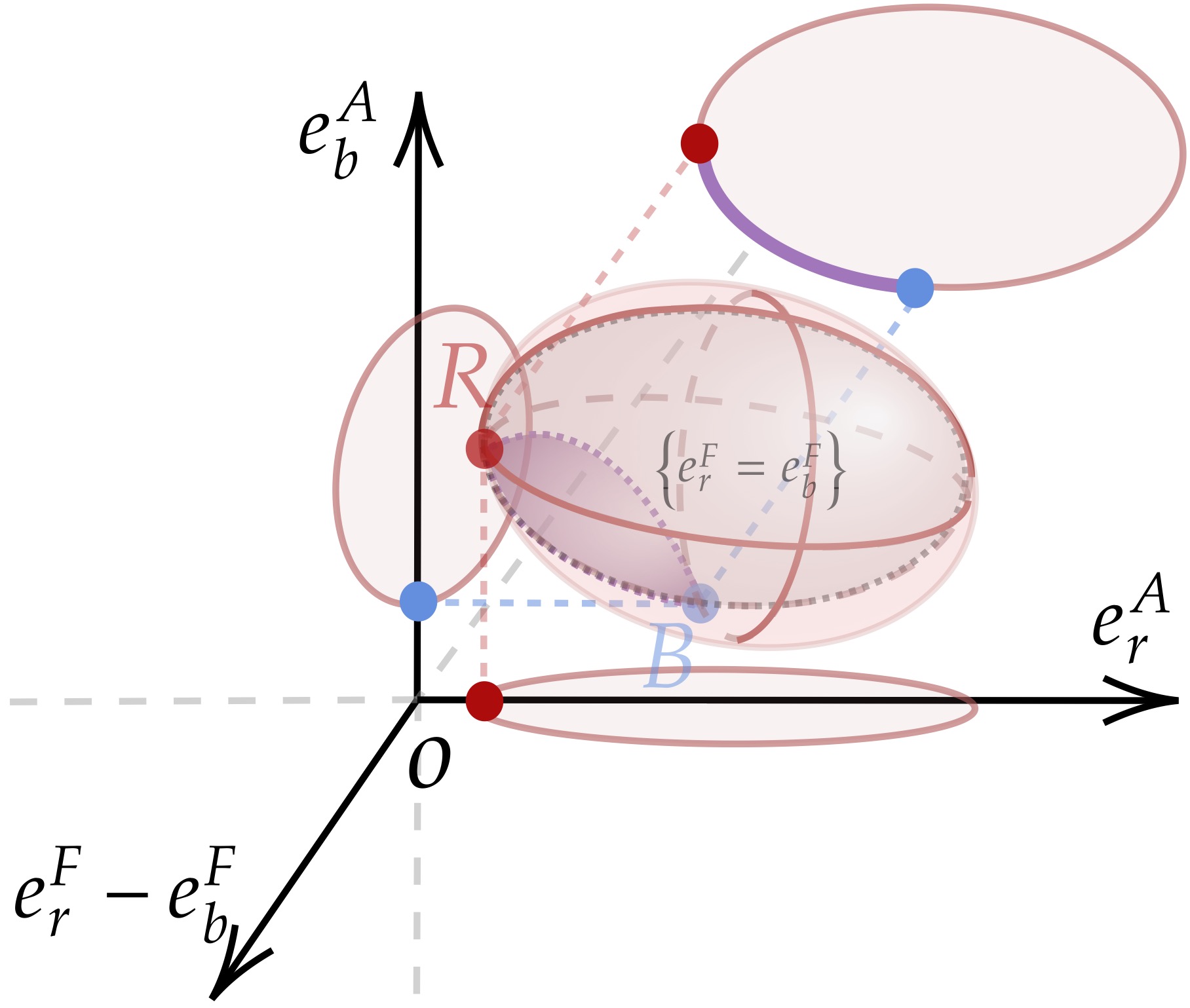}}
\caption{\footnotesize{$\E$ is the pink ellipsoid; its projections onto the two-dimensional coordinate planes are plotted as ellipses. $\sF$ is the purple shaded area/curve; $\sPF$ is the purple curve connecting~$R$ and $B$. Rear-surface curves on $\E$ are dashed; front-surface curves are solid. Illustrations inspired by analytic expressions.}}\label{fig:F}
\end{figure}
In Appendix~\ref{appn:B:geometry-sF} we show that the shape of our FA-frontier $\sF$ depends on the position of $\E$, $\sPF$, $R$ and $B$ relative to the hyperplane $\{\e_3=0\}=\{e_r^F=e_b^F\}$. 
Figure~\ref{fig:F} illustrates several examples.
In the next section, we establish that our FA-frontier and \citetalias{lia:lu:mu:oku24}'s FA-frontier coincide in what we argue is the only way that matters, and we provide a characterization of $\sF$ that does \emph{not} depend on the position of $\E$, $\sPF$, $R$ and $B$.

\subsection{Support Function-Based Characterization of the FA-Frontier}\label{subsec:supp-func-FAfrontier}
To begin, we characterize \citep[similarly to][Propositions 3.1-3.2]{liu:mol25v2} the support function of $\E$, denoted $\h_\E(\q)$ for $\q$ a direction vector in $\bS^2 \equiv \{\q\in\R^3:\|\q\|=1\}$.
This functional turns out to be crucial for all our results. 
We assume:
\begin{asm}[Margin condition]\label{asm:margin}
\textit{There exists a constant $m\in(0,1]$ such that for every $\delta>0$, $\sup_{\q\in\bS^2}\bP^*\big(|\q^\intercal \Delta\btheta(\X)|\le \delta\big)\lesssim\delta^m$.}
\end{asm}
\begin{theorem}[Support function of $\E$]\label{thm:support_E}
\textit{Under Assumptions \ref{asm:moments}(i)-\ref{asm:margin}, the support function of $\E$, denoted $\h_\E(\q)$, and its gradient with respect to $\q \in \bS^2$ are, respectively,
\begin{align}
\h_\E(\q)&\equiv\sup_{\e\in\E}\q^\intercal\e=\bEs[\q^\intercal \bLL_0 + \q^\intercal(\bLL_1-\bLL_0)\one\{\q^\intercal \Delta\btheta(\X)>0\}\Big]\label{eq:support},\\
    \nabla_\q\h_{\E}(\q)&=
    \bEs\bigl[\bLL_0+(\bLL_1-\bLL_0)\one\{\q^\intercal \Delta\btheta(\X)>0\}\bigr],\label{eq:ss}
\end{align}
and the set $\E$ is strictly convex. }
\end{theorem}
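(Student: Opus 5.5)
The plan has three steps. First, reduce the support function to a pointwise maximization over algorithms. Second, differentiate the resulting expression in $\q$ using the margin condition. Third, deduce strict convexity from the fact that the maximizer is unique in every direction.

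\textbf{Step 1: the support function.} By \eqref{eq:eg_iota} and Definition~\ref{def:ideal-labels}, for any $a\in\mathcal A$ the vector $\e(a)=(e_r^A(a),e_b^A(a),e_r^F(a)-e_b^F(a))$ can be written as $\bEs[\btheta_0(\X)+a(\X)\Delta\btheta(\X)]$. This is well defined under Assumption~\ref{asm:moments}(i), since $\LL_d^{g,\iota}$ has uniformly bounded conditional second moments and is therefore integrable. Hence, for fixed $\q\in\bS^2$,
\[
\q^\intercal\e(a)=\bEs[\q^\intercal\btheta_0(\X)]+\bEs[a(\X)\,\q^\intercal\Delta\btheta(\X)].
\]
The second term is maximized pointwise over $a(\X)\in[0,1]$ by $a_\q(\X)=\one\{\q^\intercal\Delta\btheta(\X)>0\}$, which is measurable and hence belongs to $\mathcal A$. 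This gives $\h_\E(\q)=\bEs[\q^\intercal\btheta_0(\X)+\q^\intercal\Delta\btheta(\X)\one\{\q^\intercal\Delta\btheta(\X)>0\}]$. Applying the tower property, with the indicator being $\X$-measurable, replaces $\btheta_d$ by $\bLL_d$ and yields \eqref{eq:support}.

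\textbf{Step 2: the gradient.} Write $\h_\E(\q)=\bEs[\q^\intercal\btheta_0(\X)]+\bEs[(\q^\intercal\Delta\btheta(\X))_+]$ and extend this expression positively homogeneously to all of $\R^3$. The map $\q\mapsto(\q^\intercal\Delta\btheta)_+$ is convex and Lipschitz with constant $\|\Delta\btheta(\X)\|$, which is integrable. It is differentiable at every $\q$ with $\q^\intercal\Delta\btheta(\X)\neq0$, with derivative $\Delta\btheta(\X)\one\{\q^\intercal\Delta\btheta(\X)>0\}$. Assumption~\ref{asm:margin}, letting $\delta\downarrow0$, gives $\bP^*(\q^\intercal\Delta\btheta(\X)=0)=0$ for every $\q\in\bS^2$. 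Therefore dominated convergence applied to the difference quotients yields differentiability of $\h_\E$ at $\q$, with gradient $\bEs[\btheta_0+\Delta\btheta\,\one\{\cdot\}]$. Again by the tower property this equals \eqref{eq:ss}.

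\textbf{Step 3: strict convexity.} Convexity of $\E$ follows from linearity of $a\mapsto\e(a)$ and convexity of $\mathcal A$. Compactness follows because $\h_\E$ is finite and $\E$ is the image of a weak-* compact set under a weak-* continuous linear map. Alternatively, we may take Lemma~\ref{prop:compact_convex} as given if its proof is independent of this theorem. The key fact is that differentiability of the support function of a compact convex set at $\q$ is equivalent to the face $\{\e\in\E:\q^\intercal\e=\h_\E(\q)\}$ being the singleton $\{\nabla\h_\E(\q)\}$.

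Since $\h_\E$ is differentiable in every direction, every exposed face is a singleton. Now suppose $\E$ were not strictly convex. Then there would be distinct $\e,\e'\in\E$ whose open segment contains no interior point of $\E$. That segment would then lie in $\partial\E$, and a supporting hyperplane at its midpoint would contain the whole segment. This produces a non-singleton exposed face, a contradiction. For full rigour one can note directly that any maximizer $a$ of $\q^\intercal\e(a)$ must equal $a_\q$ almost everywhere on $\{\q^\intercal\Delta\btheta\neq0\}$, a set of full measure; hence $\e(a)=\e(a_\q)$.

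I expect the main obstacle to be the strict-convexity step. It requires a careful definition (exposed faces being singletons, i.e. the boundary containing no segments) and the link between a unique maximizer and differentiability. If $\E$ has empty interior the argument must be modified, but Assumption~\ref{asm:moments}(ii) should rule this out by ensuring that $\Delta\btheta(\X)$ is not confined to a plane.
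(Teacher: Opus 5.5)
Your proposal is correct and follows the paper's proof essentially step for step. The three steps are pointwise maximization of $\bEs[a(\X)\q^\intercal\Delta\btheta(\X)]$; dominated convergence on the difference quotients, with the margin condition making $\{\q^\intercal\Delta\btheta(\X)=0\}$ null; and singleton support sets implying strict convexity, where the paper simply cites Schneider's Corollary 1.7.3 and you spell out the supporting-hyperplane argument and the a.e.\ uniqueness of the maximizer.

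One small correction to your closing remark: full-dimensionality of $\E$ comes from Assumption~\ref{asm:margin} (part (iii) of Lemma~\ref{prop:compact_convex}), not from Assumption~\ref{asm:moments}(ii), which restricts the conditional variance of the labels rather than where $\Delta\btheta(\X)$ lies. In any case, your supporting-hyperplane argument does not need a nonempty interior.
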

Eq.~\eqref{eq:ss} implies that $\ss_\E(\q)\equiv\{\e\in\E:\q^\intercal\e=\h_\E(\q)\}$, the support set of $\E$ in direction $\q\in\bS^2$, equals the singleton $\nabla_\q\h_{\E}(\q)$ \citep[Corollary 1.7.3]{sch93}. It yields that
\begin{align*}
    R =\bEs\bigl[\bLL_0+(\bLL_1-\bLL_0)\one\{[-1,0,0]^\intercal\Delta\btheta(\X)>0\}\bigr],\\
    B =\bEs\bigl[\bLL_0+(\bLL_1-\bLL_0)\one\{[0,-1,0]^\intercal\Delta\btheta(\X)>0\}\bigr].
\end{align*}

The next corollary, proven in \citet[Corollary 3.1]{liu:mol25v2}, characterizes precisely which algorithms yield expected losses on the boundary of $\E$:
\begin{corr}\label{corr:alg_as}
\textit{
    Let Assumptions~\ref{asm:moments}(i)-\ref{asm:margin} hold.
    Let $\partial\E=\{\ss_\E(\q):\q\in\bS^2\}$ denote the boundary of $\E$.
    Then for any algorithm $a\in\mathcal{A}(\sX)$, $\e(a)\equiv\left(e^A_r(a),e^A_b(a),e^F_r(a)-e^F_b(a)\right)\in\partial\E$ if and only if for some $\q\in\bS^2$, $a(\X)=\mathds{1}\{\q^\intercal\Delta\btheta(X)>0\},~\bP^*_\X\text{-a.s.}$}
\end{corr}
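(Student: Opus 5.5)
We prove both directions by writing $\q^\intercal\e(a)$ as an integral in $a(\X)$ and comparing it with the support function $\h_\E(\q)$ from Theorem~\ref{thm:support_E}. By \eqref{eq:eg_iota} and the tower property, for every $a\in\mathcal A$ and $\q\in\bS^2$,
\begin{align*}
\q^\intercal\e(a)=\bEs\bigl[\q^\intercal\btheta_0(\X)+a(\X)\,\q^\intercal\Delta\btheta(\X)\bigr].
\end{align*}
This is linear in $a$. Its pointwise maximizer over $a(\X)\in[0,1]$ is $\one\{\q^\intercal\Delta\btheta(\X)>0\}$, and the maximized value equals $\h_\E(\q)$ by \eqref{eq:support}, after conditioning on $\X$ inside that expectation. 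Assumption~\ref{asm:moments}(i) guarantees that all these expectations are finite.

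\emph{Sufficiency.} Suppose $a(\X)=\one\{\q^\intercal\Delta\btheta(\X)>0\}$ $\bP^*_\X$-a.s. Then $\e(a)$ is exactly the expression in \eqref{eq:ss}, so $\e(a)=\nabla_\q\h_\E(\q)$. By Theorem~\ref{thm:support_E} and the remark following it, this point is $\ss_\E(\q)$. Hence $\e(a)\in\partial\E$.

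\emph{Necessity.} Suppose $\e(a)\in\partial\E$. By the definition of $\partial\E$ in the statement, $\e(a)=\ss_\E(\q)$ for some $\q\in\bS^2$. If one instead starts from the topological boundary, use that $\E$ is compact and convex (Lemma~\ref{prop:compact_convex}), so a supporting hyperplane exists at every boundary point and gives such a $\q$. Then $\q^\intercal\e(a)=\h_\E(\q)$, so
\begin{align*}
\bEs\Bigl[\bigl(\one\{\q^\intercal\Delta\btheta(\X)>0\}-a(\X)\bigr)\,\q^\intercal\Delta\btheta(\X)\Bigr]=0 .
\end{align*}
The integrand is pointwise nonnegative: where $\q^\intercal\Delta\btheta>0$ it equals $(1-a)\,\q^\intercal\Delta\btheta\ge0$, and where $\q^\intercal\Delta\btheta<0$ it equals $-a\,\q^\intercal\Delta\btheta\ge0$. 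A nonnegative integrand with zero mean vanishes a.s. Therefore $a(\X)=1$ a.s. on $\{\q^\intercal\Delta\btheta>0\}$ and $a(\X)=0$ a.s. on $\{\q^\intercal\Delta\btheta<0\}$.

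\emph{Main obstacle: the tie set.} On $\{\q^\intercal\Delta\btheta(\X)=0\}$ the argument above leaves $a(\X)$ unrestricted. Assumption~\ref{asm:margin} handles this. For every $\delta>0$,
\begin{align*}
\bP^*\bigl(\q^\intercal\Delta\btheta(\X)=0\bigr)\le\bP^*\bigl(|\q^\intercal\Delta\btheta(\X)|\le\delta\bigr)\lesssim\delta^m ,
\end{align*}
and letting $\delta\downarrow0$ shows the tie set is $\bP^*_\X$-null. So $a(\X)=\one\{\q^\intercal\Delta\btheta(\X)>0\}$ $\bP^*_\X$-a.s., which proves necessity. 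The same null-tie fact is what makes $\ss_\E(\q)$ a singleton in Theorem~\ref{thm:support_E}, so the two directions are consistent.
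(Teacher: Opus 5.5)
Your proof is correct and is essentially the argument the paper relies on. The paper does not reprove the corollary; it cites \citet[Corollary 3.1]{liu:mol25v2} and presents the result as a generalized Neyman--Pearson statement. Your argument supplies that reasoning in self-contained form: you compare $\q^\intercal\e(a)$ with $\h_\E(\q)$ from Theorem~\ref{thm:support_E}, use the pointwise nonnegativity of $\bigl(\one\{\q^\intercal\Delta\btheta(\X)>0\}-a(\X)\bigr)\q^\intercal\Delta\btheta(\X)$, and invoke the margin condition to make the tie set null.
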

Corollary \ref{corr:alg_as} echos the generalized Neyman-Pearson Lemma \citep[Theorem 3.6.1]{leh:rom08}.\footnote{ To apply that theorem, in its notation set $m=1,\,c_1=0,\,\mu=\bP$, and let $f_1(x)=0$ for all $x\in\sX$ and $f_2(x)=\q^\intercal\Delta\btheta(x)$. The critical functions $\phi$ are our algorithms $a\in\mathcal{A}(\sX)$.} One of its important consequences is that \emph{no} algorithm such that $a(\X)\in(0,1)$ for a set of $\X$ of positive probability can yield group risks on $\partial\E$.

As the goal of this section is to characterize $\sF$, given $\e^*=(\e_1^*,\e_2^*,\e_3^*)\in \E$, we define
\begin{align}
    \cC(\e^*) \equiv \{\e=(\e_1,\e_2,\e_3)\in\R^3: \e_1\le \e_1^*,\ \e_2\le \e_2^*,\ |\e_3|\le |\e_3^*|\},\label{eq:def_C}
\end{align}
the set of group-wise expected accuracy losses and fairness disparity $\e\in\R^3$---whether or not they are feasible---that are both weakly more accurate and weakly fairer than $\e^*$. 
In Lemma~\ref{lem:supp_C}, we show that the support function of $\cC(\e^*)$ equals:
\begin{align}
    \h_{\cC(\e^*)}(\q)\equiv\sup_{\e\in\cC(\e^*)}\q^\intercal\e=
\begin{cases}
\q_1 \e_1^* + \q_2 \e_2^* + |\q_3||\e_3^*| & \text{if } \q_1\ge 0,\ \q_2\ge 0,\\
+\infty & \text{otherwise.}
\end{cases}\label{eq:supportC}
\end{align}

Given an algorithm $a^*\in\mathcal A$ and $\e^*\equiv(e^A_r(a^*),e^A_b(a^*),e^F_r(a^*)-e^F_b(a^*))$, we establish in Theorem~\ref{thm:FequivLLMO} that $\e^*\in\sF$ if and only if the set $\cC(\e^*)$, depicted in the two panels of Figure \ref{fig:separate} as the shaded green hyper-rectangle corresponding to two different values of $\e^*$, can be properly separated from $\E$ \citep[for a definition of proper separation, see][p.12]{sch93}.
Panel (a) depicts a case where $\e^*\in\sF$ and proper separation occurs; panel (b) depicts a case where $\e^*\notin\sF$ and proper separation fails. 
We also show that $\e^*\in\sF$ if and only if $(\e_1^*,\e_2^*)\in\sF^A$ with $\e_1^*=e^A_r(a^*)$ and $\e_2^*=e^A_b(a^*)$, and $|\e_3^*|$ equals the minimum absolute disparity in \eqref{eq:min_disparity} associated with $(\e_1^*,\e_2^*)$. 
To do so, we define the projection $\pi: \R^3 \to \R^2$ by $\pi(x_1, x_2, x_3) = (x_1, x_2)$, and let $\pi(\E)=\{\pi(\e):\e\in\E\} = \E^A$.
\begin{figure}
\centering\captionsetup[subfigure]{font=footnotesize}
\centering
\hspace{1cm}
\subfigure[$\e^*$ is on $\sF$]{\includegraphics[width=0.4\linewidth]{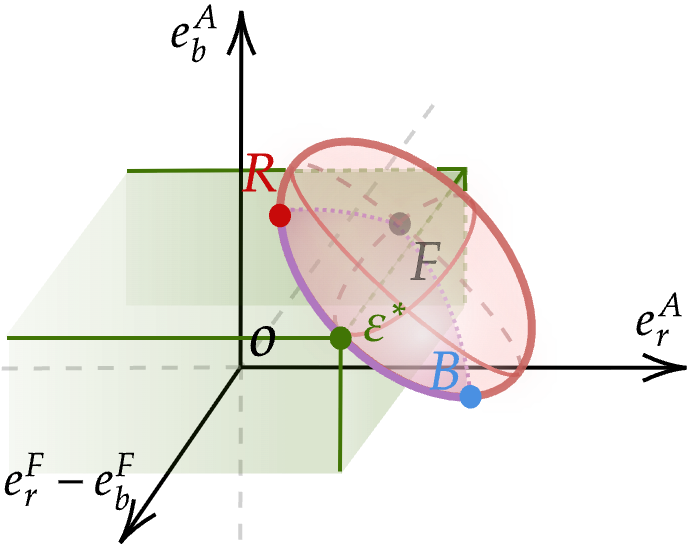}}
\hspace{1.5cm}
\subfigure[$\e^*$ is off $\sF$]{\includegraphics[width=0.4\linewidth]{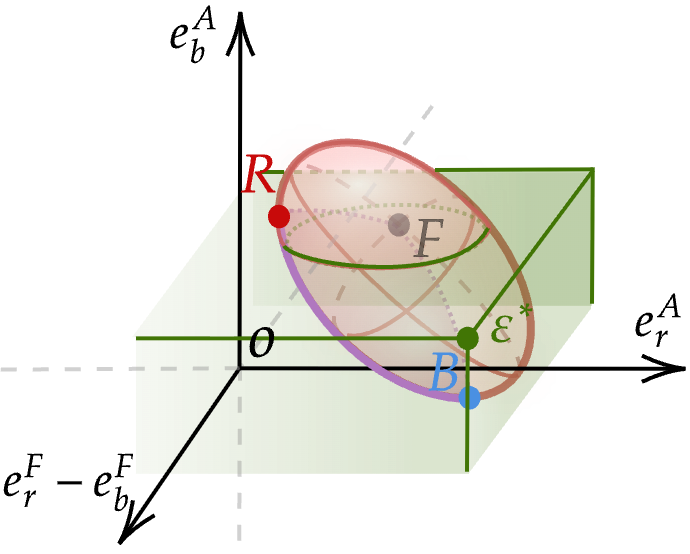}}
\caption{\footnotesize{The set $\cC(\e^*)$, which collects all FA improvements relative to $\e^*$, is the region shaded in green. Panel (a) shows an example where $\e^*$ lies on the frontier and there exists a hyperplane that properly separates $\cC(\e^*)$ and $\E$, whereas in panel (b) $\e^*$ is not on the frontier and no hyperplane can separate $\cC(\e^*)$ and $\E$. 
}}
\label{fig:separate}
\end{figure}
\begin{theorem}\label{thm:FequivLLMO}
\textit{Let $\e^*=(\e_1^*,\e_2^*,\e_3^*)\in \E$ and define
$\widetilde{\bS}^2\equiv\{q\in \bS^2:q_1\ge 0,\ q_2\ge 0\}$. Under Assumptions~\ref{asm:moments}(i)-\ref{asm:margin}, the following are equivalent:
\begin{enumerate}[label=(\roman*)]
\item\label{thm:FequivLLMO_e_in_sF} $\e^*\in \sF$;
\item\label{thm:FequivLLMO_E_single_inters_C} $\E\cap\cC(\e^*)=\{\e^*\}$;
\item\label{thm:FequivLLMO_support} $\min_{q\in \widetilde{\bS}^2}
\left( \h_{\cC(\e^*)}(q)+\h_{\E}(-q) \right)=0$;
\item\label{thm:FequivLLMO_proj} $\pi(\e^*)=(\e_1^*,\e_2^*)\in \sF^A$ and $|\e_3^*|=d\big((\e_1^*,\e_2^*)\big)$.
\end{enumerate}}
\end{theorem}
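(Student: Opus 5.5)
We prove the cycle of equivalences (i)$\Leftrightarrow$(ii), (ii)$\Leftrightarrow$(iii), and (i)$\Leftrightarrow$(iv). Throughout we use Lemma~\ref{prop:compact_convex} ($\E$ is nonempty, compact and convex), Theorem~\ref{thm:support_E} (support sets $\ss_\E(\q)$ are singletons, i.e.\ $\E$ is strictly convex), and the support function \eqref{eq:supportC} of $\cC(\e^*)$ from Lemma~\ref{lem:supp_C}. Note that $\cC(\e^*)$ is a convex box, since $|\e_3|\le|\e_3^*|$ is the interval $[-|\e_3^*|,|\e_3^*|]$, and that $\e^*\in\E\cap\cC(\e^*)$ always.

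\emph{Step 1: (i)$\Leftrightarrow$(ii).} Any $\e'\in\E$ with $\e'\succ_{FA}\e^*$ lies in $\cC(\e^*)$ and differs from $\e^*$, so (ii)$\Rightarrow$(i).

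Conversely, suppose $\e'\in\E\cap\cC(\e^*)$ with $\e'\neq\e^*$. If some inequality in the definition of $\cC(\e^*)$ is strict, then $\e'\succ_{FA}\e^*$. Otherwise the only remaining possibility is $\e'=(\e_1^*,\e_2^*,-\e_3^*)$ with $\e_3^*\neq0$. By convexity of $\E$, the midpoint $(\e_1^*,\e_2^*,0)$ then belongs to $\E$, and it FA-dominates $\e^*$ strictly in the third coordinate. So (i)$\Rightarrow$(ii).

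\emph{Step 2: (ii)$\Leftrightarrow$(iii).} For every $\q\in\widetilde{\bS}^2$, evaluating both suprema at $\e^*$ gives
\begin{align*}
\h_{\cC(\e^*)}(\q)+\h_\E(-\q)=\sup_{\cC(\e^*)}\q^\intercal\e-\inf_{\E}\q^\intercal\e\ge \q^\intercal\e^*-\q^\intercal\e^*=0 .
\end{align*}
On the compact set $\widetilde{\bS}^2$ both terms are finite and continuous, so the minimum exists and is $\ge0$.

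For (iii)$\Rightarrow$(ii), take a minimizer $\q$ with value $0$. Then for any $\e'\in\E\cap\cC(\e^*)$,
\begin{align*}
\q^\intercal\e'\le \h_{\cC(\e^*)}(\q)=\inf_\E \q^\intercal\e\le \q^\intercal \e',
\end{align*}
so $\e'\in\ss_\E(-\q)$. Likewise $\e^*\in\ss_\E(-\q)$. Since this support set is a singleton by Theorem~\ref{thm:support_E}, $\e'=\e^*$.

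For (ii)$\Rightarrow$(iii), which I expect to be the main obstacle, I would apply a separation theorem for convex sets. Assumption~\ref{asm:moments}(ii) should give $\E$ nonempty interior; if not, I would use relative interiors. The first task is to show $\mathrm{int}\,\E\cap\cC(\e^*)=\varnothing$.
\begin{itemize}
\item $\e^*\notin\mathrm{int}\,\E$: otherwise the point $\e^*-t(1,0,0)$, for small $t>0$, would lie in $\E\cap\cC(\e^*)$, contradicting (ii).
\item No other point $p\ne\e^*$ lies in $\mathrm{int}\,\E\cap\cC(\e^*)$: otherwise, by convexity of both sets, the whole segment $[\e^*,p]$ would lie in $\E\cap\cC(\e^*)$, again contradicting (ii).
\end{itemize}
Separation then yields $\q\neq0$ with $\sup_{\cC(\e^*)}\q^\intercal\e\le\inf_\E\q^\intercal\e$. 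Finiteness of the left side forces $\q_1,\q_2\ge0$ by \eqref{eq:supportC}. After normalizing, $\q\in\widetilde{\bS}^2$ attains value $\le 0$, hence the minimum equals $0$.

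\emph{Step 3: (i)$\Leftrightarrow$(iv).} First observe that for $c\in\E^A$, the fiber $\{\e_3:(c,\e_3)\in\E\}$ is a compact interval because $\E$ is compact and convex. Hence the minimum in \eqref{eq:min_disparity} is attained by some point of $\E$.

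For (i)$\Rightarrow$(iv), there are two cases.
\begin{itemize}
\item If $|\e_3^*|>d(\pi(\e^*))$, the point in the fiber over $\pi(\e^*)$ attaining $d(\pi(\e^*))$ FA-dominates $\e^*$, contradicting (i).
\item If some $c'\in\E^A$ satisfies $c'\succ_{FA^{\text{\tiny LLMO}}}\pi(\e^*)$, pick $\e'\in\E$ with $\pi(\e')=c'$ and $|\e'_3|=d(c')\le d(\pi(\e^*))\le|\e_3^*|$. Because one LLMO inequality is strict, $\e'\succ_{FA}\e^*$, again contradicting (i).
\end{itemize}

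For (iv)$\Rightarrow$(i), suppose $\e'\succ_{FA}\e^*$ and set $c'=\pi(\e')$. Then $d(c')\le|\e'_3|\le|\e^*_3|=d(\pi(\e^*))$.
\begin{itemize}
\item If the strict inequality in $\e'\succ_{FA}\e^*$ occurs in coordinate 1 or 2, then $c'\succ_{FA^{\text{\tiny LLMO}}}\pi(\e^*)$, contradicting $\pi(\e^*)\in\sF^A$.
\item Otherwise $c'=\pi(\e^*)$ and $|\e'_3|<|\e^*_3|$. This gives $d(\pi(\e^*))\le|\e'_3|<|\e_3^*|$, contradicting $|\e_3^*|=d(\pi(\e^*))$.
\end{itemize}
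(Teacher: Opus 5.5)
Your proof is correct. Its (ii)$\Leftrightarrow$(iii) part coincides with the paper's Steps 3--4: separation for (ii)$\Rightarrow$(iii), and singleton support sets from Theorem~\ref{thm:support_E} for the converse. You additionally verify the separation hypothesis $\operatorname{int}\E\cap\cC(\e^*)=\varnothing$, which the paper leaves implicit.

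Where you differ is the decomposition of the remaining equivalences. The paper treats (i)$\Leftrightarrow$(ii) as immediate from the definitions and then proves (ii)$\Leftrightarrow$(iv), using the midpoint $(\e_1^*,\e_2^*,0)$ inside (iv)$\Rightarrow$(ii) to exclude the reflected point $(\e_1^*,\e_2^*,-\e_3^*)$. You move that midpoint argument into (i)$\Rightarrow$(ii). You then prove (i)$\Leftrightarrow$(iv) directly by a purely order-theoretic argument, needing only compactness of $\E$ so that the minimum defining $d$ is attained. Your placement is the tighter one: (i)$\Rightarrow$(ii) is not a mere restatement of the definitions, precisely because $(\e_1^*,\e_2^*,-\e_3^*)\in\cC(\e^*)$ does not FA-dominate $\e^*$.

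Two small corrections.
\begin{enumerate}
\item The nonempty interior of $\E$ comes from Lemma~\ref{prop:compact_convex}(iii), i.e.\ from the margin condition (Assumption~\ref{asm:margin}). It does not come from Assumption~\ref{asm:moments}(ii), which concerns conditional variances of the labels rather than the spread of $\Delta\btheta(\X)$. Your relative-interior fallback would also work, since every relative-interior point of $\cC(\e^*)$ has first coordinate strictly below $\e_1^*$ and is therefore excluded from $\E$ by (ii).
\item In your second bullet the segment argument is unnecessary, since $p$ itself already lies in $(\E\cap\cC(\e^*))\setminus\{\e^*\}$.
\end{enumerate}
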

An immediate consequence of Theorem~\ref{thm:FequivLLMO} is that $\sF$ captures as many features of the trade-off between accuracy and fairness as $\sF^A$ does in \citetalias{lia:lu:mu:oku24}'s original contribution, while preserving convexity of the feasible set and yielding a simple characterization of the FA-frontier.\footnote{In fact, $\sF$ yields a finer characterization than $\sF^A$. Consider algorithms $a,a'\in\mathcal A$ such that $\e:=\e(a)=(0.5,0.5,1)$ and $\e':=\e(a')=(0.5,0.5,0)$. Then $\e'\succ_{FA}\e$ but $c'\not\succ_{FA^{\text{\tiny LLMO}}} c$ for $c=(\e_1,\e_2)$ and $c'=(\e_1',\e_2')$.}
Indeed, using the equivalence $\ref{thm:FequivLLMO_e_in_sF}\Leftrightarrow\ref{thm:FequivLLMO_support}$ in Theorem~\ref{thm:FequivLLMO}, similarly to \citet[Proposition 3.3]{liu:mol25v2}, we have
\begin{equation}\label{eq:support_frontier}
\sF=\left\{\e^*\in\E:~\min_{\q \in \tilde\bS^2} \left[\h_{\cC(\e^*)}(\q) + \h_{\E}(-\q)\right] = 0\right\}.
\end{equation}

When the same loss function is used to measure accuracy and fairness, as in \citetalias{lia:lu:mu:oku24}'s main text and in \citet{liu:mol25v2}, it is simple to characterize when the FA-frontier coincides with the Pareto frontier in \eqref{eq:sPF}.\footnote{In that case, $\ell^A=\ell^F$ and $e_g^A=e_g^F,\,g\in\{r,b\}$, so $\E$ can be defined in $\R^2$ and \citetalias{lia:lu:mu:oku24} show that the Pareto and the FA frontiers coincide if and only if $R$ and $B$, defined in \eqref{eq:define_R_B_F}, lie on opposite sides of the $45^o$ line.}
When different loss functions are used, such characterization is harder to obtain, and \citetalias{lia:lu:mu:oku24} provide only a sufficient condition for it.
Working with the feasible set $\E\subset\R^3$ and its support function, we are able to provide a simple, necessary and sufficient, testable condition  under which $\sPF=\sF$.
The characterization is valuable because when $\sF$ does not coincide with $\sPF$, the trade-off between fairness and accuracy becomes substantial: a policymaker who values fairness may choose to deploy an algorithm that is Pareto dominated in accuracy loss, to ameliorate its performance relative to fairness loss.
To derive the result, it is useful to distinguish between cases in which $\E$ has kinks, and cases where it does not, through the following condition.
\begin{asm}[No Kinks]
    \label{asm:no-kink}
    \textit{For any $\q,\v\in\bS^2,\q\neq\v$, 
\begin{align}
\bP\left(\q^\intercal\Delta\btheta(X)>0,v^\intercal\Delta\btheta(\X)<0\right)>0.\label{eq:no_kinks_cond}
\end{align}
}
\end{asm}
Assumption \ref{asm:no-kink} is a necessary and sufficient condition under which $\E$ has no kinks (see Supplemental Appendix B.2.3 in \citeauthor{bon:mag:mau12}, \citeyear{bon:mag:mau12} and Remark 3.3 in \citeauthor{liu:mol25v2}, \citeyear{liu:mol25v2}). It yields injectivity of the support map of $\E$, $\ss_\E(\q) = \arg \max_{\e \in \E} \q^\intercal \e$ (see Theorem~\ref{thm:support_E} and the discussion following it). 

The next result characterizes when $\sPF$ and $\sF$ are exactly the same.
\begin{theorem}\label{thm:Pareto_equal_FA}
   \textit{Under Assumptions \ref{asm:moments}(i)-\ref{asm:margin}, $\sPF=\sF$ if and only if for all $\e^*\in\sF$ there exists $\alpha\in[0,1]$ such that for $\q(\alpha)=\tfrac{[\alpha,1-\alpha,0]^\intercal}{\Vert[\alpha,1-\alpha,0]^\intercal\Vert}$, $\e^*=\arg\min_{\eta\in \E} \q(\alpha)^\intercal \eta$.
   If Assumption~\ref{asm:no-kink} also holds, then $\sPF=\sF$ if and only if $\sPF\subset\{\e_3=0\}$.
}
\end{theorem}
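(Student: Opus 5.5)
The plan is to first identify $\sPF$ with the set of minimizers of $\q(\alpha)^\intercal\eta$ over $\E$, and then to show that $\sPF\subset\sF$ always holds. Both claims of the theorem then reduce to deciding when $\sF\subset\sPF$.

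\textbf{Step 1 ($\sPF$ as a family of minimizers).} By Theorem~\ref{thm:support_E}, $\E$ is compact and strictly convex. Hence for each $\alpha\in[0,1]$ the minimizer $\e(\alpha)\equiv\arg\min_{\eta\in\E}\q(\alpha)^\intercal\eta$ is unique; it is the support set $\ss_\E(-\q(\alpha))$. I will show that $\sPF=\{\e(\alpha):\alpha\in[0,1]\}$.

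For the inclusion ``$\supseteq$'', suppose $\e'$ weakly improves on $\e(\alpha)$ in the first two coordinates. Then $\q(\alpha)^\intercal\e'\le\q(\alpha)^\intercal\e(\alpha)$, so uniqueness of the minimizer gives $\e'=\e(\alpha)$. This covers the endpoints $R=\e(1)$ and $B=\e(0)$ as well.

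For the inclusion ``$\subseteq$'', I apply the standard supporting-hyperplane argument to the compact convex set $\pi(\E)$. A Pareto point of $\pi(\E)$ is a minimizer of some nonnegative weighted sum of its coordinates. Its preimage in $\E$ is then a minimizer of $\q(\alpha)^\intercal\eta$ for the corresponding $\alpha$, and by uniqueness it equals $\e(\alpha)$.

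\textbf{Step 2 ($\sPF\subset\sF$, and the first claim).} Take $\e=\e(\alpha)$ and suppose $\e'\succ_{FA}\e$. Then $\e'_1\le\e_1$ and $\e'_2\le\e_2$, so $\q(\alpha)^\intercal\e'\le\q(\alpha)^\intercal\e$, and uniqueness forces $\e'=\e$. This contradicts the strict inequality in the definition of FA-dominance, so $\e\in\sF$. Consequently $\sPF=\sF$ holds if and only if $\sF\subset\sPF$. By Step~1, this says exactly that every $\e^*\in\sF$ equals $\arg\min_{\eta\in\E}\q(\alpha)^\intercal\eta$ for some $\alpha\in[0,1]$, which is the first claim.

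\textbf{Step 3 (sufficiency of $\sPF\subset\{\e_3=0\}$).} Assume $\sPF\subset\{\e_3=0\}$ and take $\e^*\in\sF$. By compactness of $\E$, there is a Pareto point $p\in\sPF$ with $p_1\le\e_1^*$ and $p_2\le\e_2^*$. By assumption $p_3=0$, so $|p_3|\le|\e_3^*|$. Thus either $p=\e^*$, or $p\succ_{FA}\e^*$, and the latter is impossible because $\e^*\in\sF$. Hence $\e^*=p\in\sPF$, so $\sF\subset\sPF$ and, by Step~2, $\sPF=\sF$. This direction does not use Assumption~\ref{asm:no-kink}.

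\textbf{Step 4 (necessity of $\sPF\subset\{\e_3=0\}$; the main obstacle).} Assume Assumption~\ref{asm:no-kink} and $\sPF=\sF$, and suppose for contradiction that some $p=\e(\alpha)$ has $p_3\neq0$; take $p_3>0$ without loss of generality. For small $t>0$ let $w(t)=(\alpha,1-\alpha,t)/\|(\alpha,1-\alpha,t)\|$ and $\e(t)\equiv\arg\min_{\eta\in\E}w(t)^\intercal\eta$.

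The first task is continuity. By \eqref{eq:ss} and the margin condition (Assumption~\ref{asm:margin}), the gradient $\nabla\h_\E$ is continuous in its direction argument, so $\e(t)\to p$ as $t\downarrow0$. Hence $\e(t)_3>0$ for $t$ small.

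The second task is to show $\e(t)\in\sF$ using criterion (iii) of Theorem~\ref{thm:FequivLLMO} with $\q=w(t)\in\widetilde{\bS}^2$. Since $w(t)_3=|w(t)_3|$ and $\e(t)_3>0$, formula \eqref{eq:supportC} gives $\h_{\cC(\e(t))}(w(t))=w(t)^\intercal\e(t)$. By definition of $\e(t)$ as a minimizer, $\h_\E(-w(t))=-w(t)^\intercal\e(t)$. The two terms therefore sum to $0$. Because the left-hand side of (iii) is $\ge0$ for any $\e^*\in\E$, the minimum is $0$ and $\e(t)\in\sF$.

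The third task is the contradiction. Since $\sF=\sPF$, we have $\e(t)=\e(\alpha')$ for some $\alpha'$. Then $\e(t)$ is the support point of $\E$ in both directions $-w(t)$ and $-\q(\alpha')$. These directions are distinct because $t>0$ while the third component of $\q(\alpha')$ is $0$. This contradicts injectivity of the support map, which holds under Assumption~\ref{asm:no-kink}.

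The delicate part is precisely this last step: establishing continuity of $\nabla\h_\E$ and applying injectivity. I will try to do both directly from \eqref{eq:ss} and Assumptions~\ref{asm:margin} and~\ref{asm:no-kink}, via the dominated convergence theorem.
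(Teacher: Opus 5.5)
Your proof is correct. For the first claim (your Steps~1--2, which re-derive what the paper takes from Lemma~\ref{lemma:PF-F-support-map}) and for the necessity of $\sPF\subset\{\e_3=0\}$, you follow the paper's route. Your Step~4 is the paper's tilting argument, except that you check frontier membership through criterion (iii) of Theorem~\ref{thm:FequivLLMO} instead of part~2 of that lemma. The injectivity you flag as delicate follows at once from \eqref{eq:ss}. For $\q\neq\v$, $\q^\intercal(\ss_\E(\q)-\ss_\E(\v))=\bEs[\q^\intercal\Delta\btheta(\X)(\one\{\q^\intercal\Delta\btheta(\X)>0\}-\one\{\v^\intercal\Delta\btheta(\X)>0\})]$. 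The integrand is nonnegative and strictly positive on the event in \eqref{eq:no_kinks_cond}, so the two support points differ.

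Where you genuinely differ is the sufficiency direction. The paper writes $\e^*\in\sF$ as $\ss_\E(-\q)$ with $\q_3\e^*_3\ge0$. It then compares $\e^*$ with $\rho=\ss_\E(-\tilde\q)$, for the projected Pareto direction $\tilde\q$, via $\q^\intercal\rho\le\q^\intercal\e^*\le\q^\intercal\rho$ and strict convexity. To make $\tilde\q$ well defined it must first exclude $\q=\pm u_3$, and that step leans on Assumption~\ref{asm:no-kink} (through $R\neq B$ and injectivity). Your argument takes a Pareto point $p$ weakly below $\e^*$ in the accuracy coordinates; since $p_3=0$, it weakly FA-dominates $\e^*$, which forces $p=\e^*$. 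This uses only compactness and the definitions. It therefore shows that the ``if'' half of the second claim already holds under Assumptions~\ref{asm:moments}(i)--\ref{asm:margin} alone. The paper's version, in exchange, keeps the whole proof inside the support-map parametrization of Lemma~\ref{lemma:PF-F-support-map}.

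A minor citation point: compactness of $\E$ comes from Lemma~\ref{prop:compact_convex}, not Theorem~\ref{thm:support_E}.
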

Intuitively, $\sPF=\sF$ when every point on the FA-frontier is Pareto-optimal. 
When each point on the boundary of $\E$ has a unique supporting direction (which is guaranteed by Assumption~\ref{asm:no-kink}), this happens if and only if $\sPF\subset\{\e_3=0\}=\{\e:e_r^F=e_b^F\}$; in this case, all points along the Pareto frontier are fair and there is no conflict between improving fairness and improving accuracy.
On the other hand, when $\E$ has kinks, one may have $\sPF=\sF$ even though $\sPF\nsubseteq\{\e_3=0\}$.
The complexity of the geometry of $\sF$ in relation to the geometry of $\E$ and the usefulness of our Theorem \ref{thm:Pareto_equal_FA} are illustrated in Figure~\ref{fig:F}. 
If $\E$ has no kinks and lies completely to one side of the hyperplane $\{\e_3=0\}$ (Panel (a), where $\E\subset\{\e_3>0\}=\{e_r^F>e_b^F\}$) or $R$ and $B$ lie on opposite sides of that hyperplane (Panel (c)), then $\sPF\subsetneq\sF$ since $\sPF \not\subset \{\e_3=0\}$. 
On the other hand, if $\E$ has kinks properly facing the origin, then it can be that $\sPF=\sF$ even when $\E$ lies completely to one side of the hyperplane $\{\e_3=0\}$ (Panel (b)).
If $\E$ crosses the hyperplane $\{\e_3=0\}=\{e_r^F=e_b^F\}$ with $\sPF$ fully contained in one of $\{\e_3>0\}$ or $\{\e_3<0\}$, then $\sPF\subsetneq\sF$ regardless of the presence or absence of kinks (Panel (d)), see Corollary~\ref{corr:crossing-one-sided-pareto}.
If $\E$ has no kinks and $\sPF\subset\{e_3=0\}$, then $\sPF=\sF$ (Panel (e)), yet having only $\{R,B\} \in\{e_3=0\}$ but not the entire Pareto frontier may yield $\sPF\subsetneq\sF$ (Panel (f)).
Hence, the characterization in Theorem~\ref{thm:Pareto_equal_FA} is particularly useful to avoid case-by-case distinctions and pre-tests.

\section{Selectively observed Labels}\label{sec:identif_and_estim}

Having fully characterized the FA-frontier in the idealized case when 
$(\Y^*,\X,\G,\Z)\sim\bP^*$ is observed, we turn to our goal of providing identification results and inference procedures for $\sF$ when only $(\Y,\X,\G,\Z)\sim\bP$ is observed, with $\Y=\Z\Y^*$ the selected labels and $\Z$ a selection indicator.
This problem has been studied, e.g., by \citet{lak:kle:les:lud:mul17,kle:lak:les:lud:mul17,ram:cos:ken25,kha:tam:yao}.
When evaluating whether algorithmic policies can improve upon human decision-making, selectively observed labels are compounded by the possibility that $\Z$ is determined by a decision maker (e.g., a judge) who observes information beyond the recorded covariates made available to the algorithm.
In this case, for a specific choice of loss functions, we report partial identification results.
We then provide point identification results and statistical inference procedures for the case where label observability is induced by deployment of a decision rule in $\mathcal A(\sX)$, so that $\Z$ is determined by an algorithm that uses only the inputs in $\sX$.

\subsection{Partial Identification of the FA-Frontier}\label{subsec:PI-CR-SP}
In this section, we allow $\Z$ to depend on variables unavailable to the algorithms in $\mathcal A(\sX)$.
This creates substantial challenges for identification and inference, because not only the feasible set $\E$ is partially identified, but even for a fixed algorithm $a^*\in\mathcal A(\sX)$, the vector of expected losses $\e^*\equiv\e(a^*):=(e_r^A(a^*),e_b^A(a^*),e_r^F(a^*)-e_b^F(a^*))$ is also only partially identified.
Consequently, extra care needs to be taken, when characterizing what can be learned about $\sF$ in \eqref{eq:support_frontier}, to correctly couple the possible values of $\E$ and $\e^*$ associated with each candidate distribution for the unobserved labels.

To make progress on this task, we specialize our analysis to the case where $\Y^*$ is a binary scalar variable, the accuracy loss measures classification error, $\ell^A(d,y) = \one\{d \neq y\}$, and the fairness loss measures statistical parity, $\ell^F(d,y) = \one\{d=1\}$.
Write $\Y=\Z\Y^*$ for the observed binary outcome. In this case the fairness coordinate depends only on the decision rule and the observable distribution of $(\X,\G)$, so lack of point identification comes entirely from the selectively observed outcomes entering the two accuracy coordinates.

Denote the unobserved conditional expectation of $\Y^*$ when $\Z=0$ by
\begin{align*}
    \lambda_g^*(x)\equiv \bEs[\Y^*| \X=x,\G=g,\Z=0],
\qquad g\in\{r,b\}
\end{align*}
and $\lambda^*(x)\equiv[\lambda_r^*(x),\lambda_b^*(x)]^\intercal$. 
Without restrictions on the selection process, $\lambda^*$ is unknown. 
As $\Y^*\in\{0,1\}$, every measurable map $\lambda\equiv(\lambda_r,\lambda_b):\sX\to[0,1]^2$ is compatible with $\bP$, the observed law of $(\Y,\X,\G,\Z)$.\footnote{By “compatible with the observed data” we mean that it yields a law of $(\Y^*,\X,\G,\Z)$ that reproduces the observed distribution of $(\Y,\X,\G,\Z)$ and satisfies $\Y=Z\Y^*$ almost surely.}
Let $\Lambda\equiv \left \{ \lambda: \sX \to [ 0,1]^2 \right \}$ denote this class of functions. 

We next derive the sharp identification region for $\h_\E(\q)$.
To do so, we need to introduce some notation.
Define the observable functions
\begin{align}
A_0(x)
&\equiv
\bE\left[
\left.
\begin{pmatrix}
\tfrac{Z\Y\one\{\G=r\}}{\mu_r},&
\tfrac{Z\Y\one\{\G=b\}}{\mu_b},&
0
\end{pmatrix}^\intercal
\right|\X=x
\right], \label{eq:PI-A0-main}\\
A_1(x)
&\equiv
\bE\left[
\left.
\begin{pmatrix}
\tfrac{(1-\Z\Y)\one\{\G=r\}}{\mu_r},&
\tfrac{(1-\Z\Y)\one\{\G=b\}}{\mu_b},&
\tfrac{\one\{\G=r\}}{\mu_r}-\tfrac{\one\{\G=b\}}{\mu_b}
\end{pmatrix}^\intercal
\right|\X=x
\right], \label{eq:PI-A1-main}\\
B(x)
&\equiv
\bE\left[
\left.
\begin{pmatrix}
\tfrac{(1-\Z)\one\{\G=r\}}{\mu_r} & 0\\
0 & \tfrac{(1-\Z)\one\{\G=b\}}{\mu_b}\\
0 & 0
\end{pmatrix}
\right|\X=x
\right].\label{eq:PI-B-main}
\end{align}
Lemma~\ref{lem:PI-support-functions} shows that the functions $\btheta_d(\cdot)$ in Definition~\ref{def:ideal-labels} associated with distribution $\lambda\in\Lambda$ satisfy $\btheta_0(x;\lambda)=A_0(x)+B(x)\lambda(x)$ and $\btheta_1(x;\lambda)=A_1(x)-B(x)\lambda(x)$.
Hence, the feasible set associated with distribution $\lambda$ and its support function are, respectively,
\begin{align}
    \E(\lambda)&=\left\{\bE\left[(1-a(X))\btheta_0(X;\lambda)+a(X)\btheta_1(X;\lambda)\right],\,a\in\mathcal{A}(\sX)\right\},\label{eq:PI_Elambda}\\
    \h_\E(\q;\lambda)&=\bE\left[
\max\left\{
\q^\intercal\btheta_0(X;\lambda),
\q^\intercal\btheta_1(X;\lambda)
\right\}
\right].\label{eq:PI-hE-main}
\end{align}
Accordingly, the sharp identified set for $\h_\E(\q)$ in direction $\q$ is $\{\h_\E(\q;\lambda):\lambda\in\Lambda\}$.

For a fixed algorithm $a^*$, let $A_{d,j}(x)$ denote the $j$th coordinate of $A_d(x)$, $d\in\{0,1\}$, in \eqref{eq:PI-A0-main}-\eqref{eq:PI-A1-main} and define
\begin{equation}
C(\q;a^*)
\equiv
\sum_{j=1}^2 q_j\bE\left[a^*(\X)A_{1,j}(\X)+(1-a^*(\X))A_{0,j}(\X)\right]
+|q_3|\,\left|\bE\left[a^*(\X)A_{1,3}(\X)\right]\right|.
\label{eq:PI-C-main}
\end{equation}
Then we can express $\h_{\cC(\e(a^*;\lambda))}$ in \eqref{eq:supportC} associated with $\lambda\in\Lambda$ as
\begin{equation}
    \h_{\cC(\e(a^*;\lambda))}(\q) = C(\q;a^*) + \bEs\left[  (1-2a^*(\X))~ \q^\intercal B(\X) \lambda(\X) \right]~.\label{eq:supportClambda}
\end{equation}

We now present an assumption on the observable distribution $\bP$ under which the margin condition (Assumption \ref{asm:margin}) holds for all distributions consistent with the data.  

\begin{asm}\label{asm:robust-margin}
\textit{There exists a constant $m\in(0,1]$ such that, for every $\delta>0$,
\begin{align*}
\sup_{\q\in\bS^2}
\bP_\X\left( \inf_{\s\in[0,1]^2}
\left|\q^\intercal\{A_1(X)-A_0(X)-2B(X)\s\} \right| \le\delta \right)  \lesssim\delta^m.
\end{align*}
}
\end{asm}
Under Assumption \ref{asm:robust-margin}, Theorem~\ref{thm:FequivLLMO} characterizes frontier membership for a given $\lambda$, where crucially we use the same candidate unobserved distribution $\lambda$ both in obtaining $\h_\E(\cdot)$ and $\h_{\cC(\e^*)}(\cdot)$. 
Recall $\e^*(a^*;\lambda)$ denotes the vector of expected losses induced by $a^*$ under unobserved distribution $\lambda$; then $a^*$ lies on the FA-frontier under $\lambda$ if and only if
\begin{align*}
\min_{\q\in\widetilde{\bS}^2}\left[\h_{\cC(\e(a^*;\lambda))}(\q)+\h_\E(-\q;\lambda)\right]=0.
\end{align*}
The next result shows how to check whether $a^*$ can generate expected losses on the frontier for \emph{some} distribution of the unobserved labels, through a finite dimensional optimization problem that involves functionals of the observed distribution $\bP$.
\begin{theorem}\label{prop:PI-frontier-corner}
\textit{Let Assumptions~\ref{asm:moments}(i) and \ref{asm:robust-margin} hold. Suppose $\Y^*\in\{0,1\}$, $\ell^A(d,y)=\one\{d\neq y\}$, and $\ell^F(d,y)=\one\{d=1\}$. Then a given algorithm $a^*$ generates expected losses on the FA-frontier for some distribution consistent with the observed data if and only if
    \begin{equation}
       \hspace{-.3cm} \min_{\q \in \tilde\bS^2}  \bE\left[  \max\{J_0(\lambda_1;\q,a^*),J_1(\lambda_0;\q,a^*)\} \right]  = 0.\label{eq:PI-corner-main}
    \end{equation}
where $\lambda_0=(0,0)^\intercal$, $\lambda_1=(1,1)^\intercal$, and
\begin{align}
J_0(\lambda;\q,a^*)
&\equiv
C(\q;a^*)-\q^\intercal A_0(\X)-2a^*(\X)\q^\intercal B(\X)\lambda,\nonumber\\
J_1(\lambda;\q,a^*)
&\equiv
C(\q;a^*)-\q^\intercal A_1(\X)+2(1-a^*(\X))\q^\intercal B(\X)\lambda.\label{eq:PI-G01-main}
\end{align}}
\end{theorem}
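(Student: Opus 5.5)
The plan is to reduce the statement to a pointwise minimization over the unobserved conditional means $\lambda(x)$, using the frontier characterization for a fixed $\lambda$.

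\textbf{Step 1 (reduction to a joint minimization).} Assumption~\ref{asm:robust-margin} implies that Assumption~\ref{asm:margin} holds under every $\lambda\in\Lambda$. This is because $\Delta\btheta(x;\lambda)=A_1(x)-A_0(x)-2B(x)\lambda(x)$ by Lemma~\ref{lem:PI-support-functions}, and the infimum over $\s\in[0,1]^2$ dominates the value at $\s=\lambda(x)$. Hence Theorem~\ref{thm:support_E}, Corollary~\ref{corr:alg_as} and Theorem~\ref{thm:FequivLLMO} apply for each $\lambda$. Define
\[
f(\q,\lambda)\equiv \h_{\cC(\e(a^*;\lambda))}(\q)+\h_\E(-\q;\lambda).
\]
Since $\e(a^*;\lambda)\in\E(\lambda)\cap\cC(\e(a^*;\lambda))$, we have $f\ge 0$ on $\widetilde{\bS}^2\times\Lambda$. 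Therefore $a^*$ lies on the frontier for some $\lambda$ if and only if $f(\q,\lambda)=0$ for some pair $(\q,\lambda)$. This holds if and only if $\min_{\q}\inf_{\lambda}f(\q,\lambda)=0$ with the infimum attained, so we may minimize over $\lambda$ first, for fixed $\q$.

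\textbf{Step 2 (algebra).} Substitute \eqref{eq:supportClambda} for the first term and \eqref{eq:PI-hE-main} at $-\q$ for the second. The latter equals $\bE[\max\{-\q^\intercal\btheta_0(\X;\lambda),-\q^\intercal\btheta_1(\X;\lambda)\}]$, with $\btheta_0=A_0+B\lambda$ and $\btheta_1=A_1-B\lambda$. Move the constant $C(\q;a^*)$ and the term $(1-2a^*)\q^\intercal B\lambda$ inside the max. This gives
\[
f(\q,\lambda)=\bE\big[\max\{J_0(\lambda(\X);\q,a^*),J_1(\lambda(\X);\q,a^*)\}\big].
\]
The problem is thus pointwise in $x$, and $\lambda$ is unrestricted in $[0,1]^2$ at each $x$.

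\textbf{Step 3 (monotonicity, lower bound).} For $\q\in\widetilde{\bS}^2$, the term $\q^\intercal B(x)\s$ is coordinatewise nondecreasing in $\s\in[0,1]^2$, because $B$ is diagonal with nonnegative entries and $q_1,q_2\ge0$. Hence $J_0$ is nonincreasing in $\s$ and $J_1$ is nondecreasing in $\s$. For every $\s$ this gives
\[
\max\{J_0(\s),J_1(\s)\}\ge\max\{J_0(\lambda_1),J_1(\lambda_0)\},
\]
so $f(\q,\lambda)\ge\bE[\max\{J_0(\lambda_1),J_1(\lambda_0)\}]$ for all $\lambda$. This yields necessity: if some $(\q,\lambda)$ gives $f=0$, the left side of \eqref{eq:PI-corner-main} is $\le0$.

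\textbf{Step 4 (attainment, main obstacle).} For sufficiency I need to show two things. First, the left side of \eqref{eq:PI-corner-main} is never negative. Second, when it equals $0$ at some $\q$, there is a $\lambda$ attaining the bound. Where $a^*(x)\in\{0,1\}$, the bound is attained pointwise:
\begin{itemize}
\item If $a^*(x)=1$, then $J_1=C-\q^\intercal A_1$ does not depend on $\lambda$, so taking $\lambda(x)=\lambda_1$ attains it.
\item If $a^*(x)=0$, then $J_0=C-\q^\intercal A_0$ does not depend on $\lambda$, so taking $\lambda(x)=\lambda_0$ attains it.
\end{itemize}
The selection $\lambda(x)=a^*(x)\lambda_1+(1-a^*(x))\lambda_0$ is measurable. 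So for binary $a^*$, the value in \eqref{eq:PI-corner-main} equals $\min_\q f(\q,\lambda)\ge0$ for this $\lambda$, and sufficiency follows.

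The hard case is $a^*$ non-binary on a set of positive probability. By Corollary~\ref{corr:alg_as}, applied for each $\lambda$, such an $a^*$ is then never on $\partial\E(\lambda)$, hence never on the frontier. Thus I must show that the lower bound in Step 3 is strictly positive for every $\q\in\widetilde{\bS}^2$. My first attempt would be to split the pointwise expression at the kink $t=\q^\intercal B(x)\s$. When $a^*\in(0,1)$, $J_0$ is strictly decreasing and $J_1$ strictly increasing in $t$. The robust margin condition (Assumption~\ref{asm:robust-margin}) then rules out a positive-probability set on which both $J_0$ and $J_1$ vanish, which the value $0$ would require. I would then combine this with the fact that $\bE[J_1(\lambda_0)]$ and $\bE[J_0(\lambda_1)]$ are bounded below by $C(\q;a^*)$ minus the support function evaluated at $a^*$. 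If this fails, the fallback is to strengthen Step 1 so that $a^*$ is a.s.\ binary whenever \eqref{eq:PI-corner-main} holds.
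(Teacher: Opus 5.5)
For binary $a^*$ your argument is correct, and it is more elementary than the paper's. The paper proceeds in four moves:
\begin{itemize}
\item it writes $\max\{J_0,J_1\}=\max_{t\in[0,1]}\{tJ_1+(1-t)J_0\}$;
\item it moves $\min_\lambda$ inside the expectation by measurable selection, and past $\max_t$ by a minimax theorem;
\item it solves for $\lambda(\X)\in[0,1]^2$ in closed form and maximizes the resulting piecewise-linear function of $t$ over $t\in\{0,a^*(\X),1\}$, which gives a three-term maximum with middle term $C(\q;a^*)-a^*\q^\intercal A_1(\X)-(1-a^*)\q^\intercal A_0(\X)$;
\item it discards that middle term by citing Corollary~\ref{corr:alg_as} for $a^*(\X)\in\{0,1\}$.
\end{itemize}
Your monotonicity bound, together with the $\q$-independent completion $a^*\lambda_1+(1-a^*)\lambda_0$, handles the binary case without any minimax or selection theorem.

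The gap is the non-binary case, which your proposal leaves open. This case genuinely needs an argument: the paper's citation of Corollary~\ref{corr:alg_as} is justified only in the ``only if'' direction.

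Your sketch does not supply the argument. A zero value of $\bE[\max\{J_0(\lambda_1;\q,a^*),J_1(\lambda_0;\q,a^*)\}]$ forces nothing to vanish pointwise. The reason is that $C(\q;a^*)$ is a global constant, so a vanishing expectation says nothing pointwise until you show the integrand is signed. The missing idea is to split off $C$ and use the full strength of Assumption~\ref{asm:robust-margin}.

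Fix $\q\in\widetilde{\bS}^2$ and set
\begin{itemize}
\item $\alpha_d\equiv\q^\intercal A_d(\X)$ for $d\in\{0,1\}$;
\item $\beta\equiv q_1B_{11}(\X)+q_2B_{22}(\X)\ge0$;
\item $e_3\equiv\bE[a^*(\X)A_{1,3}(\X)]$.
\end{itemize}
Because $A_{0,3}\equiv0$, we have $C(\q;a^*)=\bE[a^*\alpha_1+(1-a^*)\alpha_0]+|q_3||e_3|-q_3e_3$. Hence the criterion equals
\begin{align*}
\bigl(|q_3||e_3|-q_3e_3\bigr)+\bE\bigl[a^*\alpha_1+(1-a^*)\alpha_0-\min\{\alpha_0+2a^*\beta,\ \alpha_1\}\bigr].
\end{align*}
Next, $\{\q^\intercal(A_1(\X)-A_0(\X)-2B(\X)\s):\s\in[0,1]^2\}=[\alpha_1-\alpha_0-2\beta,\ \alpha_1-\alpha_0]$. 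So the infimum appearing in Assumption~\ref{asm:robust-margin} is zero exactly when $\alpha_0\le\alpha_1\le\alpha_0+2\beta$. Letting $\delta\downarrow0$ in that assumption shows this event has probability zero.

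Therefore, almost surely one of two cases holds:
\begin{itemize}
\item $\alpha_1<\alpha_0$, where the integrand equals $(1-a^*)(\alpha_0-\alpha_1)$;
\item $\alpha_1>\alpha_0+2\beta$, where the integrand equals $a^*(\alpha_1-\alpha_0-2\beta)$.
\end{itemize}
In both cases the integrand is nonnegative for every $a^*$, and strictly positive wherever $a^*(\X)\in(0,1)$.

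So if $\bP(a^*(\X)\in(0,1))>0$, the criterion is positive at every $\q$. The integrand is bounded and continuous in $\q$, so its minimum over the compact $\widetilde{\bS}^2$ is also positive. On the other hand, by Corollary~\ref{corr:alg_as} applied under each $\lambda$, such an $a^*$ is never on any frontier. This closes the gap.

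The same display also gives more. The criterion vanishes at $\q$ if and only if $q_3e_3\ge0$ and $a^*=\one\{\alpha_1<\alpha_0\}$ almost surely. This proves the theorem directly, and it shows that under Assumption~\ref{asm:robust-margin} frontier membership of $a^*$ does not depend on $\lambda$. It also shows that the paper's middle term is never binding, which is what legitimately justifies dropping it.
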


Theorem~\ref{prop:PI-frontier-corner} shows that once the source of partial identification is reduced to the functional $\lambda$, the search over the infinite-dimensional class $\Lambda$ collapses to a finite-dimensional criterion involving only the corner distributions $\lambda_0$ and $\lambda_1$.

Our next goal is to characterize the sharp identification region of $\sF$ using a finite dimensional optimization problem that does not involve $a\in\mathcal{A}(\sX)$ and $\lambda\in\Lambda$. 
Recall $\E(\lambda)$ in \eqref{eq:PI_Elambda} and define the FA frontier associated with distribution $\lambda$ and the \emph{FA-frontier envelope} as
\begin{align}
\sF(\lambda)&\equiv\left\{\e\in\E(\lambda):\nexists \e'\in\E(\lambda) \text{ such that } \e'\succ_{FA}\e\right\},\label{eq:sF_lambda}\\
\sF^\exists&\equiv\bigcup_{\lambda\in\Lambda}\sF(\lambda)\label{eq:sF_envelope}.
\end{align}
For a given $q\in\widetilde{\bS}^2$, collect all expected loss vectors achievable by an algorithm that takes the threshold form specified in Corollary \ref{corr:alg_as} for some distribution $\lambda$ and thus are the boundary points of $\E(\lambda)$, in the following set\footnote{The minimization in the definition of $\cM_q$ is a sign convention that we use because the first two coordinates are losses (we could alternatively take $\arg\max_d -\q^\intercal\btheta_d(x;\lambda)$ for $\q\in\widetilde{\bS}^2$).}
\begin{multline}
\cM_q\equiv\Big\{\bE\left[a(X)\btheta_1(X;\lambda)+(1-a(X))\btheta_0(X;\lambda)\right]:\\
a(x)\in\arg\min_{d\in\{0,1\}}\q^\intercal\btheta_d(x;\lambda),\, \bP_\X\text{-a.s.},\,\lambda\in\Lambda\Big\}.\label{eq:mathfrak_A_q}
\end{multline}
In Lemma~\ref{lem:frak_A_q} we show that for every $q\in\widetilde{\bS}^2$, the set $\cM_q$ is nonempty, compact, and convex.
In Lemma~\ref{lem:continuous_h_frakAq} we show that its support function, $\h_{\cM_q}(\v),\v\in\bS^2$, is given by
    \begin{align*}
        \h_{\cM_\q}(\v)&=\bE\left[\max\{\kappa_0(\v,\q;X),\kappa_1(\v,\q;X)\}\right],
    \end{align*}
where $\kappa_0(\v,\q;x)$ and $\kappa_1(\v,\q;x)$, formally defined in \eqref{eq:kappa0}-\eqref{eq:kappa1}, are known functions of $A_0(x),A_1(x),B(x)$.
We then have the following characterization of $\sF^\exists$:
\begin{theorem}\label{thm:sF_envelope}
    \textit{Let Assumptions~\ref{asm:moments} and \ref{asm:robust-margin} hold. Then
    \begin{align}
        \sF^\exists=\bigcup_{\q\in\widetilde{\bS}^2}\left(\cM_q\cap\{\e\in\R^3:q_3\e_3\ge0\}\right).\label{eq:define_sF_exists}
    \end{align}
    For $\e\in\R^3$, define
    \begin{align}
        T_{\mathrm{env}}(\e)&\equiv\inf_{\q\in\widetilde{\bS}^2:\,q_3\e_3\ge0}[\sup_{\v\in\bS^2}\{\v^\intercal\e-\h_{\cM_q}(\v)\}]_+.\label{eq:define:T_env_representation}
    \end{align}
    Then, for every $\e\in\R^3$, $\e\in\sF^\exists\iff T_{\mathrm{env}}(\e)=0$.}
\end{theorem}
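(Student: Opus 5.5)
The plan is to prove the two claims in turn. First, for a fixed $\lambda$, I will turn the support-function criterion of Theorem~\ref{thm:FequivLLMO} into a statement that $\e$ is a minimizer in some direction $\q\in\widetilde{\bS}^2$ with a sign restriction on $\e_3$. Then I will take the union over $\lambda$, which yields \eqref{eq:define_sF_exists}. Finally, I will read $T_{\mathrm{env}}$ as a set-membership test for the compact convex sets $\cM_q$.

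\textbf{Step 1 (each $\lambda$ satisfies the hypotheses).} Fix $\lambda\in\Lambda$. Assumption~\ref{asm:moments}(i) holds under $\lambda$ because $\Y^*$ is binary and $\mu_g$ is bounded away from zero. By Lemma~\ref{lem:PI-support-functions}, $\Delta\btheta(x;\lambda)=A_1(x)-A_0(x)-2B(x)\lambda(x)$. Hence $|\q^\intercal\Delta\btheta(X;\lambda)|\le\delta$ implies $\inf_{\s\in[0,1]^2}|\q^\intercal\{A_1-A_0-2B\s\}|\le\delta$, and Assumption~\ref{asm:robust-margin} gives Assumption~\ref{asm:margin} under $\lambda$ uniformly in $\lambda$. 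So Theorems~\ref{thm:support_E} and~\ref{thm:FequivLLMO} and Corollary~\ref{corr:alg_as} apply to $\E(\lambda)$, and $\E(\lambda)$ is strictly convex.

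\textbf{Step 2 (frontier membership for fixed $\lambda$).} By Theorem~\ref{thm:FequivLLMO}(iii), $\e\in\sF(\lambda)$ if and only if $\min_{\q\in\widetilde{\bS}^2}[\h_{\cC(\e)}(\q)+\h_{\E(\lambda)}(-\q)]=0$. The minimum is attained by continuity on a compact set. For every $\q$ the bracket is nonnegative, because
\[
\h_{\cC(\e)}(\q)\ge \q^\intercal\e\ge\min_{\eta\in\E(\lambda)}\q^\intercal\eta=-\h_{\E(\lambda)}(-\q).
\]
Equality at some $\q$ therefore forces two things. First, $\q^\intercal\e=\min_{\eta\in\E(\lambda)}\q^\intercal\eta$. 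Second, $q_1\e_1+q_2\e_2+|q_3||\e_3|=\q^\intercal\e$, i.e.\ $|q_3||\e_3|=q_3\e_3$, i.e.\ $q_3\e_3\ge0$. By strict convexity and Theorem~\ref{thm:support_E} and Corollary~\ref{corr:alg_as} applied in direction $-\q$, the minimizer is the unique point produced by the threshold rule $a(x)\in\arg\min_d\q^\intercal\btheta_d(x;\lambda)$. Ties have $\bP_\X$-measure zero by the margin condition, so the choice of $a$ on ties is irrelevant. Thus $\e\in\cM_q$, generated by this $\lambda$.

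Conversely, suppose $\e\in\cM_q$ via some $\lambda$ and $q_3\e_3\ge0$. Then $\e$ minimizes $\q^\intercal\eta$ over $\E(\lambda)$, so the bracket equals $|q_3||\e_3|-q_3\e_3=0$. Hence $\e\in\sF(\lambda)\subset\sF^\exists$. Taking unions over $\lambda$ and $\q$ gives \eqref{eq:define_sF_exists}.

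\textbf{Step 3 (the statistic $T_{\mathrm{env}}$).} By Lemma~\ref{lem:frak_A_q}, each $\cM_q$ is nonempty, compact and convex. So $\e\in\cM_q$ if and only if $\sup_{\v\in\bS^2}\{\v^\intercal\e-\h_{\cM_q}(\v)\}\le0$, by the standard support-function characterization of closed convex sets. Therefore the positive part inside \eqref{eq:define:T_env_representation} vanishes at $\q$ exactly when $\e\in\cM_q$. The direction ``$\e\in\sF^\exists\Rightarrow T_{\mathrm{env}}(\e)=0$'' is then immediate.

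\textbf{Main obstacle.} The converse needs the infimum to be attained. The constraint set $\{\q\in\widetilde{\bS}^2:q_3\e_3\ge0\}$ is compact. I will show that $g(\q)\equiv[\sup_{\v}\{\v^\intercal\e-\h_{\cM_q}(\v)\}]_+$ is lower semicontinuous in $\q$; then $g$ attains its minimum there, a zero minimum gives some $\q$ with $\e\in\cM_q$, and Step 2 finishes the proof. For lower semicontinuity it suffices that $\q\mapsto\h_{\cM_q}(\v)$ is continuous for each fixed $\v$, since a supremum of continuous functions is lower semicontinuous. I would obtain this from Lemma~\ref{lem:continuous_h_frakAq}. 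If that lemma only gives continuity in $\v$, I would argue directly via dominated convergence on $\bE[\max\{\kappa_0(\v,\q;X),\kappa_1(\v,\q;X)\}]$. In that argument the only discontinuities come from indicators of the sign of $\q^\intercal(A_1-A_0-2B\s)$, and Assumption~\ref{asm:robust-margin} makes the set of such near-ties have vanishing probability as $\q$ converges.
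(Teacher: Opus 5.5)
Your proposal is correct and follows essentially the paper's route. Steps 1--2 re-derive inline what the paper gets from Lemma~\ref{lem:E_lambda}, namely margin transfer to every $\lambda$ and $\sF(\lambda)=\{\ss_{\E(\lambda)}(-\q):\q\in\widetilde{\bS}^2,\ q_3\e_3\ge0\}$ via Lemma~\ref{lemma:PF-F-support-map}; Step 3 is the same support-function membership test for the compact convex $\cM_q$; and for attainment you use lower semicontinuity (needing only continuity of $\q\mapsto\h_{\cM_q}(\v)$ for each fixed $\v$) instead of the full continuity of $\q\mapsto\rho_q(\e)$ that Lemma~\ref{lem:T_env_attains} derives from the uniform H\"older bound of Lemma~\ref{lem:continuous_h_frakAq}, and both suffice.
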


\begin{remark}
    In work in progress, we obtain a debiased machine learning estimator for $J_d(\lambda;\q,a^*)$, $d\in\{0,1\}$, and we develop an inference procedure to test the hypothesis that a given algorithm generates expected losses on the FA-frontier. We also explore extending Theorem~\ref{prop:PI-frontier-corner} to other loss functions.
\end{remark}

\subsection{Point Identification Under Missing at Random Assumptions}\label{subsec:identification} 
While the distribution $\bP$ of $(\Y,\G,\X,\Z)$ is point identified by the observed data, in the absence of additional assumptions the distribution $\bP^*$ of $(\Y^*,\G,\X,\Z)$ is not. 
The previous section shows that, as a consequence, the support function $\h_\E(q)$ and $\e^*(a),\,a\in\mathcal{A}(\sX)$, are not point identified either, as the ideal labels and their conditional expectations ($\bLL_d$ and $\btheta_d(\X)$, respectively) are a function of $\Y^*$ and $\bP^*$. 

We leverage the fact that all algorithms in $\mathcal A(\sX)$ can only use the covariates $\X$ for training (and no other unobserved variables) to argue plausibility of a selection on observables assumption that we use to point identify the distribution $\bP^*$. 
\begin{asm}[Ideal labels missing at random (MAR)]\label{asm:MAR-X}
\textit{$(\Y^*,\G) \perp \Z | \X$ and $\bpi(\X)\equiv \bE[ \Z |\X] \in (0,1)$ a.s..}
\end{asm}
Assumption \ref{asm:MAR-X} may look stronger than the typical missing at random (unconfoundedness or selection on observables) assumption, which might instead require $\Y^* \perp \Z | \X,\G$.
Our approach remains valid under this weaker condition, but here we work with Assumption~\ref{asm:MAR-X} because it yields lighter notation and simpler expressions.
We argue that, in many settings of interest, the additional requirement $\G \perp \Z | \X$ is a natural consequence of how $\Z$ is generated.
Specifically, suppose that $\Z=D$, where $D\in\{0,1\}$ is the binary decision an individual receives (e.g., granted or denied a loan; granted or denied bail), and that this decision is produced by a deployed decision rule in $\mathcal A(\sX)$ that takes $\X$ as input and, following regulations, does not use $\G$ or apply any group-dependent post-processing.
Then, for all $g\in\{r,b\}$, $\bP(D=1| \X,\G=g)=\bP(D=1| \X)= a(\X)$, so that $\G \perp \Z | \X$ holds.
Combined with the standard selection on observables condition, $\Y^* \perp \Z | \X,\G$, this yields Assumption~\ref{asm:MAR-X}.
The more restrictive part of Assumption~\ref{asm:MAR-X} is therefore the requirement $\Y^* \perp \Z | \X,\G$.
This condition is plausible when the assignment mechanism that determines $\Z$ depends on recorded variables capturing the main drivers of both selection and outcomes, and when discretion at the decision-making stage is limited or can be accounted for.
In such settings, conditioning on $(\X,\G)$ may reasonably approximate conditioning on the information that drives selection, so that $\Z$ does not convey additional information about $\Y^*$ beyond $(\X,\G)$.
When instead the assignment process exploits unrecorded ``soft information'' (e.g., interviews, free text, documents, or unlogged history) or when discretion is systematically related to expected outcomes, $\Y^* \perp \Z | \X,\G$ may fail, motivating our partial identification analysis in Section \ref{subsec:PI-CR-SP}.

The next lemma shows that, under Assumption~\ref{asm:MAR-X} the support function of $\E$ is point identified through a standard inverse propensity weighting identity:
\begin{lemma}\label{lemma:support-funct-identif}
\textit{Let  Assumptions~\ref{asm:moments}(i) and \ref{asm:MAR-X} hold. Then for $\Delta\bLL\equiv\bLL_1-\bLL_0$,
\begin{align}
    \btheta_d(\X) &= \bE \left[ \left. \tfrac{\Z\bLL_d }{\bpi(\X)}  \right| \X \right]\label{eq:thetaMAR},  \\
    \h_\E(\q) &=  \bE \left[    \tfrac{\q^\intercal \bLL_0\Z }{\bpi(\X)}  + \left ( \tfrac{\q^\intercal\Delta\bLL \Z}{\bpi(\X)}\right)\one\{\q^\intercal \Delta\btheta(\X)>0\}  \right],\quad\text{for all }\q\in\bS^2,\label{eq:hMAR}\\
    e_g^\iota(a) & = \bE\left[   \tfrac{a(\X)\LL_1^{g,\iota} \Z}{\bpi(\X)} + \tfrac{(1-a(\X))\LL_0^{g,\iota} \Z}{\bpi(\X)}  \right],\quad g=r,b~\text{and}~\iota=A,F.\label{eq:e_gMAR}
\end{align} }
\end{lemma}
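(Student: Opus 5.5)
The plan is to prove the conditional-mean identity \eqref{eq:thetaMAR} first, and then deduce \eqref{eq:hMAR} and \eqref{eq:e_gMAR} from it via the law of iterated expectations together with Theorem~\ref{thm:support_E} and the definition \eqref{eq:eg_iota}.

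\textbf{Step 1: the identity \eqref{eq:thetaMAR}.} Since $\bLL_d$ is a function of $(\Y^*,\G)$ alone, with $\mu_g$ a constant, Assumption~\ref{asm:MAR-X} gives $\bLL_d\perp \Z\mid \X$. Because $\bpi(\X)$ is $\X$-measurable and lies in $(0,1)$ almost surely, we can factor the conditional expectation:
\begin{align*}
\bE\left[\left.\tfrac{\Z\bLL_d}{\bpi(\X)}\right|\X\right]=\tfrac{\bE[\Z\mid\X]\,\bEs[\bLL_d\mid\X]}{\bpi(\X)}=\btheta_d(\X).
\end{align*}
Two points need care here.
\begin{itemize}
\item On $\{\Z=1\}$ we have $\Y=\Y^*$, so $\Z\bLL_d$ is a function of observables. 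The expectations under $\bP$ and $\bP^*$ therefore agree, and \eqref{eq:thetaMAR} is a genuine identification statement.
\item Assumption~\ref{asm:moments}(i) gives bounded conditional second moments, so all conditional expectations involved are finite.
\end{itemize}

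\textbf{Step 2: the support function \eqref{eq:hMAR}.} Start from the representation \eqref{eq:support} in Theorem~\ref{thm:support_E}. Since $\one\{\q^\intercal\Delta\btheta(\X)>0\}$ is $\X$-measurable, iterated expectations let us replace $\bLL_0$ and $\Delta\bLL$ by their conditional means. Applying Step 1, linearly in $\q$, we then substitute $\bE[\Z\bLL_d/\bpi(\X)\mid\X]$ and undo the conditioning, which yields \eqref{eq:hMAR}.

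\textbf{Step 3: the group losses \eqref{eq:e_gMAR}.} Apply the same argument coordinatewise to \eqref{eq:eg_iota}. Here $a(\X)$ is the $\X$-measurable weight, and $\theta_d^{g,\iota}(\X)$ is replaced by $\bE[\Z\LL_d^{g,\iota}/\bpi(\X)\mid\X]$.

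\textbf{Main obstacle.} This is the one delicate point in the argument. Theorem~\ref{thm:support_E} is stated under Assumption~\ref{asm:margin}, which Lemma~\ref{lemma:support-funct-identif} does not assume. I would avoid it by using the elementary form of the support function rather than its gradient. For any $a\in\mathcal A$,
\[
\q^\intercal\e(a)=\bEs\bigl[\q^\intercal\btheta_0(\X)+a(\X)\,\q^\intercal\Delta\btheta(\X)\bigr].
\]
This is maximized pointwise by $a(\X)=\one\{\q^\intercal\Delta\btheta(\X)>0\}$, since ties contribute zero. Hence \eqref{eq:support} holds under Assumption~\ref{asm:moments}(i) alone.

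Integrability of $\Z\bLL_d/\bpi(\X)$ is also not an issue. Its conditional mean is exactly $\btheta_d(\X)$, which is bounded, so the iterated-expectation manipulations are justified without any overlap bound beyond $\bpi\in(0,1)$.
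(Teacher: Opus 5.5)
Your proposal is correct and follows essentially the paper's route. You obtain \eqref{eq:thetaMAR} from MAR by factoring through $\bLL_d\perp\Z\mid\X$, whereas the paper conditions on $(\X,\Y^*,\G)$ and uses $\bE[\Z\mid\X,\Y^*,\G]=\bpi(\X)$, which amounts to the same thing; in both cases \eqref{eq:hMAR}--\eqref{eq:e_gMAR} then follow by substituting into \eqref{eq:support} and \eqref{eq:eg_iota}.

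Your remark that \eqref{eq:support} does not need Assumption~\ref{asm:margin} is a valid refinement that the paper leaves implicit. One small correction on integrability: run your computation on the absolute values of the components of $\bLL_d$, whose conditional means are bounded by Assumption~\ref{asm:moments}(i), rather than on $\btheta_d(\X)$ itself.
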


\section{A Debiased Machine Learning Estimator and Its Asymptotic Distribution}\label{sec:DML}
\subsection{The DML Estimator}\label{subsec:DMLestimator}
Our goal is to estimate the support function $\h_\E(\q)$ using a random sample $\{(\Y_i,\G_i,\X_i,\Z_i): 1 \le i \le n \}$ drawn from the distribution $\bP$. By Lemma \ref{lemma:support-funct-identif}, $\h_\E(\q)$ is identified by \eqref{eq:hMAR}, which involves nuisance functions $\bpi(\X)$ and $\Delta \btheta(\X)$ that need to be estimated in the first step.
As the ideal labels $\bLL$ depend on $\bmu \equiv [\mu_r,\mu_b]^\intercal$ (see \eqref{eq:defLdgA}-\eqref{eq:defLvector}), we also need to estimate $\bmu$. 
In order to build an estimator for $\h_\E(\q)$ that is locally robust to the first-step estimation error, we propose using the debiased machine learning (DML) approach.
Consider the estimand 
\begin{equation}
    \h_\E(\q) = \bE\left[ \zeta_i(\q;\bLL,\boeta) \right],\label{eq:new-estimand-support-funct}
\end{equation}
where $\boeta\equiv(\Delta \btheta,\bpi,\btheta_0)$ is a vector of nuisance functions and
\begin{align}
    \zeta_i(\q;\bLL,\boeta) &\equiv      \tfrac{\q^\intercal\bLL_{0,i}\Z_i}{\bpi(\X_i)}  + \left(\tfrac{\q^\intercal \Delta \bLL_{i} \Z_i}{\bpi(\X_i)}\right)  \one\{\q^\intercal \Delta\btheta(\X_i)>0\} + \balpha^\h(\q;\X_i) \left(1 - \tfrac{\Z_i}{\bpi(\X_i)}\right),\label{eq:zeta}\\
    \balpha^\h(\q;\X) &\equiv \q^\intercal \btheta_0(\X) + \q^\intercal \Delta \btheta (\X) \one\{\q^\intercal \Delta\btheta(\X)>0\}.\label{eq:alpha}
\end{align}
The local effect of the first-step function $\bpi$ on the original estimand $\h_\E(q)$ in \eqref{eq:new-estimand-support-funct} is accounted for by 
$ \balpha^\h(\q;\X_i) \left(1 - \tfrac{\Z_i}{\bpi(\X_i)}\right)$.
As we show in the proof of Theorem~\ref{thm:gaussian}, under Assumptions \ref{asm:nuisance-function-structure} and \ref{asm:nuisance-funct-estimators} presented below, it is not necessary to account for the local effect of the first-step function $\Delta \btheta$ on $\h_\E(q)$. 
Finally, we note that the estimand in \eqref{eq:new-estimand-support-funct} is based on $\bLL$, which by \eqref{eq:defLdgA} relies on the population means $\mu_g, g\in\{r,b\}$.

Following the DML approach \citep[e.g.,][]{ChernozhukovDML,velez2024asymptotic}, we first randomly split the indices $[n]\equiv\{1,\ldots,n\}$ into $K$ equal-sized folds $\mathcal{I}_k$, i.e., $\cup_{k=1}^K \mathcal{I}_k = [n]$. We denote by $n_k$ the size of $\mathcal{I}_k$.\footnote{When $n$ is not divisible by $K$, the number of observations in some folds will be $\lfloor n/K \rfloor$ while in others $\lfloor n/K \rfloor + 1$, where $\lfloor n/K  \rfloor$ is the greatest integer less than or equal to $n/K$.}
We then estimate $\widehat{\boeta}(X_i)$ as $\widehat{\boeta}_k(X_i)$ for $i \in \mathcal{I}_k$ and $k=1,\ldots,K$, where $\widehat{\boeta}_k(\cdot)$ is estimated using all data except the portion with indices in fold $\mathcal{I}_k$. 
Finally, the first-stage estimators for the nuisance functions are used to form a second-stage estimator for $\h_\E(\q)$ based on the expression of the estimand defined in \eqref{eq:new-estimand-support-funct}: 
\begin{equation}\label{eq:DML-estimator}
    \widehat{\h}_\E(q;\widehat\bLL,\widehat{\boeta}) 
    \equiv \tfrac{1}{K} \sum_{k=1}^K \left( \tfrac{1}{n_k} \sum_{i \in \mathcal{I}_k}  \zeta_i(\q;\widehat\bLL,\widehat{\boeta}_k)\right)
    = \tfrac{1}{n} \sum_{i=1}^n \zeta_i(\q;\widehat\bLL,\widehat{\boeta}),
\end{equation}
where $\widehat\bLL$ is defined as in \eqref{eq:defLdgA}-\eqref{eq:defLvector} with $\mu_g$ replaced by $\widehat{\mu}_g \equiv n^{-1} \sum_{i=1}^n \one\{G_i=g\}$.

We also propose a DML estimator for $\e^* = \e(a^*) \in \E$. By \eqref{eq:e_gMAR} in Lemma \ref{lemma:support-funct-identif} and Definition 3, we can write $\e^*= \bE[ \xi_i(a^*;\bLL,\boeta)]$, 
where for a given algorithm $a\in\mathcal A$,
\begin{align}
    \xi_i(a;\bLL,\boeta) &\equiv  \left( {\bLL}_{0,i} + a(\X_i) \Delta\bLL_i \right) \tfrac{\Z_i}{{\bpi}(\X_i)} + \boldsymbol{\balpha}^e(a;\X_i)  \left(1 - \tfrac{\Z_i}{{\bpi}(\X_i)} \right),\label{eq:xi}\\
      \boldsymbol{\balpha}^e(a;\X) &\equiv  {\btheta}_{0}(\X) + a(\X) \Delta {\btheta}(\X),\label{eq:alpha-e}
\end{align}
and again the local effect of the first-step function $\bpi$ on the original estimand $\e^*$ is accounted for by $\boldsymbol{\balpha}^e(\X_i)  \big(1 - \tfrac{\Z_i}{{\bpi}(\X_i)}\big)$.
 Using this notation, we propose to estimate $\e^*$ by
 \begin{equation}\label{eq:DML-e*}
     \widehat{\e}^* \equiv \tfrac{1}{K} \sum_{k=1}^K \left( \tfrac{1}{n_k} \sum_{i \in \mathcal{I}_k}  \xi_i(a^*;\widehat{\bLL},\widehat{\boeta}_k)   \right) = \tfrac{1}{n} \sum_{i=1}^n \xi_i(a^*;\widehat{\bLL},\widehat{\boeta})   ~,
 \end{equation}
 where $\widehat{\bLL}$ stacks the feasible labels and $\{\widehat{\boeta}_k\}_{k=1}^K$ are estimators for the nuisance functions using cross-fitting, as we described above.

\subsection{Asymptotic Distribution of the DML Estimator}\label{subsec:asy_distDML}
For the nuisance functions' estimation error to be asymptotically negligible for the DML estimator, the nuisance estimators must satisfy suitable regularity conditions set out next.
\begin{asm}\label{asm:nuisance-function-structure}
\textit{    There is a known partition of  $\X$,
    $ \X = (\X_1, \X_2),$
    with $\X_1 \in \sX_1 \subset \R^{p}$ and $\X_2 \in \sX_2 \subset \R^{d_X-p}$ such that:}
    \begin{itemize} 
        \item[(a)] \textit{
        For every $\delta>0$,
        $ \sup_{\q\in\bS^2}\bP\left( |\q^\intercal \Delta \btheta(\X)| \le \delta | X_2 \right) \lesssim \delta. $}

        \item[(b)] \textit{There is a constant $c>0$ such that 
        $\inf_{ \q \in \bS^2} Var[ \,|\q^\intercal \Delta \btheta(\X)| \,| \X_2] \ge c~.$}
         
        \item[(c)] \textit{There are unknown functions $\nu:\sX_1\to\R^{d_\nu}$ and $\gamma:\sX_2\to\R^{d_\gamma}$, and a known function $F: \R^{d_\nu} \times \R^{d_\gamma} \times \R^2 \to [\epsilon,1-\epsilon] \times \R^3$, such that:
        \begin{align*}
            (\bpi(\X),~\Delta \btheta(\X)) &= F(\nu(X_1), \gamma(X_2),\mu)
        \end{align*}
        and the partial derivatives of $F$ are uniformly bounded for any $\nu \in \R^{d_\nu}$, $\gamma \in \R^{d_\gamma}$, and $\mu \in \bB_\epsilon(\mu_r,\mu_b) \equiv\{ \mu \in \R^2: \|\mu - (\mu_r,\mu_b)\|<\epsilon\}$, that is
        $$\sup_{\nu \in \R^{d_\nu}} \sup_{ \gamma\in \R^{d_\gamma}} \sup_{\mu \in \bB_\epsilon(\mu_r,\mu_b) }\| D F(\nu, \gamma,\mu)\| < \infty~.$$}
    \end{itemize}
\end{asm}
A sufficient condition for Assumption \ref{asm:nuisance-function-structure} is given in \citet[Assumption 3]{liu:mol25v2}, where a partially linear structure is assumed for $\Delta\btheta$.

Let $\mathfrak{F}_k \equiv \sigma( (Y_i,G_i,X_i,Z_i): i \notin \mathcal{I}_k)$ be the $\sigma$-algebra generated by the data used for constructing the estimator $\widehat{\boeta}_k$ for the nuisance ${\boeta} \equiv (\Delta \btheta,\bpi,\btheta_0)$. Let $\widehat{\bmu}_k$ be an estimator for $\bmu \equiv [\mu_r,\mu_b]^\intercal$ that uses only the data outside fold $\mathcal{I}_k$ (similar to how we define $\widehat{\boeta}_k$):
$$\widehat{\bmu}_k = \left(\tfrac{1}{n-n_k} \sum_{i \notin \mathcal{I}_k} \one\{G_i=r\},~\tfrac{1}{n-n_k} \sum_{i \notin \mathcal{I}_k} \one\{G_i=b\} \right)^\intercal~.$$

We assume that the estimators for $\Delta \btheta(\X)$ and $\bpi(\X)$ share the same structure as in Assumption~\ref{asm:nuisance-function-structure}. Furthermore, we treat $\nu(\X_1)$ as a low-dimensional nonparametric component that can be estimated uniformly well over $\sX_1$, whereas $\gamma(X_2)$ captures a high-dimensional component that can be estimated sufficiently well in the sense that its mean squared error converges to zero sufficiently fast. We assume that $\btheta_0$ can be estimated sufficiently well.

\begin{asm}\label{asm:nuisance-funct-estimators}
\textit{    There exist estimators $\widehat{\nu}_k$, $\widehat{\gamma}_k$, and $\widehat{\btheta}_{0,k}$ for $k=1,\ldots,K$ such that
    \begin{enumerate}
        \item[(a)] $(\widehat{\bpi}_k(\X),~\Delta \widehat{\btheta}_k(\X)) = F(\widehat{\nu}_k(X_1), \widehat{\gamma}_k(X_2),\widehat{\bmu}_k)$.
        \item[(b)] $\sup_{x_1 \in \sX_1} \|\widehat{\nu}_k(x_1) - \nu(x_1)\| = o_p(n^{-1/4})$.
        \item[(c)] $\bE \left[ \|\widehat{\gamma}_k(X_2)-{\gamma}(X_2)\|^2  | \mathfrak{F}_k \right] = o_p(n^{-1/2})$ .
        \item[(d)]  $\bE \left[ \|\widehat{\btheta}_{0,k}(\X)-{\btheta}_0(\X)\|^2  | \mathfrak{F}_k \right] = o_p(n^{-1/2})$.
    \end{enumerate}
}    
\end{asm}
Example machine learning methods that, under suitable conditions, are $n^{1/4}$-consistent in the $L^2$-norm include $\ell_1$-penalized methods, boosting, 
and neural nets \citep[see, e.g.,][and references therein, for a discussion of machine learners compatible with the rate requirement]{ChernozhukovDML}. 

Similarly to \citet[Thm. 4.1]{liu:mol25v2}, our main asymptotic result shows that $\widehat{\h}_{\E}(\q;\widehat{\bLL}, \widehat{\boeta})$ converges to a Gaussian process uniformly in $\q\in\bS^2$, where the score $\zeta_i(\q;\bLL,\boeta)$ in \eqref{eq:zeta} is the influence function that governs the part of the limit distribution of $\widehat{\h}_{\E}(\q; \widehat{\bLL},\widehat{\boeta})$ due to the uncertainty in $\widehat{\boeta}$, and the remaining part is attributed to estimating $\bmu$:
\begin{theorem}\label{thm:gaussian}
    \textit{Let Assumptions~\ref{asm:moments}-\ref{asm:no-kink} and \ref{asm:MAR-X}-\ref{asm:nuisance-funct-estimators} hold. Then, 
    $$ \sqrt{n}\left( \widehat{\h}_\E(\q; \widehat{\bLL}, \widehat{\boeta}) - \h_\E(q) \right) \Rightarrow \mathbb{G}[\zeta_i^{*}(q;\bLL,\boeta)] \quad \text{in} \quad \ell^{\infty}(\bS^2)~,$$
     where 
\begin{align}\label{eq:zeta-star}
    \zeta_i^*(q;\bLL,{\boeta}) \equiv \zeta_i(q;\bLL,{\boeta})  + \sum_{g \in \{r,b\}}  \Gamma_g^h(\q) \left(1-\tfrac{\one \{ G_i = g\}}{\mu_g}\right),
\end{align} 
and $\Gamma_g^h(\q)\equiv \bE\left[ \one\{G_i=g\} \left(\q^\intercal \bLL_{0,i}+ \q^\intercal\Delta\bLL_i\one\{\q^\intercal \Delta\btheta(\X)>0\} \right) \tfrac{Z_i}{\pi(X_i)} \right] $ }
\end{theorem}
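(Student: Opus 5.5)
The plan is to decompose $\sqrt{n}(\widehat{\h}_\E(\q;\widehat\bLL,\widehat\boeta)-\h_\E(\q))$ into three pieces and treat each uniformly in $\q\in\bS^2$:
\[
\sqrt{n}\,\tfrac1n\sum_i\big(\zeta_i(\q;\bLL,\boeta)-\h_\E(\q)\big)
+\sqrt{n}\,\tfrac1n\sum_i\big(\zeta_i(\q;\widehat\bLL,\boeta)-\zeta_i(\q;\bLL,\boeta)\big)
+\sqrt{n}\,\tfrac1n\sum_i\big(\zeta_i(\q;\widehat\bLL,\widehat\boeta)-\zeta_i(\q;\widehat\bLL,\boeta)\big).
\]
\begin{itemize}
\item The first piece is an oracle empirical process. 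The class $\{\zeta_i(\q;\bLL,\boeta):\q\in\bS^2\}$ consists of linear functions of $\q$ times indicators of half-spaces $\{\q^\intercal\Delta\btheta(\X)>0\}$, so it is VC-type. It has a square-integrable envelope by Assumption~\ref{asm:moments}(i) and $\bpi\in[\epsilon,1-\epsilon]$. Hence it is $\bP$-Donsker, and the first piece converges to $\mathbb G[\zeta_i]$.
\item The second piece depends on $\widehat\bLL$ only through $1/\widehat\mu_g$, since $\LL^{g,\iota}_d=\ell^\iota\one\{G=g\}/\mu_g$ and $\bLL$ is linear in the $1/\mu_g$. This holds also in the correction term $\balpha^\h$, if $\btheta_0$ is evaluated with the same labels.
\item In the second piece I will use $\widehat\mu_g^{-1}-\mu_g^{-1}=-(\widehat\mu_g-\mu_g)/\mu_g^2+O_p(n^{-1})$ and $\sqrt n(\widehat\mu_g-\mu_g)=n^{-1/2}\sum_i(\one\{G_i=g\}-\mu_g)$. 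The correction term has conditional mean zero given $\X$, and the LLN holds uniformly over the Donsker class. Together these give the second piece $=\sum_g\Gamma^h_g(\q)\,n^{-1/2}\sum_i(1-\one\{G_i=g\}/\mu_g)+o_p(1)$ uniformly, which is exactly the extra term in \eqref{eq:zeta-star}.
\end{itemize}
The sum of the two linear representations is again an empirical process over a Donsker class, so the claimed weak convergence follows once the third piece is shown to be $o_p(1)$ uniformly.

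For the third piece I will work fold by fold, conditioning on $\mathfrak F_k$, and split it into a centered empirical-process part and a conditional-bias part. For the centered part, cross-fitting makes the summands i.i.d. given $\mathfrak F_k$. I then need the $L^2$ distance between $\zeta(\q;\cdot,\widehat\boeta_k)$ and $\zeta(\q;\cdot,\boeta)$ to vanish uniformly in $\q$. The indicator part is controlled by $\bP(\one\{\q^\intercal\Delta\widehat\btheta>0\}\neq\one\{\q^\intercal\Delta\btheta>0\})\le\bP(|\q^\intercal\Delta\btheta|\le|\q^\intercal(\Delta\widehat\btheta-\Delta\btheta)|)$. This is small by the margin Assumption~\ref{asm:margin} together with the consistency in Assumption~\ref{asm:nuisance-funct-estimators}. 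A maximal inequality over the VC class then gives $o_p(1)$.

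The conditional-bias part uses the Neyman-orthogonal structure:
\begin{itemize}
\item Perturbations of $\bpi$ enter the bias only through products $(\widehat\bpi-\bpi)(\widehat\balpha^\h-\balpha^\h)$ (the standard doubly robust form). These are $o_p(n^{-1/2})$ by Cauchy--Schwarz and the rate conditions (b)--(d), using the Lipschitz map $F$ from Assumption~\ref{asm:nuisance-function-structure}(c).
\item The indicator perturbation, which is not orthogonalized, is the main obstacle. Its bias is $\bE[\q^\intercal\Delta\btheta(\X)(\one\{\q^\intercal\Delta\widehat\btheta>0\}-\one\{\q^\intercal\Delta\btheta>0\})]$. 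It is nonzero only where the sign flips, where $|\q^\intercal\Delta\btheta|\le|\q^\intercal(\Delta\widehat\btheta-\Delta\btheta)|$, so it is bounded by $\bE[|\q^\intercal\Delta\btheta|\one\{|\q^\intercal\Delta\btheta|\le|\q^\intercal\delta(\X)|\}]$ with $\delta\equiv\Delta\widehat\btheta-\Delta\btheta$.
\item Following \citet{liu:mol25v2}, I will condition on $\X_2$ and split $\delta$ via $F$ into a $\nu$-part, which is uniformly $o_p(n^{-1/4})$, and a $\gamma(\X_2)$-part, which is fixed given $\X_2$. Assumption~\ref{asm:nuisance-function-structure}(a) gives a linear conditional margin, so the bound is $\lesssim\bE[|\q^\intercal\delta(\X)|^2]$, which is $o_p(n^{-1/2})$ by (b)--(c), uniformly in $\q$.
\item Assumption~\ref{asm:nuisance-function-structure}(b) is used to keep the conditional density argument non-degenerate.
\end{itemize}

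Finally, the estimation of $\bmu$ inside $F$ through $\widehat\bmu_k$ contributes only $O_p(n^{-1/2})$ perturbations, which enter the bias quadratically and the centered part negligibly. Assumption~\ref{asm:no-kink} together with Theorem~\ref{thm:support_E} gives continuity of $\q\mapsto\zeta$ in $L^2$, which makes the limit tight.
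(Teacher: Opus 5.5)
Your proposal is correct and follows essentially the paper's route. The paper proves the weighted Theorem~\ref{thm:gaussian-h} and then sets $\omega_i=1$. Its Lemma~\ref{lemma:asympt-equivalence-weights} swaps, in turn, the estimated indicator, the estimated labels (which produces the $\Gamma_g^h$ term), and $(\widehat\bpi,\widehat\balpha^\h)$. It uses exactly your ingredients: product rates for the orthogonalized $\bpi$; the conditional margin on $\X_2$ with an $\X_2$-measurable bound $\widehat\delta_k(\X_{2})$ for the non-orthogonalized indicator (Lemmas~\ref{lemma:delta-hat} and~\ref{lemma:condition-A-weights}); and a VC-type Donsker argument for the oracle process. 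Only the order of the swaps differs.

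One small repair: in your third piece, $\widehat\bLL$ uses the full-sample $\widehat\mu_g$, so the summands are not i.i.d. given $\mathfrak F_k$. You need to pull out the scalar factor $\mu_g/\widehat\mu_g=1+O_p(n^{-1/2})$ first, as the paper does in Claim~2 of Lemma~\ref{lemma:asympt-equivalence-weights} and in Lemma~\ref{lemma:condition-B-weights}.
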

\begin{remark}
    One could alternatively construct a DML estimator of $\h_\E(q)$ using $\zeta_i^{*}(q;\bLL,\boeta)$ in \eqref{eq:zeta-star} instead of $\zeta_i(q;\bLL,\boeta)$ in \eqref{eq:zeta}.
    Doing so would be valid because $\zeta_i^{*}(q;\bLL,\boeta)$ satisfies the Neyman orthogonality condition with respect to $\boeta$ and $\bmu$. In practice, when $\widehat{\boeta}$ is obtained via cross-fitting and $\bmu$ is estimated by sample means, $\widehat{\h}_\E(\q; \widehat{\bLL},  \widehat{\boeta})$ is numerically equivalent to $\frac{1}{n}\sum_{i=1}^n \zeta_i^{*}(q; \widehat{\bLL},\widehat{\boeta})$.
    Hence, our proposed estimator implicitly accounts for first-step estimation error in $\widehat\bmu$.
\end{remark}

\section{Testing Procedures}\label{sec:tests}
We provide test statistics and their asymptotic distribution to test whether (i) the FA-frontier coincides with the Pareto frontier; and (ii) a less discriminatory alternative to a given algorithm exists.

\subsection{Testing $\sPF=\sF$}\label{sec:testFA_Pareto}
Theorem~\ref{thm:Pareto_equal_FA} provides a necessary and sufficient condition for $\sPF=\sF$.
Here we put forward an equivalent characterization based on the support function process, which is amenable to testing the null hypothesis $\sPF=\sF$ against the alternative $\sPF\neq\sF$.
\begin{prop}\label{prop:testable_F=PF}
\textit{Let Assumptions~\ref{asm:moments}-\ref{asm:margin} hold. For $C$ a constant pinned down by Assumption~\ref{asm:moments}, $\E\subset \bB_C$, with $\bB_C$ the ball in $\R^3$ of radius $C$. Define
\begin{align*}
T^{\texttt{LDA}}(\e)&\equiv\left[\sup_{\q\in\mathbb S^2}\{\q^\intercal\e-\h_\E(\q)\}\right]_+ +\left[\inf_{\q\in\widetilde{S}^2}\{h_{\mathcal C(\e)}(\q)+h_\E(-\q)\}\right]_+,\\
\psi_\sPF(\e)&\equiv\left[\inf_{\alpha\in[0,1]}\{\q(\alpha)^\intercal\e+\h_\E(-\q(\alpha))\}\right]_+,
\qquad
\q(\alpha)\equiv\tfrac{[\alpha,1-\alpha,0]^\intercal}{\Vert[\alpha,1-\alpha,0]^\intercal\Vert}.
\end{align*}
Then $\sPF=\sF \iff \sup_{\e\in \bB_C:T^{\texttt{LDA}}(\e)=0}\psi_\sPF(\e)=0$.}

\textit{If Assumption~\ref{asm:no-kink} also holds, fix any $\bar s>0$ and define 
\begin{align}
\Delta_{\sPF}^{h}(\bar s)
\equiv
\sup_{\alpha\in[0,1]}
\left[
h_\E(-q(\alpha))
-
\inf_{s\in[-\bar s,\bar s]}
\sqrt{1+s^2}\,
h_\E\left(
\tfrac{-q(\alpha)+s u_3}{\sqrt{1+s^2}}
\right)
\right],\label{eq:Delta_sPF_h}
\end{align}
where $u_3=[0,0,1]^\intercal$. Then $\sPF=\sF\iff \Delta_{\sPF}^{h}(\bar s)=0$.
}
\end{prop}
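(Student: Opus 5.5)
The plan has two parts. First, reduce the zero set of $T^{\texttt{LDA}}$ to $\sF$ and then invoke the first part of Theorem~\ref{thm:Pareto_equal_FA}. Second, under Assumption~\ref{asm:no-kink}, translate the condition $\sPF\subset\{\e_3=0\}$ into a first-order condition on the support function.

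\emph{Step 1: $\{\e\in\bB_C:T^{\texttt{LDA}}(\e)=0\}=\sF$.}
\begin{itemize}
\item $\E$ is compact and convex (Lemma~\ref{prop:compact_convex}), so $\sup_{\q\in\bS^2}\{\q^\intercal\e-\h_\E(\q)\}\le 0$ if and only if $\e\in\E$. Hence the first term of $T^{\texttt{LDA}}$ vanishes exactly on $\E$.
\item For $\e\in\E$ we have $\e\in\cC(\e)$, so $\h_{\cC(\e)}(\q)+\h_\E(-\q)\ge \q^\intercal\e-\q^\intercal\e=0$. The infimum in the second term is therefore nonnegative.
\item By the equivalence (i)$\Leftrightarrow$(iii) of Theorem~\ref{thm:FequivLLMO}, this infimum equals zero exactly when $\e\in\sF$.
\item Since $\sF\subset\E\subset\bB_C$, the zero set of $T^{\texttt{LDA}}$ in $\bB_C$ is exactly $\sF$.
\end{itemize}

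\emph{Step 2: the zero set of $\psi_\sPF$ on $\E$.}
\begin{itemize}
\item For $\e\in\E$, write $\q(\alpha)^\intercal\e+\h_\E(-\q(\alpha))=\q(\alpha)^\intercal\e-\min_{\eta\in\E}\q(\alpha)^\intercal\eta\ge0$.
\item The map $\alpha\mapsto \q(\alpha)$ is continuous on the compact set $[0,1]$, and $\h_\E$ is continuous. So the infimum over $\alpha$ is attained, and $\psi_\sPF(\e)=0$ iff $\e\in\arg\min_{\eta\in\E}\q(\alpha)^\intercal\eta$ for some $\alpha$.
\item $\E$ is strictly convex (Theorem~\ref{thm:support_E}), so this argmin is a singleton and the condition reads $\e=\arg\min_{\eta\in\E}\q(\alpha)^\intercal\eta$.
\end{itemize}
Combining with Step 1, $\sup_{T^{\texttt{LDA}}(\e)=0}\psi_\sPF(\e)=0$ says precisely that every $\e^*\in\sF$ is such a minimizer. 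This is the criterion in the first part of Theorem~\ref{thm:Pareto_equal_FA}, which proves the first claim.

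\emph{Step 3: second claim, via $\varphi_\alpha(s)\equiv\h_\E(-\q(\alpha)+s u_3)$.}
\begin{itemize}
\item By positive homogeneity of support functions, $\varphi_\alpha(s)=\sqrt{1+s^2}\,\h_\E\bigl((-\q(\alpha)+su_3)/\sqrt{1+s^2}\bigr)$. It is convex in $s$ and satisfies $\varphi_\alpha(0)=\h_\E(-\q(\alpha))$.
\item Each bracket in \eqref{eq:Delta_sPF_h} equals $\varphi_\alpha(0)-\inf_{|s|\le\bar s}\varphi_\alpha(s)\ge0$. Thus $\Delta^h_\sPF(\bar s)=0$ iff $s=0$ minimizes $\varphi_\alpha$ on $[-\bar s,\bar s]$ for every $\alpha$.
\item Since $\bar s>0$ and $\varphi_\alpha$ is convex, this holds iff $\varphi_\alpha'(0)=0$.
\item Theorem~\ref{thm:support_E} gives differentiability of $\h_\E$ away from the origin, with gradient equal to the unique support point. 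Extending by homogeneity, $\varphi_\alpha'(0)=u_3^\intercal\nabla\h_\E(-\q(\alpha))=e_3(\alpha)$, where $e(\alpha)\equiv\arg\min_{\eta\in\E}\q(\alpha)^\intercal\eta$.
\item Hence $\Delta^h_\sPF(\bar s)=0$ iff $e_3(\alpha)=0$ for all $\alpha\in[0,1]$.
\end{itemize}

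\emph{Step 4 (main obstacle): $\sPF=\{e(\alpha):\alpha\in[0,1]\}$.} Given this identity, Step 3 says $\Delta^h_\sPF(\bar s)=0$ iff $\sPF\subset\{\e_3=0\}$, and the second part of Theorem~\ref{thm:Pareto_equal_FA} concludes.
\begin{itemize}
\item \emph{Inclusion $\sPF\subset\{e(\alpha)\}$.} If $\e\in\sPF$, then $\pi(\e)$ is Pareto-minimal in the convex compact set $\pi(\E)$. The supporting hyperplane theorem gives a nonnegative nonzero weight $(\alpha,1-\alpha)$ such that $\pi(\e)$ minimizes $\alpha\eta_1+(1-\alpha)\eta_2$ over $\pi(\E)$. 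Then $\e$ minimizes $\q(\alpha)^\intercal\eta$ over $\E$, and by uniqueness of the support point, $\e=e(\alpha)$.
\item \emph{Reverse inclusion, interior weights.} For $\alpha\in(0,1)$ the weights are strictly positive. Any point weakly dominating $e(\alpha)$ in the first two coordinates with a strict inequality would attain a strictly smaller value of $\q(\alpha)^\intercal\eta$, which is impossible. So $e(\alpha)\in\sPF$.
\item \emph{Reverse inclusion, endpoints.} For $\alpha\in\{0,1\}$, $e(\alpha)$ is $B$ or $R$. A Pareto improvement over $R$ would also minimize $\e_1$, contradicting uniqueness of the minimizer; the same argument applies to $B$.
\end{itemize}
Care is needed to use strict convexity of $\E$ and differentiability of $\h_\E$ correctly at these endpoint directions.
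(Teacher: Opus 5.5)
Your proposal is correct and follows essentially the paper's route: identify $\{\e\in\bB_C:T^{\texttt{LDA}}(\e)=0\}$ with $\sF$ via Theorem~\ref{thm:FequivLLMO} and reduce the first claim to the criterion in Theorem~\ref{thm:Pareto_equal_FA}. For the second claim you use convexity of $s\mapsto\h_\E(-\q(\alpha)+s u_3)$, whose derivative at $0$ is the third coordinate of $\ss_\E(-\q(\alpha))$, together with $\sPF=\{\ss_\E(-\q(\alpha)):\alpha\in[0,1]\}$ and the no-kink part of Theorem~\ref{thm:Pareto_equal_FA}. The differences are minor: you get that derivative from the gradient formula in Theorem~\ref{thm:support_E} rather than the paper's difference-quotient sandwich with Berge's theorem, and you reprove the representation of $\sPF$ that the paper takes from Lemma~\ref{lemma:PF-F-support-map}.
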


Eq. \eqref{eq:Delta_sPF_h} translates the geometric condition
$\sPF\subset\{\e_3=0\}$ in Theorem \ref{thm:Pareto_equal_FA}
into a criterion involving only the support function. For each
Pareto-supporting direction $-\q(\alpha)$,  the expression inside
the supremum checks 
whether $-\q(\alpha)$ already minimizes the support value over
fairness-direction tilts
$\{-\q(\alpha)+s u_3:s\in[-\bar s,\bar s]\}$. In the proof of Proposition \ref{prop:testable_F=PF}, we show that the
derivative of this tilted support function at $s=0$ equals the
fairness coordinate of the Pareto point; by convexity,
$s=0$ minimizes the support value over $[-\bar s,\bar s]$ if and only if
this derivative---and thus the fairness coordinate---is zero. Hence, the
criterion in \eqref{eq:Delta_sPF_h} is zero exactly when \emph{all} 
Pareto points have zero fairness disparity, i.e.
$\sPF\subset\{\e_3=0\}$, which is equivalent to $\sPF=\sF$ by Theorem
\ref{thm:Pareto_equal_FA}.

\begin{remark}
    In work in progress, we derive the limit distribution of sample analogs of $\psi_{\sPF}(\e)$ and $\Delta_{\sPF}^{h}$, and valid bootstrap critical values.
\end{remark}

\subsection{Assessing the Existence of a Less Discriminatory Alternative}\label{sec:test_statLDA}

When evaluating whether there exists a less discriminatory alternative (LDA) to a given algorithm, regulators and policymakers need to confront the fact that they can only rely on finite data to do so. 
Concretely, given an algorithm $a^*\in\mathcal{A}(\sX)$, we call another algorithm $\tilde{a} \in\mathcal{A}$ an LDA if it yields expected losses that are at least as accurate as those associated with $a^*$ for both groups, and at least as fair, with one of these inequalities strict. 
In terms of the FA-dominance notion and FA-frontier reported in Definition \ref{def:R3_frontier}, the absence of an LDA is equivalent to the fact that $a^*$ yields expected losses with
\begin{align}
    \e^*\equiv\e(a^*)=(e^A_r(a^*),e^A_b(a^*),e^F_r(a^*)-e^F_b(a^*))\in\sF.\label{eq:null_LDA_e_in_sF}
\end{align} 
By Theorem~\ref{thm:FequivLLMO}, \eqref{eq:null_LDA_e_in_sF} and \eqref{eqn:null-LDA} below are equivalent, hence we can use the latter to test the hypothesis that no LDA exists against the alternative that it does, as follows:
\begin{align}
    \HH_0&: \min_{q \in \tilde\bS^2} \left[\h_{\cC(\e^*)}(\q) + \h_{\E}(-\q)\right] = 0,\label{eqn:null-LDA}\\
     \HH_A&:  \min_{q \in \tilde\bS^2} \left[\h_{\cC(\e^*)}(\q) + \h_{\E}(-\q)\right] > 0,\label{eqn:H1-LDA}
\end{align} 
where $\h_{\cC(e^*)}(\q)$ is defined in \eqref{eq:supportC}. 
We refer to \eqref{eqn:null-LDA} as the LDA hypothesis.

We propose to reject the LDA hypothesis for large values of  the following test statistic:
\begin{align}
T_n^{\texttt{LDA}} \equiv \sqrt{n}\left( \left[\max_{\q\in\bS^2}(\q^\intercal \widehat{\e}^{\omega}-\hn_\E^\omega(q; \widehat{\bLL}^\omega,\widehat{\boeta}))\right]_{+} +\left[\min_{q\in\tilde{\bS}^2}\left(\h_{\cC(\widehat{\e}^\omega)}(\q) + \hn_\E^\omega(-q; \widehat{\bLL}^\omega,\widehat{\boeta})\right)\right]_{+}\right),\label{eq:Tn_LDA}
\end{align} 
where $\hn_\E^\omega$ is a weighted estimator for $\h_\E$ defined in \eqref{eq:DML-estimator-weighted} in Appendix \ref{app:DML} and $\widehat{\e}^{\omega}$ is a weighted estimator for $\e^*$ defined in \eqref{eq:DML-estimator-weighted-e} in Appendix \ref{sec:weighted-estimator-e}.
The former uses weights $\omega_i^h = 1 - \Xi_i/2$ and the latter uses weights $\omega_i^\e = 1 + \Xi_i/2$, $i=1,\dots,n$, with $\Xi_i$ a Rademacher random variable taking values in $\{-1,1\}$ with uniform probability and independent of the sample data and of the first stage training split. The test statistic in \eqref{eq:Tn_LDA} was proposed in \citet[Section 6.2]{liu:mol25v2}. The reason why it resorts to using reweighted estimators for $\h_\E(\q)$ and $\e^*$ is that under the null hypothesis that $\e(a^*)\in\sF$, Corollary~\ref{corr:alg_as} yields that there exists a direction vector $\q^*\in\tilde{\bS}$ such that $a^*(\X)=\mathds{1}\{\q^{*\intercal}\Delta\btheta(\X)>0\},\,\bP^*_\X$-a.s. 
As a consequence, one can show that under Assumption~\ref{asm:no-kink}, for $\e^*$ a point at which $\h_{\cC(\e^*)}(\q^*)$ is locally differentiable, the limit distribution of a the test statistic like the one in \eqref{eq:Tn_LDA} but with $\omega_i^h=\omega_i^e=1$ for all $i=1,\dots,n$, has a degenerate limit distribution.
\citet{liu:mol25v2} show that this test statistic with degenerate limit distribution asymptotically rejects a true null with probability zero, and a false null under \emph{local} ($1/\sqrt n$) alternatives motivated by practical significance criteria with probability one \citep[for a discussion of ``practical significance'' see][Sec. 6]{bla:spi22}.
The use of Rademacher weights regularizes the limit distribution of the test statistic, trading asymptotic size $\alpha$ for the null and local asymptotic unbiasedness at local alternatives closer to the null than $1/\sqrt n$ times the practical significance threshold, for power above nominal size instead of one just above that threshold.

 Our contribution is deriving the test's asymptotic distribution as well as a valid bootstrap procedure to estimate the associated critical values, in the presence of selectively observed labels.
We consider a bootstrap-based critical value $\widehat{c}_{1-\alpha}$ defined in Procedure~\ref{proc:bootstrap-weighted} below. To simplify notation, let $\bbeta \equiv ( {\h}_\E(\q),{\e}^*) \in \ell^{\infty}(\bS^2) \times \R^3$ denote the vector of population values and let $\widehat{\bbeta}^\omega  \equiv (  \widehat{\h}_\E^\omega(\q),\widehat{\e}^\omega) \in \ell^{\infty}(\bS^2) \times \R^3$ denote the vector of their estimators. We can write $T_n^{\texttt{LDA}} = \sqrt{n} \left( \phi(\widehat{\bbeta}^\omega) - \phi(\bbeta) \right)$, where $\phi$ is defined in \eqref{eq:t-stat} in Appendix \ref{appn:A}.  

\begin{proc}[Bayesian Bootstrap for the Quantiles of $T_n^{\texttt{LDA}}$]
\label{proc:bootstrap-weighted}

    \begin{enumerate}
    \textit{   \item Draw $\{W_i\}_{i=1}^n$ i.i.d. from the exponential distribution with mean $1$ independent of the sample $\{(\Y_i,\G_i,\X_i,\Z_i)\}_{i=1}^n$ and the weights $\{\Xi_i\}_{i=1}^n$. Let the bootstrap analogue of $\widehat{\bbeta}^\omega$ be $\widetilde{\bbeta}^\omega \equiv (  \widetilde{\h}_\E^\omega,\widetilde{\e}^\omega)$, where $ \widetilde{\h}^\omega_\E(q;\widetilde{\bLL}^\h,\widehat{\boeta})  \equiv \tfrac{1}{n}    \sum_{i =1}^n  \left( \tfrac{W_i \omega_i^\h}{\overline{W^\h}} \right) \zeta_i(q;\widetilde{\bLL}^\h,\widehat{\boeta}) $, $ \widetilde{\e}^\omega  \equiv \tfrac{1}{n}    \sum_{i =1}^n      \left( \tfrac{W_i \omega_i^\e}{\overline{W^\e}} \right)  \xi_i(a;\widetilde{\bLL}^\e,\widehat{\boeta})$,  $\widetilde\bLL^\h$ and $\widetilde\bLL^\e$ are defined as in \eqref{eq:defLdgA}-\eqref{eq:defLvector} with $\mu_g$ replaced by $\widetilde{\mu}_g^\h = n^{-1} \sum_{i=1}^n \left( \tfrac{W_i \omega_i^\h}{\overline{W^h}} \right)  \one\{G_i=g\}$ and $\widetilde{\mu}_g^\e = n^{-1} \sum_{i=1}^n \left( \tfrac{W_i \omega^\e}{\overline{W^\e}} \right)  \one\{G_i=g\}$, respectively, and $\overline{W^\h} = n^{-1} \sum_{i=1}^n W_i \omega_i^\h$ and $\overline{W^\e} = n^{-1} \sum_{i=1}^n W_i \omega_i^\e$.
}

\noindent\textit{ 
    \item Numerically approximate $\phi'_{\bbeta}(\cdot)$, the directional derivative of $\phi(\cdot)$ at $\bbeta$, by
    \begin{align*}
        \widehat{\phi}'_{\bbeta}(	\Ddot{\bbeta})=\tfrac{1}{s_n}\left(\phi\left(\widehat{\bbeta}^\omega+s_n(\Ddot{\bbeta})\right)-\phi\left(\widehat{\bbeta}^\omega\right)\right),
    \end{align*}
    where $\Ddot{\bbeta}\in\ell^\infty(\bS^2)\times\mathbb{R}^3$ is a candidate direction at which we evaluate $\phi'_{\bbeta}(\cdot)$ and $s_n$ is a vanishing sequence of step sizes such that $\sqrt{n}s_n\to\infty$.
    \item Obtain $\widehat{\phi}'_{\bbeta}\big(\sqrt{n}\{\widetilde{\bbeta}^\omega-\widehat{\bbeta}^\omega\}\big)$ and calculate 
    \begin{align*}  \widehat{c}_{1-\alpha}\equiv\inf\bigg\{c:\bP\bigg(\left.\widehat{\phi}'_{\bbeta}\big(\sqrt{n}\{\widetilde{\bbeta}^\omega-\widehat{\bbeta}^\omega\}\big)\leq c \,\,\right|\, \{(\Y_i,\G_i,\X_i,\Z_i,\Xi_i)\}_{i=1}^n\bigg)\geq 1-\alpha\bigg\}~.
\end{align*}
}
    \end{enumerate}
\end{proc}

Let us define the LDA test by $\varphi_n^{\texttt{LDA}} = \one \{ T_n^{\texttt{LDA}} > \widehat{c}_{1-\alpha + \kappa}+\kappa\}$   for a given significance level $\alpha \in (0,1)$, where $\kappa>0$ is an arbitrarily small positive constant. The constant $\kappa$ can be taken equal to $10^{-6}$ as in \cite{and:shi13}. We can take $\kappa = 0$ when the asymptotic distribution of $T_n^{\texttt{LDA}}$ is continuous at its $(1-\alpha)$-quantile; $\kappa>0$ appears in the critical value used in the LDA test to handle the technicalities arising from possible discontinuity points of the asymptotic distribution of $T_n^{\texttt{LDA}}$.

The next result establishes the consistency of the Bayesian bootstrap outlined in Procedure \ref{proc:bootstrap-weighted} and guarantees the asymptotic correct size of the proposed test $\varphi_n^{\texttt{LDA}}$. 

\begin{theorem}\label{thm:test-bootstrap-weights} 
    \textit{Let Assumptions \ref{asm:moments}--\ref{asm:nuisance-funct-estimators} hold. Then, the Bayesian bootstrap outlined in Procedure \ref{proc:bootstrap-weighted} is consistent. Furthermore, if the LDA hypothesis holds, the test $\varphi_n^{\texttt{LDA}}$ has asymptotically correct size,
    $\limsup_{n \to \infty}\bE[\varphi_n^{\texttt{LDA}}] \le \alpha~.$}
\end{theorem}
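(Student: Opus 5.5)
The proof applies the functional delta method for Hadamard directionally differentiable maps \citep{fan:san19}. It has three parts: a joint weak limit for $\sqrt{n}(\widehat{\bbeta}^\omega-\bbeta)$, a bootstrap analogue of that limit, and a consistency check for the numerical derivative estimator.

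\emph{Step 1 (joint weak convergence).} I will extend Theorem~\ref{thm:gaussian} to the weighted estimators $\widehat{\h}_\E^\omega$ and $\widehat{\e}^\omega$, and show that
\[
\sqrt{n}(\widehat{\bbeta}^\omega-\bbeta)\Rightarrow \mathbb{G}^\omega
\quad\text{in } \ell^\infty(\bS^2)\times\R^3 ,
\]
where $\mathbb{G}^\omega$ is a tight mean-zero Gaussian element. The plan is to linearize each weighted estimator as $n^{-1}\sum_i \omega_i(\cdot)$ times the influence functions $\zeta_i^*$ and $\xi_i^*$; here $\xi_i^*$ is the analogue of \eqref{eq:zeta-star} for $\e^*$, with the $\mu$-correction term $\Gamma_g^e$. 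Three ingredients control the remainders, exactly as in the proof of Theorem~\ref{thm:gaussian}:
\begin{enumerate}[label=(\alph*)]
\item cross-fitting conditional on $\mathfrak F_k$;
\item Neyman orthogonality in $\bpi$;
\item the margin and structure conditions (Assumptions~\ref{asm:margin}, \ref{asm:nuisance-function-structure}, \ref{asm:nuisance-funct-estimators}), which make the indicator $\one\{\q^\intercal\Delta\widehat\btheta>0\}$ contribute only $o_p(n^{-1/2})$ uniformly in $\q$.
\end{enumerate}
The Rademacher weights are bounded and independent of the data, so the Donsker property of the class $\{\zeta_i^*(\q):\q\in\bS^2\}$ carries over to $\{\omega_i^h\zeta_i^*(\q)\}$. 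The covariance between the two components is then computed explicitly. Because $\bE[\omega^h\omega^e]=1-\tfrac14=\tfrac34\neq 1$, the weights decorrelate the two components. This decorrelation is what makes the limit of $T_n^{\texttt{LDA}}$ nondegenerate.

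\emph{Step 2 (directional differentiability of $\phi$).} I will show that $\phi$ in \eqref{eq:t-stat} is Hadamard directionally differentiable at $\bbeta$ tangentially to $\mathcal C(\bS^2)\times\R^3$. The first term of $\phi$ is $[\sup_\q(\q^\intercal\e-\h(\q))]_+$. The second is $[\inf_{\q\in\tilde\bS^2}(\h_{\cC(\e)}(\q)+\h(-\q))]_+$, where $\h_{\cC(\e)}(\q)=\q_1\e_1+\q_2\e_2+|\q_3||\e_3|$ on $\tilde\bS^2$ by \eqref{eq:supportC}. Both are compositions of sup/inf functionals over compact sets, the map $|\cdot|$, and the positive part $[\cdot]_+$. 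Each of these is Hadamard directionally differentiable (Danskin-type results, \citealp{fan:san19}), with derivatives given by sup/inf over the argmax/argmin sets. The chain rule then gives $\phi'_{\bbeta}$, which is continuous, Lipschitz and positively homogeneous. The delta method now yields $T_n^{\texttt{LDA}}\Rightarrow\phi'_{\bbeta}(\mathbb G^\omega)$ under $\HH_0$, since $\phi(\bbeta)=0$ there by Theorem~\ref{thm:FequivLLMO}.

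\emph{Step 3 (bootstrap).} I will show that $\sqrt n(\widetilde\bbeta^\omega-\widehat\bbeta^\omega)$ converges conditionally in distribution to $\mathbb G^\omega$, in the sense of bounded-Lipschitz distance in outer probability. This uses the exchangeable multiplier bootstrap with exponential weights (van der Vaart--Wellner, Thm.~3.6.13), applied to the same linearization; the $\widetilde\mu_g$ recentering is handled through the $\Gamma_g$ terms.

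\emph{Step 4 (derivative estimator and size).} Since $s_n\to0$ and $\sqrt n s_n\to\infty$, \citet[Thm.~3.2]{fan:san19} gives consistency of $\widehat\phi'_{\bbeta}$ applied to the bootstrap draws. Size control with the $\kappa$ adjustment then follows from their Corollary~3.2, applied at continuity points of the limit distribution.

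The main obstacle is Step 1. The challenge is uniform negligibility, over $\q\in\bS^2$, of the plug-in error from the non-smooth indicator $\one\{\q^\intercal\Delta\widehat\btheta_k(\X)>0\}$ interacting with the estimated $\widehat{\bmu}_k$ in the IPW labels. My first attempt will condition on $\X_2$, then use Assumption~\ref{asm:nuisance-function-structure}(a) together with the uniform $o_p(n^{-1/4})$ rate for $\widehat\nu$. The target bound is that the error is of order $\bE[|\q^\intercal\Delta\btheta|\one\{|\q^\intercal\Delta\btheta|\le\|\Delta\widehat\btheta-\Delta\btheta\|\}]\lesssim\|\Delta\widehat\btheta-\Delta\btheta\|^2=o_p(n^{-1/2})$, and that this holds uniformly in $\q$.
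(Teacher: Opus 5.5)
Your proposal follows essentially the same route as the paper's proof. Both use weighted linearizations that give joint weak convergence of $\sqrt{n}(\widehat{\bbeta}^\omega-\bbeta)$ (the paper's Theorems~\ref{thm:gaussian-h} and~\ref{thm:gaussian-e}), Hadamard directional differentiability of $\phi$ tangentially to $C(\bS^2)\times\R^3$, the Fang--Santos delta method and bootstrap theorem combined with van der Vaart--Wellner's Thm.~3.6.13 for the exponential multipliers, and a $\kappa$-adjusted size argument. The differences are bookkeeping: the paper verifies Fang--Santos' Assumption~4 for the numerical derivative via Hong--Li's Thm.~3.1 rather than reading it off Thm.~3.2; it uses Romano--Shaikh's Lemma~A.1 for the size step; and it explicitly checks, via continuity of the covariance kernel, that the Gaussian limit lies in the tangent set, a point you should state as well.

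One side remark is wrong, though it does not affect this theorem. Nondegeneracy of the limit comes from $\omega_i^\e-\omega_i^\h=\Xi_i\neq 0$, and $\bE[\omega^\h\omega^\e]=3/4$ should be compared with $\bE[(\omega^\h)^2]=\bE[(\omega^\e)^2]=5/4$, not with $1$. Identical weights with second moment $5/4$ would also give a cross-moment different from $1$, yet a degenerate limit.
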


Given Theorem~\ref{thm:test-bootstrap-weights}, we can build an asymptotically valid confidence set for $\sF$ by test inversion, similarly to \citet[Proposition 5.1]{liu:mol25v2}.

\section{Conclusions}\label{sec:conclude}
In this paper, we consider the problem of identification and inference for the FA-frontier put forward in \citetalias{lia:lu:mu:oku24}, when labels (outcomes) are selectively observed.
We build on the methodology developed by \citet{liu:mol25v2}, who assume away the selective labels problem to focus on deriving moment-inequality representations for the frontier.
Selective labels challenge one's ability to (point) identify features of the conditional distribution of the true labels, including the FA-frontier. Naively computed accuracy metrics using only observed outcomes may misrepresent true performance, potentially overstating accuracy where it matters most and distorting conclusions on whether LDAs exist.

Under the assumption that labels are missing at random conditional on a rich set of observed covariates, we obtain point identification of the FA-frontier using inverse propensity score weighting and derive the asymptotic distribution of the test statistic that we use to characterize whether an algorithm is on the frontier.
The inference method that we propose is based on a debiased machine learning estimator, which is particularly appropriate in this context, where high-dimensional administrative data enable flexible nonparametric modeling of the selection process and hence make the missing at random assumption more credible, but formal inference with valid confidence statements remains necessary for legally defensible determinations, making the use of methods that control the bias of the first-step nonparametric estimation crucial.

When the selection process is left completely unrestricted (hence, we dispense with the missing at random assumption) we provide a tractable characterization of the sharp identification region for the FA-frontier, for a specific class of loss functions.
In work in progress, we derive a DML estimator and its asymptotic theory (along with valid bootstrap methods) to build a confidence set for the partially identified FA-frontier and to test the hypothesis that no LDA exists to a given algorithm.

\bibliographystyle{ecta-fullname} 
\bibliography{fairness_frontier}  

\begin{appendix}
\section{Proofs of the Main Results}\label{appn:A}
Assume throughout that $(\sX,\mathcal B(\sX))$ is a standard Borel space.
\subsection{Proofs for Section~\ref{sec:setup}}
\begin{proof}[Proof of Theorem~\ref{thm:support_E}]
Recall $\E=\left\{\big(e_r^A(a),\,e_b^A(a),\,e_r^F(a)-e_b^F(a)\big): a\in\mathcal A\right\}$, so
\begin{align*}
\h_{\E}(\q)=\sup_{a\in\mathcal A}\q^\intercal\left(e_r^A(a),\,e_b^A(a),\,e_r^F(a)-e_b^F(a)\right).
\end{align*}
Using the representation $\left(e_r^A(a),\,e_b^A(a),\,e_r^F(a)-e_b^F(a)\right) =\bEs[\btheta_0(\X)]+\bEs\left[a(\X)\Delta\btheta(\X)\right]$, we obtain $\q^\intercal\left(e_r^A(a),\,e_b^A(a),\,e_r^F(a)-e_b^F(a)\right)=\bEs\left[\q^\intercal \btheta_0(\X)\right]+\bEs\left[a(\X)\q^\intercal \Delta\btheta(\X)\right]$.
Hence,
\begin{align*}
\h_{\E}(\q)=
\bEs\left[\q^\intercal \btheta_0(\X)\right]
+
\sup_{a\in\mathcal A}\bEs\left[a(\X)\q^\intercal \Delta\btheta(\X)\right].
\end{align*}
Because $a(\X)\in[0,1]$ and $\q^\intercal \Delta\btheta(\X)$ is $\sigma(\X)$-measurable, the pointwise maximizer is
\begin{align*}
a^\q(\X)\in \arg\max_{t\in[0,1]} t\,\q^\intercal \Delta\btheta(\X)
=
\begin{cases}
1, & \q^\intercal \Delta\btheta(\X)>0,\\
0, & \q^\intercal \Delta\btheta(\X)<0,
\end{cases}
\end{align*}
with arbitrary tie-breaking on $\{\q^\intercal \Delta\btheta(\X)=0\}$. Therefore,
\begin{align*}
\sup_{a\in\mathcal A}\bEs\left[a(\X)\q^\intercal \Delta\btheta(\X)\right]=
\bEs\left[\left(\q^\intercal \Delta\btheta(\X)\right)_+\right].
\end{align*}
Next, note that $\one\{\q^\intercal \Delta\btheta(\X)>0\}$ is $\sigma(\X)$-measurable. Therefore, by iterated expectations,
\begin{multline*}
\bEs\left[\q^\intercal \bLL_0+\q^\intercal(\bLL_1-\bLL_0)\one\{\q^\intercal \Delta\btheta(\X)>0\}\right]
=\bEs\left[
\q^\intercal \btheta_0(\X)+\q^\intercal \Delta\btheta(\X)\one\{\q^\intercal \Delta\btheta(\X)>0\}
\right].
\end{multline*}
This proves \eqref{eq:support}. To prove \eqref{eq:ss}, fix $\q\in \bS^2$ and let $v\in\R^3$ be arbitrary; denote $u_+ \equiv \max\{u,0\}$. For $t\in\R$ such that $\q+t v\neq 0$, using the support-function formula, we have
\begin{align*}
\tfrac{\h_{\E}(\q+t v)-\h_{\E}(\q)}{t}=\bEs\left[v^\intercal \btheta_0(\X)+\tfrac{
\left((\q+t v)^\intercal \Delta\btheta(\X)\right)_+-
\left(\q^\intercal \Delta\btheta(\X)\right)_+}{t}
\right].
\end{align*}
Hence it is enough to study the difference quotient of the map $u\mapsto u_+$.

For each realization $\X=x$ such that $\q^\intercal \Delta\btheta(x)\neq 0$, the function $t\mapsto \left((\q+t v)^\intercal \Delta\btheta(x)\right)_+$ is differentiable at $t=0$, with derivative $v^\intercal \Delta\btheta(x)\one\{\q^\intercal \Delta\btheta(x)>0\}$.
Therefore, pointwise on the event $\{\q^\intercal \Delta\btheta(\X)\neq 0\}$, as $t\ra 0$,
\begin{align}
    \tfrac{\left((\q+t v)^\intercal \Delta\btheta(\X)\right)_+-\left(\q^\intercal \Delta\btheta(\X)\right)_+}{t}\ra v^\intercal \Delta\btheta(\X)\one\{\q^\intercal \Delta\btheta(\X)>0\}.\label{eq:derive_ss_pointwise_conv}
\end{align}
Assumption~\ref{asm:margin} implies $\bP^*\bigl(\q^\intercal \Delta\btheta(\X)=0\bigr)=0$ for every $\q\in \bS^2$, because for every $\delta>0$, 
\begin{align*}
\bP^*\left(\q^\intercal \Delta\btheta(\X)=0\right)
\le
\sup_{\tilde \q\in \bS^2}\bP^*\bigl(|\tilde \q^\intercal \Delta\btheta(\X)|\le \delta\bigr)
\lesssim\delta^m
\end{align*}
and letting $\delta\downarrow 0$ yields the claim. Thus, the  convergence in \eqref{eq:derive_ss_pointwise_conv} holds $\bP^*$-almost surely. Next we obtain an integrable dominating function. Since $u\mapsto u_+$ is $1$-Lipschitz,
\begin{align*}
\left|\tfrac{\left((\q+t v)^\intercal \Delta\btheta(\X)\right)_+ - \left(\q^\intercal \Delta\btheta(\X)\right)_+}{t}\right|
\le |v^\intercal \Delta\btheta(\X)|
\end{align*}
for all $t\neq 0$. Therefore,
\begin{align*}
\left|v^\intercal \btheta_0(\X)+\tfrac{\left((\q+t v)^\intercal \Delta\btheta(\X)\right)_+ - \left(\q^\intercal \Delta\btheta(\X)\right)_+}{t}\right|
\le |v^\intercal \btheta_0(\X)| + |v^\intercal \Delta\btheta(\X)|
\end{align*}
By Assumption~\ref{asm:moments}, the random vectors $\btheta_0(\X)$ and $\Delta\btheta(\X)$ are integrable, so the right-hand side is integrable.
Hence, by dominated convergence,
\begin{align*}
\lim_{t\ra 0}\tfrac{\h_{\E}(\q+t v)-\h_{\E}(\q)}{t}
&=
v^\intercal\bEs\left[\btheta_0(\X)+\Delta\btheta(\X)\one\{\q^\intercal \Delta\btheta(\X)>0\}\right].
\end{align*}
Thus the directional derivative of $\h_{\E}$ at $\q$ in direction $v$ exists and is linear in $v$. It follows that $\h_{\E}$ is differentiable at $\q$, with gradient
\begin{align*}
\nabla_{\q}\h_{\E}(\q)=\bEs\left[\btheta_0(\X)+\Delta\btheta(\X)\one\{\q^\intercal \Delta\btheta(\X)>0\}\right].
\end{align*}
Using again the law of iterated expectations yields \eqref{eq:ss}.

Finally, by \citet[Corollary 1.7.3]{sch93}, $\E$ is strictly convex because by \eqref{eq:ss} its support set is a singleton in each direction $\q\in\bS^2$.
\end{proof}
\begin{proof}[Proof of Theorem \ref{thm:FequivLLMO}]
Recall $\widetilde{\bS}^2\equiv\{\q\in \bS^2:\q_1\ge 0,\ \q_2\ge 0\}$ and that under Assumptions~\ref{asm:moments}-\ref{asm:margin}, $\E$ has a nonempty interior, is compact, and is strictly convex.

\noindent
\textbf{Step 1: \ref{thm:FequivLLMO_e_in_sF} $\iff$ \ref{thm:FequivLLMO_E_single_inters_C}.}
By definition,
\begin{align*}
\cC(\e^*)=\left\{\e=(\e_1,\e_2,\e_3)\in\R^3:\e_1\le \e_1^*,\ \e_2\le \e_2^*,\ |\e_3|\le |\e_3^*|\right\}.
\end{align*}
Hence $\E\cap \cC(\e^*)=\left\{\e\in \E:\e_1\le \e_1^*,\ \e_2\le \e_2^*,\ |\e_3|\le |\e_3^*|\right\}.$
Next, recall that $\e^*\in \sF$ if and only if there does not exist $\e\in \E$ with $\e_1\le \e_1^*$, $\e_2\le \e_2^*$, and $|\e_3|\le |\e_3^*|$, with at least one strict inequality. Since $\e^*\in \E\cap \cC(\e^*)$, this is equivalent to $\E\cap \cC(\e^*)=\{\e^*\}$.\smallskip

\noindent
\textbf{Step 2: \ref{thm:FequivLLMO_E_single_inters_C} $\iff$ \ref{thm:FequivLLMO_proj}.}
We begin showing $\ref{thm:FequivLLMO_E_single_inters_C}\imply \ref{thm:FequivLLMO_proj}$.
Assume $\E\cap \cC(\e^*)=\{\e^*\}$.
We prove first that $|\e_3^*|=d((\e_1^*,\e_2^*))$.
If not, then $d((\e_1^*,\e_2^*))<|\e_3^*|$, and by definition of $d$, there exists $\tilde \e_3\in\R$ such that $(\e_1^*,\e_2^*,\tilde \e_3)\in \E$ and $|\tilde \e_3|<|\e_3^*|$.
But then $(\e_1^*,\e_2^*,\tilde \e_3)\in \E\cap \cC(\e^*)$ and this point is distinct from $\e^*$, contradicting $\E\cap \cC(\e^*)=\{\e^*\}$.
Hence, $|\e_3^*|=d((\e_1^*,\e_2^*))$.
We next show that $\pi(\e^*)=(\e_1^*,\e_2^*)\in \sF^A$.
Suppose not. 
Then there exists $(c_r,c_b)\in \E^A$ such that $c_r\le \e_1^*$, $c_b\le \e_2^*$, and $d((c_r,c_b))\le d((\e_1^*,\e_2^*))$, with at least one strict inequality. 
As $(c_r,c_b)\in \E^A$, by definition of $d$ there exists $\tilde \e_3\in\R$ such that $(c_r,c_b,\tilde \e_3)\in \E$ and $|\tilde \e_3|=d((c_r,c_b))$.
Using the equality already proved, $d((\e_1^*,\e_2^*))=|\e_3^*|$, we obtain $c_r\le \e_1^*$, $c_b\le \e_2^*$, and $|\tilde \e_3|\le |\e_3^*|$, with at least one strict inequality. 
Thus $(c_r,c_b,\tilde \e_3)\in \E\cap \cC(\e^*)$ with $(c_r,c_b,\tilde \e_3)\neq \e^*$, contradicting $\E\cap \cC(\e^*)=\{\e^*\}$.
Hence, $\pi(\e^*)\in \sF^A$, proving $\ref{thm:FequivLLMO_E_single_inters_C}\imply \ref{thm:FequivLLMO_proj}$.

We now prove $\ref{thm:FequivLLMO_proj}\imply \ref{thm:FequivLLMO_E_single_inters_C}$.
Assume $\pi(\e^*)=(\e_1^*,\e_2^*)\in \sF^A$ and $|\e_3^*|=d((\e_1^*,\e_2^*))$.
Let $\e=(\e_1,\e_2,\e_3)\in \E\cap \cC(\e^*)$. 
Then $\e_1\le \e_1^*$, $\e_2\le \e_2^*$, and $|\e_3|\le |\e_3^*|$.
Since $\e\in \E$, we have
\begin{align}
d((\e_1,\e_2))\le |\e_3|\le |\e_3^*|=d((\e_1^*,\e_2^*)).\label{eq:dominance_chain_d}
\end{align}
Thus, $(\e_1,\e_2)\in \E^A$ weakly improves on $(\e_1^*,\e_2^*)$ in both group-accuracy losses and in the fairness index $d$. 
As $(\e_1^*,\e_2^*)\in \sF^A$, it follows that $(\e_1,\e_2)=(\e_1^*,\e_2^*)$ (else we would contradict $(\e_1^*,\e_2^*)\in \sF^A$).
Substituting back $(\e_1^*,\e_2^*)$ for $(\e_1,\e_2)$ in \eqref{eq:dominance_chain_d} yields $|\e_3|=|\e_3^*|$.
If $\e_3\neq \e_3^*$, then necessarily $\e_3=-\e_3^*$. If $\e_3^*=0$, this is impossible, so $\e_3=\e_3^*$. If $\e_3^*\neq 0$, by convexity of $\E$ the midpoint $\bar \e\equiv \tfrac{\e+\e^*}{2}=(\e_1^*,\e_2^*,0)$ belongs to $\E$. Therefore, $d((\e_1^*,\e_2^*))\le 0$, which implies $d((\e_1^*,\e_2^*))=0$.
But then $|\e_3^*|=d((\e_1^*,\e_2^*))=0$, contradicting $\e_3^*\neq 0$. Hence $\e_3=\e_3^*$, so $\e=\e^*$. We conclude that $\E\cap \cC(\e^*)=\{\e^*\}$, proving $\ref{thm:FequivLLMO_proj}\imply \ref{thm:FequivLLMO_E_single_inters_C}$.\smallskip

\noindent
\textbf{Step 3: \ref{thm:FequivLLMO_E_single_inters_C} $\imply$ \ref{thm:FequivLLMO_support}.}
Assume $\E\cap \cC(\e^*)=\{\e^*\}$.
As $\E$ is compact and convex with nonempty interior, and $\cC(\e^*)$ is closed and convex, the two sets admit a common supporting hyperplane at their unique common point. 
Hence, there exists $\q\neq 0$ and $\alpha\in\R$ such that
\begin{align*}
\q^\intercal \e\ge \alpha\ge \q^\intercal c
\qquad
\forall \e\in \E,\ \forall c\in \cC(\e^*).
\end{align*}
As $\e^*\in \E\cap \cC(\e^*)$, necessarily $\alpha=\q^\intercal \e^*$, hence $\inf_{\e\in \E}\q^\intercal \e = \sup_{c\in \cC(\e^*)}\q^\intercal c$.
Equivalently, $-\h_{\E}(-\q)=\h_{\cC(\e^*)}(\q)$.
Since $\h_{\cC(\e^*)}(\q)<+\infty$, we must have $\q_1\ge 0$ and $\q_2\ge 0$, because
$\cC(\e^*)$ is unbounded in the negative first- and second-coordinate directions. Since
$\q\neq 0$, positive homogeneity of support functions implies that, after replacing $\q$
by $\q/\|\q\|$, we may assume $\q\in \widetilde{\bS}^2$ while preserving the equality $\h_{\cC(\e^*)}(\q)+\h_{\E}(-\q)=0$.
Since $\e^*\in \E\inter \cC(\e^*)$, for every $\q\in \widetilde{\bS}^2$, $\h_{\cC(\e^*)}(\q)\ge \q^\intercal \e^*$ and $\h_{\E}(-\q)\ge -\q^\intercal \e^*$, and hence $\h_{\cC(\e^*)}(\q)+\h_{\E}(-\q)\ge 0$.
Therefore, $\min_{\q\in \widetilde{\bS}^2}
\left(\h_{\cC(\e^*)}(\q)+\h_{\E}(-\q)\right)=0$, proving $\ref{thm:FequivLLMO_E_single_inters_C}\imply \ref{thm:FequivLLMO_support}$.\smallskip

\noindent
\textbf{Step 4: \ref{thm:FequivLLMO_support} $\imply$ \ref{thm:FequivLLMO_E_single_inters_C}.}
Assume $\min_{\q\in \widetilde{\bS}^2}\left(\h_{\cC(\e^*)}(\q)+\h_{\E}(-\q)\right)=0$.
As $\widetilde{\bS}^2$ is compact, $\E$ is compact, and $\h_{\cC(\e^*)}$ is finite and continuous on $\widetilde{\bS}^2$, the minimum is attained at some $\bar \q\in \widetilde{\bS}^2$. Thus, $\sup_{c\in \cC(\e^*)}\bar \q^\intercal c=\inf_{\e\in \E}\bar \q^\intercal \e$.
Let $\tilde \e\in \E\cap \cC(\e^*)$. Then
\begin{align*}
\bar \q^\intercal \tilde \e
\le \sup_{c\in \cC(\e^*)}\bar \q^\intercal c=
\inf_{\e\in \E}\bar \q^\intercal \e
\le \bar \q^\intercal \tilde \e.
\end{align*}
Hence, every $\tilde \e\in \E\cap \cC(\e^*)$ satisfies $\bar \q^\intercal \tilde \e=\inf_{\e\in \E}\bar \q^\intercal \e$.
Therefore,
\begin{align*}
\E\cap \cC(\e^*)
\subset
\left\{
\e\in \E:\bar \q^\intercal \e=\inf_{\tilde \e\in \E}\bar \q^\intercal \tilde \e
\right\},
\end{align*}
that is, the intersection is contained in the support set of $\E$ in direction $-\bar \q$.
Since $\E$ is strictly convex, every support set of $\E$ is a singleton. Because $\e^*\in \E\cap \cC(\e^*)$, the above inclusion implies $\E\cap \cC(\e^*)=\{\e^*\}$, proving $\ref{thm:FequivLLMO_support}\imply \ref{thm:FequivLLMO_E_single_inters_C}$.\smallskip

Combining Steps 1-4, we obtain $\ref{thm:FequivLLMO_e_in_sF}\iff\ref{thm:FequivLLMO_E_single_inters_C}\iff\ref{thm:FequivLLMO_support}\iff\ref{thm:FequivLLMO_proj}$.
\end{proof}

\begin{proof}[Proof of Theorem~\ref{thm:Pareto_equal_FA}] We divide the proof of this theorem into four steps. The first two steps prove the first part of the theorem using Lemma \ref{lemma:PF-F-support-map}, while the last two steps additionally use the injectivity of the support map $\ss_\E$ implied by the non-kink condition.   

\noindent \textbf{Step 1:} Suppose that for any $\e^* \in \sF$ there exists $\q^* = \q(\alpha) \in \widetilde{\bS}^2$ (defined in the statement of the theorem) such that $\e^*= \ss_\E(-\q^*)$ and $\q^*_3 = 0$. By part 1 of Lemma \ref{lemma:PF-F-support-map}, this implies that $\e^* \in \sPF$; therefore, $\sF \subset \sPF$. Since $\sPF \subset \sF$ by Lemma \ref{lemma:PF-F-support-map}, we conclude $\sPF= \sF$.

\noindent \textbf{Step 2:} Suppose that $\sPF= \sF$. Since $\e^* \in \sF$ implies $\e^* \in \sPF$, we conclude that there exist $\q^* \in \widetilde{\bS}^2$ such that $\e^*= \ss_\E(-\q^*)$ and $\q^*_3 = 0$ by part 1 of Lemma \ref{lemma:PF-F-support-map}. We conclude the proof by setting $\q(\alpha) = \q^*$.

Recall that Assumption \ref{asm:no-kink} (no kinks) and the argument in \citet[Supplemental Appendix B.2.3]{bon:mag:mau12} imply the  injectivity of the support map $\ss_\E$.

    \noindent \textbf{Step 3:}  Suppose that $\sPF = \sF$.  Note that the injectivity of  $\ss_\E$  implies that any $\e^* \in \sF$ has a unique support direction $\q$ whose third coordinate is $0$ due to $\sPF = \sF$ and part 1 of Lemma \ref{lemma:PF-F-support-map}. Now, suppose that  $\sPF \subset \{ \e_3 = 0 \}$ is false, that is, there exist $\e^* \in \sPF$ such that $\e_3^* \neq 0$.  In what follows, we will find $\e^t \in \sF$ such that   $\e^t = \ss_\E(-\q^t)$ and $\q^t_3 \neq 0$, which will be a contradiction to $\sPF = \sF$  and the injectivity of $\ss_\E$.

    By part 1 of Lemma \ref{lemma:PF-F-support-map}, there exist $\q = (\q_1,\q_2,0) \in \widetilde{\bS}^2 $ such that $\e^* = \ss_\E(-\q)$.  Let $s\equiv\mathrm{sgn}(\e_3^*)\in\{-1,1\}$ and for $t>0$ define
\begin{align*}
    \q^t=\tfrac{(\q_1,\q_2,st)}
{\sqrt{\q_1^2+ \q_2^2+t^2}}.
\end{align*}
Then $\q^t\to q$ as $t\downarrow 0$.  By continuity of $\ss_\E$,
\begin{align*}
\e^t=\ss_\E(-\q^t)\to \ss_\E(-\q) = \e^*.
\end{align*}
Hence, for sufficiently small $t>0$, $\mathrm{sgn}(\e^t_3)=s=\mathrm{sgn}(\q^t_3)$, so $\q^t_3\e^t_3>0$. 
By part 2 of Lemma \ref{lemma:PF-F-support-map}, this implies $\e^t\in\sF$ such that  $\e^t = \ss_\E(-\q^t)$ and $\q^t_3 \neq 0$, which is a contradiction.

 \noindent \textbf{Step 4:} Suppose that $\sPF \subset \{ \e_3 = 0 \}$. Note that $\sPF \subset \{ \e_3 = 0 \}$ implies that the third coordinate of both $R=\ss_\E(-u_1)$ and $B=\ss_\E(-u_2)$ is zero, where $u_1=(1,0,0)^\intercal$ and $u_2 = (0,1,0)^\intercal$; therefore, $u_3^\intercal R = 0$ and $u_3^\intercal B = 0$ where $u_3 = (0,0,1)$. Since $R \neq B$, it follows that $\min_{\e \in \E} u_3^\intercal \e <0$ due to strict convexity of $\E$. This implies that $\ss_\E(-u_3) \notin \sF$ by part 2 of Lemma \ref{lemma:PF-F-support-map}. Similarly, we can conclude $\ss_\E(u_3) \notin \sF$. 
 
 Since $q \notin \{u_3,-u_3\}$, it follows that $q_1 + q_2 >0$. 
 Let $\e^* \in \sF$. By part 2 of Lemma \ref{lemma:PF-F-support-map}, there exist $\q \in \widetilde{\bS}^2 $ such that $\e^* = \ss_\E(-q)$ and $\e^*_3 \q_3 \ge 0$. Let $\tilde{\q} = (\q_1, \q_2,0)/\sqrt{q_1^2 + q_2^2}$ and $\rho = \ss_\E(-\tilde{\q} )$. Consider the following derivations:
 \begin{align*}
     \q^\intercal \rho &\overset{(1)}{=} \q_1\rho_1+\q_2\rho_2
     \overset{(2)}{\le}
\q_1\e_1^*+\q_2\e_2^* 
\overset{(3)}{\le}
\q_1\e_1^*+\q_2\e_2^*+\q_3\e_3^*  
\overset{(4)}{\le} 
\q^\intercal \rho,
 \end{align*}
 where (1) holds since $\rho \in \sPF$ by part 1 of Lemma \ref{lemma:PF-F-support-map} and $\rho_3 = 0$ due to  $\sPF \subset \{ \e_3 = 0 \}$, (2) holds since  $\rho$ minimizes $\q_1\e_1+\q_2\e_2$ over $\E$, (3) holds since $\e^* \in \sF$ and $\e^*_3 \q_3 \ge 0$ by part 2 of Lemma \ref{lemma:PF-F-support-map}, and (4) holds since $\e^*$ minimizes $\q_1\e_1+\q_2\e_2  + \q_3 \e_3$ over $\E$. Therefore, $\rho$ also minimize $\q_1\e_1+\q_2\e_2  + \q_3 \e_3$ over $\E$. As a result, due to strict convexity of $\E$, we have that $\rho = \e^*$, which implies that $\e^* \in \sPF$.  Therefore, $\sF \subset \sPF$ which implies $\sF = \sPF$ since $\sPF \subset \sF$ due to Lemma \ref{lemma:PF-F-support-map}. 
\end{proof}

\subsection{Proofs for Section~\ref{sec:identif_and_estim}}
\subsubsection{Proofs for Section~\ref{subsec:PI-CR-SP}}\label{app:PI-classification-SP}
For $\lambda\in\Lambda$, let $\btheta_d(x;\lambda)$ denote the completion-dependent analogue of $\btheta_d(x)$.
Under the special losses considered here,
\begin{align*}
    L_0^{g,A} &= \tfrac{Y_i^* \one\{G=g\}}{\mu_g}, \hspace{2cm}L_1^{g,A} = \tfrac{(1-Y_i^*) \one\{G=g\}}{\mu_g} \\
    L_0^{r,F} - L_0^{b,F}  &= 0,
    \hspace{2.2cm}L_1^{r,F} - L_1^{b,F}  = \tfrac{\one\{G=r\}}{\mu_r} - \tfrac{\one\{G=b\}}{\mu_b}
\end{align*}
The next lemma gives the $\lambda$-dependent representations used in Section~\ref{subsec:PI-CR-SP}.
\begin{lemma}\label{lem:PI-support-functions}
\textit{For every $\lambda\in\Lambda$ and every $x\in\sX$,
\begin{align}
\btheta_0(x;\lambda)&=A_0(x)+B(x)\lambda(x),
&
\btheta_1(x;\lambda)&=A_1(x)-B(x)\lambda(x).\label{eq:PI-theta-app}
\end{align}
Consequently,
\begin{align}
\h_\E(\q;\lambda)
&=
\bE\!\left[
\max\left\{
\q^\intercal\!\left(A_0(\X)+B(\X)\lambda(\X)\right),
\q^\intercal\!\left(A_1(\X)-B(\X)\lambda(\X)\right)
\right\}
\right],
\label{eq:PI-hE-app}\\
\h_{\cC(\e^*(a;\lambda))}(\q)
&=
C(\q;a)+\bE\!\left[(1-2a(\X))\q^\intercal B(\X)\lambda(\X)\right],
\qquad \q\in\widetilde{\bS}^2,
\label{eq:PI-hC-app}
\end{align}
where
\begin{equation}
C(\q;a)
\equiv
\sum_{j=1}^2 q_j\bE\!\left[a(\X)A_{1,j}(\X)+(1-a(\X))A_{0,j}(\X)\right]
+|q_3|\,\left|\bE\!\left[a(\X)A_{1,3}(\X)\right]\right|.
\label{eq:PI-C-app}
\end{equation}}
\end{lemma}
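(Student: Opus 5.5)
The plan is a direct computation, carried out in three stages: first the conditional means $\btheta_d(x;\lambda)$, then the support function $\h_\E(\q;\lambda)$, then the support function of $\cC(\e^*(a;\lambda))$.

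\textbf{Stage 1: the representation \eqref{eq:PI-theta-app}.} Fix $\lambda\in\Lambda$. Let $\bP^*_\lambda$ be the law of $(\Y^*,\X,\G,\Z)$ that agrees with $\bP$ on $(\Y,\X,\G,\Z)$ and sets $\bE_\lambda[\Y^*\mid\X=x,\G=g,\Z=0]=\lambda_g(x)$.
\begin{itemize}
\item Decompose $\Y^*=\Z\Y^*+(1-\Z)\Y^*=\Y+(1-\Z)\Y^*$, using $\Y=\Z\Y^*$.
\item For the group $g$ accuracy coordinate with $d=0$, write
\[
\bE_\lambda\big[\Y^*\one\{\G=g\}\mid\X\big]=\bE\big[\Y\one\{\G=g\}\mid\X\big]+\bE_\lambda\big[(1-\Z)\one\{\G=g\}\Y^*\mid\X\big].
\]
\item Condition the second term on $(\X,\G,\Z)$ by iterated expectations. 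It becomes $\bE\big[(1-\Z)\one\{\G=g\}\mid\X\big]\lambda_g(\X)$, because on $\{\G=g,\Z=0\}$ the inner conditional mean is $\lambda_g(\X)$.
\item Divide by $\mu_g$. Since $\Z\Y=\Y$, this gives the $g$-th entry of $A_0(x)+B(x)\lambda(x)$, with $B$ diagonal in the first two coordinates.
\item For $d=1$, use $1-\Y^*=(1-\Z\Y)-(1-\Z)\Y^*$ to get the accuracy coordinates of $A_1-B\lambda$.
\item The fairness coordinate is $0$ for $d=0$ and $\one\{\G=r\}/\mu_r-\one\{\G=b\}/\mu_b$ for $d=1$. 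Neither involves $\Y^*$, so both equal the third entries of $A_0$ and $A_1$, and the third row of $B$ is zero.
\end{itemize}
Note that $\mu_g$ is identified from $\bP$, so it does not depend on $\lambda$.

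\textbf{Stage 2: the support function \eqref{eq:PI-hE-app}.} I repeat the first part of the proof of Theorem~\ref{thm:support_E} under $\bP^*_\lambda$.
\begin{itemize}
\item Write $\q^\intercal\e(a;\lambda)=\bE\big[\q^\intercal\btheta_0(\X;\lambda)+a(\X)\,\q^\intercal(\btheta_1-\btheta_0)(\X;\lambda)\big]$.
\item Maximize pointwise over $a(\X)\in[0,1]$. The maximum of $\q^\intercal\btheta_0+t\,\q^\intercal(\btheta_1-\btheta_0)$ over $t\in[0,1]$ is $\max\{\q^\intercal\btheta_0,\q^\intercal\btheta_1\}$.
\item The maximizer $\one\{\q^\intercal(\btheta_1-\btheta_0)>0\}$ is measurable, so it is an admissible algorithm. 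The supremum is therefore attained, and no margin condition is needed here.
\item Substitute the formulas from Stage 1.
\end{itemize}

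\textbf{Stage 3: the support function \eqref{eq:PI-hC-app}.} Compute the loss vector of a given algorithm, then plug it into \eqref{eq:supportC}.
\begin{itemize}
\item By Stage 1,
\[
\e(a;\lambda)=\bE\big[(1-a)A_0+aA_1\big]+\bE\big[(1-2a)B\lambda\big].
\]
\item The third coordinate of $B\lambda$ vanishes, and $A_{0,3}=0$. Hence $\e_3(a;\lambda)=\bE[aA_{1,3}]$ does not depend on $\lambda$.
\item Apply \eqref{eq:supportC} (Lemma~\ref{lem:supp_C}) for $\q\in\widetilde{\bS}^2$, where $\q_1,\q_2\ge0$. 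This gives $\q_1\e_1+\q_2\e_2+|\q_3||\e_3|$.
\item Separate the $\lambda$-free part, which is $C(\q;a)$, from the remainder. The remainder is $\sum_{j\le2}\q_j\bE[(1-2a)(B\lambda)_j]=\bE[(1-2a)\q^\intercal B\lambda]$, again because $\q_3(B\lambda)_3=0$.
\end{itemize}

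\textbf{Main obstacle.} The only delicate point is Stage 1: justifying that every $\lambda$ induces a valid joint law that reproduces $\bP$, so that the iterated-expectation step is legitimate. For this I would construct $\Y^*\mid(\X,\G,\Z=0)\sim\mathrm{Bernoulli}(\lambda_\G(\X))$ explicitly and check that $\Y=\Z\Y^*$ holds almost surely. Integrability of all terms is immediate, since every quantity involved is bounded by $1/c_1$ under Assumption~\ref{asm:moments}(i).
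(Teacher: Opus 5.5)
Your proposal is correct and follows the paper's proof essentially step for step. It uses the same decomposition $\Y^*=\Y+(1-\Z)\Y^*$ with iterated expectations on $(\X,\G,\Z)$ for \eqref{eq:PI-theta-app}, the pointwise maximization underlying Theorem~\ref{thm:support_E} for $\h_\E(\q;\lambda)$, and direct substitution into \eqref{eq:supportC} for \eqref{eq:PI-hC-app}. Your one departure, redoing the pointwise maximization instead of citing Theorem~\ref{thm:support_E}, is a mild improvement: that theorem is stated under Assumption~\ref{asm:margin}, which the lemma does not impose and the formula for $\h_\E(\q;\lambda)$ does not need.
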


\begin{proof}
Fix $g\in\{r,b\}$. Using $\Y=\Z\Y^*$ and the definition of $\lambda_g(x)$,
\begin{align*}
\theta_0^{g,A}(x;\lambda)
&= \bEs\!\left[\left.\tfrac{\Y^*\one\{\G=g\}}{\mu_g}\right|\X=x\right] \\
&= \bE\!\left[\left.\tfrac{\Z\Y\one\{\G=g\}}{\mu_g}\right|\X=x\right]
+ \bE\!\left[\left.\tfrac{(1-\Z)\Y^*\one\{\G=g\}}{\mu_g}\right|\X=x\right] \\
&= \bE\!\left[\left.\tfrac{\Z\Y\one\{\G=g\}}{\mu_g}\right|\X=x\right]
+ \lambda_g(x)\,\bE\!\left[\left.\tfrac{(1-\Z)\one\{\G=g\}}{\mu_g}\right|\X=x\right],
\end{align*}
which is the $g$th accuracy coordinate of $A_0(x)+B(x)\lambda(x)$. Likewise,
\begin{align*}
\theta_1^{g,A}(x;\lambda)
&= \bEs\!\left[\left.\tfrac{(1-\Y^*)\one\{\G=g\}}{\mu_g}\right|\X=x\right] \\
&= \bE\!\left[\left.\tfrac{(\Z(1-\Y)+(1-\Z))\one\{\G=g\}}{\mu_g}\right|\X=x\right]
- \lambda_g(x)\,\bE\!\left[\left.\tfrac{(1-\Z)\one\{\G=g\}}{\mu_g}\right|\X=x\right],
\end{align*}
which is the $g$th accuracy coordinate of $A_1(x)-B(x)\lambda(x)$. For the fairness coordinate,
\[
\theta_0^F(x;\lambda)=0,
\qquad
\theta_1^F(x;\lambda)=\bE\!\left[\left.\tfrac{\one\{\G=r\}}{\mu_r}-\tfrac{\one\{\G=b\}}{\mu_b}\right|\X=x\right],
\]
which does not depend on $\lambda$. This proves \eqref{eq:PI-theta-app}.
Equation~\eqref{eq:PI-hE-app} now follows directly from Theorem~\ref{thm:support_E} applied to the $\lambda$-dependent pair $(\btheta_0(\cdot;\lambda),\btheta_1(\cdot;\lambda))$.

Next fix an algorithm $a$. By construction,
\[
\e^*(a;\lambda)=\bE\!\left[a(\X)\btheta_1(\X;\lambda)+(1-a(\X))\btheta_0(\X;\lambda)\right].
\]
For $j\in\{1,2\}$, using \eqref{eq:PI-theta-app},
\[
\e_j^*(a;\lambda)
=
\bE\!\left[a(\X)A_{1,j}(\X)+(1-a(\X))A_{0,j}(\X)\right]
+\bE\!\left[(1-2a(\X))B_{jj}(\X)\lambda_j(\X)\right].
\]
For the third coordinate,
$\e_3^*(a)
=
\bE\!\left[a(\X)A_{1,3}(\X)\right]$,
so it is point identified. Equation~\eqref{eq:supportC} then gives, for $\q\in\widetilde{\bS}^2$,
\begin{align*}
\h_{\cC(\e^*(a;\lambda))}(\q)
&= q_1\e_1^*(a;\lambda)+q_2\e_2^*(a;\lambda)+|q_3|\,|\e_3^*(a)| \\
&= C(\q;a)+\bE\!\left[(1-2a(\X))\q^\intercal B(\X)\lambda(\X)\right],
\end{align*}
which is \eqref{eq:PI-hC-app}.
\end{proof}

\begin{proof}[Proof of Theorem~\ref{prop:PI-frontier-corner}]
By Assumption \ref{asm:robust-margin}, all distributions consistent with the data also satisfy the conditions of Theorem~\ref{thm:FequivLLMO}; hence, algorithm $a^*$ yields $\e^* \in \sF$ if and only if
\begin{equation*}
    \min_{\q \in \tilde\bS^2} \left[\h_{\cC(\e^*)}(\q) + \h_{\E}(-\q)\right] = 0.
\end{equation*}
Given the definition of $J_d(\lambda;\q,a)$, $d=0,1$, in Theorem~\ref{prop:PI-frontier-corner}, this is equivalent to
\begin{equation}\label{eq:support_frontier_lambda}
     \min_{\q \in \tilde\bS^2}   \bEs\left[ \max \left \{ J_1(\lambda^*(\X);\q,a^*),J_0(\lambda^*(\X);\q,a^*) \right\} \right]  =0
\end{equation}
Since $\lambda^*(\X)$ is not identified by the data, we say that $a^*$ is on the fairness-accuracy frontier for some distribution consistent with the data if we can find a $\lambda^{(a^*)}(\cdot) \in \Lambda$ such that \eqref{eq:support_frontier_lambda} holds with $\lambda^*(\X) = \lambda^{(a^*)}(\X)$. We can write the existence equivalently as follows:
\begin{equation}\label{eq:support_frontier_lambdav2}
   \min_{\lambda(\cdot) \in \Lambda}  ~ \min_{\q \in \tilde\bS^2}   \bE\left[ \max \left \{ J_1(\lambda(\X);\q,a^*),J_0(\lambda(\X);\q,a^*) \right\} \right]  =0
\end{equation}

    To simplify notation we use $J_d(\lambda)$ instead of $J_d(\lambda(X);q,a^*)$. Since $J_d(\lambda)$ is linear in $\lambda$, we write $J_d(\lambda) = \tilde{A}_d + \tilde{B}_d^\intercal \lambda$, where $\tilde{B}_d=(\tilde{B}_d^r,\tilde{B}_d^b)^\intercal$. We use the previous notation to write the left-hand side of  \eqref{eq:support_frontier_lambdav2} as follows
    \begin{align}
          &\overset{(1)}{=} \min_{\q \in \tilde\bS^2}  ~ \min_{\lambda(\cdot) \in \Lambda}  ~   \bEs\left[ \max_{t \in [0,1]} \left \{ t ~J_1(\lambda) + (1-t)~J_0(\lambda) \right\} \right] \notag \\
          &\overset{(2)}{=} \min_{\q \in \tilde\bS^2}   ~   \bE\left[  \min_{\lambda(\cdot) \in \Lambda} ~ \max_{t \in [0,1]} \left \{ t ~J_1(\lambda) + (1-t)~J_0(\lambda) \right\} \right] \notag \\
          &\overset{(3)}{=} \min_{\q \in \tilde\bS^2}   ~   \bE\left[  \max_{t \in [0,1]}~ \min_{\lambda(\cdot) \in \Lambda} \left \{ t \left( \tilde{A}_1 + \tilde{B}_1^\intercal \lambda \right) + (1-t) \left( \tilde{A}_0 + \tilde{B}_0^\intercal \lambda \right) \right\} \right] \notag \\
          &\overset{(4)}{=} \min_{\q \in \tilde\bS^2}   ~   \bE\left[  \max_{t \in [0,1]}~ \left \{ t \tilde{A}_1 + (1-t)\tilde{A}_0 + \min\{0, t \tilde{B}_1^r + (1-t) \tilde{B}_0^r\} + \min\{0, t \tilde{B}_1^b + (1-t) \tilde{B}_0^b\} \right\} \right] \notag\\
          &\overset{(5)}{=} \min_{\q \in \tilde\bS^2}   ~   \bE\left[  \max \left \{ \tilde{A}_0 + \min\{0, \tilde{B}_0^r\} + \min\{0, \tilde{B}_0^b \}  ,~a^*(X) \tilde{A}_1 + (1-a^*(X))\tilde{A}_0,~ \tilde{A}_1 \right\} \right] \label{eq:aux_prop_PI}
    \end{align}
    where (1) holds because $\max\{a,b\}=\max_{t \in [0,1]} \{t a + (1-t)b\}$, (2) holds by the measurability selection theorem, (3) holds by the minimax theorem since the objective function is linear in $t$ and $\lambda$ and both domains are convex, (4) by solving in $\lambda(X) \in [0,1]^2$ conditional on $X$, and (5) holds by claims 1--3 below. Finally, \eqref{eq:aux_prop_PI} is sufficient to conclude the proof of the proposition since $\tilde{A}_0 + \min\{0, \tilde{B}_0^r\} + \min\{0, \tilde{B}_0^b \}= J_0(\lambda_1;\q,a^*)$, $\tilde{A}_0 = J_0(\lambda_0;\q,a^*)$,  $\tilde{A}_1   = J_1(\lambda_0;\q,a^*)$, and $a^*(X) \in \{0,1\}$ due to Corollary~\ref{corr:alg_as}.

    \noindent \textbf{Claim 1}: $t \tilde{B}_1^g + (1-t) \tilde{B}_0^g \le 0 $ if $ t \le t^* = a^*(X)$. Recall that $\tilde{B}_d = (\tilde{B}_d^r,\tilde{B}_d^b)^\intercal$ and
    \begin{align*}
     \tilde{B}_0 &= -2a^*(X) q^\intercal B(X) \\
     \tilde{B}_1 &= 2(1-a^*(X)) q^\intercal B(X) ~.
    \end{align*} 
    Therefore, $t \tilde{B}_1 + (1-t) \tilde{B}_0 = 2\left(t - a^*(X) \right)  q^\intercal B(X) $, which implies the claim since $q_1 \ge 0$, $q_2 \ge 0$, and $B(X)$ defined in \eqref{eq:PI-B-main} is nonnegative.

    \noindent \textbf{Claim 2}: $t \tilde{B}_1^g + (1-t) \tilde{B}_0^g \ge 0 $ if $ t \ge t^* = a^*(X)$. Similar to claim 1.
    
    \noindent \textbf{Claim 3:} $J(t) $ is maximized at $t=0$, $a^*(X)$, and $1$, where 
    $$J(t) \equiv t \tilde{A}_1 + (1-t)\tilde{A}_0 + \min\{0, t \tilde{B}_1^r + (1-t) \tilde{B}_0^r\} + \min\{0, t \tilde{B}_1^b + (1-t) \tilde{B}_0^b\} ~.$$
    Furthermore, $J(a^*(X)) = a^*(X) \tilde{A}_1 + (1-a^*(X))\tilde{A}_0$ and $J(1) = \tilde{A}_1$.  The proof of this claim follows by linearity on $t$ and claims 1 and 2.
\end{proof}

\begin{proof}[Proof of Theorem~\ref{thm:sF_envelope}]
By definition, $\sF^\exists=\bigcup_{\lambda\in\Lambda}\sF(\lambda)$.
First let $\e\in\sF^\exists$. 
Then $\e\in\sF(\lambda)$ for some
$\lambda\in\Lambda$. 
By Lemma~\ref{lem:E_lambda}, there exists $\q\in\widetilde{\bS}^2$ such that $\e=\ss_{\E(\lambda)}(-q)$ and $q_3\e_3\ge0$.
Since $\ss_{\E(\lambda)}(-q)$ is generated by a rule minimizing $q^\intercal\btheta_d(x;\lambda)$ pointwise, the corresponding expected loss vector belongs to $\cM_q$. 
Hence
\begin{align*}
\e\in\cM_q\cap\{\eta\in\R^3:q_3\eta_3\ge0\}.
\end{align*}
Conversely, suppose $\e\in\cM_q\cap\{\eta\in\R^3:q_3\eta_3\ge0\}$ for some $q\in\widetilde{\bS}^2$. 
By definition of $\cM_q$, there exist $\lambda\in\Lambda$ and a $q$-optimal rule $a$ such that
\begin{align*}
\e=\bE\left[(1-a(X))\btheta_0(X;\lambda)+a(X)\btheta_1(X;\lambda)\right].
\end{align*}
This vector minimizes $q^\intercal\eta$ over $\eta\in\E(\lambda)$, so $\e=\ss_{\E(\lambda)}(-q)$.
Since $q_3\e_3\ge0$, the representation of $\sF(\lambda)$ due to Lemma~\ref{lem:E_lambda} implies $\e\in\sF(\lambda)$. 
Therefore, $\e\in\sF^\exists$, establishing~\eqref{eq:define_sF_exists}.

Using the result in~\eqref{eq:define_sF_exists}, suppose $\e\in\sF^\exists$. 
Then there exists $q\in\widetilde{\bS}^2$ such that $q_3\e_3\ge0$ and $\e\in\cM_q$.
By Lemma~\ref{lem:frak_A_q}, $\cM_q$ is compact and convex, hence representing set membership through support function dominance \citep[Chapter 13]{roc97}, we have
\begin{align*}
\rho_q(\e)\equiv\sup_{\v\in\bS^2}
\{\v^\intercal\e-h_{\cM_q}(\v)\}\le0,\quad\text{hence}\quad[\rho_q(\e)]_+=0.
\end{align*}
Since $q$ is feasible in the infimum defining $T_{\mathrm{env}}(\e)$ in~\eqref{eq:define:T_env_representation}, $0\le T_{\mathrm{env}}(\e)\le[\rho_q(\e)]_+=0$.
Therefore $T_{\mathrm{env}}(\e)=0$.

Conversely, suppose $T_{\mathrm{env}}(\e)=0$.
By Lemma~\ref{lem:T_env_attains}, the infimum is attained. Hence there exists $q^*\in \{q\in\widetilde{\bS}^2:q_3\e_3\ge0\}$ such that $[\rho_{q^*}(\e)]_+=0$.
Thus, $\rho_{q^*}(\e)\le0$.
Using again the support function representation of set membership for the compact convex set $\cM_{q^*}$, $\e\in\cM_{q^*}$.
Since $q^*\in \{q\in\widetilde{\bS}^2:q_3\e_3\ge0\}$, we also have $q_3^*\e_3\ge0$.
Therefore,
\begin{align*}
\e\in\cM_{q^*}\cap\{\eta\in\R^3:q_3^*\eta_3\ge0\}\subseteq\sF^\exists,
\end{align*}
where the last inclusion follows from~\eqref{eq:define_sF_exists}. This proves $\e\in\sF^\exists\iff T_{\mathrm{env}}(\e)=0$.
\end{proof}

\subsubsection{Proofs for Section~\ref{subsec:identification}}
\begin{proof}[Proof of Lemma~\ref{lemma:support-funct-identif}]
Since $\bpi(\X)\in(0,1)$ a.s., the ratio $\Z\bLL_d/\bpi(\X)$ is well-defined. 
Applying the law of iterated expectations,
\begin{align*}
\bE\left[\left.\tfrac{\Z\bLL_d}{\bpi(\X)}\right|\X\right]&=
\bE\left[\left.
\bE\left[\left.\tfrac{\Z\bLL_d}{\bpi(\X)}\right|\X,\Y^*,\G\right]\right|\X\right]=
\bE\left[\left.\tfrac{\bLL_d}{\bpi(\X)}\bE[\Z|\X,\Y^*,\G]\right| \X\right].
\end{align*}
By Assumption~\ref{asm:MAR-X}, $\bE[\Z| \X,\Y^*,\G]=\bpi(\X)$ a.s., and hence $\tfrac{\bLL_d}{\bpi(\X)}\bE[\Z|\X,\Y^*,\G]=\bLL_d$, a.s., and the characterization of $\btheta_d(\X)$ in \eqref{eq:thetaMAR} follows. 
Repeating the same argument and using \eqref{eq:support} yields the characterizations of $\h_\E(\q)$ and $e_g^\iota(a)$ in \eqref{eq:hMAR}-\eqref{eq:e_gMAR}.
\end{proof}

\subsection{Proofs for Section~\ref{sec:DML}}\label{app:DML}

Let $\{\omega_i\}_{i=1}^n$ be random positive weights independent of the data. Define  
\begin{align}\label{eq:DML-estimator-weighted}
    \widehat{\h}_\E^\omega(q; \widehat{\bLL}^\omega, \widehat{\boeta}) \equiv \tfrac{1}{n}    \sum_{i =1}^n  \left( \tfrac{\omega_i}{\overline{\omega}} \right)  \zeta_i(q; \widehat{\bLL}^\omega,\widehat{\boeta})~,
\end{align}
where  $\overline{\omega} = n^{-1} \sum_{i=1}^n \omega_i$ and  $\widehat{\bLL}^\omega$ is defined as in \eqref{eq:defLdgA}--\eqref{eq:defLvector} with $\mu_g$ replaced by $\widehat{\mu}_g^\omega = n^{-1} \sum_{i=1}^n \left( \tfrac{\omega_i}{\overline{\omega}} \right)  \one\{G_i=g\}$. Note that $\widehat{\h}_\E(q; \widehat{\bLL}, \widehat{\boeta}) = \widehat{\h}_\E^\omega(q; \widehat{\bLL}^\omega, \widehat{\boeta})$ by taking $\omega_i = 1 $, $\forall ~i$.

The next result presents a general version of Theorem \ref{thm:gaussian}. It presents the limiting distribution for the class of weighted DML estimator $\widehat{\h}_\E^\omega$ of the support function $\h_\E$. 

\begin{theorem}\label{thm:gaussian-h}
    \textit{Let Assumptions \ref{asm:moments}--\ref{asm:nuisance-funct-estimators} hold and let $\{\omega_i\}_{i=1}^n$ be random positive weights independent of the data such that $\bE[\omega_i] = 1$ and $\bE[\omega_i^2] < C_\omega$. Then, 
    $$ \sqrt{n}\left( \widehat{\h}_\E^\omega(\q; \widehat{\bLL}^\omega, \widehat{\boeta}) - \h_\E(q) \right) \Rightarrow \mathbb{G}[\omega_i \{\zeta_i^{*}(q;\bLL,\boeta)-\h_\E(q)\}]  \quad \text{in} \quad \ell^{\infty}(\bS^2)~,$$
    where $\zeta_i^{*}(q;\bLL,\boeta)$ and $\widehat{\h}_\E^\omega(\q; \widehat{\bLL}^\omega, \widehat{\boeta}) $ are defined in \eqref{eq:zeta-star} and \eqref{eq:DML-estimator-weighted}, respectively. 
}
\end{theorem}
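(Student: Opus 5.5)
The proof follows the usual DML argument for a support-function process indexed by $\q\in\bS^2$. The goal is the decomposition
\[
\sqrt n\,(\widehat{\h}^\omega_\E-\h_\E)=\tfrac{1}{\sqrt n}\sum_{i}\omega_i\{\zeta_i^*(\q;\bLL,\boeta)-\h_\E(\q)\}+o_p(1),
\]
uniformly in $\q$. Weak convergence then follows from a Donsker argument for the class $\{\omega\,\zeta^*(\q;\cdot):\q\in\bS^2\}$.

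\textbf{Step 1: remove the weight normalization and the estimated $\bmu$.}
\begin{itemize}
\item Since $\overline\omega=1+O_p(n^{-1/2})$, write $\widehat{\h}^\omega_\E-\h_\E=\overline\omega^{-1}n^{-1}\sum_i\omega_i\{\zeta_i(\q;\widehat\bLL^\omega,\widehat\boeta)-\h_\E(\q)\}$.
\item The labels depend on $\bmu$ only through the factor $1/\mu_g$. Linearize $\widehat\mu_g^\omega-\mu_g=n^{-1}\sum_i(\omega_i/\overline\omega)(\one\{G_i=g\}-\mu_g)$.
\item Replace the sample average of $\partial_{\mu_g}\zeta_i$ by its mean, using the LLN and the bounded moments in Assumption~\ref{asm:moments}. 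This yields the terms $\Gamma^h_g(\q)(1-\one\{G_i=g\}/\mu_g)$ weighted by $\omega_i$.
\item Note that $\Delta\btheta$ and $\bpi$ also depend on $\bmu$ through $F$. That channel is absorbed into Step 2, because $\widehat\bmu_k$ enters $F$ and has a bounded derivative by Assumption~\ref{asm:nuisance-function-structure}(c).
\end{itemize}

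\textbf{Step 2: remove the nuisance estimation error via cross-fitting.}
Within fold $k$, condition on $\mathfrak F_k$ and split $\zeta_i(\q;\bLL,\widehat\boeta_k)-\zeta_i(\q;\bLL,\boeta)$ into a centered empirical-process part and a conditional-bias part.
\begin{itemize}
\item \emph{Empirical-process part.} Because the weights are independent of the data with bounded second moment, its conditional variance is bounded by the $L^2$ distance of the nuisances. That distance is $o_p(1)$ by Assumption~\ref{asm:nuisance-funct-estimators}. A maximal inequality over the VC-type class indexed by $\q$ (halfspace indicators times Lipschitz functions) makes this part $o_p(n^{-1/2})$ uniformly.
\item \emph{$\bpi$ and $\btheta_0$ errors.} Their conditional bias is second order by Neyman orthogonality: the term $\balpha^h(1-\Z/\bpi)$ exactly cancels the first-order $\bpi$ derivative under Assumption~\ref{asm:MAR-X}. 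The remainder is bounded by $\|\widehat\bpi-\bpi\|_2\,\|\widehat{\balpha}^h-\balpha^h\|_2=o_p(n^{-1/2})$, using $\bpi\in[\epsilon,1-\epsilon]$.
\item \emph{Error in the indicator $\one\{\q^\intercal\Delta\widehat\btheta>0\}$.} Conditional on $\X$, the IPW term has mean $\q^\intercal\Delta\btheta(\X)$, so the bias is
\[
\bE\bigl[\q^\intercal\Delta\btheta(\X)\{\one\{\q^\intercal\Delta\widehat\btheta>0\}-\one\{\q^\intercal\Delta\btheta>0\}\}\bigr].
\]
The integrand is nonzero only where $|\q^\intercal\Delta\btheta|\le|\q^\intercal(\Delta\widehat\btheta-\Delta\btheta)|$, so the bias is at most $\bE[|\q^\intercal(\Delta\widehat\btheta-\Delta\btheta)|\,\one\{|\q^\intercal\Delta\btheta|\le|\q^\intercal(\Delta\widehat\btheta-\Delta\btheta)|\}]$.
\end{itemize}

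\textbf{Main obstacle: making the indicator bias $o_p(n^{-1/2})$ uniformly in $\q$.}
The plan uses the partition in Assumption~\ref{asm:nuisance-function-structure}(c), as in \citet{liu:mol25v2}.
\begin{enumerate}
\item By the mean value theorem for $F$, $|\q^\intercal(\Delta\widehat\btheta-\Delta\btheta)|\lesssim r_\nu+\|\widehat\gamma_k(X_2)-\gamma(X_2)\|+\|\widehat\bmu_k-\bmu\|$, where $r_\nu\equiv\sup_{x_1}\|\widehat\nu_k(x_1)-\nu(x_1)\|=o_p(n^{-1/4})$.
\item Condition on $X_2$. The $\gamma$-error is then fixed, and the conditional margin condition in part (a) bounds the probability of the event by a constant times the error.
\item The bias is therefore at most $\bE[(r_\nu+\|\widehat\gamma_k-\gamma\|+O_p(n^{-1/2}))^2]$. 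By Assumption~\ref{asm:nuisance-funct-estimators}(b)--(c), this is $o_p(n^{-1/2})$ uniformly in $\q$.
\end{enumerate}

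\textbf{Step 3: weak convergence of the leading term.}
The leading term is $n^{-1/2}\sum_i\omega_i\{\zeta_i^*-\h_\E\}$. It is centered because $\bE\zeta_i^*=\h_\E$ and the weights are independent of the data.
\begin{itemize}
\item The class $\{\omega\zeta^*(\q;\cdot)\}$ is Donsker. The indicators $\one\{\q^\intercal\Delta\btheta(x)>0\}$ form a VC class, the remaining components are linear in $\q$, and the envelope is square integrable because $\bE\omega^2<C_\omega$, $\bpi\ge\epsilon$, and Assumption~\ref{asm:moments} holds.
\item $L^2$-continuity of the indexing in $\q$ follows from the margin condition (Assumption~\ref{asm:margin}): $\bP(\mathrm{sgn}\,\q^\intercal\Delta\btheta\ne\mathrm{sgn}\,\q'^\intercal\Delta\btheta)\lesssim\|\q-\q'\|^{m}$, up to constants. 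Assumption~\ref{asm:no-kink} ensures the limiting process is non-degenerate in the intended sense.
\item Combining with Steps 1--2 and Slutsky's lemma, the $\overline\omega^{-1}$ factor drops out and the limit is $\mathbb G[\omega_i\{\zeta_i^*-\h_\E\}]$ in $\ell^\infty(\bS^2)$.
\end{itemize}
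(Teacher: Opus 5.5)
Your proposal is correct and follows essentially the same route as the paper. The paper likewise, through Lemma~\ref{lemma:asympt-equivalence-weights} with Lemmas~\ref{lemma:delta-hat}, \ref{lemma:condition-A-weights}, and \ref{lemma:condition-B-weights}, removes the estimated indicator using the conditional margin condition in Assumption~\ref{asm:nuisance-function-structure}(a) and an $\X_2$-measurable error bound $\widehat\delta_k$, linearizes $1/\widehat\mu_g^\omega$ to obtain the $\Gamma_g^h(\q)$ terms, controls the $\bpi$ and $\balpha^h$ errors by cross-fitting plus a product-rate bound, and then concludes with a uniform-entropy Donsker argument and $\overline\omega\to 1$ (your appeal to Assumption~\ref{asm:no-kink} is not needed for this).
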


 \begin{proof}
The proof has two parts. We first show in part 1 that
\begin{equation}\label{eq:aux1-thm-gaussian}
     \sup_{\q \in \bS^2 }  \left| \sqrt{n}(\widehat{\h}_\E^\omega(\q,\widehat{\bLL}^\omega; \widehat{\boeta}) - \h_\E(\q) ) -  \mathbb{G}_n[ \omega_i \{\zeta_i^{*}(q;\bLL,\boeta)-\h_\E(q)\}] \right| = o_p(1)~.
\end{equation}
We then use \citet[Thm. 2.5.2]{van:wel13} to conclude in part 2 that
\begin{equation}\label{eq:aux2-thm-gaussian}
    \mathbb{G}_n[\omega_i \{\zeta_i^{*}(q;\bLL,\boeta)-h_\E(q)\}] \Rightarrow  \mathbb{G}[\omega_i \{\zeta_i^{*}(q;\bLL,\boeta)-\h_\E(q)\}] \quad \text{in} \quad \ell^\infty(\bS^2)~.
\end{equation}

\noindent \textbf{Part 1:} Our goal is to establish \eqref{eq:aux1-thm-gaussian}. Define 
$$ \widehat{\h}_\E^{\omega,*}(\q; {\bLL}, {\boeta})  = \tfrac{1}{n}    \sum_{i =1}^n  \left( \tfrac{\omega_i}{\overline{\omega}}  \right)  \zeta_i^*(q;\bLL,{\boeta})~.$$

By Lemma \ref{lemma:asympt-equivalence-weights}, we have
$$   \sup_{\q \in \bS^2 }  | \sqrt{n}(\widehat{\h}_\E^\omega(\q,\widehat{\bLL}^\omega; \widehat{\boeta}) -   \widehat{\h}_\E^{\omega,*}(\q; {\bLL}, {\boeta})  )   | = o_p(1)~. $$
This implies that  \eqref{eq:aux1-thm-gaussian} holds if the equation below holds, 
\begin{equation}
        \sup_{\q \in \bS^2 }  | \sqrt{n}(\widehat{\h}_\E^{\omega,*}(\q;{\bLL}, {\boeta}) - \h_\E(\q) ) -  \mathbb{G}_n[ \omega_i \{\zeta_i^{*}(q;\bLL,\boeta)-\h_\E(q)\}] | = o_p(1)~.\label{eq:aux3-thm-gaussian}
\end{equation} 
Now, we rewrite the left-hand size of \eqref{eq:aux3-thm-gaussian} using the identity $ \mathbb{G}_n[ \omega_i \{\zeta_i^{*}(q;\bLL,\boeta)-h_\E(q)\}]  =  \overline{\omega} \sqrt{n}(\widehat{\h}_\E^{\omega,*}(\q,{\bLL}; {\boeta}) - \h_\E(\q) ) $. That is 
$$ |1 - \tfrac{1}{\overline{\omega}}|  \sup_{\q \in \bS^2 }  |   \mathbb{G}_n[ \omega_i \{\zeta_i^{*}(q;\bLL,\boeta)-\h_\E(q)\}] |~. $$
Since $ |1 - \tfrac{1}{\overline{\omega}}| = o_p(1)$  by the definition of the weights $\omega_i$ and the law of large numbers. Therefore,   \eqref{eq:aux3-thm-gaussian} holds if  $  \sup_{\q \in \bS^2 }  |   \mathbb{G}_n[ \omega_i \{\zeta_i^{*}(q;\bLL,\boeta)-\h_\E(q)\}] | = O_p(1)$, which holds by \eqref{eq:aux2-thm-gaussian} and the Continuous Mapping Theorem. Note that we can use \eqref{eq:aux2-thm-gaussian} since the proof of part 2 is independent of the proof of part 1. This completes the proof of part 1. 

\noindent \textbf{Part 2}: Our goal is to establish \eqref{eq:aux2-thm-gaussian}. Consider the random vector 
\begin{align*}
   W_i = (\bLL_{0,i}, \Delta \bLL_{i}, \Z_i, \bpi(\X_i),\Delta \btheta(\X_i), \btheta_0(\X_i),\omega_i) \in \mathcal{W} 
\end{align*}
and the class of functions $\mathcal{Q} \equiv \left\{ f: \mathcal{W} \to \R : f(W_i) = \omega_i \{\zeta_i^{*}(q;\bLL,\boeta)-\h_\E(q)\},\q \in \bS^2 \right\}$. Here, we are using the true value of the nuisance function $\boeta$---which is fixed---and treating $\boeta(\X_i)$ as a random vector. 
In order to use \citet[Thm. 2.5.2]{van:wel13} to complete the proof of part 2, we need to verify that (i) an envelope function of the class $\mathcal{Q}$ has finite second-moment, and (ii) the uniform entropy bound in \citet[(2.5.1)]{van:wel13} holds. To verify (i), we note that the minimum envelope $\sup_{\q \in \bS^2} | \omega_i \{\zeta_i^{*}(q;\bLL,\boeta)-\h_\E(q)\}|$ has finite second-moment due to Assumptions \ref{asm:moments} and \ref{asm:nuisance-function-structure}. To verify (ii), we rely on \citet[Thms.  2 and 3]{and94} that states that the class of functions of type I and their mixing (by addition and product operation) verify the uniform entropy bound. We conclude by noting that $\mathcal{Q}$ is of type I, since any function in $\mathcal{Q}$ can be written as the sum of (i) linear functions in $\q$, (ii) indicators of linear functions in $\q$, and (iii) product of (i) and (ii).     
 \end{proof}

\begin{proof}[Proof of Theorem \ref{thm:gaussian}] It follows from Theorem \ref{thm:gaussian-h} by taking $\omega_i=1,~ \forall ~i$ .
\end{proof}

\subsection{Proofs for Section~\ref{sec:testFA_Pareto}}
\begin{proof}[Proof of Proposition~\ref{prop:testable_F=PF}]
Theorem~\ref{thm:FequivLLMO} yields $T^{\texttt{LDA}}(\e)=0 \iff \e\in \sF$.
Next, we show that for $\e\in\E$, $\psi_\sPF(\e)=0 \iff \e\in\sPF$.
Indeed, for every $\alpha\in[0,1]$,
\begin{align*}
\q(\alpha)^\intercal\e+\h_\E(-\q(\alpha))=\q(\alpha)^\intercal\e-\inf_{\eta\in\E}\q(\alpha)^\intercal \eta
\ge 0.
\end{align*}
So, $\psi_\sPF(\e)=0$ if and only if there exists $\alpha\in[0,1]$ with $\q(\alpha)^\intercal\e=\inf_{\eta\in\E}\q(\alpha)^\intercal \eta$, i.e., $\e=\arg\min_{\eta\in\E}\q(\alpha)^\intercal \eta$.
By definition, this is equivalent to $\e\in\sPF$.
Since $\{\e\in \bB_C:T^{\texttt{LDA}}(\e)=0\}=\sF$, we have $\sup_{\e\in \bB_C:T^{\texttt{LDA}}(\e)=0}\psi_\sPF(\e)=\sup_{\e\in\sF}\psi_\sPF(\e)$.
Because $\psi_\sPF(\e)\ge 0$ for every $e\in\E$, the right hand side equals $0$ if and only if $\psi_\sPF(\e)=0$ for all $\e\in\sF$.

For $v\neq 0$, by positive homogeneity of $\h_\E(v)$,
$\h_\E(v)=\|v\|\h_\E\left(\tfrac{v}{\|v\|}\right)$,
so that, for every $\alpha\in[0,1]$ and $s\in[-\bar{s},\bar{s}]$, $\sqrt{1+s^2}\,
\h_\E\left(\tfrac{-q(\alpha)+s u_3}{\sqrt{1+s^2}}
\right)=\h_\E(-q(\alpha)+s u_3)$ for $u_3=[0,0,1]^\intercal$.
Indeed, $\| -q(\alpha)+s u_3\|=\sqrt{1+s^2}$, since
$q(\alpha)_3=0$ and $\|q(\alpha)\|=1$.

Fix $\alpha\in[0,1]$, set $q=q(\alpha)$, and consider the scalar function
\begin{align*}
g_\alpha(s)\equiv \h_\E(-q+s u_3)
=
\sup_{\eta\in\E}(-q+s u_3)^\intercal\eta .
\end{align*}
The function $g_\alpha$ is convex because it is the supremum of affine functions
of $s$. By Assumptions~\ref{asm:moments}-\ref{asm:margin}, $\E$ is compact and strictly convex, so each
support set is a singleton. Hence the support point $\ss_\E(-q)=\arg\max_{\eta\in\E}(-q)^\intercal\eta$ is well-defined.

We next show that $g_\alpha$ is differentiable at zero and that $g_\alpha'(0)=u_3^\intercal\ss_\E(-q)$.
Let $\e^*=\ss_\E(-q)$. For $s\neq0$, define the (unique) support point $\e^s
\equiv
\arg\max_{\eta\in\E}(-q+s u_3)^\intercal\eta$.
For $s>0$, optimality of $\e^*$ at
$s=0$ and optimality of $\e^s$ at $s$ give
\begin{align*}
(\e^*)_3
\leq
\tfrac{g_\alpha(s)-g_\alpha(0)}{s}
\leq
(\e^s)_3 .
\end{align*}
For $s<0$, the same inequalities hold with the order reversed:
\begin{align*}
(\e^s)_3
\leq
\tfrac{g_\alpha(s)-g_\alpha(0)}{s}
\leq
(\e^*)_3 .
\end{align*}
The support map is continuous by Berge's maximum theorem, so
$\e^s\to\e^*$ as $s\to0$. Therefore, $g_\alpha'(0)=(\e^*)_3=u_3^\intercal\ss_\E(-q)$.
Since $g_\alpha$ is convex and differentiable at zero, and since zero is an
interior point of $[-\bar{s},\bar{s}]$,
\begin{align*}
g_\alpha(0)
=
\inf_{s\in[-\bar{s},\bar{s}]}g_\alpha(s)
\quad\Longleftrightarrow\quad
g_\alpha'(0)=0.
\end{align*}
Combining this with the derivative calculation yields
\begin{align*}
\h_\E(-q(\alpha))
=
\inf_{s\in[-\bar{s},\bar{s}]}
\sqrt{1+s^2}\,
\h_\E\left(
\tfrac{-q(\alpha)+s u_3}{\sqrt{1+s^2}}
\right)
\quad\Longleftrightarrow\quad
u_3^\intercal\ss_\E(-q(\alpha))=0.
\end{align*}
The expression inside the supremum in \eqref{eq:Delta_sPF_h} is nonnegative, as $s=0$ is in the interval $[-\bar{s},\bar{s}]$. Hence, $\Delta_{\sPF}^{h}(\bar{s})=0$ if and only if $u_3^\intercal\ss_\E(-q(\alpha))=0$ for every $\alpha\in[0,1]$.
By Lemma~\ref{lemma:PF-F-support-map}, $\sPF=\{\ss_\E(-q(\alpha)):\alpha\in[0,1]\}$, so $\Delta_{\sPF}^{h}(\bar{s})=0$ if and only if $\sPF\subset\{\e\in\E:\e_3=0\}$.
Under Assumption~\ref{asm:no-kink}, Theorem~\ref{thm:Pareto_equal_FA} gives $\sPF=\sF$ if and only if $\sPF\subset\{\e\in\E:\e_3=0\}$.
Combining the last two equivalences proves $\sPF=\sF\iff\Delta_{\sPF}^{h}(\bar{s})=0$.
\end{proof}

\subsection{Proofs for Section~\ref{sec:test_statLDA}}
    Let $[\cdot]_+=\max\{\cdot,0\}$. Define $\phi: \ell^\infty(\bS^2) \times \R^3 \to \R$ as
    \begin{equation}\label{eq:t-stat}
       \phi (f,\e) \equiv  \left[ \max_{\q \in \bS^2} (\q^\intercal \e -f(\q) )\right]_+ + \left[ \min_{\q \in \widetilde{\bS}^2} (\q_1 \e_1+\q_2\e_2 +|\q_3\e_3| + f(-\q) )\right]_+,
    \end{equation}
    where $f \in \ell^\infty(\bS^2) $ and $\e \in  \R^3 $. Note that $T_n^{\text{LDA}} = \sqrt{n} \left( \phi(\widehat{\bbeta}^\omega) - \phi(\bbeta) \right)$.
    
\begin{proof}[Proof of Theorem \ref{thm:test-bootstrap-weights}]
 
The proof has two parts. In the first part, we use \citet[Thms.  2.1 and 3.2]{fan:san19} to show that the Bayesian bootstrap is consistent. In the second part, we use \citet[Lemma A.1]{romano2012uniform} to prove that the test $\varphi_n^{\text{LDA}}$ has asymptotically correct size.

\noindent \textbf{Part 1:} By \citet[ Lemma D.3]{kai16} and standard arguments, it follows that $\phi$ defined in \eqref{eq:t-stat} is Hadamard directionally differentiable at $\bbeta$ tangentially to $\mathbb{D}_0 = C(\bS^2) \times \R^3 \subset \mathbb{D} = \ell^\infty(\bS^2) \times \R^3 $, where $C(\bS^2)$ is the class of continuous functions on $\bS^2$. This guarantees Assumption 1 in \cite{fan:san19}. By Cramer-Wold and Theorems \ref{thm:gaussian-h} and \ref{thm:gaussian-e}, we have $\sqrt{n}(\widehat{\bbeta}^\omega - \bbeta) \Rightarrow \mathbb{G}_{\bbeta}$. Let $K(\q,\q')$ be the covariance matrix of $\mathbb{G}_{\bbeta}$ for $\q,\q' \in \bS^2$. It can be show that $K(\q,\q')$ is continuous in both $\q,\q' \in \bS^2$, which implies that $\mathbb{G}_{\bbeta} \in \mathbb{D}_0$ with probability 1 by \citet[Lemma 1.3.1]{adler2007random}. This guarantees Assumption 2 in \cite{fan:san19}. By \citet[Thm. 2.1]{fan:san19}, we conclude $\sqrt{n} \left( \phi(\widehat{\bbeta}^\omega) - \phi(\bbeta)\right) \Rightarrow \phi_{\bbeta}'( \mathbb{G}_{\bbeta})$.

 To prove the consistency of the bootstrap, it is sufficient to show that
\begin{align}
\sup_{f\in\mathcal{BL}_1}\left|\bE\big[f\big(\widehat{\phi'}_{\bbeta}\big(\sqrt{n}\{\widetilde{\bbeta}^\omega-\widehat{\bbeta}^\omega\}\big)\big)\,\big|\,\{(\Y_i,\G_i,\X_i,\Z_i,\Xi_i)\}_{i=1}^n\big]-\bE\big[f\big(\phi_{\bbeta}'(\mathbb{G}_{\bbeta})\big)\big]\right|=o_p(1),\label{eq:aux1_proof_thm_test}
\end{align}
where $\mathcal{BL}_1$ is the set of $1$-Lipschitz functions $f:\mathbb{R}\to\mathbb{R}$ such that $|f|_\infty\leq1$. 

It is sufficient to verify Assumptions 3 and 4 in \cite{fan:san19} to conclude \eqref{eq:aux1_proof_thm_test} by   \citet[Thm. 3.2]{fan:san19}. Note that Assumption 3 part (i), (iii), and (iv) hold by construction. By \citet[Thm. 3.1]{HONG2018379}, we conclude $\widehat{\phi'}_{\bbeta}$ defined in Procedure 1 verifies Assumption 4 in \cite{fan:san19}. Therefore, it is sufficient to verify Assumption 3 part (ii), i.e., the Bayesian bootstrap works for the estimator $\widehat{\bbeta}^\omega $:
\begin{equation}\label{eq:aux1-proof-test-bootstrap}
    \sup_{f\in\mathcal{BL}_1}\left|\bE\left[f\left(  \sqrt{n} \left(  \widetilde{\bbeta}^\omega  -  \widehat{\bbeta}^\omega \right)  \right)\,\big|\,\{(\Y_i,\G_i,\X_i,\Z_i,\Xi_i)\}_{i=1}^n\right] -\bE\big[f\big( \mathbb{G}_{\bbeta})\big]\right|=o_p(1)~.
\end{equation}

    Define
    \begin{align}
        \widehat{\bbeta}^{\omega,*} &\equiv \left( \tfrac{1}{n}    \sum_{i =1}^n   \left( \tfrac{\omega_i^\h}{\overline{\omega^\h}} \right) \zeta_i^*(q;{\bLL},{\boeta}) , \tfrac{1}{n}    \sum_{i =1}^n      \left( \tfrac{\omega_i^\e}{\overline{\omega^\e}} \right)  \xi_i^*(a^*;{\bLL},{\boeta})\right)\label{eq:beta-hat-star}\\
         \widetilde{\bbeta}^{\omega,*} &\equiv  \left( \tfrac{1}{n}    \sum_{i =1}^n    \left( \tfrac{W_i \omega_i^\h}{\overline{W^\h}} \right) \zeta_i^*(q;{\bLL},{\boeta}) , \tfrac{1}{n}    \sum_{i =1}^n      \left( \tfrac{W_i \omega_i^\e}{\overline{W^\e}} \right)  \xi_i^*(a^*;{\bLL},{\boeta})\right) \label{eq:beta-tilde-star}
    \end{align}
    
By Lemma \ref{lemma:bootstrap-equivalence-weights}, we have $\left\|\sqrt{n} \left(  \widetilde{\bbeta}^\omega  -  \widehat{\bbeta}^\omega \right) - \sqrt{n} \left(  \widetilde{\bbeta}^{\omega,*}  -  \widehat{\bbeta}^{\omega,*} \right)   \right\| = o_p(1)$; therefore, we conclude that \eqref{eq:aux1-proof-test-bootstrap} holds if the following equation holds 
$$ \sup_{f\in\mathcal{BL}_1}\left|\bE\left[f\left(  \sqrt{n} \left(  \widetilde{\bbeta}^{\omega,*}  -  \widehat{\bbeta}^{\omega,*} \right)  \right)\,\big|\,\{(\Y_i,\G_i,\X_i,\Z_i,\Xi_i)\}_{i=1}^n\right] -\bE\big[f\big( \mathbb{G}_{\bbeta})\big]\right|=o(1)~. $$
Finally, the previous equation holds by definition of $ \widetilde{\bbeta}^{\omega,*}$, $ \widehat{\bbeta}^{\omega,*}$, by \citet[Thm. 3.6.13]{van:wel13}, and because $\sqrt{n}\left( \widehat{\bbeta}^{\omega,*}  - \bbeta \right) \Rightarrow \mathbb{G}_{\bbeta}$ since $E[\omega^\h] = E[\omega^\e] = 1$. This completes the proof of the bootstrap consistency.

\noindent \textbf{Part 2:} For an arbitrarily small $\kappa>0$,  $\bP\left(\sup_{x \in \R}  \widetilde{F}_n(x)-F_\infty(x) > \kappa \right)= o(1)$, where
\begin{align*}
     \widetilde{F}_n(x) &=  \bP\left(  \widehat{\phi'}_{\bbeta}\big(\sqrt{n}\{\widetilde{\bbeta}^\omega-\widehat{\bbeta}^\omega\}\big) \le x   | \{(\Y_i,\G_i,\X_i,\Z_i,\Xi_i)\}_{i=1}^n \right), ~~F_\infty(x)  =  \bP\left( \phi_{\bbeta}'( \mathbb{G}_{\bbeta}) \le x \right)~.
\end{align*}
Let $c_{1-\alpha}$ be $(1-\alpha)$-quantile of $\phi_{\bbeta}'( \mathbb{G}_{\bbeta})$, i.e., $F_\infty(c_{1-\alpha}) \ge 1-\alpha$. Recall that $ \widehat{c}_{1-\alpha + \kappa}$ is the $(1-\alpha+\kappa)-$quantile of the distribution of $\widehat{\phi'}_{\bbeta}\big(\sqrt{n}\{\widetilde{\bbeta}^\omega-\widehat{\bbeta}^\omega\}\big)$.  The following derivation completes the proof of part 2:
\begin{align*}
    \bP( T_n^{\text{LDA}} > \widehat{c}_{1-\alpha + \kappa}  + \kappa) &\overset{(1)}{\le} \bP( T_n^{\text{LDA}} > c_{1-\alpha}  + \kappa) + o(1) \overset{(2)}{\le}  \bP( T_n^{\text{LDA}} \ge c_{1-\alpha} + \kappa/2 ) + o(1) \\
    & \hspace{4cm}\overset{(3)}{\le}  \bP( \phi_{\bbeta}'( \mathbb{G}_{\bbeta}) \ge c_{1-\alpha} + \kappa/2 )  + o(1)  \\   
    &\hspace{4cm}\overset{(4)}{\le}   \bP( \phi_{\bbeta}'( \mathbb{G}_{\bbeta}) > c_{1-\alpha}  ) + o(1)~\overset{(5)}{\le} \alpha + o(1)~,
\end{align*}
where (1) holds since  $\bP( \widehat{c}_{1-\alpha + \kappa} \ge c_{1-\alpha}) = 1-o(1)$ by \citet[Lemma A.1]{romano2012uniform}, (2) and (4) hold since $\kappa >0$, (3) holds since $T_n^{\text{LDA}} \overset{d}{\rightarrow} \phi_{\bbeta}'( \mathbb{G}_{\bbeta})$, and (5) holds by the definition of $c_{1-\alpha}$. 
\end{proof}

\section{Auxiliary Results }\label{app:auxiliary}

\subsection{Lemmas and Auxiliary Results for Section~\ref{sec:setup}}
\begin{lemma}[Support function of $\cC(\e^*)$]\label{lem:supp_C}
\textit{For $\e^*=(\e_1^*,\e_2^*,\e_3^*)\in\mathcal E$ and $\q\in\R^3$, the support function of the closed convex set $\cC(\e^*)$ in \eqref{eq:def_C} is as given in \eqref{eq:supportC}.}
\end{lemma}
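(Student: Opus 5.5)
The computation of $\h_{\cC(\e^*)}(\q)=\sup_{\e\in\cC(\e^*)}\q^\intercal\e$ rests on the product structure of $\cC(\e^*)$. Write
\begin{align*}
\cC(\e^*)=(-\infty,\e_1^*]\times(-\infty,\e_2^*]\times[-|\e_3^*|,|\e_3^*|].
\end{align*}
This set is closed and convex, being a product of closed intervals. Because the constraints on the three coordinates are decoupled, the supremum of the linear functional separates:
\begin{align*}
\sup_{\e\in\cC(\e^*)}\q^\intercal\e=\sup_{\e_1\le\e_1^*}\q_1\e_1+\sup_{\e_2\le\e_2^*}\q_2\e_2+\sup_{|\e_3|\le|\e_3^*|}\q_3\e_3.
\end{align*}
This identity holds in the extended reals, since each term is bounded below by its value at $\e^*$ (which is finite), so no $\infty-\infty$ ambiguity can arise. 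I therefore evaluate each one-dimensional supremum separately.

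\emph{First two coordinates.} For $j\in\{1,2\}$, consider $\sup_{\e_j\le\e_j^*}\q_j\e_j$.
\begin{itemize}
\item If $\q_j>0$, the map $\e_j\mapsto\q_j\e_j$ is increasing, so the supremum is attained at $\e_j=\e_j^*$ and equals $\q_j\e_j^*$.
\item If $\q_j=0$, the supremum is $0=\q_j\e_j^*$.
\item If $\q_j<0$, letting $\e_j\to-\infty$ sends $\q_j\e_j\to+\infty$.
\end{itemize}

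\emph{Third coordinate.} The supremum is $\sup_{|\e_3|\le|\e_3^*|}\q_3\e_3=|\q_3||\e_3^*|$. It is attained at $\e_3=\mathrm{sgn}(\q_3)|\e_3^*|$, and the upper bound $\q_3\e_3\le|\q_3||\e_3|\le|\q_3||\e_3^*|$ is immediate.

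Combining the three pieces, the sum is finite exactly when $\q_1\ge0$ and $\q_2\ge0$, in which case it equals $\q_1\e_1^*+\q_2\e_2^*+|\q_3||\e_3^*|$. If either of $\q_1,\q_2$ is negative, the sum is $+\infty$, because the remaining terms are finite. This is exactly \eqref{eq:supportC}.

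None of these steps is a genuine obstacle. The only point requiring care is to justify splitting the supremum over the product set into a sum of coordinatewise suprema when some terms may be infinite. Because each term is $>-\infty$, the sum is well defined, and the split follows from the standard inequality in both directions.
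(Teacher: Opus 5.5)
Your proof is correct and follows the same coordinatewise route as the paper's: split the supremum over the box, send $\e_j\to-\infty$ when $\q_j<0$, take $\q_j\e_j^*$ when $\q_j\ge 0$, and get $|\q_3||\e_3^*|$ from the symmetric interval in the third coordinate. Your remark that the split is valid in the extended reals, because every term exceeds $-\infty$, is a small piece of rigor that the paper leaves implicit.
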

\begin{proof}
We have
$\h_{\cC(\e^*)}(\q) = \sup_{\e \in \cC(\e^*)} \q^\intercal{\e}= \sup_{\substack{\e_1 \leq \e_1^*, \e_2 \leq \e_2^*\\ |\e_3| \leq |\e_3^*|}} (\q_1 \e_1 + \q_2 \e_2 + \q_3 \e_3).$
If $\q_1 < 0$, then $\sup_{\e_1 \leq \e_1^*} \q_1 \e_1 = +\infty$ (taking $\e_1 \to -\infty$). Similarly if $\q_2 < 0$.
For $\q_1, \q_2 \geq 0$, the supremum over $\e_j \leq \e_j^*$ gives $\q_j \e_j^*$, $j=1,2$.
For the third coordinate:
\begin{align*}
    \sup_{|\e_3| \leq |\e_3^*|} \q_3 \e_3 = \begin{cases}
\q_3 |\e_3^*| & \text{if } \q_3 \geq 0,\\
-\q_3 |\e_3^*| = |\q_3| |\e_3^*| & \text{if } \q_3 < 0.
\end{cases}
\end{align*}
Thus $\h_{\cC(\e^*)}(\q) = \q_1 \e_1^* + \q_2 \e_2^* + |\q_3| |\e_3^*|$ for $\q_1, \q_2 \geq 0$.
\end{proof}

\begin{lemma}\label{prop:compact_convex}
\textit{Let $(\Omega,\mathfrak F,\bP)$ be the probability space on which $(\Y,\G,\X)$ are defined. 
Then: (i) $\E$ is convex. 
(ii) Under Assumption \ref{asm:moments}, $\E$ is compact. 
(iii) Under Assumptions \ref{asm:moments}-\ref{asm:margin}, $\E$ has a non-empty interior in $\R^3$, and hence $\dim(\E)=3$.}
\end{lemma}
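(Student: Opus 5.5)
We prove the three parts in order, working throughout with the representation
$\e(a)=\bEs[\btheta_0(\X)]+\bEs[a(\X)\Delta\btheta(\X)]$ from the proof of Theorem~\ref{thm:support_E}.

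\emph{Part (i).} Convexity is immediate from linearity of $a\mapsto \e(a)$ and convexity of $\mathcal A$. If $a,a'\in\mathcal A$ and $t\in[0,1]$, then $ta+(1-t)a'$ is measurable with values in $[0,1]$. Its image under the map is $t\e(a)+(1-t)\e(a')$, so this point lies in $\E$. No assumption is needed here.

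\emph{Part (ii).} Boundedness comes from Assumption~\ref{asm:moments}(i). By conditional Jensen, $|\theta_d^{g,\iota}(\X)|\le (\bEs[(\LL_d^{g,\iota})^2|\X])^{1/2}<c_2^{1/2}$ a.s. Hence every coordinate of $\e(a)$ is bounded by a constant $C$ that does not depend on $a$.

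Closedness is the substantive step. I would view $\mathcal A$ as a subset of $L^\infty(\bP^*_\X)$ and equip it with the weak-* topology. It is weak-* closed and bounded, so it is weak-* compact by Banach--Alaoglu. The map $a\mapsto \bEs[a(\X)\Delta\btheta(\X)]$ is weak-* continuous, because $\Delta\btheta\in L^1$ (indeed $L^2$) by the bound above. So $\E$ is the continuous image of a compact set, plus the constant $\bEs[\btheta_0]$, and is therefore compact.

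An alternative route avoids the functional analysis. Since $\E$ is bounded and convex, its closure $\bar\E$ is compact. Every boundary point of $\bar\E$ is a limit of support points. By the argument in Theorem~\ref{thm:support_E}, the supremum $\sup_a \q^\intercal\e(a)$ is attained by the threshold rule $a^\q$. One could try to use this to show the boundary lies in $\E$. I prefer the weak-* argument, since attaining support values only yields exposed points, and those need not cover the whole boundary without strict convexity.

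\emph{Part (iii).} Suppose $\E$ has empty interior. A convex set in $\R^3$ with empty interior lies in an affine hyperplane. So there would be $\q\in\bS^2$ and a constant $c$ with $\q^\intercal\e(a)=c$ for all $a$. Equivalently, $\bEs[a(\X)\,\q^\intercal\Delta\btheta(\X)]$ would be constant in $a$.

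Compare $a\equiv 0$ with $a=\one\{\q^\intercal\Delta\btheta(\X)>0\}$. This gives $\bEs[(\q^\intercal\Delta\btheta(\X))_+]=0$, so $\q^\intercal\Delta\btheta(\X)\le 0$ a.s. The same comparison applied to $-\q$ gives $\q^\intercal\Delta\btheta(\X)\ge0$ a.s. Hence $\q^\intercal\Delta\btheta(\X)=0$ a.s.

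This contradicts Assumption~\ref{asm:margin}. As shown in the proof of Theorem~\ref{thm:support_E}, that assumption implies $\bP^*(\q^\intercal\Delta\btheta(\X)=0)=0$. Therefore $\E$ has nonempty interior and $\dim(\E)=3$. I expect this step to be routine, which makes the closedness argument in (ii) the main obstacle.
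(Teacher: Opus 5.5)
Your proof is correct. Parts (i) and (ii) follow the paper: $\E$ is an affine image of the convex set $\mathcal A$, and compactness comes from Banach--Alaoglu plus weak-* continuity of $a\mapsto\bEs[a(\X)\Delta\btheta(\X)]$.

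One difference in (ii) is that you work directly in $L^\infty(\bP^*_\X)$ on $\sX$. The paper instead works with $\sigma(\X)$-measurable elements of $L^\infty(\Omega)$, which forces it to invoke Doob--Dynkin and to prove that weak-* limits stay $\sigma(\X)$-measurable. Your setup avoids both steps. What remains is the standard check that $\{0\le a\le 1 \text{ a.e.}\}$ is weak-* closed, which deserves a line: test against indicators of $\{a<-\varepsilon\}$ and $\{a>1+\varepsilon\}$.

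Your remark about the support-point route is also accurate. It would work under Assumption~\ref{asm:margin}, since every boundary point of the closure is then exposed and attained by a threshold rule. But (ii) is stated under Assumption~\ref{asm:moments} alone.

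For (iii) you take a genuinely different route. The paper centres $\E$ at $\bEs[\btheta_0(\X)]+\tfrac12\bEs[\Delta\btheta(\X)]$, the image of the rule $a\equiv\tfrac12$. It computes the support function of the recentred set as $\tfrac12\bEs|\q^\intercal\Delta\btheta(\X)|$. It then uses the margin condition to bound this below uniformly in $\q$ by $\delta_0/4$, so $\E$ contains an explicit ball around that point.

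Your contradiction argument runs: empty interior forces $\E$ into a hyperplane, comparison with threshold rules gives $\bEs[(\q^\intercal\Delta\btheta(\X))_\pm]=0$, and hence $\q^\intercal\Delta\btheta(\X)=0$ a.s. It is shorter and shows that only $\bP^*(\q^\intercal\Delta\btheta(\X)=0)<1$ for each $\q$ is needed, which is far weaker than Assumption~\ref{asm:margin}. What it gives up is quantitative information: the paper's version identifies a specific interior point and an inradius that depends only on the margin constants.
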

\begin{proof}
Recall that $\mathcal A \equiv\{a:\sX\to[0,1]\ \text{Borel-measurable}\}$.

\noindent\textbf{Part (i).} Each coordinate map $a\mapsto e_g^\iota(a)$ is affine in $a$ by \eqref{eq:eg_iota}. Since $\mathcal A$ is convex and $(e_r^A(a),e_b^A(a),e_r^F(a)-e_b^F(a))$ is an affine image of $a$, its range $\E$ is convex.\smallskip

\noindent\textbf{Part (ii).} Define
\begin{align*}
K\equiv\{u\in L^\infty(\Omega,\mathfrak F,\bP):
u \text{ is }\sigma(\X)\text{-measurable and }0\le u\le 1 \text{ a.s.}\}.
\end{align*}
As $(\sX,\mathcal B(\sX))$ is standard Borel, the Doob--Dynkin lemma \citep[Prop. 3, Ch. 1]{RaoSwift2006} yields $K=\{a(\X):a\in\mathcal A\}$ up to a.s. equality.
Hence, $\E=\bE[\btheta_0(\X)]+T(K)$, for
\begin{align}
T(u)\equiv\bE[u\Delta\btheta(\X)]\in\R^3.\label{eq:def_T}
\end{align}
We first show that $K$ is weak-* compact in $L^\infty$, where $L^\infty$ is equipped with the topology $\sigma(L^\infty,L^1)$.
Since $0\le u\le 1$ a.s. for every $u\in K$, we have $\|u\|_\infty\le 1$, so $K$ is contained in the closed unit ball of $L^\infty$. 
Since $L^\infty=(L^1)^*$, the Banach-Alaoglu theorem yields that the closed unit ball of $L^\infty$ is compact for the weak-* topology $\sigma(L^\infty, L^1)$.

To prove weak-* compactness of $K$, it suffices to show that $K$ is weak-* closed, since $K$ is contained in the weak-* compact unit ball of $L^\infty$. Accordingly, let $u_\alpha\in K$ and suppose $u_\alpha\to u$ weak-* in $L^\infty$. We show that $u\in K$.
First, $u$ is $\sigma(\X)$-measurable. Indeed, for every $g\in L^1(\Omega,\mathfrak F,\bP)$, $\bE\left[u_\alpha\bigl(g-\bE[g| \sigma(\X)]\bigr)\right]=0$, because $u_\alpha$ is $\sigma(\X)$-measurable. Passing to the limit using weak-* convergence gives $\bE\left[u\left(g-\bE[g| \sigma(\X)]\right)\right]=0$, for all $g\in L^1$.
Taking $g=\one_A$ for $A\in\mathfrak F$, we obtain
\begin{align*}
\int_A ud\bP=\int_A \bE[u|\sigma(\X)]d\bP,\quad \forall A\in\mathfrak F,
\end{align*}
hence $u=\bE[u|\sigma(\X)]$ a.s. Therefore $u$ is $\sigma(\X)$-measurable.

Next, the bounds $0\le u\le 1$ a.s. are preserved. To show $u\ge 0$ a.s., suppose otherwise.
Then $\bP(u<0)>0$, so for some $\varepsilon>0$, $A_\varepsilon\equiv\{u\le -\varepsilon\}$ has positive probability. 
Taking $g=\one_{A_\varepsilon}\in L^1$, weak-* convergence yields $\bE[u_\alpha g]\to \bE[ug]$.
Since $u_\alpha\ge 0$ a.s., we have $\bE[u_\alpha g]\ge 0$ for all $\alpha$, whereas $\bE[ug]=\bE[u\one_{A_\varepsilon}]
\le -\varepsilon\,\bP(A_\varepsilon)<0$, a contradiction. Thus $u\ge 0$ a.s. Applying the same argument to $1-u_\alpha$ shows
$u\le 1$ a.s.
Therefore $u\in K$, so $K$ is weak-* closed. Hence $K$ is weak-* compact.

Now consider the map $T:K\to\R^3$ in \eqref{eq:def_T}.
This map is well-defined because $u\in L^\infty$ with $|u|\le 1$ a.s. and each component of $\Delta\btheta(\X)$ is in $L^1$ by Assumption~\ref{asm:moments}, so each component of $u\Delta\btheta(\X)$ is in $L^1$. 
Moreover, for each component $j=1,2,3$, the map from $u$ to the component of $\bE[u\Delta\btheta(\X)]$ is weak-* continuous on $L^\infty$, since each component of $\Delta\btheta(\X)$ is in $L^1$. 
Therefore $T:K\to\R^3$ is continuous when $K$ is endowed with the weak-* topology and $\R^3$ with its usual Euclidean topology.
Since $K$ is weak-* compact and $T$ is continuous, $T(K)$ is compact in $\R^3$. Therefore, $\E=\bE[\btheta_0(\X)]+T(K)$ is compact.\smallskip

\noindent\textbf{Part (iii).}
Let $\mathcal A_{\mathrm{disp}}\equiv\{\varepsilon:\sX\to\R\ \text{measurable}:\ |\varepsilon(\X)|\le 1/2\ \text{$P_X$-a.s.}\}=\{\varepsilon:\sX\to\R\ \text{measurable}:\ \|\varepsilon\|_\infty\le 1/2\}$ where
$\|\varepsilon\|_\infty\equiv\mathrm{ess}\sup_{x\sim P_X}|\varepsilon(x)|$. 
For any algorithm $a\in\mathcal A$, let $\varepsilon(\X)\equiv a(\X)-\tfrac12$ be the corresponding displacement in $\mathcal A_\mathrm{disp}$. Let $e^0=\bE\bigl[\btheta_0(\X)]+\tfrac12\bE[\Delta\btheta(\X)]$.
Then for any $a\in\mathcal A$ we have $e(a)-e^0=\bE[\varepsilon(\X)\Delta\btheta(\X)]$.
Hence, we can write $\E=e^0+\E^0$, where $\E^0$ is the set of feasible displacements:
\begin{equation*}
\E^0\equiv\left\{\bE[\varepsilon(\X)\Delta\btheta(\X)],~ \varepsilon:\sX\to[-\tfrac12,\tfrac12]\ \text{measurable}\right\},
\end{equation*}
and the equality holds because every measurable $a:\sX\to[0,1]$ corresponds to a unique $\varepsilon\in[-\tfrac12,\tfrac12]$ and vice versa.
We next show that $\E^0$ contains a Euclidean ball around $0$, which implies $\mathcal E=e^0+\E^0$ contains a ball around $e^0$.
Fix $q\in\R^3$ with $\|q\|=1$. Consider the support function of $\E^0$ in direction $q$:
\begin{align*}
    \h_{\E^0}(q)\equiv\sup_{e\in \E^0}\q^\intercal e 
=\sup_{\|\varepsilon\|_\infty\le 1/2}\bE\left[\varepsilon(\X)\q^\intercal\Delta\btheta(\X)\right].
\end{align*}
Pointwise maximization yields the maximizer
$\varepsilon_q(\X)=\tfrac12\,\mathrm{sign}\big(\q^\intercal\Delta\btheta(\X)\big)$ (with any value in $[-\tfrac12,\tfrac12]$ on the null set where $\q^\intercal\Delta\btheta(\X)=0$),
so
\begin{equation}\label{eq:support_abs}
\h_{\E^0}(\q)=\tfrac12\bE\left[|\q^\intercal\Delta\btheta(\X)|\right].
\end{equation}
By Assumption \ref{asm:margin}, there exists a constant $C>0$ such that for all $\delta>0$, $\bP(|\q^\intercal\Delta\btheta(\X)|\le \delta)\le C\delta^m$, 
uniformly in $\q\in\bS^2\equiv\{\q\in\R^3:\Vert q\Vert=1\}$.
Choose $\delta_0\equiv (1/(2C))^{1/m}>0$. Then for every $\q\in\bS^2$, $\bP(|\q^\intercal\Delta\btheta(\X)|>\delta_0)\ge 1-C\delta_0^m = \tfrac12$, and hence $\bE\left[|\q^\intercal\Delta\btheta(\X)|\right]\ge\delta_0\bP(|\q^\intercal\Delta\btheta(\X)|>\delta_0)\ge\tfrac{\delta_0}{2}$.
Combining with \eqref{eq:support_abs} yields the uniform bound
\begin{equation}\label{eq:uniform_support_lb}
\h_{\E^0}(\q)\ge\tfrac{\delta_0}{4}
\qquad\forall \q\in\bS^2.
\end{equation}
Hence, \eqref{eq:uniform_support_lb} implies that $\E^0$ contains the Euclidean ball $\bB(0,r)$ with radius $r=\delta_0/4$ \citep[see, e.g.,]
[Section 1.7]{sch93}. 
Therefore, $e^0+\bB(0,r)\subseteq \E$
so $\E$ has a nonempty interior in $\R^3$, and hence $\dim(\E)=3$.
\end{proof}
 
\subsubsection{ Characterization of the Geometry of $\sF$ using the Support-Set of $\E$}\label{appn:B:geometry-sF}

Let Assumptions~\ref{asm:moments}-\ref{asm:margin} hold and let $\ss_\E(\q)$ be the support-set of $\E \subset \R^3$,
\begin{align*}
\ss_\E(\q) =\arg\max_{\e\in\E}\q^\intercal \e,
\qquad \q\in\bS^2.
\end{align*}
Since $\E$ is compact and strictly convex, $\ss_\E(\q)$ is a well-defined continuous function. 

\begin{lemma}\label{lemma:PF-F-support-map}
    \textit{Let $\sPF$ and $\sF$ be the Pareto and FA frontiers defined in \eqref{eq:support_frontier} and \eqref{eq:sPF}, respectively, and let $\widetilde{\bS}^2\equiv\{q\in \bS^2:q_1\ge 0,\ q_2\ge 0\}$. Under Assumptions~\ref{asm:moments}-\ref{asm:margin}, we have
    \begin{enumerate}
        \item  $\sPF = \{ \e = \ss_\E(-\q): \q \in \widetilde{\bS}^2,\ \q_3 = 0  \}$
        \item $\sF = \{ \e = \ss_\E(-\q):  \q \in \widetilde{\bS}^2,\ \q_3 \e_3 \ge 0\} $
    \end{enumerate} 
    In particular, $\sPF \subseteq \sF$.}
\end{lemma}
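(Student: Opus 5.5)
Both parts rest on two facts: Theorem~\ref{thm:support_E}, which says $\E$ is compact and strictly convex, so $\ss_\E(\q)$ is a single point for every $\q\in\bS^2$; and the fact that a point of the boundary lies on a frontier exactly when it is the minimizer of some nonnegatively weighted objective. I would prove part 2 first through the equivalence \ref{thm:FequivLLMO_e_in_sF}$\iff$\ref{thm:FequivLLMO_support} in Theorem~\ref{thm:FequivLLMO}. Part 1 then follows by the same separation argument with the fairness coordinate dropped.

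\textbf{Part 2, $\subseteq$.} Take $\e^*\in\sF$. By Theorem~\ref{thm:FequivLLMO} there is $\bar\q\in\widetilde{\bS}^2$ with $\h_{\cC(\e^*)}(\bar\q)+\h_\E(-\bar\q)=0$. Using \eqref{eq:supportC}, this reads
\[
\bar\q_1\e_1^*+\bar\q_2\e_2^*+|\bar\q_3||\e_3^*|=\min_{\eta\in\E}\bar\q^\intercal\eta\le \bar\q^\intercal\e^*,
\]
and the inequality holds because $\e^*\in\E$. Hence $|\bar\q_3||\e_3^*|\le\bar\q_3\e_3^*$. This forces $\bar\q_3\e_3^*\ge0$ together with equality throughout. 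Equality means $\e^*$ attains $\min_\E\bar\q^\intercal\eta$, and strict convexity then gives $\e^*=\ss_\E(-\bar\q)$.

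\textbf{Part 2, $\supseteq$.} Take $\e^*=\ss_\E(-\q)$ with $\q\in\widetilde{\bS}^2$ and $\q_3\e^*_3\ge0$. Then $|\q_3||\e^*_3|=\q_3\e^*_3$, so
\[
\h_{\cC(\e^*)}(\q)=\q^\intercal\e^*=-\h_\E(-\q).
\]
The minimum in Theorem~\ref{thm:FequivLLMO}\ref{thm:FequivLLMO_support} is nonnegative, since $\e^*$ lies in both sets, and the point $\q$ attains zero. Hence $\e^*\in\sF$.

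\textbf{Part 1, $\supseteq$.} Take $\e^*=\ss_\E(-\q)$ with $\q\in\widetilde{\bS}^2$ and $\q_3=0$, so $(\q_1,\q_2)\ge0$ and the pair is nonzero. If some $\e'\in\E$ weakly Pareto-improved on $\e^*$ in the first two coordinates, then $\q^\intercal\e'\le\q^\intercal\e^*$. Strict convexity then forces $\e'=\e^*$, which rules out a strict improvement.

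\textbf{Part 1, $\subseteq$.} This is the direction I expect to be the main obstacle. Take $\e^*\in\sPF$ and set
\[
\mathcal D=\{\e:\e_1\le\e_1^*,\ \e_2\le\e_2^*\},
\]
a closed convex set unbounded in the $\e_3$ direction. Pareto optimality gives $\E\cap\mathcal D=\E\cap\{\e_1=\e_1^*,\e_2=\e_2^*\}$. This intersection is a segment, and strict convexity does not immediately collapse it, so the argument of Theorem~\ref{thm:FequivLLMO} Step 3 cannot be copied verbatim.

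I would instead separate in $\R^2$. The projection $\pi(\E)$ is compact and convex, and $\pi(\e^*)$ is Pareto optimal in it. So $\pi(\e^*)+(\R_{\le0}^2\setminus\{0\})$ misses $\pi(\E)$, and a supporting line at $\pi(\e^*)$ gives weights $(\q_1,\q_2)\ge0$, nonzero, that minimize $\q_1\e_1+\q_2\e_2$ over $\pi(\E)$ at $\pi(\e^*)$. Normalizing gives $\q=(\q_1,\q_2,0)\in\widetilde{\bS}^2$. Then $\ss_\E(-\q)$ is the unique minimizer of $\q^\intercal\eta$ over $\E$, and every $\eta\in\E$ with $\pi(\eta)=\pi(\e^*)$ attains that minimum, $\e^*$ among them. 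By uniqueness the fiber over $\pi(\e^*)$ is the single point $\e^*$, and $\e^*=\ss_\E(-\q)$.

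Finally, $\sPF\subseteq\sF$ follows by comparing parts 1 and 2: when $\q_3=0$ the condition $\q_3\e_3\ge0$ holds trivially. The point to handle carefully is this last step, where existence of the supporting weights comes from two-dimensional separation and single-valuedness of the support map is what pins down the third coordinate.
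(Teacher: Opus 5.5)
Your proof is correct and follows the paper's route. Part~2 is the same argument: Theorem~\ref{thm:FequivLLMO}\ref{thm:FequivLLMO_e_in_sF}$\Leftrightarrow$\ref{thm:FequivLLMO_support} combined with the bound $\h_{\cC(\e^*)}(\q)+\h_\E(-\q)\ge|\q_3||\e_3^*|-\q_3\e_3^*$ and uniqueness of support points. Part~1 uses the same separation-plus-uniqueness idea, with one small difference: you separate $\pi(\E)$ from the negative orthant in $\R^2$ and lift with $\q_3=0$, while the paper separates $\E$ from the cylinder $\cC^P(\e^*)$ in $\R^3$, whose support function is finite only when $\q_3=0$. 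The two are equivalent. Your version makes explicit how uniqueness of $\ss_\E(-\q)$ collapses the fiber $\E\cap\{\e_1=\e_1^*,\e_2=\e_2^*\}$ to $\{\e^*\}$; the paper leaves this implicit by appealing to ``similar arguments'' to Step~3 of Theorem~\ref{thm:FequivLLMO}, which assumes a unique common point.
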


\begin{proof}
The proof has two parts. We first characterize the Pareto and FA frontiers using support functions. We then use the characterization to conclude the lemma.

\noindent \textbf{Part 1:}  For $\e^*=(\e_1^*,\e_2^*,\e_3^*)\in \E$, we define
\begin{align}
    \cC^P(\e^*) \equiv \{\e=(\e_1,\e_2,\e_3)\in\R^3: \e_1\le \e_1^*,\ \e_2\le \e_2^*\},\label{eq:def_CPv2}
\end{align}
the set of group-wise expected accuracy losses that are weakly more accurate than $\e^*$. It can be verified that the support function of $\cC^P(\e^*)$ equals:
\begin{align}
    \h_{\cC^P(\e^*)}(\q)\equiv\sup_{\e\in\cC^P(\e^*)}\q^\intercal\e=
\begin{cases}
\q_1 \e_1^* + \q_2 \e_2^*  & \text{if } \q_1\ge 0,\ \q_2\ge 0, \q_3 = 0\\
+\infty & \text{otherwise.}
\end{cases}\label{eq:supportCPv2}
\end{align} 

Let $\e^*=(\e_1^*,\e_2^*,\e_3^*)\in \E$, then $\e^* \in \sPF$ if and only if 
\begin{equation}\label{eq:aux1_thm23v2}
  \min_{\q\in \widetilde{\bS}^2,\ \q_3 = 0}
\left( \h_{\cC^P(\e^*)}(q)+\h_{\E}(-q) \right)=0~,  
\end{equation} 
and $\e^* \in \sF$ if and only if 
\begin{equation}\label{eq:aux2_thm23v2}
  \min_{\q\in \widetilde{\bS}^2}
\left( \h_{\cC(\e^*)}(q)+\h_{\E}(-q) \right)=0~,  
\end{equation} 
where \eqref{eq:aux1_thm23v2} follows by similar arguments to those in the proof of Theorem \ref{thm:FequivLLMO} \ref{thm:FequivLLMO_e_in_sF}-\ref{thm:FequivLLMO_support}, and \eqref{eq:aux2_thm23v2} is due to Theorem \ref{thm:FequivLLMO} \ref{thm:FequivLLMO_support}. 

\noindent \textbf{Part 2:} We use \eqref{eq:supportCPv2} and the support set $\ss_\E$ to note that  \eqref{eq:aux1_thm23v2}  can be written as follows
$$  \min_{\q\in \widetilde{\bS}^2,\ \q_3 = 0}
\left( \q_1 \e_1^* + \q_2 \e_2^* -  \q^\intercal \ss_\E(-\q)  \right)=0 ~.$$
Therefore, any $\e^* = \ss_\E(-\q)$ for $\q\in \widetilde{\bS}^2$ and $\q_3 =0 $ satisfies the previous equation, implying that $ \{ \e = \ss_\E(-\q) : \q \in \widetilde{\bS}^2,\ \q_3 = 0  \} \subset \sPF$. Similarly, for any $\e^*$ that satisfies the previous equation there exists $\q  \in \widetilde{\bS}^2$ such that $\q_3 = 0$ and $ \q_1 \e_1^* + \q_2 \e_2^* -  \q^\intercal \ss_\E(-\q) = 0$, which implies $\e^* = \ss_\E(-\q) $ and $\sPF \subset  \{ \e = \ss_\E(-\q): \q \in \widetilde{\bS}^2,\ \q_3 = 0  \}$. This proves Lemma \ref{lemma:PF-F-support-map}(1).

We derive a convenient characterization of $\sF$ to prove Lemma \ref{lemma:PF-F-support-map}(2). Fix $\e^*\in\E$ and $\q\in\tilde{\bS}^2$. Since $h_{\cC(\e^*)}(\q)=\q_1\e_1^*+\q_2\e_2^*+|\q_3||\e_3^*|$,
and $\h_{\E}(-\q)=-\min_{\e\in\E}\q^\intercal \e$,
we obtain
\begin{align*}
\h_{\cC(\e^*)}(\q)+\h_{\E}(-\q)
=
\q_1\e_1^*+\q_2\e_2^*+|\q_3||\e_3^*|-\min_{\e\in\E}\q^\intercal \e.
\end{align*}
Since $\min_{\e\in\E}\q^\intercal \e\le \q^\intercal \e^*
= \q_1\e_1^*+\q_2\e_2^*+\q_3\e_3^*$, it follows that
\begin{align*}
\h_{\cC(\e^*)}(\q)+\h_{\E}(-\q)
\ge |\q_3||\e_3^*|-\q_3\e_3^*\ge 0.
\end{align*}
Therefore, $\e^*\in\sF
\iff
\exists\,\q\in\tilde{\bS}^2
\text{ such that }
\e^*=\ss_\E(-\q)
\text{ and }
\q_3\e_3^*\ge 0$, proving Lemma \ref{lemma:PF-F-support-map}(2).
\end{proof}

Lemma \ref{lemma:PF-F-support-map} has two immediate implications. First, $\sPF \subseteq \sF$. Second, the Pareto frontier $\sPF$ is a one-dimensional object, while the FA-frontier can be a two-dimensional object.  

Let $\widetilde{\bS}^{2}_{+}\equiv\{\q\in\widetilde{\bS}^{2}:\q_3\geq0\}$.
Define the coordinate-edge image 
\begin{align*}
\sIM_{ij}
\equiv
\left\{
\arg\min_{\e\in\E}\{\alpha\e_i+(1-\alpha)\e_j\}:
\alpha\in[0,1]
\right\},\quad 1\leq i<j\leq3.
\end{align*}

\begin{corr}
\label{corr:one-sided-geometry}
\textit{Suppose Assumptions~\ref{asm:moments}-\ref{asm:margin} hold and $\E\subset\{\e\in\R^3:\e_3\geq0\}$. Then
\begin{align}
\sF
=
\{\ss_\E(-\q):\q\in\widetilde{\bS}^{2}_{+}\}.\label{eq:sF_representation_tildeS+}
\end{align}
The images under $\q\mapsto\ss_\E(-\q)$ of the three boundary edges of $\widetilde{\bS}^{2}_{+}$ are
\begin{align*}
\sIM_{12}
=
\{\ss_\E(-\q):\q\in\widetilde{\bS}^{2}_{+},\ \q_3=0\}
=
\sPF,
\end{align*}
\begin{align*}
\sIM_{13}
=
\{\ss_\E(-\q):\q\in\widetilde{\bS}^{2}_{+},\ \q_2=0\},
\qquad
\sIM_{23}
=
\{\ss_\E(-\q):\q\in\widetilde{\bS}^{2}_{+},\ \q_1=0\}.
\end{align*}
If Assumption~\ref{asm:no-kink} also holds, then $\sIM_{12},\sIM_{13},\sIM_{23}$ are the relative boundary edge curves of $\sF$ and $\sIM_{12}=\sPF\subsetneq\sF$.
The case $\E\subset\{\e_3\leq0\}$ follows by the reflection
$(\e_1,\e_2,\e_3)\mapsto(\e_1,\e_2,-\e_3)$.}
\end{corr}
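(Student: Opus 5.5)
The plan is to build everything on Lemma~\ref{lemma:PF-F-support-map}(2), which says $\sF=\{\ss_\E(-\q):\q\in\widetilde{\bS}^2,\ \q_3\e_3\ge0\}$, where $\e=\ss_\E(-\q)$.

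\textbf{Step 1: the representation \eqref{eq:sF_representation_tildeS+}.} Since $\E\subset\{\e_3\ge0\}$, every candidate point has $\e_3\ge0$.
\begin{itemize}
\item For the inclusion $\supseteq$: if $\q\in\widetilde{\bS}^2_+$, then $\q_3\ge0$ and $\e_3\ge0$, so $\q_3\e_3\ge0$ automatically.
\item For the inclusion $\subseteq$: take $\e=\ss_\E(-\q)\in\sF$ with $\q_3\e_3\ge0$. If $\q_3\ge0$ we are done. If $\q_3<0$, the constraint forces $\e_3=0$. In that case $\e$ minimizes $\q^\intercal\eta$ over $\E$ and also minimizes $\eta_3$ over $\E$, since $\eta_3\ge0=\e_3$.
\item Hence $\e$ minimizes $(\q+t u_3)^\intercal\eta$ for every $t\ge0$. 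Taking $t=-\q_3$ and normalizing, $\e=\ss_\E(-\tilde\q)$ with $\tilde\q=(\q_1,\q_2,0)/\|(\q_1,\q_2)\|$. This requires $(\q_1,\q_2)\neq0$. If instead $\q=-u_3$, then $\e$ maximizes $\eta_3$ yet has $\e_3=0$, so $\E\subset\{\e_3=0\}$. That contradicts the nonempty interior from Lemma~\ref{prop:compact_convex}(iii).
\end{itemize}

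\textbf{Step 2: the edge images.} The region $\widetilde{\bS}^2_+$ is a spherical triangle with edges $\{\q_3=0\}$, $\{\q_2=0\}$ and $\{\q_1=0\}$, all intersected with the nonnegative orthant.
\begin{itemize}
\item On the edge $\{\q_2=0,\ \q_1,\q_3\ge0\}$, write $\q\propto(\alpha,0,1-\alpha)$ with $\alpha\in[0,1]$. Then $\ss_\E(-\q)=\arg\min_{\E}\{\alpha\e_1+(1-\alpha)\e_3\}$, because positive rescaling does not change the argmin. This gives $\sIM_{13}$.
\item The edges $\{\q_1=0\}$ and $\{\q_3=0\}$ work the same way and give $\sIM_{23}$ and $\sIM_{12}$.
\item The identity $\sIM_{12}=\sPF$ is Lemma~\ref{lemma:PF-F-support-map}(1).
\end{itemize}

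\textbf{Step 3: the boundary claim under Assumption~\ref{asm:no-kink}.} Under that assumption $\ss_\E$ is injective, as argued in the proof of Theorem~\ref{thm:Pareto_equal_FA}, and it is continuous. The sphere $\bS^2$ is compact and $\partial\E$ is Hausdorff, so $\q\mapsto\ss_\E(-\q)$ is a homeomorphism of $\bS^2$ onto $\partial\E$.
\begin{itemize}
\item A homeomorphism sends the closed topological disk $\widetilde{\bS}^2_+$ onto a closed disk in the surface $\partial\E$. It also sends relative boundary to relative boundary, by invariance of domain.
\item The image disk is $\sF$ by Step 1, and its boundary is the union of the three edge images.
\item For strictness, $\sF$ contains images of interior points of $\widetilde{\bS}^2_+$, where $\q_3>0$. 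By injectivity these points lie outside $\sIM_{12}=\sPF$, so $\sPF\subsetneq\sF$.
\end{itemize}

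\textbf{Step 4: the reflected case.} For $\E\subset\{\e_3\le0\}$, apply the map $(\e_1,\e_2,\e_3)\mapsto(\e_1,\e_2,-\e_3)$. It preserves $\succ_{FA}$, because only $|\e_3|$ enters, and it maps support points consistently once $\q_3$ is flipped.

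\textbf{Main obstacle.} I expect the topological step to be the hardest part: justifying that $\ss_\E$ is a homeomorphism $\bS^2\to\partial\E$ and invoking invariance of domain, so that the image of the triangle's boundary is exactly the relative boundary of $\sF$. A second delicate point is ruling out the degenerate direction $\q=-u_3$ in Step 1.
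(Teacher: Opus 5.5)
Your proposal is correct and follows the paper's proof essentially step for step. It uses the same reduction to Lemma~\ref{lemma:PF-F-support-map}(2), the same passage from $\q$ with $\q_3<0$ to the normalized $(\q_1,\q_2,0)$ (with nonempty interior ruling out $\q=-u_3$), the same rescaling argument for the edge images, and the same continuous-injective-map-on-a-compact-space argument for the boundary claim and $\sPF\subsetneq\sF$. The one difference is cosmetic: you build the homeomorphism on all of $\bS^2$ onto $\partial\E$ and restrict it, whereas the paper applies the compact-to-Hausdorff argument directly to $\q\mapsto\ss_\E(-\q)$ from $\widetilde{\bS}^2_+$ onto $\sF$. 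Your global version makes the boundary identification immediate, since a homeomorphism of $\bS^2$ onto $\partial\E$ carries the boundary of a subset to the boundary of its image, so the appeal to invariance of domain is not needed.
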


\begin{proof}
If $\q\in\widetilde{\bS}^{2}_{+}$, then for
$\e=\ss_\E(-\q)$ we have $\e_3\geq0$ and $\q_3\geq0$, so
$\q_3\e_3\geq0$. Lemma~\ref{lemma:PF-F-support-map} yields $\e\in\sF$.

Conversely, let $\e=\ss_\E(-\q)\in\sF$ for some $\q\in\widetilde{\bS}^{2}$, so that $\q_3\e_3\geq0$. 
If $\q_3\geq0$, then $\q\in\widetilde{\bS}^{2}_{+}$. If $\q_3<0$, then $\e_3=0$ and for any $\eta\in\E$,
\begin{align*}
\q_1\e_1+\q_2\e_2
=
\q^\intercal\e
\leq
\q^\intercal\eta
=
\q_1\eta_1+\q_2\eta_2+\q_3\eta_3
\leq
\q_1\eta_1+\q_2\eta_2,
\end{align*}
where the last inequality uses $\q_3<0$ and $\eta_3\geq0$. The case
$\q_1=\q_2=0$ would force $\eta_3=0$ for all $\eta\in\E$, contradicting
that $\E$ has nonempty interior. Hence $\q_1+\q_2>0$. After normalizing
$[\q_1,\q_2,0]^\intercal$, we obtain a vector in
$\widetilde{\bS}^{2}_{+}$ exposing the same point $\e$. This proves \eqref{eq:sF_representation_tildeS+}.

The identities for $\sIM_{12}$, $\sIM_{13}$, and $\sIM_{23}$ follow
because positive rescaling of a normal vector does not change the minimizers. For example,  $\{\q\in\widetilde{\bS}^{2}_{+}:\q_3=0\}$ consists exactly of positive normalizations of $[\alpha,1-\alpha,0]^\intercal$ for $\alpha\in[0,1]$. 
Thus its support image is $\sIM_{12}$, which equals $\sPF$ by Lemma~\ref{lemma:PF-F-support-map}. 
The other two edges follow by the same argument.

Under Assumption~\ref{asm:no-kink}, the support map is injective. 
Hence the map $T:\widetilde{\bS}^{2}_{+}\to\sF$, $T(\q)=\ss_\E(-\q)$, is injective. It is also continuous by Berge's maximum theorem, and it is
surjective by the representation in \eqref{eq:sF_representation_tildeS+}.
Therefore $T$ is a continuous bijection from the compact space $\widetilde{\bS}^{2}_{+}$ onto the Hausdorff space $\sF\subset\R^3$, and hence $T$ is a homeomorphism. 
Thus, the images of the three boundary edges of
$\widetilde{\bS}^{2}_{+}$ are the relative boundary edge curves of $\sF$.
In particular, $\sIM_{12}=T\left(\{\q\in\widetilde{\bS}^{2}_{+}:\q_3=0\}\right)=\sPF$.
Since $\{\q\in\widetilde{\bS}^{2}_{+}:\q_3=0\}$ is a proper subset of $\widetilde{\bS}^{2}_{+}$, injectivity of $T$ implies $\sIM_{12}=\sPF\subsetneq\sF$.
\end{proof}

\begin{corr}
\label{corr:crossing-one-sided-pareto}
\textit{Let Assumptions~\ref{asm:moments}-\ref{asm:margin} hold. 
Suppose that $\E\cap\{\e\in\R^3:\e_3>0\}\neq\emptyset$, $\E\cap\{\e\in\R^3:\e_3<0\}\neq\emptyset$, and that either $\sPF\subset\{\e\in\R^3:\e_3>0\}$ or $\sPF\subset\{\e\in\R^3:\e_3<0\}$.
Recall $\E^A\equiv\pi(\E)$, and let
\begin{align*}
\sPF^A
\equiv
\left\{
c\in\E^A:
\not\exists c'\in\E^A
\text{ such that } c'_1\leq c_1,\ c'_2\leq c_2,
\text{ with at least one strict inequality}
\right\}.
\end{align*}
Then $\sPF^A\subsetneq\sF^A$ and $\sPF\subsetneq\sF$.}
\end{corr}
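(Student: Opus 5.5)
The plan is to prove the two strict inclusions separately. In each case the weak inclusion is already available, so the work is to exhibit one point that lies on the FA-frontier but not on the Pareto frontier. By the reflection $(\e_1,\e_2,\e_3)\mapsto(\e_1,\e_2,-\e_3)$, which maps $\E$, $\sF$, $\sPF$ and $d$ to their reflected analogues, I will assume $\sPF\subset\{\e_3>0\}$ throughout.

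\textbf{Step 1: $\sPF\subsetneq\sF$.} The inclusion $\sPF\subseteq\sF$ is Lemma~\ref{lemma:PF-F-support-map}. For strictness I will use the minimizer $F$ of $|\e_3|$ over $\E$.
\begin{itemize}[label={}]
\item \emph{$F$ lies on the plane $\{\e_3=0\}$.} Because $\E$ is convex and meets both open half-spaces $\{\e_3>0\}$ and $\{\e_3<0\}$, the set $\E\cap\{\e_3=0\}$ is nonempty and compact. It contains all minimizers of $|\e_3|$.
\item \emph{A distinguished point of $\sF$.} Among these minimizers, minimize first $\e_1$ and then $\e_2$ (lexicographic order on the compact set $\E\cap\{\e_3=0\}$). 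Call the result $\e^\dagger$.
\item \emph{$\e^\dagger\in\sF$.} Suppose some $\e'\in\E$ satisfies $\e'\succ_{FA}\e^\dagger$. Then $|\e'_3|\le0$, so $\e'\in\E\cap\{\e_3=0\}$. It is also weakly better in $(\e_1,\e_2)$ with one strict inequality, which contradicts the lexicographic choice.
\item \emph{$\e^\dagger\notin\sPF$.} We have $\e^\dagger_3=0$, while every point of $\sPF$ has $\e_3>0$.
\end{itemize}
Hence $\sF\neq\sPF$.

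\textbf{Step 2: the projected frontiers.} I first show that $\pi(\sPF)=\sPF^A$.
\begin{itemize}[label={}]
\item If $c\in\sPF^A$, pick any $\e\in\E$ with $\pi(\e)=c$. This $\e$ is Pareto-undominated in $\E$, since dominance in $\E$ depends only on the first two coordinates.
\item The converse uses the same observation in reverse.
\item For such $c$, strict convexity of $\E$ (Theorem~\ref{thm:support_E}) pins down a unique $\e\in\E$ over $c$. Indeed, $c$ lies on the boundary of $\E^A$, so the fibre $\{\e\in\E:\pi(\e)=c\}$ is a face of $\E$. A segment in that fibre would be a nontrivial face, which strict convexity rules out.
\end{itemize}
Next I show $\sPF^A\subseteq\sF^A$. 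Take $c\in\sPF^A$. No $c'\in\E^A$ is weakly better in both accuracy coordinates with one strict inequality. The only remaining possibility is $c'=c$ with $d(c')<d(c)$, which is impossible. So $c$ is FA-undominated in the sense of Definition~\ref{def:liang_frontier}.

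For strictness, set $c^\dagger=\pi(\e^\dagger)$, with $\e^\dagger$ from Step 1.
\begin{itemize}[label={}]
\item \emph{$c^\dagger\in\sF^A$.} Theorem~\ref{thm:FequivLLMO}(iv) applied to $\e^\dagger\in\sF$ gives $c^\dagger\in\sF^A$ together with $d(c^\dagger)=|\e^\dagger_3|=0$.
\item \emph{$c^\dagger\notin\sPF^A$.} Suppose instead that $c^\dagger\in\sPF^A$. By the uniqueness just shown, the only point of $\E$ above $c^\dagger$ is its Pareto point, which belongs to $\sPF$ and so has $\e_3>0$. But $\e^\dagger$ is a point of $\E$ above $c^\dagger$ with $\e^\dagger_3=0$, a contradiction.
\end{itemize}
Hence $\sPF^A\subsetneq\sF^A$.

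\textbf{Main obstacle.} The delicate point is the fibre-uniqueness claim over Pareto points of $\E^A$. I plan to justify it through the support-set representation. Lemma~\ref{lemma:PF-F-support-map} gives each Pareto point as $\ss_\E(-\q)$ with $\q_3=0$. Any $\e\in\E$ over the same $c$ also minimizes $\q^\intercal\e$, since $\q_3=0$ makes $\q^\intercal\e$ depend on $\e$ only through $c$. Because $\ss_\E(-\q)$ is a singleton, $\e$ equals the Pareto point.
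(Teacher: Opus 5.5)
Your proof is correct and follows essentially the paper's route: the same witness (an accuracy-efficient point of the zero-disparity slice $\E\cap\{\e_3=0\}$, which you pick lexicographically where the paper minimizes $\e_1+\e_2$), the same argument for $\sPF^A\subseteq\sF^A$, and Theorem~\ref{thm:FequivLLMO}\ref{thm:FequivLLMO_proj} to pass between $\sF$ and $\sF^A$ (you prove $\e^\dagger\in\sF$ in $\R^3$ and project, while the paper proves $c^0\in\sF^A$ in $\R^2$ and lifts). The fibre-uniqueness step you flag as the main obstacle is not needed: your own lifting observation already gives $\e^\dagger\in\sPF$ whenever $c^\dagger\in\sPF^A$, which contradicts $\e^\dagger_3=0$, and this is exactly how the paper rules out $c^0\in\sPF^A$.
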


\begin{proof}
It is enough to prove the result when
$\sPF\subset\{\e\in\R^3:\e_3>0\}$; the other case follows by applying the
reflection $(\e_1,\e_2,\e_3)\mapsto(\e_1,\e_2,-\e_3)$.

Since $\E$ is convex and contains points with positive and negative third coordinate, $\E\cap\{\e\in\R^3:\e_3=0\}\neq\emptyset$.
Hence, $\pi\left(\E\cap\{\e_3=0\}\right)$ is a nonempty compact subset of $\E^A$. 
Choose $c^0\in\arg\min_{c\in\pi(\E\cap\{\e_3=0\})}(c_1+c_2)$.
Then $c^0\in\E^A$ and $d(c^0)=0$ for $d(\cdot)$ defined in \eqref{eq:min_disparity}, because $c^0$ has a feasible lift with third coordinate equal to zero.

We first show that $c^0\in\sF^A$. Suppose not. Then there exists $c'\in\E^A$ such that
\begin{align*}
c'_1\leq c^0_1,\qquad c'_2\leq c^0_2,\qquad d(c')\leq d(c^0)=0,
\end{align*}
with at least one strict inequality. 
Since $d(c')\geq0$, we have $d(c')=0$.
Therefore $c'\in\pi(\E\cap\{\e_3=0\})$. 
Moreover, the strict inequality cannot occur only in the fairness coordinate, because $d(c')=d(c^0)=0$. 
Hence at least one of $c'_1<c^0_1$ or $c'_2<c^0_2$ holds, so $c'_1+c'_2<c^0_1+c^0_2$, contradicting the definition of $c^0$. 
Thus $c^0\in\sF^A$.

Next, $c^0\notin\sPF^A$. 
Indeed, since $c^0\in\pi(\E\cap\{\e_3=0\})$, the point $\e^0\equiv(c^0_1,c^0_2,0)$ belongs to $\E$. 
If $c^0\in\sPF^A$, then $\e^0\in\sPF$: otherwise there
would exist $\eta\in\E$ with $\eta_1\leq\e^0_1$, $\eta_2\leq\e^0_2$, and at least one strict inequality, and then $\pi(\eta)$ would Pareto-dominate $c^0$ in $\E^A$. 
This contradicts $c^0\in\sPF^A$. 
Therefore $\e^0\in\sPF$, contradicting the maintained case
$\sPF\subset\{\e_3>0\}$. 
Hence $c^0\notin\sPF^A$.

We have shown that $c^0\in\sF^A\setminus\sPF^A$.
It remains to check that $\sPF^A\subseteq\sF^A$. If
$c\in\sPF^A$ but $c\notin\sF^A$, then some $c'\in\E^A$ satisfies $c'_1\leq c_1$, $c'_2\leq c_2$, $d(c')\leq d(c)$, with at least one strict inequality. 
The strict inequality cannot occur only in the fairness coordinate: if $c'_1=c_1$ and $c'_2=c_2$, then $c'=c$, hence $d(c')=d(c)$. 
Thus at least one of $c'_1<c_1$ or $c'_2<c_2$ holds, contradicting $c\in\sPF^A$. 
Therefore $\sPF^A\subseteq\sF^A$.
Together with $c^0\in\sF^A\setminus\sPF^A$, this gives
\begin{align*}
\sPF^A\subsetneq\sF^A.
\end{align*}

Finally, because $c^0\in\sF^A$, $d(c^0)=0$, and
$\e^0=(c^0_1,c^0_2,0)\in\E$, Theorem~\ref{thm:FequivLLMO} \ref{thm:FequivLLMO_proj} implies $\e^0\in\sF$.
But $\e^0\notin\sPF$, since $\e^0_3=0$ while
$\sPF\subset\{\e_3>0\}$. 
Since Lemma~\ref{lemma:PF-F-support-map} gives
$\sPF\subseteq\sF$, we obtain $\sPF\subsetneq\sF$.
\end{proof}

\begin{remark}[Implications for Figure~\ref{fig:F}]
Lemma~\ref{lemma:PF-F-support-map} identifies $\sPF$ with support directions in $\tilde{\bS}^2$ satisfying $\q_3=0$, and identifies $\sF$ with support directions in $\tilde{\bS}^2$ satisfying $\q_3\e_3\geq0$. 
Combining this characterization with Theorem~\ref{thm:Pareto_equal_FA} gives the classification used in Figure~\ref{fig:F}. 
Under Assumption~\ref{asm:no-kink}, $\sPF=\sF\iff\sPF\subset\{\e\in\E:\e_3=0\}$.
Thus, under Assumption~\ref{asm:no-kink}, whenever $\sPF\not\subset\{\e_3=0\}$, one has $\sPF\subsetneq\sF$, while if $\sPF\subset\{\e_3=0\}$, the two frontiers coincide. 
In particular, a single intersection of $\sPF$ with the plane
$\{\e_3=0\}$, or the fact that only the endpoints $R$ and $B$ lie in that plane, is not sufficient for $\sPF=\sF$.
Corollary~\ref{corr:crossing-one-sided-pareto} further shows that in the crossing case with $\sPF$ strictly on one side of $\{\e_3=0\}$, the strict inclusions $\sPF^A\subsetneq\sF^A$ and $\sPF\subsetneq\sF$ hold already under Assumptions~\ref{asm:moments}-\ref{asm:margin} alone.
\end{remark}

\subsection{Lemmas, Auxiliary Results, and Notation for Section~\ref{sec:identif_and_estim}}

Let $\Delta A(x)\equiv A_1(x)-A_0(x)$, so that $\Delta\btheta(x;\lambda)\equiv\btheta_1(x;\lambda)-\btheta_0(x;\lambda)=\Delta A(x)-2B(x)\lambda(x)$, and $
\tau_q(x)\equiv\frac12 q^\intercal\Delta A(x)$. 
For each $x\in\sX$ and for $q\in\bS^2$, define 
\begin{align*}
d_\q(x) &\equiv\inf_{\s\in[0,1]^2}
\left|\q^\intercal\{ \Delta A(x)-2B(x)\s\} \right|,\hspace{1.4cm}\sU(x)\equiv\{B(x)\s:\s\in[0,1]^2\},\\
K_q^0(x)&\equiv\{A_0(x)+u:u\in\sU(x),\ q^\intercal u\le\tau_q(x)\},~
K_q^1(x)\equiv\{A_1(x)-u:u\in\sU(x),\ q^\intercal u\ge\tau_q(x)\},\\
K_q(x)&\equiv K_q^0(x)\cup K_q^1(x).
\end{align*}

Define the finite constants $M_0\equiv\ess\sup_x\sup_{u\in\sU(x)}\|\Delta A(x)-2u\|<\infty$ and
\begin{align*}
H_0&\equiv\ess\sup_x\sup_{u\in\sU(x)}\max\{\|A_0(x)+u\|,\|A_1(x)-u\|\}<\infty,
\end{align*}
where the bounds follow from the binary-loss setup and the lower bound on $\mu_g,\,g\in\{r,b\}$.
With the convention that $\sup\{\emptyset\}=-\infty$, define
\begin{align}
\kappa_0(\v,q;x)&\equiv\sup_{u\in\sU(x):\, q^\intercal u\le\tau_q(x)}
\v^\intercal(A_0(x)+u),\label{eq:kappa0}\\
\kappa_1(\v,q;x)&\equiv\sup_{u\in\sU(x):\, q^\intercal u\ge\tau_q(x)}
\v^\intercal(A_1(x)-u).\label{eq:kappa1}
\end{align}
Recall the fixed-completion feasible set $\E(\lambda)$ in \eqref{eq:PI_Elambda} and its associated support set,
\begin{align*}
\ss_{\E(\lambda)}(v)&=\arg\max_{\eta\in \E(\lambda)}v^\intercal\eta.
\end{align*}
Thus $\ss_{\E(\lambda)}(-q)$ is the set of minimizers of $q^\intercal\eta$ over $\E(\lambda)$.

\begin{lemma}\label{lem:E_lambda}
\textit{Let Assumptions~\ref{asm:moments} and \ref{asm:robust-margin} hold. Then, for every $\lambda\in\Lambda$, the set $\E(\lambda)$ is nonempty, compact, convex, has nonempty interior, and $\ss_{\E(\lambda)}(v)$ is a singleton for every $v\in\bS^2$. Moreover, $\sF(\lambda)=\left\{\e=\ss_{\E(\lambda)}(-q):q\in\widetilde{\bS}^2,\ q_3\e_3\ge0\right\}.$
}
\end{lemma}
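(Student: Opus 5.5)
The plan is to reduce Lemma~\ref{lem:E_lambda} to the results already proved for a fixed data-generating process. For each $\lambda\in\Lambda$ I construct the law $\bP^*_\lambda$ of $(\Y^*,\X,\G,\Z)$ obtained by completing the missing labels with $\bEs[\Y^*|\X,\G=g,\Z=0]=\lambda_g(\X)$. By Lemma~\ref{lem:PI-support-functions}, the conditional means under $\bP^*_\lambda$ are $\btheta_0(x;\lambda)=A_0(x)+B(x)\lambda(x)$ and $\btheta_1(x;\lambda)=A_1(x)-B(x)\lambda(x)$. Hence $\E(\lambda)$ in \eqref{eq:PI_Elambda} is exactly the feasible set of Definition~\ref{def:feasible} under $\bP^*_\lambda$, and $\Delta\btheta(x;\lambda)=\Delta A(x)-2B(x)\lambda(x)$. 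Everything then follows once I check that $\bP^*_\lambda$ satisfies Assumptions~\ref{asm:moments} and \ref{asm:margin}, uniformly in $\lambda$.

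\textbf{Assumption \ref{asm:moments}.} Part (i) holds for every $\lambda$. The labels are bounded because $\Y^*\in\{0,1\}$ and $\mu_g\in(c_1,1-c_1)$, and $\mu_g$ is a feature of the observed law, so it does not depend on $\lambda$. The statement assumes Assumption~\ref{asm:moments} for the completed laws, so I take part (ii) as given.

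\textbf{Margin condition.} This is the key step, and it is where Assumption~\ref{asm:robust-margin} enters. Fix $q\in\bS^2$ and $\delta>0$. Since $\lambda(x)\in[0,1]^2$, we have $|q^\intercal\Delta\btheta(x;\lambda)|\ge d_q(x)$ pointwise. Therefore
\begin{align*}
\bP_\X\left(|q^\intercal\Delta\btheta(\X;\lambda)|\le\delta\right)\le\bP_\X\left(d_q(\X)\le\delta\right)\lesssim\delta^m,
\end{align*}
with a constant that does not depend on $q$ or $\lambda$. Taking the supremum over $q$ gives Assumption~\ref{asm:margin} for $\bP^*_\lambda$.

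\textbf{Shape of $\E(\lambda)$.} With the assumptions verified, Lemma~\ref{prop:compact_convex} applies to $\bP^*_\lambda$: $\E(\lambda)$ is convex, compact, and has nonempty interior. It is nonempty because it contains $\bE[\btheta_0(\X;\lambda)]$, which is attained by $a\equiv0$. Theorem~\ref{thm:support_E} applied to $\bP^*_\lambda$ shows that $\h_{\E(\lambda)}$ is differentiable, so each support set is the singleton $\nabla\h_{\E(\lambda)}(v)$ and $\E(\lambda)$ is strictly convex (by \citet[Corollary 1.7.3]{sch93}). In particular $\ss_{\E(\lambda)}(v)$ is a singleton for every $v\in\bS^2$.

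\textbf{Frontier representation.} The last claim is Lemma~\ref{lemma:PF-F-support-map}(2) applied to $\bP^*_\lambda$. That proof uses only three ingredients: compactness, strict convexity with nonempty interior (to invoke Theorem~\ref{thm:FequivLLMO}), and the formula for $\h_{\cC(\e^*)}$. All three are now available for $\E(\lambda)$. The one thing to confirm is that the implication (iii)$\Rightarrow$(ii) of Theorem~\ref{thm:FequivLLMO} used strict convexity only through singleton support sets, which was just established for $\E(\lambda)$.

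The only substantive obstacle is the measure-theoretic construction of $\bP^*_\lambda$. It must reproduce the observed law $\bP$ and have the prescribed conditional mean. I would build it by drawing $\Y^*\mid(\X,\G,\Z=0)\sim\mathrm{Bernoulli}(\lambda_\G(\X))$ and setting $\Y^*=\Y$ when $\Z=1$. This is also exactly the compatibility notion used in Section~\ref{subsec:PI-CR-SP}, so the construction should go through.
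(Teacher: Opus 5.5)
Your proof is correct and follows the paper's argument: you verify Assumption~\ref{asm:moments}(i) and the margin condition for each completed law via the pointwise bound $|q^\intercal\Delta\btheta(x;\lambda)|\ge d_q(x)$, then invoke Lemma~\ref{prop:compact_convex}, Theorem~\ref{thm:support_E}, and Lemma~\ref{lemma:PF-F-support-map}. One refinement: rather than taking Assumption~\ref{asm:moments}(ii) as given, note that it cannot hold here, because the third coordinate of $\bLL_0$ is identically zero when $\ell^F(d,y)=\one\{d=1\}$ and so $Var^*(\bLL_0\mid\X)$ is singular; this is harmless, since none of the results you invoke use part (ii), only part (i) and the margin condition.
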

\begin{proof}
By Assumption~\ref{asm:robust-margin}, $\bP_\X\{|q^\intercal\Delta\btheta(X;\lambda)|\le\delta\}\le\bP_\X\{d_q(X)\le\delta\}\lesssim \delta^m$
and $\bP_\X\{d_q(X)=0\}=0$ for all $q\in\bS^2$.
Hence, Assumptions~\ref{asm:moments}-\ref{asm:margin} hold for the distribution of $(\Y^*,\G,\X,\Z)$ implied by each $\lambda\in\Lambda$. By Lemma~\ref{prop:compact_convex} and Theorem~\ref{thm:support_E}, the set $\E(\lambda)$ has nonempty interior, is compact and convex, and $\ss_{\E(\lambda)}(p)$ is a singleton for every $p\in\bS^2$.
By Lemma~\ref{lemma:PF-F-support-map}, for every $\lambda\in\Lambda$, $\sF(\lambda)=\left\{\e=\ss_{\E(\lambda)}(-q):q\in\widetilde{\bS}^2,\ q_3\e_3\ge0\right\}$.
\end{proof}

\begin{lemma}\label{lem:frak_A_q}
\textit{Let Assumption~\ref{asm:moments} hold. Then, for every $q\in\widetilde{\bS}^2$, the set $\cM_q$ is nonempty, compact, and convex.}
\end{lemma}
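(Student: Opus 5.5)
The plan is to recognize $\cM_q$ as the Aumann integral (set of expectations of measurable selections) of a compact-valued random set in $\R^3$, and then invoke the classical properties of such integrals. Fix $q\in\widetilde{\bS}^2$ and use the notation $\sU(x)$, $\tau_q(x)$, $K_q^0(x)$, $K_q^1(x)$, $K_q(x)$ introduced above.

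\textbf{Step 1: pointwise reduction.} For $s=\lambda(x)\in[0,1]^2$, put $u=B(x)s\in\sU(x)$. By Lemma~\ref{lem:PI-support-functions}, $\btheta_0(x;\lambda)=A_0(x)+u$ and $\btheta_1(x;\lambda)=A_1(x)-u$. Hence
\[
q^\intercal\btheta_0(x;\lambda)\le q^\intercal\btheta_1(x;\lambda)\iff 2q^\intercal u\le q^\intercal\Delta A(x)\iff q^\intercal u\le\tau_q(x).
\]
So choosing the minimizing decision $d=0$ gives a point of $K_q^0(x)$, and choosing $d=1$ gives a point of $K_q^1(x)$. Note that $d=1$ is minimizing exactly when $q^\intercal u\ge\tau_q(x)$, with both choices allowed on ties. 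Thus for every $\lambda\in\Lambda$ and every $q$-optimal $a$, the integrand $a(x)\btheta_1(x;\lambda)+(1-a(x))\btheta_0(x;\lambda)$ is an a.s. selection of $K_q(X)$.

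Here I should check that $q$-optimal rules may be taken $\{0,1\}$-valued, which holds off the tie set. By Assumption~\ref{asm:robust-margin}, the tie set has $\bP_\X$-measure zero, since $\bP_\X(d_q(X)=0)=0$.

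Conversely, let $k(\cdot)$ be a measurable selection of $K_q(\cdot)$. Write $k(x)=A_0(x)+B(x)s(x)$ on the measurable set where the selection lies in $K_q^0(x)$, and $k(x)=A_1(x)-B(x)s(x)$ otherwise. A measurable $s(\cdot)\in[0,1]^2$ exists by the Kuratowski--Ryll-Nardzewski measurable selection theorem, applied to the closed-valued correspondence $x\mapsto\{s\in[0,1]^2: B(x)s=\cdots\}$. Setting $\lambda=s$ and $a=\one\{k(x)\in K_q^1(x)\setminus K_q^0(x)\}$ reproduces $k$ with a $q$-optimal $a$. Hence
\[
\cM_q=\bigl\{\bE[k(X)]: k \text{ a measurable selection of } K_q\bigr\}.
\]

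\textbf{Step 2: properties of $K_q$.} Each $K_q^j(x)$ is the image under an affine map of the compact set $\sU(x)$ intersected with a closed half-space, so it is compact; hence $K_q(x)$ is compact. It is uniformly bounded by $H_0<\infty$, so $K_q$ is integrably bounded. The correspondence has a measurable graph because $A_0$, $A_1$, $B$, $\tau_q$ are measurable and all defining constraints are closed. Nonemptiness of $\cM_q$ follows by taking $\lambda\equiv(0,0)$ and $a(x)=\one\{q^\intercal\btheta_1(x;0)<q^\intercal\btheta_0(x;0)\}$, which is measurable; the resulting expectation is finite by Assumption~\ref{asm:moments}(i).

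\textbf{Step 3: compactness and convexity.} For a compact-valued, integrably bounded, graph-measurable random set in $\R^d$, the Aumann integral is compact, and it is convex whenever the underlying measure is nonatomic. This is Aumann's theorem, a consequence of Lyapunov's convexity theorem.

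The main obstacle is establishing that $\bP_\X$ is nonatomic, since convexity can genuinely fail at atoms. I would derive it from Assumption~\ref{asm:robust-margin}, which is maintained throughout the section. Suppose $x_0$ were an atom of mass $p>0$. Pick any $s\in[0,1]^2$ and choose $q_0\in\bS^2$ orthogonal to $\Delta A(x_0)-2B(x_0)s$; such a $q_0$ exists in $\R^3$. Then $d_{q_0}(x_0)=0$, so $\bP_\X(d_{q_0}(X)\le\delta)\ge p$ for every $\delta>0$. This contradicts the bound $\lesssim\delta^m$ as $\delta\downarrow0$.

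If one insists on using only Assumption~\ref{asm:moments}, the fallback is to prove convexity directly for the relaxed set in which $a$ is $[0,1]$-valued on ties and $\lambda$ ranges over $\Lambda$, and then argue that this relaxed set equals $\cM_q$. The nonatomicity route is the one I would pursue first.
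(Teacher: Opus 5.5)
Your proposal is correct and follows the paper's proof essentially step for step. You identify $\cM_q$ with the Aumann integral of $K_q=K_q^0\cup K_q^1$ (via $q^\intercal\btheta_0\le q^\intercal\btheta_1\iff q^\intercal u\le\tau_q(x)$ and a measurable $\lambda$ with $B\lambda=u$), and then obtain compactness and convexity from the Aumann--Lyapunov results that the paper cites from Molchanov, with nonatomicity of $\bP_\X$ coming from Assumption~\ref{asm:robust-margin}; your atom argument proves that last step, which the paper only asserts.

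The fallback you mention is not a viable substitute. A point mass at some $x_0$ with $d_q(x_0)=0$ can make $\cM_q=K_q(x_0)$ a nonconvex union of its two branches, even after relaxing $a$ on ties. This is why the paper's proof also invokes Assumption~\ref{asm:robust-margin}.
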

\begin{proof}
    Since $u(x)\in\sU(x)=B(x)[0,1]^2$, for $j=1,2$, $u_j(x)=0$ whenever
$B_{jj}(x)=0$, and $0\le u_j(x)\le B_{jj}(x)$ whenever $B_{jj}(x)>0$.
Define
\begin{align*}
\lambda_j(x)=\begin{cases}
u_j(x)/B_{jj}(x), & B_{jj}(x)>0,\\
0, & B_{jj}(x)=0,
\end{cases}
\qquad j=1,2.
\end{align*}
Then $\lambda_j(x)\in[0,1]$.
Hence, for $u(x)\in\sU(x)$, $\bP_\X$-a.s., with $u$ measurable, there exists a measurable $\lambda\in\Lambda$, such that
\begin{align}
B(x)\lambda(x)=u(x)\qquad\bP_\X\text{-a.s.}\label{eq:selection_representation}
\end{align}
By measurability of $A_0,A_1,B$, $\Delta A$ and $\tau_q(x)=\frac12 q^\intercal\Delta A(x)$ are measurable.  
Since $B(x)$ has diagonal form, with nonnegative entries $B_{11}(x)$ and $B_{22}(x)$, the graph
\begin{align*}
\mathrm{Gr}(\sU)
=
\{(x,u)\in\mathcal X\times\mathbb R^3:
u_3=0,\ 0\le u_1\le B_{11}(x),\ 0\le u_2\le B_{22}(x)\}
\end{align*}
is measurable.  
Hence, for fixed $q$,
\begin{align*}
\mathrm{Gr}(K_q^0)
=
\{(x,y):(x,y-A_0(x))\in\mathrm{Gr}(\sU),\
q^\intercal(y-A_0(x))\le\tau_q(x)\}
\end{align*}
is measurable, and, similarly $\mathrm{Gr}(K_q^1)$ is measurable.  
Therefore $K_q=K_q^0\cup K_q^1$ has a measurable graph.
For each $x$, $\sU(x)\equiv B(x)[0,1]^2$ is compact.  
Intersecting $\sU(x)$ with a closed half-space preserves compactness, and translating or reflecting preserves compactness.  
Hence $K_q^0(x)$ and $K_q^1(x)$ are compact, possibly empty sets, and $K_q(x)\equiv K_q^0(x)\cup K_q^1(x)$ is compact.  
Moreover, $K_q(x)$ is nonempty because $\sU(x)\neq\emptyset$, and every
$u\in\sU(x)$ satisfies either $q^\intercal u\le\tau_q(x)$ or
$q^\intercal u\ge\tau_q(x)$.  
By \citet[Thm. 1.3.3]{mol17}, $K_q$ is a non-empty random compact set.
For each fixed $\v\in\mathbb S^2$, the support function of $K_q$ in direction $\v$, $\sup_{y\in K_q(x)} \v^\intercal y$ is measurable \citep[Prop. 1.3.8]{mol17} and admits a measurable maximizer.
The binary-loss bounds imply that $K_q$ is integrably bounded.
Consequently, the following Aumann integral is well defined:
\begin{align*}
\left\{\mathbb E[k(X)]:k(x)\in K_q(x)\ \mathbb P_X\text{-a.s.}\right\}.
\end{align*}

We next show that for every $q\in\bS^2$, we can represent $\cM_q$ as an Aumann expectation \citep[Def. 3.1]{mol:mol18}:
\begin{align}
\cM_q=
\left\{\bE[k(X)]:k(x)\in K_q(x)\ \bP_\X\text{-a.s.}\right\}.\label{eq:frak_Aq_Aumann}
\end{align}
To see this, take an element of $\cM_q$. Then for some $\lambda\in\Lambda$ and some $q$-optimal rule $a$, let
\begin{align*}
k(x)=a(x)\btheta_1(x;\lambda)+(1-a(x))\btheta_0(x;\lambda).
\end{align*}
Let $u(x)=B(x)\lambda(x)$. If $a(x)=0$, the $q$-optimality condition gives
\begin{align*}
q^\intercal u(x)\le \frac12 q^\intercal\Delta A(x)=\tau_q(x).
\end{align*}
Thus $k(x)=A_0(x)+u(x)\in K_q^0(x)$.
If $a(x)=1$, the same argument gives $q^\intercal u(x)\ge \tau_q(x)$, so $k(x)=A_1(x)-u(x)\in K_q^1(x)$.
Hence $k(x)\in K_q(x)$ a.s., and the first inclusion follows.

Conversely, let $k$ be a measurable selector of $K_q$. Since $K_q^0$ and $K_q^1$ have measurable graphs and compact values, a measurable branch-selection argument gives measurable $a:\sX\to\{0,1\}$ and measurable $u(x)\in\sU(x)$ such that
\begin{align*}
k(x)=(1-a(x))(A_0(x)+u(x))+a(x)(A_1(x)-u(x)),
\end{align*}
with $q^\intercal u(x)\le\tau_q(x)$ on $\{a(x)=0\}$, and $q^\intercal u(x)\ge\tau_q(x)$ on $\{a(x)=1\}$.
By~\eqref{eq:selection_representation}, there exists $\lambda\in\Lambda$ with $B(x)\lambda(x)=u(x)$ a.s., and $a(x)\in\arg\min_{d\in\{0,1\}}q^\intercal\btheta_d(x;\lambda)$, so that $\bE[k(X)]\in\cM_q$.
It follows that $\cM_q$ is the Aumann integral of $K_q$. 
The correspondence $K_q$ is nonempty, measurable, compact-valued, and integrably bounded because $\sU(x)$ is compact and because $A_0,A_1,B$ are bounded in the binary-loss setup. 
Hence, for each $\q\in\bS^2$, $\cM_q$ is convex and compact by \citet[Thms.  2.1.26 and 2.1.38]{mol17}, where to apply Thm. 2.1.26 we used that Assumption~\ref{asm:robust-margin} implies that $\bP_\X$ is nonatomic.
\end{proof}

\begin{lemma}\label{lem:T_env_attains}
    \textit{Let Assumptions~\ref{asm:moments} and \ref{asm:robust-margin} hold. Then, for every fixed $\e\in\R^3$ and $\rho_q(\e)\equiv\sup_{\v\in\bS^2}\{\v^\intercal\e-h_{\cM_q}(\v)\}$, the mapping $q\mapsto \rho_q(\e)$ is continuous on $\widetilde{\bS}^2$. 
    Moreover, the infimum defining $T_{\mathrm{env}}(\e)$ in~\eqref{eq:define:T_env_representation} is attained.}
\end{lemma}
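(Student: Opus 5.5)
The plan is to prove continuity of $q\mapsto h_{\cM_q}(\v)$ jointly in $(q,\v)$, or at least uniformly in $\v\in\bS^2$. From that, continuity of $\rho_q(\e)$ follows at once. Attainment then comes from a compactness argument on the feasible set $\{q\in\widetilde{\bS}^2: q_3\e_3\ge0\}$.

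The first step is to reduce to continuity of the support function. Since $\v\mapsto \v^\intercal\e-h_{\cM_q}(\v)$ is a pointwise difference, we have
\begin{align*}
|\rho_q(\e)-\rho_{q'}(\e)|\le\sup_{\v\in\bS^2}|h_{\cM_q}(\v)-h_{\cM_{q'}}(\v)|.
\end{align*}
It therefore suffices to show that $\sup_{\v}|h_{\cM_q}(\v)-h_{\cM_{q'}}(\v)|\to0$ as $q'\to q$.

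The second step does this using the representation from Lemma~\ref{lem:continuous_h_frakAq}, namely $h_{\cM_q}(\v)=\bE[\max\{\kappa_0(\v,q;X),\kappa_1(\v,q;X)\}]$. Both $\kappa_0$ and $\kappa_1$ are bounded by $H_0$, which gives a dominating constant. Write $\kappa(\v,q;x)\equiv\max\{\kappa_0,\kappa_1\}=\sup_{y\in K_q(x)}\v^\intercal y$. Fix $x$ with $d_q(x)>0$; by Assumption~\ref{asm:robust-margin} this holds for $\bP_\X$-a.e.\ $x$. I claim that $q'\mapsto K_{q'}(x)$ is Hausdorff continuous at $q$.

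The key observation is that $q^\intercal(\Delta A(x)-2u)\neq0$ for every $u\in\sU(x)$. Hence the hyperplane $\{u:q^\intercal u=\tau_q(x)\}$ does not meet $\sU(x)$, and indeed stays at distance at least $d_q(x)/2$ from it. By compactness of $\sU(x)$ and continuity of $q\mapsto\tau_q(x)$, the same separation holds for every $q'$ near $q$. So $\sU(x)$ lies entirely on one side of the hyperplane, and that side is constant in $q'$ near $q$. Consequently $K_{q'}(x)$ is locally constant in $q'$, equal to either $A_0(x)+\sU(x)$ or $A_1(x)-\sU(x)$.

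Because $\v\mapsto\kappa(\v,q;x)$ is $H_0$-Lipschitz, this yields $\sup_{\v}|\kappa(\v,q';x)-\kappa(\v,q;x)|\to0$ pointwise a.s. Dominated convergence then gives the required uniform convergence of $h_{\cM_{q'}}$. This is the step I expect to be the main obstacle. It needs the margin condition to rule out a positive-measure set of $x$ where the hyperplane touches $\sU(x)$, and it needs some care about whether the bound $d_q(x)>0$ is attained uniformly. Pointwise is enough for dominated convergence, so I would argue pointwise only.

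For attainment, note that $T_{\mathrm{env}}(\e)$ is an infimum of $q\mapsto[\rho_q(\e)]_+$ over $Q_\e\equiv\{q\in\widetilde{\bS}^2:q_3\e_3\ge0\}$. This set is closed in the compact set $\widetilde{\bS}^2$ and is nonempty, since it contains $q(\alpha)$ with $q_3=0$. The map $t\mapsto[t]_+$ is continuous, and $\rho_q(\e)$ is finite because $\cM_q$ is bounded by Lemma~\ref{lem:frak_A_q}. Weierstrass's theorem then gives a minimizer in $Q_\e$, which completes the argument.
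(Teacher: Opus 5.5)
Your proof is correct and has the same structure as the paper's. You bound $|\rho_p(\e)-\rho_q(\e)|$ by $\sup_{\v\in\bS^2}|h_{\cM_p}(\v)-h_{\cM_q}(\v)|$, then minimize the continuous map $q\mapsto[\rho_q(\e)]_+$ over the nonempty compact set $\{q\in\widetilde{\bS}^2:q_3\e_3\ge0\}$. The one difference is in the continuity step: the paper quotes the H\"older bound $2H_0CM_0^m\|p-q\|^m$ from Lemma~\ref{lem:continuous_h_frakAq}, which combines the same sign-preservation argument with the polynomial rate in Assumption~\ref{asm:robust-margin}. You instead get qualitative continuity from pointwise local constancy of $K_{q'}(x)$ plus dominated convergence, which needs only $\bP_\X(d_q(X)=0)=0$ but gives no modulus of continuity.
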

\begin{proof}
By Lemma~\ref{lem:continuous_h_frakAq}, for all $p,q\in\widetilde{\bS}^2$,
\begin{align*}
\sup_{\v\in\bS^2}|h_{\cM_p}(\v)-h_{\cM_q}(\v)|\le 2H_0 C M_0^m\|p-q\|^m.
\end{align*}
Therefore,
$|\rho_p(\e)-\rho_q(\e)|
\le
\sup_{\v\in\bS^2}
|h_{\cM_p}(\v)-h_{\cM_q}(\v)|\le
2H_0 C M_0^m\|p-q\|^m$.
Hence $q\mapsto\rho_q(\e)$ is continuous.

Now define $Q(\e)
=
\{q\in\widetilde{\bS}^2:q_3\e_3\ge0\}$.
This set is nonempty because $(1,0,0)\in Q(\e)$, and it is compact because it is a closed subset of $\bS^2$. 
Since $q\mapsto[\rho_q(\e)]_+$ is continuous on $Q(\e)$, it attains its minimum. 
Thus, $T_{\mathrm{env}}(\e)=\min_{q\in Q(\e)}[\rho_q(\e)]_+$.
\end{proof}

\begin{lemma}\label{lem:continuous_h_frakAq}
    \textit{Let Assumptions~\ref{asm:moments} and \ref{asm:robust-margin} hold. Then, for every $\v\in\bS^2$,
    \begin{align}
        \h_{\cM_q}(\v)=\bE\left[\max\{\kappa_0(\v,q;X),\kappa_1(\v,q;X)\}\right].\label{eq:sup_fun_mathfrakAq}
    \end{align}
    Moreover, $q_n\to q$ implies $h_{\cM_{q_n}}(\v)\to h_{\cM_q}(\v)$, and for all $p,q\in\bS^2$,
    \begin{align}
        \sup_{\v\in\bS^2}\left|h_{\cM_p}(\v)-h_{\cM_q}(\v)\right|\le 2H_0 C M_0^m\|p-q\|^m.\label{eq:cont:h_frakAq}
    \end{align}}
\end{lemma}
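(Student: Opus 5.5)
The plan is to reuse the Aumann-integral representation established in the proof of Lemma~\ref{lem:frak_A_q}. There we showed \eqref{eq:frak_Aq_Aumann}:
\[
\cM_q=\{\bE[k(X)]:k(x)\in K_q(x)\ \bP_\X\text{-a.s.}\}.
\]
Here $K_q$ is a nonempty, measurable, compact-valued and integrably bounded random set.

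\textbf{Step 1: the support-function formula.} For such a random set, the support function of the selection (Aumann) expectation equals the expectation of the pointwise support function, by the standard result in \citet{mol17}, Thm.~2.1.38 and the surrounding results. This gives
\[
\h_{\cM_q}(\v)=\bE\Big[\sup_{y\in K_q(X)}\v^\intercal y\Big].
\]
The interchange of the supremum and the expectation is the only genuinely technical point. It requires the measurability of the pointwise support function, which was already noted in that proof via \citet[Prop.~1.3.8]{mol17}, and an integrable bound, which $H_0$ supplies. Since $K_q(x)=K_q^0(x)\cup K_q^1(x)$, the support function of the union is the maximum of the two support functions. These are exactly $\kappa_0(\v,q;x)$ and $\kappa_1(\v,q;x)$ in \eqref{eq:kappa0}--\eqref{eq:kappa1}, with the convention $-\infty$ for an empty branch. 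Because $K_q(x)\neq\emptyset$, the maximum is finite and bounded by $H_0$ in absolute value. This yields \eqref{eq:sup_fun_mathfrakAq}.

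\textbf{Step 2: the H\"older bound.} Write $w(u;x)\equiv\Delta A(x)-2u$ for $u\in\sU(x)$, and observe that
\[
q^\intercal u\le\tau_q(x)\iff q^\intercal w(u;x)\ge0.
\]
Hence $K_q^0(x)$ collects the $u$ with $q^\intercal w\ge0$ and $K_q^1(x)$ those with $q^\intercal w\le0$. By Step 1,
\[
|\h_{\cM_p}(\v)-\h_{\cM_q}(\v)|\le\bE\big|\max\{\kappa_0,\kappa_1\}(\v,p;X)-\max\{\kappa_0,\kappa_1\}(\v,q;X)\big|.
\]
Each max is bounded by $H_0$, so the integrand is at most $2H_0$ uniformly in $\v$. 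It vanishes whenever $K_p(x)=K_q(x)$, and in particular whenever $K_p^j(x)=K_q^j(x)$ for $j=0,1$.

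I then show that if these sets differ, there is some $u\in\sU(x)$ at which $p^\intercal w(u;x)$ and $q^\intercal w(u;x)$ have weakly opposite signs. Membership in one branch for $p$ but not for $q$ forces one inequality to be strict and the other reversed. Consequently
\[
|q^\intercal w|\le|(q-p)^\intercal w|\le M_0\|p-q\|,
\]
and so $d_q(x)\le M_0\|p-q\|$. Thus the integrand is nonzero only on the event $\{d_q(X)\le M_0\|p-q\|\}$. By Assumption~\ref{asm:robust-margin}, this event has probability at most $C(M_0\|p-q\|)^m$. Multiplying by $2H_0$ gives \eqref{eq:cont:h_frakAq}, uniformly in $\v$. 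The continuity statement $q_n\to q\Rightarrow \h_{\cM_{q_n}}(\v)\to\h_{\cM_q}(\v)$ follows immediately.

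\textbf{Main obstacle.} The step I expect to need the most care is Step 1: justifying the support-function identity for the Aumann integral, and in particular handling the empty branches and measurable selection of maximizers. I would lean directly on the random-set machinery already set up in Lemma~\ref{lem:frak_A_q} (measurable graph, compact values, integrable boundedness). The sign-change argument in Step 2 is elementary by comparison, but the boundary cases where $q^\intercal w=0$ should be checked. There, both branches contain the point, so the sets agree, and a strict discrepancy is genuinely required for $K_p(x)\neq K_q(x)$.
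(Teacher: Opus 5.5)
Your proposal is correct and follows the paper's proof essentially step for step. Step 1 matches: the support-function identity comes from the random-set result for the selection expectation (the paper invokes Thm.~2.1.35 of Molchanov rather than 2.1.38), combined with the fact that the support function of $K_q^0(x)\cup K_q^1(x)$ is $\max\{\kappa_0,\kappa_1\}$; the H\"older bound then rests on the same sign comparison of $p^\intercal z_x(u)$ and $q^\intercal z_x(u)$ with $z_x(u)=\Delta A(x)-2u$. The only difference is presentational: the paper argues directly that on $\{d_q(X)>M_0\|p-q\|\}$ the two signs agree for every $u\in\sU(x)$, whereas you argue the equivalent contrapositive, and your weak-inequality handling covers the boundary case $q^\intercal w=0$ correctly.
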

\begin{proof}
By \citet[Thm. 2.1.35]{mol17}, the support function of $\cM_q$ is given by
\begin{align*}
\h_{\cM_q}(\v)=\bE\left[\sup_{y\in K_q(X)}\v^\intercal y\right].
\end{align*}
Using the definition of $K_q(X),\,\kappa_0(\v,q;X),\,\kappa_1(\v,q;X)$, we obtain~\eqref{eq:sup_fun_mathfrakAq}.
Next, fix $p,q\in\bS^2$. 
For $u\in\sU(x)$, write $z_x(u)=\Delta A(x)-2u$.
As $p^\intercal u\le\frac12p^\intercal\Delta A(x)$ if and only if $
p^\intercal(\Delta A(x)-2u)\ge0$, the feasibility inequalities can be rewritten as $p^\intercal u\le\tau_p(x) \iff p^\intercal z_x(u)\ge0$ and $p^\intercal u\ge\tau_p(x)\iff p^\intercal z_x(u)\le0$.
Suppose $d_q(x)>M_0\|p-q\|$.
Then for every $u\in\sU(x)$,
\begin{align*}
|(p-q)^\intercal z_x(u)|\le\|p-q\|\|z_x(u)\|\le M_0\|p-q\|<d_q(x)\le|q^\intercal z_x(u)|.
\end{align*}
Therefore $p^\intercal z_x(u)$ and $q^\intercal z_x(u)$ have the same sign for every $u\in\sU(x)$. 
Hence
\begin{align*}
\{u\in\sU(x):p^\intercal u\le\tau_p(x)\}&=\{u\in\sU(x):q^\intercal u\le\tau_q(x)\},\\
\{u\in\sU(x):p^\intercal u\ge\tau_p(x)\}&=\{u\in\sU(x):q^\intercal u\ge\tau_q(x)\}.
\end{align*}
The objectives in the definitions of $\kappa_0$ and $\kappa_1$ do not depend on the normal vector. 
Therefore, on the event $\{d_q(x)>M_0\|p-q\|\}$, we have
\begin{align*}
\max\{\kappa_0(\v,p;X),\kappa_1(\v,p;X)\}=\max\{\kappa_0(\v,q;X),\kappa_1(\v,q;X)\}.
\end{align*}
On the complementary event, the bound defining $H_0$ gives
\begin{align*}
|\max\{\kappa_0(\v,p;X),\kappa_1(\v,p;X)\}|\le H_0,
\qquad
|\max\{\kappa_0(\v,q;X),\kappa_1(\v,q;X)\}|\le H_0.
\end{align*}
Thus
\begin{align*}
|\max\{\kappa_0(\v,p;X),\kappa_1(\v,p;X)\}-\max\{\kappa_0(\v,q;X),\kappa_1(\v,q;X)\}|\le 2H_0\one\{d_q(x)\le M_0\|p-q\|\}.
\end{align*}
Taking expectations,
$
\left|h_{\cM_p}(\v)-h_{\cM_q}(\v)\right|\le 2H_0\bP_X\{d_q(X)\le M_0\|p-q\|\}.
$
By Assumption~\ref{asm:robust-margin}, $\bP_\X\{d_q(X)\le M_0\|p-q\|\}\lesssim M_0^m\|p-q\|^m$.
Hence
\begin{align*}
\left|
h_{\cM_p}(\v)-h_{\cM_q}(\v)
\right|
\lesssim
2H_0 M_0^m\|p-q\|^m,
\end{align*}
establishing~\eqref{eq:cont:h_frakAq}.
The right-hand side does not depend on $\v$, so taking $p=q_n$ establishes continuity of $h_{\cM_q}$ in $q$, uniformly $\v$.
\end{proof}

\subsection{Lemmas and Auxiliary Results for Sections~\ref{sec:DML}-\ref{sec:tests}}
Recall that $\mathfrak{F}_k $ denotes the $\sigma$-algebra generated by all the data except the ones with indices in $\mathcal{I}_k$.  

\begin{lemma}\label{lemma:delta-hat}
     Let Assumptions \ref{asm:nuisance-function-structure} and \ref{asm:nuisance-funct-estimators} hold. Then, there exist $\widehat{\delta}_k: \sX_2 \to \R^+$ and an event $\widehat{E}_k \in \mathfrak{F}_k$ such that (i) $\widehat{\delta}_k$ conditional on $\mathfrak{F}_k$ is nonstochastic, (ii) $\bP(\widehat{E}_k^c) = o(1)$, (iii) conditional on $\widehat{E}_k$ we have that $\widehat{\bmu}_k \in \bB_\epsilon(\mu_r,\mu_b)$,    
     $\|\Delta \widehat{\btheta}_k(\X_i)-\Delta {\btheta}(\X_i)\| \le \widehat{\delta}_k(\X_{2,i})$ and $\|\widehat{\bpi}_k(\X_i)-{\bpi}(\X_i)\| \le \widehat{\delta}_k(\X_{2,i})$, and (iv) $\bE[ \widehat{\delta}_k(\X_{2,i})^2 | \mathfrak{F}_k] = o_p(n^{-1/2})$. 
\end{lemma}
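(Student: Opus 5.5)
The plan is to build $\widehat{\delta}_k$ directly from the structure in Assumption~\ref{asm:nuisance-function-structure}(c) and Assumption~\ref{asm:nuisance-funct-estimators}(a), using the mean value theorem on $F$. Let $M\equiv \sup_{\nu,\gamma,\mu\in\bB_\epsilon}\|DF(\nu,\gamma,\mu)\|<\infty$. On the event that $\widehat{\bmu}_k\in\bB_\epsilon(\mu_r,\mu_b)$, the segment joining $(\nu(x_1),\gamma(x_2),\bmu)$ to $(\widehat{\nu}_k(x_1),\widehat{\gamma}_k(x_2),\widehat{\bmu}_k)$ stays in the region where the derivative bound holds, because $\bB_\epsilon$ is convex and the bound is uniform in $(\nu,\gamma)$. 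Hence both $\|\Delta\widehat{\btheta}_k(x)-\Delta\btheta(x)\|$ and $|\widehat{\bpi}_k(x)-\bpi(x)|$ are bounded by
\[
M\bigl(\|\widehat{\nu}_k(x_1)-\nu(x_1)\|+\|\widehat{\gamma}_k(x_2)-\gamma(x_2)\|+\|\widehat{\bmu}_k-\bmu\|\bigr).
\]

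The candidate bound depends on $x_1$, but the statement requires a function of $x_2$ alone. I therefore replace the first term by its uniform version $\widehat{s}_k\equiv\sup_{x_1\in\sX_1}\|\widehat{\nu}_k(x_1)-\nu(x_1)\|$ and define
\[
\widehat{\delta}_k(x_2)\equiv M\bigl(\widehat{s}_k+\|\widehat{\gamma}_k(x_2)-\gamma(x_2)\|+\|\widehat{\bmu}_k-\bmu\|\bigr).
\]
Since $\widehat{\nu}_k$, $\widehat{\gamma}_k$ and $\widehat{\bmu}_k$ use only data outside $\mathcal{I}_k$, the function $\widehat{\delta}_k$ is $\mathfrak{F}_k$-measurable, which gives (i).

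For the event, take $\widehat{E}_k\equiv\{\|\widehat{\bmu}_k-\bmu\|<\epsilon\}\in\mathfrak{F}_k$. Its complement has probability $o(1)$ by the law of large numbers, since $n-n_k\to\infty$; this gives (ii). On $\widehat{E}_k$, the mean value bound above yields (iii) for every $x$, and in particular at $\X_i$ for $i\in\mathcal{I}_k$.

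The main point to verify is (iv). I expand $\widehat{\delta}_k^2\le 3M^2(\widehat{s}_k^2+\|\widehat{\gamma}_k-\gamma\|^2+\|\widehat{\bmu}_k-\bmu\|^2)$ and take $\bE[\cdot\mid\mathfrak{F}_k]$, noting that $\X_{2,i}$ for $i\in\mathcal{I}_k$ is independent of $\mathfrak{F}_k$.
\begin{itemize}
\item The first term is $\widehat{s}_k^2=o_p(n^{-1/2})$ by Assumption~\ref{asm:nuisance-funct-estimators}(b).
\item The second term is $o_p(n^{-1/2})$ by Assumption~\ref{asm:nuisance-funct-estimators}(c).
\item The third term is $O_p(n^{-1})=o_p(n^{-1/2})$ by the CLT, since $\mu_g$ is bounded away from $0$ and $1$.
\end{itemize}

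The only delicate point is the use of the mean value theorem for vector-valued $F$. I would apply it componentwise, or use the integral form $F(b)-F(a)=\int_0^1 DF(a+t(b-a))(b-a)\,dt$, which gives the bound with the operator norm. I would also check that the domain restriction on $\mu$ is respected along the segment, which holds by convexity of the ball.
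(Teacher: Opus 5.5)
Your proposal is correct and follows the paper's proof step for step. It uses the same event $\widehat{E}_k=\{\|\widehat{\bmu}_k-\bmu\|<\epsilon\}$ and the same mean-value bound, with $\sup_{x_1}\|\widehat{\nu}_k(x_1)-\nu(x_1)\|$ substituted so that $\widehat{\delta}_k$ depends on $x_2$ only, and it expands $\widehat{\delta}_k^2$ term by term in the same way (the paper cites Jensen's inequality where you use $(a+b+c)^2\le 3(a^2+b^2+c^2)$); your explicit check that the segment stays in $\bB_\epsilon(\mu_r,\mu_b)$ by convexity fills in a detail the paper leaves implicit.
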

\begin{proof}
    Let $C_F = \sup_{\nu \in \R^{d_\nu}} \sup_{ \gamma\in \R^{d_\gamma}} \sup_{\mu \in \bB_\epsilon(\mu_r,\mu_b) }\| D F(\nu, \gamma,\mu)\|$. Consider the event $\widehat{E}_k = \{ \| \widehat{\bmu}_k - \bmu\|< \epsilon\}$, which holds with probability approaching one by definition of $\widehat{\bmu}_k$. 
    By Assumptions \ref{asm:nuisance-function-structure} and \ref{asm:nuisance-funct-estimators}, the mean value inequality, and conditional on $\widehat{E}_k$, we have
    $$\left\| \left( \widehat{\bpi}(\X_i)-{\bpi}(\X_i)~,~\Delta \widehat{\btheta}(\X_i)-\Delta {\btheta}(\X_i) \right) \right\|$$
    is lower than or equal to 
    $C_F \left( \| \widehat{\nu}_k(\X_{1,i}) - \nu(\X_{1,i})\| +\| \widehat{\gamma}_k(\X_{2,i}) - \gamma(\X_{2,i})\| + \| \widehat{\bmu}_k  - \bmu \|\right)~.$
    By Assumption \ref{asm:nuisance-funct-estimators}, the previous expression is lower than or equal to 
    $$  \widehat{\delta}_k(\X_{2,i}) \equiv  C_F \left( \sup_{x_1 \in \sX_1} \|\widehat{\nu}_k(x_1) - \nu(x_1)\| +\| \widehat{\gamma}_k(\X_{2,i}) - \gamma(\X_{2,i})\| + \| \widehat{\bmu}_k  - \bmu \|\right)~.$$ %
    By Assumption \ref{asm:nuisance-funct-estimators} and Jensen's inequality, we have $\bE[\widehat{\delta}_k(\X_{2,i})^2  | \mathfrak{F}_k] =  o_p(n^{-1/2})$.    
\end{proof}

\begin{lemma}\label{lemma:L2-alpha}
    Let Assumption \ref{asm:nuisance-function-structure} and \ref{asm:nuisance-funct-estimators} hold. Then,
    $ n_k^{-1/2} \sum_{i \in \mathcal{I}_k } | \widehat{\balpha}_k^\h(\q; \X_i) - \balpha^\h(\q;\X_i)|^2 = o_p(1)$  %
    uniformly on $\q \in \bS^2$.
\end{lemma}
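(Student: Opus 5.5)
The plan is to write out $\widehat{\balpha}_k^\h(\q;\X)-\balpha^\h(\q;\X)$ explicitly from \eqref{eq:alpha}. Since $\balpha^\h(\q;\X)=\q^\intercal\btheta_0(\X)+(\q^\intercal\Delta\btheta(\X))_+$, and $u\mapsto u_+$ is $1$-Lipschitz, we get pointwise, uniformly in $\q\in\bS^2$ (because $\|\q\|=1$),
\begin{align*}
|\widehat{\balpha}_k^\h(\q;\X_i)-\balpha^\h(\q;\X_i)|\le \|\widehat{\btheta}_{0,k}(\X_i)-\btheta_0(\X_i)\|+\|\Delta\widehat{\btheta}_k(\X_i)-\Delta\btheta(\X_i)\|.
\end{align*}
The key observation is that the bound does not depend on $\q$. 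The indicator discontinuity in $\one\{\q^\intercal\Delta\btheta>0\}$ therefore causes no trouble, since $\balpha^\h$ is Lipschitz in $\Delta\btheta$, and uniformity over $\bS^2$ is automatic. Squaring and using $(a+b)^2\le 2a^2+2b^2$ reduces the claim to showing that both
\begin{align*}
n_k^{-1/2}\sum_{i\in\mathcal I_k}\|\widehat{\btheta}_{0,k}(\X_i)-\btheta_0(\X_i)\|^2
\quad\text{and}\quad
n_k^{-1/2}\sum_{i\in\mathcal I_k}\|\Delta\widehat{\btheta}_k(\X_i)-\Delta\btheta(\X_i)\|^2
\end{align*}
are $o_p(1)$.

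For the second sum, I will work on the event $\widehat E_k$ of Lemma~\ref{lemma:delta-hat}, which has probability tending to one. On that event each summand is bounded by $\widehat\delta_k(\X_{2,i})^2$. For the first sum, I will use $\widehat{\btheta}_{0,k}$ directly. In both cases, conditional on $\mathfrak F_k$, the observations in fold $\mathcal I_k$ are i.i.d. and independent of the estimators. So the conditional expectation of $n_k^{-1/2}\sum_{i\in\mathcal I_k}(\cdot)$ equals $n_k^{1/2}\,\bE[\cdot\,|\,\mathfrak F_k]$. This is $n_k^{1/2}\,o_p(n^{-1/2})=o_p(1)$, by Lemma~\ref{lemma:delta-hat}(iv) and Assumption~\ref{asm:nuisance-funct-estimators}(d) respectively, using $n_k\asymp n/K$ with $K$ fixed.

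Conditional Markov then converts each conditional-mean bound into an in-probability statement. Concretely, for a nonnegative $S_n$ with $\bE[S_n|\mathfrak F_k]=o_p(1)$, we have $\bP(S_n>\epsilon\,|\,\mathfrak F_k)\le\min\{1,\bE[S_n|\mathfrak F_k]/\epsilon\}$. This bound is $o_p(1)$ and bounded by one, so dominated convergence gives $\bP(S_n>\epsilon)\to0$. Adding $\bP(\widehat E_k^c)=o(1)$ completes the argument for the $\Delta\btheta$ term.

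The main care point is the conditioning step: making sure $\widehat\delta_k$ is nonstochastic given $\mathfrak F_k$, and that $\widehat E_k\in\mathfrak F_k$, so the conditional Markov step is legitimate. Lemma~\ref{lemma:delta-hat}(i) and the definition of $\widehat E_k$ in that lemma provide exactly this. Everything else is routine.
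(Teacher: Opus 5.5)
Your proof is correct, but it takes a different route from the paper's. The paper bounds the sum by three terms. The first is a mismatch term $\tilde{A}_k=n_k^{-1/2}\sum_{i\in\mathcal I_k}|\q^\intercal\Delta\btheta(\X_i)|\one\{|\q^\intercal\Delta\btheta(\X_i)|\le|\q^\intercal\Delta\widehat{\btheta}_k(\X_i)-\q^\intercal\Delta\btheta(\X_i)|\}$, which comes from the indicator. The other two are the squared-error terms for $\btheta_0$ and $\Delta\btheta$, which it handles by Cauchy--Schwarz and Assumption~\ref{asm:nuisance-funct-estimators}. It disposes of $\tilde{A}_k$ by borrowing the argument of Lemma~\ref{lemma:condition-A-weights}, which conditions on $\X_{2,i}$ and invokes the margin condition in Assumption~\ref{asm:nuisance-function-structure}(a). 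Your observation that $\balpha^\h(\q;\cdot)=\q^\intercal\btheta_0+(\q^\intercal\Delta\btheta)_+$ is $1$-Lipschitz in $(\btheta_0,\Delta\btheta)$ absorbs the indicator mismatch into the squared error for $\Delta\btheta$: on the mismatch event, $|\q^\intercal\Delta\btheta|$ is itself dominated by the estimation error. As a result you never use the margin condition, uniformity in $\q$ is immediate because your bound is $\q$-free, and you need only the rates supplied by Lemma~\ref{lemma:delta-hat} and Assumption~\ref{asm:nuisance-funct-estimators}(d). You also spell out the conditional-Markov step that the paper leaves implicit. The paper's route buys only reuse of machinery it needs anyway in Lemma~\ref{lemma:condition-A-weights}, where the indicator mismatch enters linearly rather than squared and the margin condition is genuinely required. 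For the present lemma, your argument is the more elementary and self-contained one.
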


\begin{proof}

We can write
$n_k^{-1/2} \sum_{i \in \mathcal{I}_k } | \widehat{\balpha}_k^\h(\q; \X_i) - \balpha^\h(\q;\X_i)|^2 \le \tilde{A}_k + B_k + C_k$,
where
\begin{align*}
    \tilde{A}_k &= n_k^{-1/2} \sum_{i \in \mathcal{I}_k } |\q^\intercal \Delta \btheta (\X_i) |   \one\{ |\q^\intercal \Delta\btheta(\X_i)| \le |\q^\intercal \Delta \widehat{\btheta}_k(\X_i) - \q^\intercal \Delta\btheta(\X_i)| \}  \\
    B_k &= n_k^{-1/2} \sum_{i \in \mathcal{I}_k } | \q^\intercal \widehat{\btheta}_{0,k}(\X_i)  - \q^\intercal \btheta_0(\X_i) |^2,
    C_k = n_k^{-1/2} \sum_{i \in \mathcal{I}_k } | \q^\intercal \Delta \widehat{\btheta}_k (\X_i) - \q^\intercal \Delta \btheta (\X_i) |^2. 
\end{align*}
In the proof of Lemma \ref{lemma:condition-A-weights}, we conclude  $\tilde{A}_k = o_p(1)$ uniformly on $\q \in \bS^2$, with $\tilde{A}_k$ defined in \eqref{eq:aux1-condition-A-weights}. Finally, note that $B_k$ and $C_k$ both are $o_p(1)$ uniformly on $\q \in \bS^2$ by Cauchy-Schwartz and Assumption \ref{asm:nuisance-funct-estimators}.  
\end{proof}

\subsubsection{Auxiliary Results for Weighted Estimators for the Support Function}\label{sec:weighted-estimator-h}

Let $\{ \omega_i: 1 \le i \le n\}$ be a random sample of positive weights that are  independent of the data such that $ \bE[\omega_i^2] \le C_\omega$.  Define  
\begin{equation}\label{eq:oracle-weighted-h}
   \widehat{\h}_\E^{\omega,*}(q; \bLL, \boeta) \equiv \tfrac{1}{n}    \sum_{i =1}^n  \left( \tfrac{\omega_i}{\overline{\omega}}  \right)  \zeta_i^*(q;\bLL,{\boeta})~, 
\end{equation}
where $\overline{\omega} = n^{-1} \sum_{i=1}^n \omega_i$  and $ \zeta_i^*(q;\bLL,{\boeta}) $ is defined in \eqref{eq:zeta-star}. 

\begin{lemma}\label{lemma:asympt-equivalence-weights}
    \textit{Let Assumptions \ref{asm:moments}--\ref{asm:nuisance-funct-estimators} hold. Then, 
        \begin{equation}\label{eq:lemma-asympt-equivalence-weights} 
        \sup_{\q \in \bS^2} n^{1/2} \left| \widehat{\h}_\E^\omega(q; \widehat{\bLL}^\omega, \widehat{\boeta})  - \widehat{\h}_\E^{\omega,*}(q; \bLL, \boeta) \right| = o_p(1)~,
    \end{equation} 
    where $\widehat{\h}_\E^\omega(q; \widehat{\bLL}^\omega, \widehat{\boeta})$ and $\widehat{\h}_\E^{\omega,*}(q; \bLL, \boeta) $ are defined in \eqref{eq:DML-estimator-weighted} and \eqref{eq:oracle-weighted-h}, respectively.
    }
\end{lemma}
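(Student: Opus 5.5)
The plan is to decompose the difference between the feasible weighted DML estimator and the oracle weighted estimator into three pieces: (a) the effect of replacing $\bLL$ by $\widehat{\bLL}^\omega$, i.e.\ $\bmu$ by $\widehat{\bmu}^\omega$; (b) the effect of replacing $\boeta$ by the cross-fitted $\widehat{\boeta}_k$ inside $\zeta_i$; (c) the correction term $\sum_g\Gamma_g^h(\q)(1-\one\{G_i=g\}/\mu_g)$ that appears in $\zeta_i^*$ but not in $\zeta_i$. Since $\overline{\omega}\to 1$ in probability and the weights are independent of the data with bounded second moments, I will work fold by fold and absorb $\omega_i/\overline{\omega}$ into $\omega_i$ up to an $o_p(1)$ factor multiplying an $O_p(1)$ empirical process. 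This empirical process is $O_p(1)$ by the Donsker argument of Part 2 of Theorem~\ref{thm:gaussian-h}.

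\textbf{Step (a).} Each $\LL_d^{g,\iota}$ equals $\ell^\iota(d,\Y^*)\one\{G=g\}/\mu_g$, so the map from $\mu_g$ to the summand is smooth. A first-order expansion of $1/\widehat{\mu}_g^\omega$ around $1/\mu_g$ gives
\[
n^{-1}\sum_i (\omega_i/\overline{\omega})\,\zeta_i(\q;\widehat{\bLL}^\omega,\boeta)-n^{-1}\sum_i(\omega_i/\overline{\omega})\,\zeta_i(\q;\bLL,\boeta)
= -\sum_g \frac{\widehat{\mu}^\omega_g-\mu_g}{\mu_g}\,\Gamma_g^h(\q)+O_p(n^{-1}).
\]
Here I use a uniform law of large numbers for $n^{-1}\sum_i\omega_i\one\{G_i=g\}(\cdot)Z_i/\pi(X_i)$ over $\q$, which is a VC-type class. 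Since $\widehat{\mu}_g^\omega-\mu_g=n^{-1}\sum_i(\omega_i/\overline{\omega})(\one\{G_i=g\}-\mu_g)$, this term equals exactly the weighted average of $\Gamma_g^h(\q)(1-\one\{G_i=g\}/\mu_g)$. That cancels piece (c) up to $o_p(n^{-1/2})$ uniformly in $\q$.

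\textbf{Step (b).} Within fold $k$, I split the error $\zeta_i(\q;\bLL,\widehat{\boeta}_k)-\zeta_i(\q;\bLL,\boeta)$ into three terms.
\begin{enumerate}
\item The propensity term is a doubly robust product: its conditional mean given $\mathfrak{F}_k$ is bounded by $\|\widehat{\bpi}_k-\bpi\|_{L^2}\,\|\widehat{\balpha}^h_k-\balpha^h\|_{L^2}$. I control it using Lemma~\ref{lemma:delta-hat}, Lemma~\ref{lemma:L2-alpha}, and the overlap bound $\bpi\in[\epsilon,1-\epsilon]$ from Assumption~\ref{asm:nuisance-function-structure}(c).
\item The indicator term changes $\one\{\q^\intercal\Delta\btheta>0\}$ to its estimated version. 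Its conditional mean reduces to $\bE[|\q^\intercal\Delta\btheta|\one\{|\q^\intercal\Delta\btheta|\le\widehat{\delta}_k(X_2)\}\mid\mathfrak{F}_k]$. By the conditional margin condition in Assumption~\ref{asm:nuisance-function-structure}(a), this is $\lesssim\bE[\widehat{\delta}_k^2]=o_p(n^{-1/2})$ uniformly in $\q$.
\item The centered fluctuation $n_k^{-1/2}\sum_{i\in\mathcal I_k}\omega_i\{(\cdot)-\bE[\cdot\mid\mathfrak{F}_k]\}$ I handle by conditioning on $\mathfrak{F}_k$. On that event the estimated nuisances are fixed functions, and I bound the supremum over $\q$ with a maximal inequality for VC-type classes with envelope variance $o_p(1)$. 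That the envelope variance is $o_p(1)$ again follows from the margin condition and $\widehat{\delta}_k\to0$ in $L^2$.
\end{enumerate}

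The main obstacle is making term 3 uniform in $\q\in\bS^2$: the estimated indicator classes $\{\one\{\q^\intercal\Delta\widehat{\btheta}_k(x)>0\}\}$ are VC uniformly in the data. I would first obtain pointwise $L^2$ bounds, then invoke the Andrews-type entropy results already used in the proof of Theorem~\ref{thm:gaussian-h}, with random weights handled by multiplier inequalities. Summing over the finitely many folds completes the argument.
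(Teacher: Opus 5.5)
Your decomposition is essentially the paper's. Step (a) is its Claim 2: your expansion of $1/\widehat\mu_g^\omega$, together with $\widehat\mu_g^\omega-\mu_g=n^{-1}\sum_i(\omega_i/\overline\omega)(\one\{G_i=g\}-\mu_g)$, does the job of the paper's trick of adding the exactly-zero average of $\Gamma_g^\h(\q)(1-\one\{G_i=g\}/\widehat\mu_g^\omega)$. Step (b)(1) is Claim 3, and (b)(2) is Claim 1 via Lemma~\ref{lemma:condition-A-weights}.

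\textbf{First gap: the pieces do not telescope.} You evaluate (a) at the true $\boeta$ and (b) at the true $\bLL$. The weighted average of the following interaction is therefore never controlled:
\[
\zeta_i(\q;\widehat{\bLL}^\omega,\widehat{\boeta}_k)-\zeta_i(\q;\bLL,\widehat{\boeta}_k)-\zeta_i(\q;\widehat{\bLL}^\omega,\boeta)+\zeta_i(\q;\bLL,\boeta).
\]
The label part of $\zeta_i$ is linear in $(1/\mu_r,1/\mu_b)$, so this average equals $\sum_g\mu_g(1/\widehat\mu_g^\omega-1/\mu_g)$ times a weighted average of
\[
\one\{G_i=g\}Z_i\Bigl[\tfrac{\q^\intercal\bLL_{0,i}+\q^\intercal\Delta\bLL_i\one\{\q^\intercal\Delta\widehat{\btheta}_k(\X_i)>0\}}{\widehat{\bpi}_k(\X_i)}-\tfrac{\q^\intercal\bLL_{0,i}+\q^\intercal\Delta\bLL_i\one\{\q^\intercal\Delta\btheta(\X_i)>0\}}{\bpi(\X_i)}\Bigr].
\]
The first factor is $O_p(n^{-1/2})$, so you must show this average is $o_p(1)$ uniformly in $\q$. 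Cauchy--Schwarz handles the $1/\widehat{\bpi}_k-1/\bpi$ part. The indicator part again needs the conditional margin condition, with bound $\lesssim\bE[\widehat\delta_k(\X_2)\mid\mathfrak F_k]$. This is exactly what the paper's $B_k^g$ in Claim 2 and $B_{k,2}$ in Lemma~\ref{lemma:condition-B-weights} do.

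\textbf{Second gap: the envelope claim in (b)(3) fails for the indicator class.} Consider the class $\{\one\{\q^\intercal\Delta\widehat{\btheta}_k(x)>0\}-\one\{\q^\intercal\Delta\btheta(x)>0\}:\q\in\bS^2\}$. Its envelope equals $1$ at every $x$ where $\Delta\widehat{\btheta}_k(x)$ is not a positive multiple of $\Delta\btheta(x)$, which is essentially everywhere. So the envelope's variance is not $o_p(1)$.

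What the margin condition does make small is $\sigma_n^2\equiv\sup_{\q}\bE[f_\q^2\mid\mathfrak F_k]$. You therefore need a maximal inequality driven by $\sigma_n$ with only an $O_p(1)$ envelope. One option is \citet[Thm.~2.14.1]{van:wel13} with $\theta_n=\sup_\q\|f_\q\|_n/\|F\|_n\to_p0$; this in turn needs a uniform LLN for $\{f_\q^2\}$ over the VC-type class, with envelope $F$ having bounded second moment.

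Your envelope argument is fine for the non-indicator pieces. For instance, $|\widehat{\balpha}_k^\h-\balpha^\h|\le\|\widehat{\btheta}_{0,k}-\btheta_0\|+\|\Delta\widehat{\btheta}_k-\Delta\btheta\|$ because $u\mapsto u_+$ is $1$-Lipschitz.

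With these two repairs, your treatment of uniformity in $\q$ is more careful than the paper's. The paper goes from pointwise conditional-moment bounds, via \citet[Lemma~6.1]{ChernozhukovDML}, to statements claimed ``uniformly on $\q$'' without a maximal inequality.
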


\begin{proof} 

\noindent Let $\widehat{\balpha}_k^\h$ be the estimator of $\balpha^\h$ defined in \eqref{eq:alpha} using $\widehat{\boeta}_k$. Recall $n_k \equiv n/K$ and  define  $I_{j,n}(\q) \equiv \tfrac{1}{K} \sum_{k=1}^K \tfrac{1}{n_k}\sum_{i \in \mathcal{I}_k}   \left(\tfrac{\omega_i}{\overline{\omega}} \right) \mathfrak{Z}_{k,i}^{(j)}(\q) $ for $j \in \{0,1,2,3\}$, where
\begin{align*}
   \mathfrak{Z}_{k,i}^{(0)}(\q) &\equiv     \tfrac{\q^\intercal \widehat{\bLL}_{0,i}^\omega\Z_i}{\widehat{\bpi}_k (\X_i)}  + \left(\tfrac{\q^\intercal \Delta \widehat{\bLL}_{i}^\omega \Z_i}{\widehat{\bpi}_k (\X_i)}\right)  \one\{\q^\intercal \Delta\widehat{\btheta}_k(\X_i)>0\} + \widehat{\balpha}_k^\h(\q;\X_i) \left(1 - \tfrac{\Z_i}{\widehat{\bpi}_k (\X_i)}\right)  \\
    \mathfrak{Z}_{k,i}^{(1)}(\q) &\equiv    \tfrac{\q^\intercal \widehat{\bLL}_{0,i}^\omega\Z_i}{\widehat{\bpi}_k (\X_i)}  + \left(\tfrac{\q^\intercal \Delta \widehat{\bLL}_{i}^\omega \Z_i}{\widehat{\bpi}_k (\X_i)}\right)  \one\{\q^\intercal \Delta\btheta(\X_i)>0\} + \widehat{\balpha}_k^\h(\q;\X_i) \left(1 - \tfrac{\Z_i}{\widehat{\bpi}_k (\X_i)}\right) \\
    \mathfrak{Z}_{k,i}^{(2)}(\q)  &\equiv    \tfrac{\q^\intercal {\bLL}_{0,i} \Z_i}{\widehat{\bpi}_k (\X_i)}  + \left(\tfrac{\q^\intercal \Delta {\bLL}_{i}  \Z_i}{\widehat{\bpi}_k (\X_i)}\right)  \one\{\q^\intercal \Delta\btheta(\X_i)>0\}
     + \widehat{\balpha}_k^\h(\q;\X_i) \left(1 - \tfrac{\Z_i}{\widehat{\bpi}_k (\X_i)}\right)  \\ 
    &\quad + \sum_{g \in \{r,b\}}  \Gamma_g^h(\q) \left(1-\tfrac{\one \{ G_i = g\}}{\mu_g}\right) \\
    \mathfrak{Z}_{k,i}^{(3)}(\q)  &\equiv  \zeta_i^*(\q;\bLL, \boeta)~.
\end{align*}
Our goal is to establish \eqref{eq:lemma-asympt-equivalence-weights}. We divide the proof in three claims. 

\noindent \underline{Claim 1}: $\sup_{\q \in \bS^2 } \sqrt{n} | I_{0,n}(\q) - I_{1,n}(\q)| = o_p(1)$. 

We write 
$I_{0,n}(\q) - I_{1,n}(\q) = \tfrac{1}{K} \sum_{k=1}^K (A_k + B_k)~,$
where
\begin{align}
    \hspace{-.4cm}A_k &= \tfrac{1}{n_k} \sum_{i \in \mathcal{I}_k} \left( \tfrac{\omega_i}{\overline{\omega}}  \right)\left(\tfrac{\q^\intercal \Delta \bLL_{i} \Z_i}{\bpi(\X_i)}\right) \left( \one\{\q^\intercal \Delta \widehat{\btheta}_k(\X_i)>0\} -   \one\{\q^\intercal \Delta\btheta(\X_i)>0\}\right) \label{eq:lemma-condition-A}\\
    \hspace{-.4cm}B_k &= \tfrac{1}{n_k} \sum_{i \in \mathcal{I}_k} \left( \tfrac{\omega_i}{\overline{\omega}}  \right)\left( \tfrac{\q^\intercal \Delta \widehat{\bLL}_{i}^\omega \Z_i}{\widehat{\bpi}_k(\X_i)} - \tfrac{\q^\intercal \Delta \bLL_{i} \Z_i}{\bpi(\X_i)}\right) \left( \one\{\q^\intercal \Delta \widehat{\btheta}_k(\X_i)>0\} -   \one\{\q^\intercal \Delta\btheta(\X_i)>0\}\right) \label{eq:lemma-condition-B}
\end{align} 
Lemma \ref{lemma:condition-A-weights} shows that $A_k = o_p(n^{-1/2})$ uniformly on $q \in \bS^2$ by relying on Assumptions \ref{asm:nuisance-function-structure} and \ref{asm:nuisance-funct-estimators} and the approach used in the proof of Theorem 4.1 in \cite{liu:mol25v2}. Using similar ideas, Lemma \ref{lemma:condition-B-weights} shows that $B_k = o_p(n^{-1/2})$ uniformly on $q \in \bS^2$. Since $K$ is fixed as $n \to \infty$, we conclude the claim of step 1.

\noindent \underline{Claim 2}: $\sup_{\q \in \bS^2 } \sqrt{n} |I_{1,n}(\q) - I_{2,n}(\q)| = o_p(1)$. 

Recall $\widehat{\mu}_g^\omega \equiv n^{-1} \sum_{i=1}^n \left( \tfrac{\omega_i}{\overline{\omega}} \right) \one\{G_i=g\}$.   
Note that $n^{-1} \sum_{i=1}^n  \left( \tfrac{\omega_i}{\overline{\omega}} \right) \left(1 - \tfrac{\one\{G_i = g\} }{ \widehat{\mu}_g^\omega} \right)=0$, which implies
\begin{align*}
    I_{1,n}(\q) &= \tfrac{1}{n} \sum_{i=1}^n  \left( \tfrac{\omega_i}{\overline{\omega}} \right) \left[\zeta_i(\q;\widehat{\bLL}^\omega, (\Delta {\btheta},\widehat{\bpi},\widehat{\balpha}^h)) + \sum_{g \in \{r,b\}} \Gamma_g^h(\q) \left(1 - \tfrac{\one\{G_i = g\} }{ \widehat{\mu}_g^\omega} \right)\right].
\end{align*} 

Since 
\begin{align*}
    I_{2,n}(\q) &= \tfrac{1}{n} \sum_{i=1}^n  \left( \tfrac{\omega_i}{\overline{\omega}} \right) \left[\zeta_i(\q;{\bLL}, (\Delta{\btheta},\widehat{\bpi},\widehat{\balpha}^h)) +  \sum_{g \in \{r,b\}} \Gamma_g^h(\q) \left(1 - \tfrac{\one\{G_i = g\} }{ {\mu}_g} \right)\right]~,
\end{align*}
we can write
\begin{align*}
    I_{1,n}(\q) - I_{2,n}(\q) =     \sum_{g \in \{r,b\}} \left( \tfrac{1}{\widehat{\mu}_g^\omega} -  \tfrac{1}{\mu_g} \right)  \left( \tfrac{1}{K} \sum_{k=1}^K A_k^g + B_k^g \right)~,
\end{align*}
where
\begin{align*}
    A_k^g &=  \tfrac{1}{n_k} \sum_{i \in \mathcal{I}_k} \left( \tfrac{\omega_i}{\overline{\omega}} \right)  \left\{ \q^\intercal \left( \bLL_{0,i}    + \Delta\bLL_i \one\{\q^\intercal \Delta\btheta(\X_i)>0\} \right) \tfrac{\Z_i \mu_g}{\bpi(\X_i)}  - \Gamma_g^h(\q) \right\} \one\{G_i=g\}  \\
    B_k^g &= \tfrac{1}{n_k} \sum_{i \in \mathcal{I}_k}  \left( \tfrac{\omega_i}{\overline{\omega}} \right) \left(  \tfrac{\Z_i \mu_g}{\widehat{\bpi}_k(\X_i)} -\tfrac{\Z_i \mu_g}{\bpi(\X_i)} \right) \q^\intercal \left( \bLL_{0,i}    + \Delta\bLL_i   \one\{\q^\intercal \Delta\btheta(\X_i)>0\} \right)  \one\{G_i=g\} ~.
\end{align*}
We conclude that $A_k^g = O_p(n^{-1/2})$ uniformly on $\q \in \bS^2$ by using that the weights $\omega_i$ are independent of the data, $\overline{\omega} = \bE[\omega_i] + O_p(n^{-1/2})$, Assumptions \ref{asm:moments} and \ref{asm:MAR-X}, the definition of $\Gamma_g(q)$, and the Central Limit Theorem. We also conclude that $B_k^g = o_p(1)$ uniformly on $\q \in \bS^2$ by Cauchy-Schwartz, and Assumptions \ref{asm:moments}, \ref{asm:nuisance-function-structure}, and \ref{asm:nuisance-funct-estimators}. Finally, we conclude $  \tfrac{1}{\mu_g} - \tfrac{1}{\widehat{\mu}_g^\omega} = O_p(n^{-1/2})$ uniformly on $\q \in \bS^2$ by the Central Limit Theorem, Assumption \ref{asm:moments}, and the definition of the weights $\omega_i$.

\noindent \underline{Claim 3}: $\sup_{\q \in \bS^2 } \sqrt{n} \left| I_{2,n}(\q) - I_{3,n}(\q)\right| = o_p(1)$.

We  write
${\sqrt{n}} \left( I_{2,n}(\q) - I_{3,n}(\q)\right) = \tfrac{1}{\sqrt K} \sum_{k=1}^K (A_k + B_k+ C_k),$
where
\begin{align*}
    A_k &= \tfrac{1}{\sqrt{n_k}} \sum_{i \in \mathcal{I}_k} \left( \tfrac{\omega_i}{\overline{\omega}} \right)  \left( \tfrac{\Z_i}{\widehat{\bpi}_k(\X_i)} - \tfrac{\Z_i}{\bpi(\X_i)}\right) \left( \q^\intercal\bLL_{0,i}  + \left(\q^\intercal \Delta \bLL_{i} \right)  \one\{\q^\intercal \Delta\btheta(\X_i)>0\} - \balpha^\h(\q;\X_i)   \right), \\
    B_k &= \tfrac{1}{\sqrt{n_k}} \sum_{i \in \mathcal{I}_k} \left( \tfrac{\omega_i}{\overline{\omega}} \right)  \left( \widehat{\balpha}_k^\h(\q;\X) -\balpha^\h(\q;\X)  \right) \left( 1 - \tfrac{\Z_i}{\bpi(\X_i)} \right), \\
    C_k &= \tfrac{1}{\sqrt{n_k}} \sum_{i \in \mathcal{I}_k} \left( \tfrac{\omega_i}{\overline{\omega}} \right)  \left( \widehat{\balpha}_k^\h(\q;\X) -\balpha^\h(\q;\X)  \right) \left( \tfrac{\Z_i}{{\bpi}(\X_i)} - \tfrac{\Z_i}{\widehat{\bpi}_k(\X_i)}\right).
\end{align*}
Recall that $ (\Delta \widehat{\btheta}_k, \widehat{\bpi}_k, \widehat{\btheta}_{0,k})$ was estimated using all the data except the ones with indices in $\mathcal{I}_k$, and $\mathfrak{F}_k $ is the $\sigma$-algebra generated by all the data except the ones with indices on $\mathcal{I}_k$. 
The definition of the weights $\omega_i$, Assumptions \ref{asm:nuisance-function-structure} and \ref{asm:nuisance-funct-estimators}, the definition of $\balpha^h(q;X)$ in \eqref{eq:alpha}, and algebra show $\bE[ A_k | \mathfrak{F}_k] = 0$ and $\sup_{q \in \bS^2} \bE[A_k^2 | \mathfrak{F}_k] = o(1)$. Similarly, we can show $\bE[ B_k | \mathfrak{F}_k] = 0$ and $\sup_{q \in \bS^2} \bE[B_k^2 | \mathfrak{F}_k] = o(1)$. By \citet[Lemma 6.1]{ChernozhukovDML}, we conclude $A_k$ and $B_k$ both are $o_p(1)$ uniformly on $\q \in \bS^2$. Finally, we conclude that $C_k = o_p(1)$ uniformly on $\q \in \bS^2$ by Cauchy-Schwartz, Assumptions \ref{asm:nuisance-function-structure} and \ref{asm:nuisance-funct-estimators}, Lemma \ref{lemma:L2-alpha}, and the definition of the weights $\omega_i$.
 
The triangle inequality and Claims 1, 2, and 3, complete the proof of the lemma.    
\end{proof}

\begin{lemma}\label{lemma:condition-A-weights}
    Let Assumptions \ref{asm:moments}--\ref{asm:nuisance-funct-estimators} hold. Then, $A_k = o_p(n^{-1/2})$ uniformly on $q \in \bS^2$, with $A_k$ defined in \eqref{eq:lemma-condition-A}.  
\end{lemma}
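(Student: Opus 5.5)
The plan is to split $A_k$ in \eqref{eq:lemma-condition-A} into a conditional-mean (bias) term and a centered empirical-process term, working conditionally on $\mathfrak{F}_k$ and on the event $\widehat{E}_k$ of Lemma~\ref{lemma:delta-hat}. Since $\bP(\widehat{E}_k^c)=o(1)$, it suffices to work on $\widehat{E}_k$. First remove the normalization. Because $\overline{\omega}=\bE[\omega_i]+O_p(n^{-1/2})=1+O_p(n^{-1/2})$, replacing $\omega_i/\overline{\omega}$ by $\omega_i$ changes $A_k$ by a factor $1+O_p(n^{-1/2})$. Hence it is enough to show the claim for
\begin{align*}
\widetilde A_k(\q)\equiv \tfrac{1}{n_k}\sum_{i\in\mathcal I_k}\omega_i\, f_{k,\q}(W_i),\qquad
f_{k,\q}(W_i)\equiv\tfrac{\q^\intercal\Delta\bLL_i\Z_i}{\bpi(\X_i)}\bigl(\one\{\q^\intercal\Delta\widehat{\btheta}_k(\X_i)>0\}-\one\{\q^\intercal\Delta\btheta(\X_i)>0\}\bigr).
\end{align*}
The key pointwise fact is that the two indicators can differ only if $|\q^\intercal\Delta\btheta(\X_i)|\le|\q^\intercal(\Delta\widehat{\btheta}_k-\Delta\btheta)(\X_i)|$. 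On $\widehat{E}_k$ this forces $|\q^\intercal\Delta\btheta(\X_i)|\le\widehat{\delta}_k(\X_{2,i})$, uniformly in $\q\in\bS^2$.

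\textbf{Bias term.} Since $\omega_i$ is independent of the data with mean one, and $\Delta\widehat{\btheta}_k$ is fixed given $\mathfrak{F}_k$, iterated expectations and Assumption~\ref{asm:MAR-X} (via Lemma~\ref{lemma:support-funct-identif}, $\bE[\Z\Delta\bLL/\bpi(\X)\mid\X]=\Delta\btheta(\X)$) give
\begin{align*}
\bigl|\bE[\omega_i f_{k,\q}(W_i)\mid\mathfrak{F}_k]\bigr|
&\le \bE\bigl[|\q^\intercal\Delta\btheta(\X)|\,\one\{|\q^\intercal\Delta\btheta(\X)|\le\widehat{\delta}_k(\X_2)\}\mid\mathfrak{F}_k\bigr]\\
&\le \bE\bigl[\widehat{\delta}_k(\X_2)\,\bP(|\q^\intercal\Delta\btheta(\X)|\le\widehat{\delta}_k(\X_2)\mid\X_2,\mathfrak{F}_k)\mid\mathfrak{F}_k\bigr].
\end{align*}
By Lemma~\ref{lemma:delta-hat}(i), $\widehat{\delta}_k$ is nonstochastic given $\mathfrak{F}_k$. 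So Assumption~\ref{asm:nuisance-function-structure}(a), the conditional margin condition with exponent one, bounds the inner probability by a constant times $\widehat{\delta}_k(\X_2)$. The whole term is therefore $\lesssim\bE[\widehat{\delta}_k(\X_2)^2\mid\mathfrak{F}_k]=o_p(n^{-1/2})$ by Lemma~\ref{lemma:delta-hat}(iv). This bound does not depend on $\q$, so it holds uniformly over $\bS^2$.

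\textbf{Fluctuation term.} It remains to show
\begin{align*}
\sup_{\q\in\bS^2}\Bigl|n_k^{-1/2}\sum_{i\in\mathcal I_k}\bigl(\omega_i f_{k,\q}(W_i)-\bE[\omega_i f_{k,\q}\mid\mathfrak{F}_k]\bigr)\Bigr|=o_p(1).
\end{align*}
Conditionally on $\mathfrak{F}_k$, the observations in fold $k$ are i.i.d. The class $\{\omega f_{k,\q}:\q\in\bS^2\}$ is built from the fixed functions $\Delta\widehat{\btheta}_k,\Delta\btheta$ through products of linear functions of $\q$ and indicators of half-spaces in $\q$. It is therefore VC-type with a square-integrable envelope proportional to $|\omega|\,\|\Delta\bLL\|/\epsilon$, by Assumption~\ref{asm:moments} and $\bpi\ge\epsilon$, exactly as in Part 2 of the proof of Theorem~\ref{thm:gaussian-h}. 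The variance of each element shrinks. Using the bounded conditional second moments of $\bLL$ in Assumption~\ref{asm:moments} and the same margin argument,
\begin{align*}
\sup_{\q}\bE[\omega_i^2 f_{k,\q}^2\mid\mathfrak{F}_k]\lesssim C_\omega\,\bE\bigl[\bP(|\q^\intercal\Delta\btheta|\le\widehat{\delta}_k(\X_2)\mid\X_2)\mid\mathfrak{F}_k\bigr]\lesssim\bE[\widehat{\delta}_k(\X_2)\mid\mathfrak{F}_k]=o_p(1).
\end{align*}
A maximal inequality for VC-type classes with envelope $F$ and variance bound $\sigma_n^2\to0$, applied conditionally on $\mathfrak{F}_k$, then yields an expected supremum of order $\sigma_n\sqrt{\log(1/\sigma_n)}+n^{-1/2}\times(\text{envelope term})$. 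A convenient choice is Chernozhukov, Chetverikov and Kato's bound, or \citet[Thm.~2.14.1]{van:wel13} combined with a truncation of the envelope. This expected supremum is $o_p(1)$, and the unconditional statement follows by \citet[Lemma~6.1]{ChernozhukovDML}.

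I expect this last step to be the main obstacle. The envelope does not shrink, since only the variance does, so the maximal inequality must exploit $\sigma_n\to0$. I would first try truncating $\|\Delta\bLL\|\,|\omega|$ at a slowly growing level, controlling the remainder by Assumption~\ref{asm:moments}(i) and $\bE[\omega^2]<C_\omega$, and applying the bounded-class maximal inequality to the truncated part. Combining the bias and fluctuation bounds gives $A_k=o_p(n^{-1/2})$ uniformly in $\q\in\bS^2$.
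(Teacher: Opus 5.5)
Your proof is correct, and your bias step is the paper's argument. The paper also uses that the two indicators differ only when $|\q^\intercal\Delta\btheta(\X_i)|\le\widehat{\delta}_k(\X_{2,i})$ on $\widehat{E}_k$. It then applies Assumption~\ref{asm:nuisance-function-structure}(a) and Lemma~\ref{lemma:delta-hat}(iv) to get a bound $\lesssim\bE[\widehat{\delta}_k(\X_2)^2\mid\mathfrak{F}_k]=o_p(n^{-1/2})$ that does not depend on $\q$.

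Where you differ is the stochastic part. The paper asserts that, by \citet[Lemma 6.1]{ChernozhukovDML}, it suffices to show $\bE[A_k\mid\mathfrak{F}_k,(\X_i)_{i\in\mathcal I_k}]=o_p(n^{-1/2})$, and it never treats $A_k$ minus that conditional mean. That lemma actually needs a conditional moment of $|A_k|$ itself, not its conditional mean. A conditional second-moment computation for the centered part would handle it only pointwise in $\q$. Your split into a conditional-mean term plus a centered empirical process over a VC-type class, with variance bound $\lesssim\bE[\widehat{\delta}_k(\X_2)\mid\mathfrak{F}_k]=o_p(1)$ and a local maximal inequality, is what delivers the ``uniformly on $\q\in\bS^2$'' part of the statement. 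The cost is an extra empirical-process step that the paper's shorter argument skips. Pulling $1/\overline{\omega}$ out as a common scalar also cleanly handles the fact that $\overline{\omega}$ is not $\mathfrak{F}_k$-measurable.

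On the step you flag as the main obstacle, either of your options works, with the following care:
\begin{enumerate}
\item With the Chernozhukov--Chetverikov--Kato bound, you only need $\|\max_{i\in\mathcal I_k}F_i\|_{P,2}=o(\sqrt{n_k})$, which holds for any square-integrable envelope. Replace $\sigma_n$ by $\max\{\sigma_n,\tau_n\}$ for a deterministic $\tau_n\downarrow0$ slow enough that the logarithmic factor is absorbed.
\item With truncation, control the remainder by the standard uniform-entropy maximal inequality for the centered process with envelope $F\one\{F>M_n\}$, whose $L^2$ norm vanishes. Do not use the crude bound $\sqrt{n}\,\bE[F\one\{F>M_n\}]$, which need not vanish under only two moments when $M_n$ grows slowly.
\end{enumerate}
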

\begin{proof}
It is sufficient to prove that $\bE[ A_k | \mathfrak{F}_k,  (\X_i)_{i \in \mathcal{I}_k} ] = o_p(n^{-1/2})$ to conclude that $A_k$ is $o_p(n^{-1/2})$, by \citet[Lemma 6.1]{ChernozhukovDML}. Define
\begin{equation}\label{eq:aux1-condition-A-weights}
  \widetilde{A}_k \equiv  \tfrac{1}{n_k} \sum_{i \in \mathcal{I}_k} \left| \q^\intercal \Delta\btheta(\X_i)\right|  \one\{ |\q^\intercal \Delta\btheta(\X_i)| \le |\q^\intercal \Delta \widehat{\btheta}_k(\X_i) - \q^\intercal \Delta\btheta(\X_i)| \}  ~.  
\end{equation}
Note that Law of Iterated Expectations, Assumption \ref{asm:moments}, and the definitions of the weights $\omega_i$ imply $|\bE[ A_k | \mathfrak{F}_k,  (\X_i)_{i \in \mathcal{I}_k} ]| \le  \widetilde{A}_k$; therefore, it is sufficient to show that $\widetilde{A}_k$ is $o_p(n^{-1/2})$ uniformly on $\q \in \bS^2$. By Cauchy-Schwartz and Lemma \ref{lemma:delta-hat}, we have $|\q^\intercal \Delta \widehat{\btheta}(\X_i) - \q^\intercal \Delta\btheta(\X_i)| \le \widehat{\delta}_k(\X_{2,i})$ conditional on $\widehat{E}_k$, where $\widehat{\delta}_k $ and $\widehat{E}_k$ are as in Lemma \ref{lemma:delta-hat}. Since $\bP(\widehat{E}_k^c) = o(1)$, it is sufficient to show that $\bE[ \widetilde{A}_k \one\{ \widehat{E}_k\} | \mathfrak{F}_k ] = o_p(n^{-1/2})$ to conclude $\widetilde{A}_k$ is $o_p(n^{-1/2})$ . We verify this next:
\begin{align*}
    \bE[ \widetilde{A}_k  \one\{ \widehat{E}_k\} | \mathfrak{F}_k ] 
    &= \bE \left [ \left| \q^\intercal \Delta\btheta(\X_i)\right|  \one\{ |\q^\intercal \Delta\btheta(\X_i)| \le |\q^\intercal \Delta \widehat{\btheta}_k(\X_i) - \q^\intercal \Delta\btheta(\X_i)| \}  \one\{ \widehat{E}_k\} | \mathfrak{F}_k  \right] \\
    &\overset{(1)}{\le}  \bE \left [  \left| \q^\intercal \Delta\btheta(\X_i)\right|  \one\{ |\q^\intercal \Delta\btheta(\X_i)| \le   \widehat{\delta}_k(\X_{2,i})  \} | \mathfrak{F}_k  \right]  \one\{ \widehat{E}_k\} \\
    &\overset{(2)}{\le} \bE \left [ \widehat{\delta}_k(\X_{2,i}) \bE \left [  \one\{ |\q^\intercal \Delta\btheta(\X_i)| \le   \widehat{\delta}_k(\X_{2,i})  \} | \mathfrak{F}_k, \X_{2,i}  \right]  | \mathfrak{F}_k  \right]  \one\{ \widehat{E}_k\} \\
    &\overset{(3)}\lesssim \bE \left [  \widehat{\delta}_k(\X_{2,i})^2 | \mathfrak{F}_k  \right] \overset{(4)}{=} o_p(n^{-1/2})
\end{align*}
where (1) holds by Lemma \ref{lemma:delta-hat} and since $\one\{ \widehat{E}_k\} \in \mathfrak{F}_k$, (2) holds since  $ \widehat{\delta}_k(\X_{2,i})$ is nonstochastic conditional on $\mathfrak{F}_k$ and $\X_{2,i}$, (3) holds by Assumption \ref{asm:nuisance-function-structure} and since $ \one\{ \widehat{E}_k\} \le 1$, and (4) holds by Lemma   \ref{lemma:delta-hat}. Finally, note that $\bP(\widehat{E}_k^c) =o(1)$ and (4) hold uniformly on $\q \in \bS^2$ since these objects do not depend on $\q$. 
\end{proof}

\begin{lemma}\label{lemma:condition-B-weights}
    Let Assumptions \ref{asm:moments}--\ref{asm:nuisance-funct-estimators} hold. Then,  $B_k = o_p(n^{-1/2})$ uniformly on $q \in \bS^2$, with $B_k$ defined in \eqref{eq:lemma-condition-B}.  
\end{lemma}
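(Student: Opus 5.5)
We need to show $B_k=o_p(n^{-1/2})$ uniformly in $\q$, where $B_k$ is the product of two first-step errors.

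The first factor is the estimation error in the weighted IPW label, $\tfrac{\q^\intercal\Delta\widehat{\bLL}^\omega_i Z_i}{\widehat{\bpi}_k(X_i)}-\tfrac{\q^\intercal\Delta\bLL_iZ_i}{\bpi(X_i)}$. The second factor is the indicator difference $\one\{\q^\intercal\Delta\widehat\btheta_k>0\}-\one\{\q^\intercal\Delta\btheta>0\}$. The plan is to bound $|B_k|$ by Cauchy--Schwarz, so that it is controlled by the $L^2$ size of the first factor times the probability that the two indicators disagree.

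\textbf{Step 1: split the first factor.} Write
\begin{align*}
\tfrac{\q^\intercal \Delta \widehat{\bLL}_{i}^\omega Z_i}{\widehat{\bpi}_k(X_i)} - \tfrac{\q^\intercal \Delta \bLL_{i} Z_i}{\bpi(X_i)}
= \tfrac{\q^\intercal (\Delta \widehat{\bLL}_{i}^\omega-\Delta\bLL_i) Z_i}{\widehat{\bpi}_k(X_i)} + \q^\intercal \Delta \bLL_{i} Z_i\Bigl(\tfrac{1}{\widehat{\bpi}_k(X_i)}-\tfrac{1}{\bpi(X_i)}\Bigr).
\end{align*}

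\textbf{Step 2: bound the first piece.} The labels depend on $\bmu$ only through $1/\mu_g$. Hence $|\q^\intercal(\Delta\widehat\bLL^\omega_i-\Delta\bLL_i)|\lesssim \max_g|1/\widehat\mu^\omega_g-1/\mu_g|\,\|\ell(\cdot,Y_i^*)\|$, and this is $O_p(n^{-1/2})$ uniformly, by the CLT argument used in Claim 2 of Lemma \ref{lemma:asympt-equivalence-weights}. The propensity estimate $\widehat\bpi_k$ takes values in $[\epsilon,1-\epsilon]$ by Assumption \ref{asm:nuisance-function-structure}(c). The weights $\omega_i/\overline\omega$ have bounded second moments. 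Together these give that this piece of $B_k$ is at most $O_p(n^{-1/2})$ times
\[
D_k(\q)\equiv \tfrac1{n_k}\sum_{i\in\mathcal I_k}\tfrac{\omega_i}{\overline\omega}\,|\text{loss}_i|\,\bigl|\one\{\q^\intercal\Delta\widehat\btheta_k>0\}-\one\{\q^\intercal\Delta\btheta>0\}\bigr|.
\]
It then suffices that $\sup_\q D_k(\q)=o_p(1)$. By Cauchy--Schwarz, $D_k(\q)$ is controlled by the empirical frequency of indicator disagreement. Disagreement requires $|\q^\intercal\Delta\btheta(X_i)|\le\widehat\delta_k(X_{2,i})$ on the event $\widehat E_k$ of Lemma \ref{lemma:delta-hat}. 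Conditioning on $\mathfrak F_k$ and $X_{2,i}$ and using Assumption \ref{asm:nuisance-function-structure}(a), the conditional probability of this event is $\lesssim\bE[\widehat\delta_k(X_{2,i})|\mathfrak F_k]$. This is $o_p(n^{-1/4})$ by Jensen and Lemma \ref{lemma:delta-hat}(iv), and the bound is free of $\q$.

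\textbf{Step 3: bound the second piece.} The reciprocals differ by $|1/\widehat\bpi_k-1/\bpi|\le\epsilon^{-2}\widehat\delta_k(X_{2,i})$ on $\widehat E_k$. Applying Cauchy--Schwarz to $B_k$, this piece is bounded by
\[
\Bigl(\tfrac1{n_k}\sum \widehat\delta_k^2\,\omega_i^2\,|\text{loss}_i|^2\Bigr)^{1/2}\Bigl(\tfrac1{n_k}\sum\one\{|\q^\intercal\Delta\btheta(X_i)|\le\widehat\delta_k(X_{2,i})\}\Bigr)^{1/2}.
\]
For the first factor, condition on $\mathfrak F_k$. Assumption \ref{asm:moments} (together with the independence of the weights) makes the conditional mean $\lesssim\bE[\widehat\delta_k^2|\mathfrak F_k]=o_p(n^{-1/2})$, so the first factor is $o_p(n^{-1/4})$. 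For the second factor, the conditional mean is $\lesssim\bE[\widehat\delta_k|\mathfrak F_k]=o_p(n^{-1/4})$, so it is $o_p(n^{-1/8})$ pointwise. The product is therefore $o_p(n^{-3/8})$, which falls short of $o_p(n^{-1/2})$.

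\textbf{Step 4: sharpen the bound (the main obstacle).} Plain Cauchy--Schwarz loses too much, so I would argue as in Lemma \ref{lemma:condition-A-weights}: condition on $\mathfrak F_k$ and $X_{2,i}$ and avoid splitting the product. On the disagreement event, $\widehat\delta_k(X_{2,i})$ is nonstochastic given $(\mathfrak F_k,X_{2,i})$. Hence the conditional expectation of $|\text{error}|\times\one\{\text{disagree}\}$ is at most
\[
\widehat\delta_k(X_{2,i})\,\bP\bigl(|\q^\intercal\Delta\btheta(X_i)|\le\widehat\delta_k(X_{2,i})\,\big|\,X_{2,i}\bigr)\lesssim\widehat\delta_k(X_{2,i})^2,
\]
using the margin bound of Assumption \ref{asm:nuisance-function-structure}(a) with exponent one. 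Taking expectations gives $\bE[\widehat\delta_k^2|\mathfrak F_k]=o_p(n^{-1/2})$.

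For this step to go through, the bound on the loss term must not be inflated by conditioning on $X_i$. The sup-bound on conditional second moments in Assumption \ref{asm:moments}(i) delivers this, applying conditional Cauchy--Schwarz to the $Y^*$-dependent factor given $X_i$ and the weights. Uniformity in $\q$ holds because every bound depends only on $\widehat\delta_k$.

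Finally, $\bE[|B_k|\,|\,\mathfrak F_k]=o_p(n^{-1/2})$ together with $\bP(\widehat E_k^c)=o(1)$ gives the conclusion via \citet[Lemma 6.1]{ChernozhukovDML}.
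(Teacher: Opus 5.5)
Your proposal is correct and is essentially the paper's argument. You use the same split of $B_k$ into two pieces. For the label-error piece driven by $\widehat\mu_g^\omega$, you factor out the $O_p(n^{-1/2})$ scalar and show the indicator-disagreement frequency is $o_p(1)$ via $\bE[\widehat\delta_k(X_{2,i})\,|\,\mathfrak F_k]=o_p(n^{-1/4})$. For the $1/\widehat\bpi_k-1/\bpi$ piece, you condition on $(\mathfrak F_k,X_{2,i})$ and use the exponent-one margin bound in Assumption~\ref{asm:nuisance-function-structure}(a) to get $\widehat\delta_k^2$. Your Step 3 is a harmless detour that you correctly discard.

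One small point: your closing sentence should apply the conditional-expectation/Lemma 6.1 step only to the second piece. Because $\widehat\mu_g^\omega$ uses fold-$k$ data, it is not $\mathfrak F_k$-measurable. The first piece is therefore handled by the factorization in Step 2, not by bounding $\bE[|B_k|\,|\,\mathfrak F_k]$.
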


\begin{proof} We can write $B_k =   B_{k,1} + B_{k,2} $ , where
\begin{align*}
    B_{k,1} &= \tfrac{1}{n_k} \sum_{i \in \mathcal{I}_k} \left( \tfrac{\omega_i}{\overline{\omega} } \right) \q^\intercal \Delta \bLL_{i} \Z_i\left( \tfrac{1}{\widehat{\bpi}_k(\X_i)} - \tfrac{1}{\bpi(\X_i)}\right) \left( \one\{\q^\intercal \Delta \widehat{\btheta}(\X_i)>0\} -   \one\{\q^\intercal \Delta\btheta(\X_i)>0\}\right) \\
    B_{k,2} &= \tfrac{1}{n_k} \sum_{i \in \mathcal{I}_k} \left( \tfrac{\omega_i}{\overline{\omega} } \right) \left( \tfrac{\q^\intercal \Delta \widehat{\bLL}_{i}^\omega \Z_i - \q^\intercal \Delta \bLL_{i} \Z_i}{\widehat{\bpi}_k(\X_i)}  \right) \left( \one\{\q^\intercal \Delta \widehat{\btheta}(\X_i)>0\} -   \one\{\q^\intercal \Delta\btheta(\X_i)>0\}\right) 
\end{align*}
As in the proof of Lemma \ref{lemma:condition-A-weights}, it is sufficient to show that $\bE[B_{k,1} | \mathfrak{F}_k, (\X_i)_{i\in \mathcal{I}_k}]$ and $B_{k,2}$ are both $o_p(n^{-1/2})$ uniformly on $\q \in \bS^2$. 
Define 
\begin{align*}
     \widetilde{B}_{k,1} &= \tfrac{1}{n_k} \sum_{i \in \mathcal{I}_k} |\widehat{\bpi}_k(\X_i) - \bpi(\X_i)| \one\{ |\q^\intercal \Delta\btheta(\X_i)| \le |\q^\intercal \Delta \widehat{\btheta}_k(\X_i) - \q^\intercal \Delta\btheta(\X_i)| \}
\end{align*}
Note that Law of Iterated Expectations, Assumptions \ref{asm:moments} and \ref{asm:nuisance-function-structure}, and the definitions of the weights $\omega_i$ imply that $|\bE[B_{k,1} | \mathfrak{F}_k, (\X_i)_{i\in \mathcal{I}_k}]| \le  \left( \tfrac{1}{ \epsilon} \right) \widetilde{B}_{k,1}$; therefore, it is sufficient to show that $\widetilde{B}_{k,1}$ is $o_p(n^{-1/2})$ uniformly on $\q \in \bS^2$ to conclude the same for $\bE[B_{k,1} | \mathfrak{F}_k, (\X_i)_{i\in \mathcal{I}_k}]$. By similar arguments used in the proof of Lemma \ref{lemma:condition-A-weights} and by relying on Lemma \ref{lemma:delta-hat},
$$ \bE[ \widetilde{B}_{k,1} \one\{\widehat{E}_k \}| \mathfrak{F}_k] \lesssim \bE \left [  \widehat{\delta}_k(\X_{2,i})^2 | \mathfrak{F}_k  \right] = o_p(n^{-1/2})~,$$
where $ \widehat{E}_k$ and $\widehat{\delta}_k$ are as in  Lemma \ref{lemma:delta-hat}. By the next identity
$$ \q^\intercal \Delta \widehat{\bLL}_{i}^\omega  - \q^\intercal \Delta \bLL_{i}  =  \sum_{g \in \{r, b\}} \mu_g \left( \tfrac{1}{\widehat{\mu}_g^\omega} - \tfrac{1}{{\mu}_g} \right) \left( q^\intercal \Delta \bLL_i\right)  \one \{G_i = g \}~\implies B_{k,2} = \sum_{g \in \{r,b\}} \mu_g \left( \tfrac{1}{\widehat{\mu}_g^\omega} - \tfrac{1}{{\mu}_g} \right) B_{k,2}^g,$$
where
\begin{align*}
        B_{k,2}^g &=  \tfrac{1}{n_k} \sum_{i \in \mathcal{I}_k} \left( \tfrac{\omega_i}{\overline{\omega} } \right) \left( \tfrac{   \Delta \bLL_i   \one \{G_i = g \}  \Z_i }{\widehat{\bpi}_k(\X_i)}  \right) \left( \one\{\q^\intercal \Delta \widehat{\btheta}_k(\X_i)>0\} -   \one\{\q^\intercal \Delta\btheta(\X_i)>0\}\right)~,\quad g \in \{r, b\}   
\end{align*}

We conclude that $ \mu_g \left( \tfrac{1}{\widehat{\mu}_g^\omega} - \tfrac{1}{{\mu}_g} \right)  =O_p(n^{-1/2})$ uniformly on $\q \in \bS^2$ by the Central Limit Theorem, Assumption \ref{asm:moments}, and the definition of the weights $\omega_i$. Therefore, it is sufficient to show that $B_{k,2}^g = o_p(1)$ uniformly on $\q \in \bS^2$ to conclude that $B_{k,2} = o_p(n^{-1/2})$ uniformly on $\q \in \bS^2$. 
Define $\widetilde{B}_{k,2}^g \equiv    \tfrac{1}{n_k} \sum_{i \in \mathcal{I}_k}    \one\{ |\q^\intercal \Delta\btheta(\X_i)| \le |\q^\intercal \Delta \widehat{\btheta}_k(\X_i) - \q^\intercal \Delta\btheta(\X_i)| \}$.
Note that Law of Iterated Expectations, Assumptions \ref{asm:moments} and \ref{asm:nuisance-function-structure}, and the definitions of the weights imply $|\bE[  B_{k,2}^g | \mathfrak{F}_k]| \le C  \widetilde{B}_{k,2}^g$, where $C$ is a constant based only on $c_1$, $c_2$, and $\epsilon$ (constants from Assumptions \ref{asm:moments} and \ref{asm:nuisance-function-structure}).  By similar arguments used in the proof of Lemma \ref{lemma:condition-A-weights} and by relying on Lemma \ref{lemma:delta-hat}, we conclude 
$$ \bE[ \widetilde{B}_{k,2}^g \one\{\widehat{E}_k \}| \mathfrak{F}_k] \lesssim\bE \left [  \widehat{\delta}_k(\X_{2,i}) | \mathfrak{F}_k  \right] = o_p(n^{-1/4})~,$$
which is sufficient to complete the proof of the lemma.
\end{proof}

\subsubsection{Auxiliary Results for Weighted Estimator of $\e^*$}\label{sec:weighted-estimator-e}

Let $\{ \omega_i: 1 \le i \le n\}$ be a random sample of positive weights that are  independent of the data such that $ \bE[\omega_i^2] \le C_\omega$. Define
\begin{align}
    \widehat{\e}^\omega(a^*;\widehat{\bLL}^\omega,\widehat{\boeta}) &\equiv \tfrac{1}{n} \sum_{i=1}^n \left( \tfrac{\omega_i}{\overline{\omega}}  \right)\xi_i(a^*;\widehat{\bLL}^\omega,\widehat{\boeta}) \label{eq:DML-estimator-weighted-e}\\
    \widehat{\e}^{\omega,*}(a^*; \bLL, \boeta) &\equiv \tfrac{1}{n}    \sum_{i =1}^n  \left( \tfrac{\omega_i}{\overline{\omega}}  \right) \xi_i^*(a^*;{\bLL},{\boeta}) ~,
\end{align}
where $\overline{\omega} = n^{-1} \sum_{i=1}^n \omega_i$, $\xi_i(a^*;\widehat{\bLL}^\omega,\widehat{\boeta})$,   $\widehat{\bLL}^\omega$ is defined as in \eqref{eq:defLdgA}-\eqref{eq:defLvector} with $\mu_g$ replaced by $\widehat{\mu}_g^\omega = n^{-1} \sum_{i=1}^n \left( \tfrac{\omega_i}{\overline{\omega}} \right)  \one\{G_i=g\}$, and  
$$\xi_i^*(a^*;\bLL,\boeta) = \xi_i(a^*;\bLL,\boeta) + \sum_{g \in \{r,b\}} \Gamma_g^{e}\left(1 -  \tfrac{\one\{G_i=g\}}{\mu_g} \right)$$
and $ \Gamma_g^{e} \equiv \bE\big[ \one\{G_i=g\} \left( \bLL_{0,i} + a(\X_i)\Delta\bLL_i\right)\tfrac{\Z_i}{\bpi(\X_i)} \big]$.

\begin{lemma}\label{lemma:asympt-equivalence-weights-e}
  \textit{Let Assumptions \ref{asm:moments}--\ref{asm:nuisance-funct-estimators} hold. Then, 
        \begin{equation}\label{eq:lemma-asympt-equivalence-weights-e} 
        \sup_{\q \in \bS^2} n^{1/2} \left| \widehat{\e}^\omega(a^*; \widehat{\bLL}^\omega, \widehat{\boeta})  - \widehat{\e}^{\omega,*}(a^*; \bLL, \boeta) \right| = o_p(1)
    \end{equation} 
    }
\end{lemma}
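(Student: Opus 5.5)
We mirror the proof of Lemma~\ref{lemma:asympt-equivalence-weights}. Since $\xi_i$ does not involve $\q$, the supremum over $\q$ in the statement is only nominal. The proof is therefore simpler: $a^*$ is fixed, so there is no indicator $\one\{\q^\intercal\Delta\widehat{\btheta}_k(\X)>0\}$ to control. This removes the margin-condition step (Claim 1 there, i.e., Lemmas~\ref{lemma:condition-A-weights}--\ref{lemma:condition-B-weights}).

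We write $I_{j,n}\equiv \tfrac1K\sum_k \tfrac1{n_k}\sum_{i\in\mathcal I_k}(\omega_i/\overline\omega)\mathfrak Z^{(j)}_{k,i}$ for $j=1,2,3$, where
\begin{align*}
\mathfrak Z^{(1)}_{k,i}&=\xi_i(a^*;\widehat\bLL^\omega,\widehat\boeta_k),\\
\mathfrak Z^{(2)}_{k,i}&=\xi_i(a^*;\bLL,\widehat\boeta_k)+\textstyle\sum_g\Gamma^e_g(1-\one\{G_i=g\}/\mu_g),\\
\mathfrak Z^{(3)}_{k,i}&=\xi_i^*(a^*;\bLL,\boeta).
\end{align*}
We then show $\sqrt n|I_{1,n}-I_{2,n}|=o_p(1)$ and $\sqrt n|I_{2,n}-I_{3,n}|=o_p(1)$, and conclude by the triangle inequality.

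\textbf{Step 1 ($\mu$-correction).} First, $n^{-1}\sum_i(\omega_i/\overline\omega)(1-\one\{G_i=g\}/\widehat\mu_g^\omega)=0$, and the labels depend on $\mu$ only through $\one\{G=g\}/\mu_g$. Hence $I_{1,n}-I_{2,n}=\sum_g(1/\widehat\mu^\omega_g-1/\mu_g)\tfrac1K\sum_k(A_k^g+B_k^g)$, exactly as in Claim 2 of Lemma~\ref{lemma:asympt-equivalence-weights}. Here $A_k^g$ is a centered average of $\{(\bLL_{0,i}+a^*(\X_i)\Delta\bLL_i)\mu_g\Z_i/\bpi(\X_i)-\Gamma^e_g\}\one\{G_i=g\}$, hence $O_p(n^{-1/2})$ by the CLT, using Assumptions~\ref{asm:moments} and \ref{asm:MAR-X}, independence of the weights, and $\bE[\omega_i^2]<C_\omega$. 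The term $B_k^g$ involves $1/\widehat\bpi_k-1/\bpi$ and is $o_p(1)$ by Cauchy--Schwarz, the bound $\bpi,\widehat\bpi_k\ge\epsilon$ from Assumption~\ref{asm:nuisance-function-structure}(c), and Lemma~\ref{lemma:delta-hat}. The $\boldsymbol\alpha^e$ part of $\xi_i$ also depends on $\mu$ only through $\widehat\boeta_k$, which is held fixed across $I_{1,n}$ and $I_{2,n}$, so it cancels. Finally, $1/\widehat\mu^\omega_g-1/\mu_g=O_p(n^{-1/2})$, so the product is $o_p(n^{-1/2})$.

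\textbf{Step 2 (nuisance, Neyman orthogonality).} We decompose $\sqrt n(I_{2,n}-I_{3,n})=K^{-1/2}\sum_k(A_k+B_k+C_k)$, where
\begin{align*}
A_k&=n_k^{-1/2}\textstyle\sum(\omega_i/\overline\omega)\,(\Z_i/\widehat\bpi_k-\Z_i/\bpi)\,(\bLL_{0,i}+a^*\Delta\bLL_i-\boldsymbol\alpha^e(a^*;\X_i)),\\
B_k&=n_k^{-1/2}\textstyle\sum(\omega_i/\overline\omega)\,(\widehat{\boldsymbol\alpha}^e_k-\boldsymbol\alpha^e)\,(1-\Z_i/\bpi),\\
C_k&=n_k^{-1/2}\textstyle\sum(\omega_i/\overline\omega)\,(\widehat{\boldsymbol\alpha}^e_k-\boldsymbol\alpha^e)\,(\Z_i/\bpi-\Z_i/\widehat\bpi_k).
\end{align*}
By MAR (Assumption~\ref{asm:MAR-X}) and \eqref{eq:thetaMAR}, $\bE[A_k|\mathfrak F_k,(\X_i)]=0$ and $\bE[B_k|\mathfrak F_k,(\X_i)]=0$. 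Their conditional second moments are bounded by $C\,\bE[\widehat\delta_k^2|\mathfrak F_k]$ and $C\,\bE[\|\widehat{\boldsymbol\alpha}^e_k-\boldsymbol\alpha^e\|^2|\mathfrak F_k]$, respectively, both $o_p(1)$. For the latter we use $\|\widehat{\boldsymbol\alpha}^e_k-\boldsymbol\alpha^e\|\le\|\widehat\btheta_{0,k}-\btheta_0\|+\|\Delta\widehat\btheta_k-\Delta\btheta\|$, since $a^*\in[0,1]$ is fixed, together with Assumption~\ref{asm:nuisance-funct-estimators}(d) and Lemma~\ref{lemma:delta-hat}. \citet[Lemma 6.1]{ChernozhukovDML} then gives $A_k,B_k=o_p(1)$. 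For $C_k$, Cauchy--Schwarz bounds it by $\sqrt{n_k}$ times the product of two $L^2$ errors, each $o_p(n^{-1/4})$, so $C_k=o_p(1)$. The weights are handled via $\overline\omega\to1$ and independence.

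The main obstacle is bookkeeping rather than a genuine difficulty. One must verify the conditional mean-zero property of $A_k$ when $\mathfrak Z^{(2)}$ carries the true labels but the estimated $\widehat{\boldsymbol\alpha}^e_k$. Unlike the $\h$ case, $\widehat{\boldsymbol\alpha}^e_k$ is linear in the nuisances, so Lemma~\ref{lemma:L2-alpha} is replaced by a direct linear bound.
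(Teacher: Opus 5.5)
Your proposal is correct and follows the paper's proof essentially step for step. Your $I_{2,n}$ is the paper's intermediate $\widehat{\e}^{\omega,*}(a^*;\bLL,\widehat{\boeta})$, your Step 1 is the paper's (omitted) Claim 1 mirroring Claim 2 of Lemma \ref{lemma:asympt-equivalence-weights}, and your Step 2 is the paper's Claim 2 with the identical $A_k,B_k,C_k$ decomposition; your remarks that the supremum over $\q$ is nominal and that the indicator step disappears because $a^*$ is fixed are exactly why the paper needs only two claims here.
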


\begin{proof}  
    Let $\widehat{\boldsymbol{\balpha}}_k^e$ be the estimator of $\boldsymbol{\balpha}^e$ defined in \eqref{eq:alpha-e} using $\widehat{\boeta}_k$. 
    Define
    $$\widehat{\e}^{\omega,*}(a^*; \bLL, \widehat{\boeta}) \equiv \tfrac{1}{n}    \sum_{i =1}^n  \left( \tfrac{\omega_i}{\overline{\omega}}  \right) \xi_i^*(a^*;{\bLL},\widehat{\boeta}) ~.$$
    Our goal is to establish \eqref{eq:lemma-asympt-equivalence-weights-e}. We divide the proofs in two claims. 
    
    \noindent \underline{Claim 1}: $\sup_{\q \in \bS^2 } \sqrt{n} | \widehat{\e}^\omega(a^*; \widehat{\bLL}^\omega, \widehat{\boeta})  - \widehat{\e}^{\omega,*}(a^*; {\bLL}, \widehat{\boeta}) | = o_p(1)$. This claim is similar to Claim 2 in the proof of Lemma \ref{lemma:asympt-equivalence-weights} and also its proof; therefore, it is omitted.

\noindent \underline{Claim 2}: $\sup_{\q \in \bS^2 } \sqrt{n} \left| \widehat{\e}^{\omega,*}(a^*; {\bLL}, \widehat{\boeta}) -  \widehat{\e}^{\omega,*}(a^*; {\bLL},{\boeta}) \right| = o_p(1)$.

We  write ${\sqrt{n}} \left( \widehat{\e}^{\omega,*}(a^*; {\bLL}, \widehat{\boeta}) -  \widehat{\e}^{\omega,*}(a^*; {\bLL},{\boeta}) \right) = \tfrac{1}{\sqrt K} \sum_{k=1}^K (A_k + B_k+ C_k)$, where
\begin{align*}
    A_k &= \tfrac{1}{\sqrt{n_k}} \sum_{i \in \mathcal{I}_k} \left( \tfrac{\omega_i}{\overline{\omega}} \right)  \left( \tfrac{\Z_i}{\widehat{\bpi}_k(\X_i)} - \tfrac{\Z_i}{\bpi(\X_i)}\right) \left(  \bLL_{0,i}  +   a^*(\X_i) \Delta \bLL_{i}  - \boldsymbol{\balpha}^e(a^*;\X_i)   \right), \\
    B_k &= \tfrac{1}{\sqrt{n_k}} \sum_{i \in \mathcal{I}_k} \left( \tfrac{\omega_i}{\overline{\omega}} \right)  \left( \widehat{\boldsymbol{\balpha}}_k^e(a^*;\X) -\boldsymbol{\balpha}^e(a^*;\X)  \right) \left( 1 - \tfrac{\Z_i}{\bpi(\X_i)} \right), \\
    C_k &= \tfrac{1}{\sqrt{n_k}} \sum_{i \in \mathcal{I}_k} \left( \tfrac{\omega_i}{\overline{\omega}} \right)  \left( \widehat{\boldsymbol{\balpha}}_k^e(a^*;\X) -\boldsymbol{\balpha}^e(a^*;\X)  \right) \left( \tfrac{\Z_i}{{\bpi}(\X_i)} - \tfrac{\Z_i}{\widehat{\bpi}_k(\X_i)}\right).
\end{align*}
Recall that $\widehat{\boeta}_k\equiv(\Delta \widehat{\btheta}_k, \widehat{\bpi}_k, \widehat{\btheta}_{0,k})$ was estimated using all the data except the ones with indices in $\mathcal{I}_k$, and $\mathfrak{F}_k $ is the $\sigma$-algebra generated by all the data except the ones with indices on $\mathcal{I}_k$. The definition of the weights $\omega_i$, Assumptions \ref{asm:nuisance-function-structure} and \ref{asm:nuisance-funct-estimators}, the definition of $\boldsymbol{\balpha}^e(a^*;X)$ in \eqref{eq:alpha-e}, and algebra show $\bE[ A_k | \mathfrak{F}_k] = 0$ and $\sup_{q \in \bS^2} \bE[A_k^2 | \mathfrak{F}_k] = o(1)$. Similarly, we can show $\bE[ B_k | \mathfrak{F}_k] = 0$ and $\sup_{q \in \bS^2} \bE[B_k^2 | \mathfrak{F}_k] = o(1)$. By \citet[Lemma 6.1]{ChernozhukovDML}, we conclude $A_k$ and $B_k$ both are $o_p(1)$ uniformly on $\q \in \bS^2$. Finally, we conclude that $C_k = o_p(1)$ uniformly on $\q \in \bS^2$ by Cauchy-Schwartz, Assumptions \ref{asm:nuisance-function-structure} and \ref{asm:nuisance-funct-estimators}, and the definition of the weights $\omega_i$. 

The triangle inequality and Claims 1 and 2 complete the proof of the lemma.
\end{proof}

\begin{theorem}\label{thm:gaussian-e}
    \textit{Let Assumptions \ref{asm:moments}--\ref{asm:nuisance-funct-estimators} hold and let $\{\omega_i\}_{i=1}^n$ be random positive weights independent of the data such that $E[\omega_i] = 1$. Then, 
    $$ \sqrt{n} \left(  \widehat{\e}^\omega(a^*;\widehat{\bLL}^\omega,\widehat{\boeta}) - \e^* \right) = \mathbb{G}[\omega_i \{ \xi_i^*(a^*;\bLL,\boeta) - \e^*\}] + o_p(1)~,$$
    where  
    $\xi_i^*(a^*;\bLL,\boeta) = \xi_i(a^*;\bLL,\boeta) + \sum_{g \in \{r,b\}} \Gamma_g^{e}\left(1 -  \tfrac{\one\{G_i=g\}}{\mu_g} \right)$
and $ \Gamma_g^{e} = \bE\big[ \one\{G_i=g\} \left( \bLL_{0,i} + a(\X_i)\Delta\bLL_i\right)\tfrac{\Z_i}{\bpi(\X_i)} \big]$.
    } 
\end{theorem}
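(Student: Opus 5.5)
The plan mirrors the proof of Theorem~\ref{thm:gaussian-h}, which is simpler here because $\e^*\in\R^3$ is finite dimensional and no supremum over $\q$ is needed. The argument has two parts.

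\textbf{Reduction to an oracle average.} By Lemma~\ref{lemma:asympt-equivalence-weights-e},
\[
\sqrt n\bigl(\widehat{\e}^\omega(a^*;\widehat{\bLL}^\omega,\widehat{\boeta})-\widehat{\e}^{\omega,*}(a^*;\bLL,\boeta)\bigr)=o_p(1),
\]
so the cross-fitted nuisance estimates and the estimated group shares $\widehat\mu_g^\omega$ can be replaced by their true values. That lemma's Claim~1 absorbs the $\widehat\mu$ correction through the $\Gamma_g^e$ terms, and its Claim~2 uses orthogonality with respect to $\bpi$ and $\btheta_0,\Delta\btheta$. No indicator of $\q^\intercal\Delta\btheta$ enters $\xi_i$, so the margin-type arguments of Lemmas~\ref{lemma:condition-A-weights}--\ref{lemma:condition-B-weights} are not needed.

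\textbf{Analysis of the oracle term.} Write
\[
\sqrt n\bigl(\widehat{\e}^{\omega,*}-\e^*\bigr)=\overline{\omega}^{-1}\,n^{-1/2}\sum_i\omega_i\{\xi_i^*(a^*;\bLL,\boeta)-\e^*\}.
\]
This identity is valid because $\e^*$ is deterministic and $\sum_i(\omega_i/\overline\omega)=n$. Next check that the summands are i.i.d. with mean zero:
\begin{itemize}
\item By independence of $\omega_i$ from the data and $\bE[\omega_i]=1$, $\bE[\omega_i(\xi_i^*-\e^*)]=\bE[\xi_i^*]-\e^*$.
\item $\bE[\xi_i]=\e^*$ by Lemma~\ref{lemma:support-funct-identif}, since the augmentation term $\boldsymbol{\balpha}^e(1-\Z/\bpi)$ has conditional mean zero given $\X$ under Assumption~\ref{asm:MAR-X}.
\item $\bE[1-\one\{G=g\}/\mu_g]=0$, so the $\Gamma_g^e$ corrections also have mean zero.
\end{itemize}
Finite second moments follow from Assumption~\ref{asm:moments}, $\bpi\in[\epsilon,1-\epsilon]$ (Assumption~\ref{asm:nuisance-function-structure}(c)), and $\bE[\omega_i^2]<C_\omega$. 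Hence $n^{-1/2}\sum_i\omega_i\{\xi_i^*-\e^*\}=\mathbb{G}_n[\omega_i\{\xi_i^*-\e^*\}]$, which is $O_p(1)$ and converges in distribution by the multivariate CLT.

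Finally, $\overline\omega\to_p1$ by the LLN, so $(\overline\omega^{-1}-1)\,\mathbb{G}_n[\cdot]=o_p(1)$. This yields the stated representation, reading $\mathbb{G}$ as the empirical process $\mathbb{G}_n$ up to $o_p(1)$, equivalently its Gaussian limit in distribution.

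\textbf{Main obstacle.} The only substantive step is Claim~2 of Lemma~\ref{lemma:asympt-equivalence-weights-e}, specifically the product term $C_k$, which requires the $L^2$ rate product $\|\widehat{\boldsymbol{\balpha}}^e-\boldsymbol{\balpha}^e\|\,\|\widehat\bpi-\bpi\|=o_p(n^{-1/2})$. This follows from Lemma~\ref{lemma:delta-hat} and Assumption~\ref{asm:nuisance-funct-estimators}(d) via Cauchy--Schwarz. Since that lemma is already available, the theorem follows from the two reductions above.
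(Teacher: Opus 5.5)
Your proposal is correct and follows essentially the same route as the paper, whose omitted proof likewise uses Lemma~\ref{lemma:asympt-equivalence-weights-e} to pass to the oracle average $\widehat{\e}^{\omega,*}(a^*;\bLL,\boeta)$, then the Part~1 argument of Theorem~\ref{thm:gaussian-h} (the identity $\sqrt{n}(\widehat{\e}^{\omega,*}-\e^*)=\overline{\omega}^{-1}\,\mathbb{G}_n[\omega_i\{\xi_i^*-\e^*\}]$ together with $\overline{\omega}\to_p 1$), and the finite-dimensional CLT in place of the Donsker step. Reading $\mathbb{G}$ in the statement as $\mathbb{G}_n$ up to $o_p(1)$, and invoking the section's standing condition $\bE[\omega_i^2]\le C_\omega$ (which the theorem statement itself omits), are both the right choices.
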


\begin{proof}
    We use Lemma \ref{lemma:asympt-equivalence-weights-e}, the Central Limit Theorem, and similar arguments as in the proof of Theorem \ref{thm:gaussian-h}; therefore, it is omitted.
\end{proof}

\begin{lemma}\label{lemma:bootstrap-equivalence-weights}
    Let Assumptions \ref{asm:moments}--\ref{asm:nuisance-funct-estimators} hold. Then,
    $$\left\|\sqrt{n} \left(  \widetilde{\bbeta}^\omega  -  \widehat{\bbeta}^\omega \right) - \sqrt{n} \left(  \widetilde{\bbeta}^{\omega,*}  -  \widehat{\bbeta}^{\omega,*} \right)   \right\|  = o_p(1)~,$$
    where $ \widehat{\bbeta}^{\omega,*}$  and $\widetilde{\bbeta}^{\omega,*}$ are defined in \eqref{eq:beta-hat-star}--\eqref{eq:beta-tilde-star}, respectively.
\end{lemma}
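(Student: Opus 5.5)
The strategy is to reduce the bootstrap statement to the two asymptotic-equivalence results already established for the original (non-bootstrap) estimators. We treat the bootstrap weights $W_i\omega_i^\h/\overline{W^\h}$ and $W_i\omega_i^\e/\overline{W^\e}$ as one more instance of random positive weights that are independent of the data. Write $\omega_i^{\h,W}\equiv W_i\omega_i^\h$ and $\omega_i^{\e,W}\equiv W_i\omega_i^\e$. Since $W_i$ is exponential with mean one, is independent of $\Xi_i$ and of the data, and $\omega_i^\h,\omega_i^\e\in\{1/2,3/2\}$, these products are positive and i.i.d. They satisfy $\bE[\omega_i^{\cdot,W}]=1$ and $\bE[(\omega_i^{\cdot,W})^2]\le 2\cdot(9/4)<\infty$. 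By construction $\widetilde{\h}^\omega_\E$ and $\widetilde\bLL^\h$ are exactly the weighted estimators \eqref{eq:DML-estimator-weighted} with weights $\omega_i^{\h,W}$. Likewise $\widetilde{\e}^\omega$ is \eqref{eq:DML-estimator-weighted-e} with weights $\omega_i^{\e,W}$. The cross-fitted $\widehat\boeta$ is unchanged, and the weights are independent of the first-stage split.

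\textbf{Step 1.} Apply Lemma~\ref{lemma:asympt-equivalence-weights} twice for the support-function component: once with weights $\omega_i^{\h,W}$ and once with weights $\omega_i^\h$. This gives
$\sup_{\q}\sqrt n|\widetilde{\h}^\omega_\E-\tfrac1n\sum_i(\omega_i^{\h,W}/\overline{W^\h})\zeta_i^*|=o_p(1)$ and
$\sup_{\q}\sqrt n|\widehat{\h}^\omega_\E-\tfrac1n\sum_i(\omega_i^{\h}/\overline{\omega^\h})\zeta_i^*|=o_p(1)$.
We need to check that the proof of that lemma uses only independence of the weights from the data, positivity, and a finite second moment, never boundedness. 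Claims 1 and 3 rely on conditional-mean-zero and variance bounds given $\mathfrak F_k$. There the weights enter only through $\bE[\omega_i^2]$ and through $\overline\omega\to1$ in probability, so they go through verbatim. For the $\e^*$ component, do the same with Lemma~\ref{lemma:asympt-equivalence-weights-e} for $\omega_i^{\e,W}$ and $\omega_i^\e$. That lemma involves no $\q$-uniformity issue, since $a^*$ is fixed.

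\textbf{Step 2.} Subtract the two expansions within each component and use the triangle inequality. This gives
$\sqrt n(\widetilde\bbeta^\omega-\widehat\bbeta^\omega)-\sqrt n(\widetilde\bbeta^{\omega,*}-\widehat\bbeta^{\omega,*})=o_p(1)$
in the norm of $\ell^\infty(\bS^2)\times\R^3$, matching the definitions \eqref{eq:beta-hat-star}--\eqref{eq:beta-tilde-star}. The $o_p(1)$ holds unconditionally, which is the form in which the statement is used in Theorem~\ref{thm:test-bootstrap-weights}. If a conditional version is needed, Markov's inequality turns the unconditional $o_p(1)$ into convergence in probability conditionally on the data.

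\textbf{Main obstacle.} The delicate point is Claim 1 of Lemma~\ref{lemma:asympt-equivalence-weights}, namely Lemmas~\ref{lemma:condition-A-weights}--\ref{lemma:condition-B-weights}, now run under unbounded weights. That argument bounds $|\bE[A_k\mid\mathfrak F_k,(\X_i)]|$ by $\widetilde A_k$ after integrating out the weights. To redo it, I would first condition additionally on $\{W_i,\Xi_i\}$. I would then factor out $\max_i\omega_i^{\cdot,W}/\overline{W^{\cdot}}$ only where needed, and otherwise use $\bE[\omega_i^{\cdot,W}\mid\text{data}]=1$ inside the sum. A cruder maximum bound would cost a $\log n$ factor, and the sharper route avoids it. The margin-condition bound $\lesssim\bE[\widehat\delta_k^2\mid\mathfrak F_k]=o_p(n^{-1/2})$ is then unaffected. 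For Claim 2, the extra CLT/LLN steps involving $\widetilde\mu_g^\h$ only require finite second moments of $\omega_i^{\h,W}$, which we have.
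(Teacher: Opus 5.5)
Your proposal is correct and matches the paper's proof: the triangle inequality bounds the difference by $\|\sqrt n(\widehat\bbeta^\omega-\widehat\bbeta^{\omega,*})\|+\|\sqrt n(\widetilde\bbeta^\omega-\widetilde\bbeta^{\omega,*})\|$. Each term then follows from Lemmas~\ref{lemma:asympt-equivalence-weights} and \ref{lemma:asympt-equivalence-weights-e}, applied once with weights $\omega_i^\h,\omega_i^\e$ and once with $W_i\omega_i^\h,W_i\omega_i^\e$. Your explicit check that the multiplied weights are positive, independent of the data, mean one, and have bounded second moment is a detail the paper leaves implicit, and it is all those lemmas assume; so your discussion of unbounded weights is a sanity check on their proofs rather than a required step.
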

\begin{proof}
The norm $\| \cdot \|$ on $\ell^{\infty}(\bS^2) \times \mathbb{R}^3$ is the product norm, i.e., $\| (f,\e) \| = \|f\|_\infty + \|\e\|_2$, where $\| \cdot\|_2$ is the euclidean norm on $\mathbb{R}^3$. 
By the triangular inequality $\left\|\sqrt{n} \left(  \widetilde{\bbeta}^\omega  -  \widehat{\bbeta}^\omega \right) - \sqrt{n} \left(  \widetilde{\bbeta}^{\omega,*}  -  \widehat{\bbeta}^{\omega,*} \right)   \right\|$ 
is lower or equal than
\begin{equation}
       \left\|\sqrt{n} \left(  \widehat{\bbeta}^\omega - \widehat{\bbeta}^{\omega,*}      \right)   \right\| + \left\|\sqrt{n} \left(  \widetilde{\bbeta}^\omega -  \widetilde{\bbeta}^{\omega,*}    \right)   \right\|~. \label{eq:aux1-lemma-e5}
\end{equation} 
Therefore, it is sufficient to show that each term above is $o_p(1)$.

We use Lemmas \ref{lemma:asympt-equivalence-weights} and \ref{lemma:asympt-equivalence-weights-e} to conclude $ \left\|\sqrt{n} \left(  \widehat{\bbeta}^{\omega,*}  -  \widehat{\bbeta}^\omega  \right)   \right\|  = o_p(1)$. Similarly, we can conclude $ \left\|\sqrt{n} \left(  \widetilde{\bbeta}^{\omega,*}  -  \widetilde{\bbeta}^\omega   \right)   \right\|  = o_p(1)~$ by using the weights $W_i \omega_i^\h$ and~$W_i \omega_i^\e$ in Lemmas \ref{lemma:asympt-equivalence-weights} and  \ref{lemma:asympt-equivalence-weights-e}, respectively. This completes the proof.
\end{proof}
 \end{appendix}
\end{document}